\newcommand{\bleqinn}[2]{%
  \ooalign{%
    \cr
    $#1\leq$\cr\raisebox{.2ex}{$#1\color{black!55}\blacktriangleleft$}}%
}
\newcommand{\us}{\mathrel{\lhd}}
\newcommand{\useq}{\mathrel{\trianglelefteq}}
\newcommand{\sus}{
\mathrel{\color{black!55}\blacktriangleleft}}
\newcommand{\suseq}{
\mathrel{\mathpalette\bleqinn\relax}}
\newcommand{\uuus}{\mathrel{\us\!\!\us\!\!\us}}
\newcommand{\szigzag}{\mathrel{\supset\!\!\!\sus\!\!\supset\!\!\!\sus}}
\newcommand{\DrawAnd}[3]{ 
\node      (x#3 1)             at (#1-1.5,#2-0.5)                 {$x_{#3}^{1}$};
\node      (x#3 0)             at (#1-1.5,#2+0.5)                 {$x_{#3}^{0}$};
\node[andnode]     (and#3)   at (#1,#2)                    {AND};
\draw[->] (x#3 0) to (and#3.138);
\draw[->] (x#3 1) to (and#3.222);
}
\newcommand{\DrawBlock}[4]{ 
\node[rnode,text width=0.75cm,align=center,scale = 0.6]     (b#3)   at (#1,#2)  {#4};
}
\newcommand{\DrawBlockDashed}[4]{ 
\node[rnode,dashed,text width=0.75cm,align=center,scale = 0.6]     (b#3)   at (#1,#2)  {#4};
}
\newcommand{\DrawAndDotted}[3]{ 
\node      (x#3 1)             at (#1-1.5,#2-0.5)                 {$x_{#3}^{1}$};
\node      (x#3 0)             at (#1-1.5,#2+0.5)                 {$x_{#3}^{0}$};
\node[andnodedashed]     (and#3)   at (#1,#2)                    {AND};
\draw[->,dashed] (x#3 0) to (and#3.138);
\draw[->,dashed] (x#3 1) to (and#3.222);
}
  \DeclareMathOperator\CSP{CSP}
\DeclareMathOperator{\proj}{pr}
  \DeclareMathOperator\oneinthree{1IN3'}
  \DeclareMathOperator\QCSP{QCSP}
  \DeclareMathOperator{\Var}{Var}
\newcommand{\zv}{\mathbf z}
\newcommand{\uv}{\mathbf u}
\newcommand{\xv}{\mathbf x}
\newcommand{\join}{\ovee}
\renewcommand{\le}{\leqslant}
\renewcommand{\ge}{\geqslant}
\theoremstyle{definition}
\theoremstyle{plain}
\newtheorem{thm}{Theorem}
\newtheorem{problem}{Question}
\newtheorem{conj}{Conjecture}
\newtheorem{lem}[thm]{Lemma}
\newtheorem{sublem}{Claim}[subsection]
\newtheorem{cor}[thm]{Corollary}
\newenvironment{sketch}{\begin{proof}[\textsc{Sketch of the proof:}]}{\end{proof}}
\newtheorem*{THMFindSmallTreeTHM}{Theorem~\ref{THMFindSmallTree}}
\newtheorem*{THMNonemptyReductionIsZigzagTHM}{Theorem~\ref{THMNonemptyReductionIsZigzag}}
\newtheorem*{THMFindUniversalSubuniverseTHM}{Theorem~\ref{THMFindUniversalSubuniverse}}
\newtheorem*{THMFindSmallerReductionTHM}{Theorem~\ref{THMFindSmallerReduction}}
\newtheorem*{THMMightyTupleIPSpaceHardnessTHM}{Theorem~\ref{THMMightyTupleIPSpaceHardness}}
\newtheorem*{LEMUniversalSubuniverseImpliesLEM}{Lemma~\ref{LEMUniversalSubuniverseImplies}}
\newtheorem*{LEMMightyTupleTwoImpliesLEM}{Lemma~\ref{LEMMightyTupleTwoImplies}}
\newtheorem*{LEMMightyTupleFiveImpliesLEM}{Lemma~\ref{LEMMightyTupleFiveImplies}}
\newtheorem*{LEMTHMIdempotantClassificationLEM}{Lemma~\ref{LEMTHMIdempotantClassification}}
\newtheorem*{LEMMightyTupleTwoThreeFourEquivalenceLEM}{Lemma~\ref{LEMMightyTupleTwoThreeFourEquivalence}}
\title{$\Pi_2^{P}$ vs PSpace Dichotomy for the Quantified Constraint Satisfaction Problem}
\author{Dmitriy Zhuk\thanks{The author is funded by the European Union (ERC, POCOCOP, 101071674). Views and opinions expressed are however those of the author(s) only and do not necessarily reflect those of the European Union or the European Research Council Executive Agency. Neither the European Union nor the granting authority can be held responsible for them.}}
 \date{}
\begin{document}
\maketitle
{\footnotesize
\tableofcontents
}
\begin{abstract}
The Quantified Constraint Satisfaction Problem is the 
problem of evaluating a sentence with both quantifiers, over relations from some constraint language, with conjunction as the only connective.
We show that for any constraint language 
on a finite domain the Quantified Constraint Satisfaction Problem 
is either in $\Pi_{2}^{P}$, or PSpace-complete.
Additionally, we build a constraint language 
on a 6-element domain such that the Quantified Constraint Satisfaction Problem 
over this language is $\Pi_{2}^{P}$-complete.
\end{abstract}

\section{Introduction}

The \emph{Quantified Constraint Satisfaction Problem} $\QCSP(\Gamma)$ is the generalization of the \emph{Constraint Satisfaction Problem} $\CSP(\Gamma)$ which, given the latter in its logical form, augments its native existential quantification with universal quantification. That is, $\QCSP(\Gamma)$ is the problem to evaluate a sentence of the form $\forall x_1 \exists y_1 \ldots \forall x_n \exists y_n \ \Phi$, where $\Phi$ is a conjunction of relations from the \emph{constraint language} $\Gamma$, all over the same finite domain $A$.
Since the resolution of the Feder-Vardi ``Dichotomy'' Conjecture, classifying the complexity of $\CSP(\Gamma)$, for all finite  $\Gamma$, between P and NP-complete \cite{BulatovFVConjecture,BulatovProofCSP,ZhukFVConjecture,zhuk_CSP_Dichotomy_JACM}, a desire has been building for a classification for $\QCSP(\Gamma)$. Indeed, since the  classification of the \emph{Valued CSPs} was reduced to that for CSPs \cite{KolmogorovKR17}, the QCSP remains the last of the older variants of the CSP to have been systematically studied but not classified. More recently, other interesting open classification questions have appeared such as that for \emph{Promise CSPs} \cite{BrakensiekG18} and finitely-bounded, homogeneous infinite-domain CSPs \cite{BartoP16}. 


\subsection{Complexity of the QCSP}

While $\CSP(\Gamma)$ remains in NP for any finite $\Gamma$, $\QCSP(\Gamma)$ can be PSpace-complete, as witnessed by \emph{Quantified 3-Satisfiability} or \emph{Quantified Graph 3-Colouring} (see \cite{BBCJK}). It is well-known that the complexity classification for QCSPs embeds the classification for CSPs: if $\Gamma+1$ is $\Gamma$ with the addition of a new isolated element not appearing in any relations, then $\CSP(\Gamma)$ and $\QCSP(\Gamma+1)$ are polynomially equivalent. Thus, and similarly to the Valued CSPs, the CSP classification will play a part in the QCSP classification. 
For a long time the complexities P, NP-complete, and PSpace-complete were the only complexity classes that could be achieved by $\QCSP(\Gamma)$ \cite{BBCJK,hubie-sicomp,HubieSIGACT,Meditations,QC2017}.
Nevertheless, in \cite{QCSPMonstersSTOC,QCSP_Monsters_JACM} 
a constraint language $\Gamma$ on a 3-element domain was discovered 
such that $\QCSP(\Gamma)$ is coNP-complete. Combining 
this language with an NP-complete language the authors also built a
DP-complete constraint language on a 4-element domain and a
$\Theta_{2}^{P}$-complete language on a 10-element domain \cite{QCSPMonstersSTOC,QCSP_Monsters_JACM}.
Discovering these exotic complexity classes ruined hope to obtain
a simple and complete classification of the complexity of the QCSP for 
all constraint languages on a finite domain.
On the other hand, the possibility to express those complexity classes 
by fixing a constraint language makes the 
QCSP a powerful tool for studying complexity classes between P and PSpace.  
Finding a concrete border between complexity classes in terms of 
constraint languages may shed some light on the fundamental differences between them,  and may bring us closer to understanding why P and PSpace are different (if they are).

The exotic complexity classes appeared only on domains of size at least 4, 
while on a domain of size 2 we have a complete classification between P and PSpace-complete, and on a domain of size 3 we have a partial classification between P, NP-complete, coNP-complete, and PSpace-complete.

\begin{thm}[\cite{Schaefer}]
Suppose $\Gamma$ is a constraint language on
$\{0,1\}$. 
Then $\QCSP(\Gamma)$ is in P if 
$\CSP(\Gamma\cup\{x=0,x=1\})$ is in P, 
$\QCSP(\Gamma)$ is PSpace-complete otherwise.
\end{thm}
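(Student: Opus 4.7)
The plan is to handle the two directions via Schaefer's CSP dichotomy.

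For the tractable direction, observe that $\QCSP(\Gamma)$ is a subproblem of $\QCSP(\Gamma\cup\{x=0,x=1\})$, so it suffices to show that if $\CSP(\Gamma\cup\{x=0,x=1\})$ is in P then so is $\QCSP(\Gamma\cup\{x=0,x=1\})$. Schaefer's theorem tells us that in this case $\Gamma\cup\{x=0,x=1\}$ is Horn, dual Horn, bijunctive, or affine, and in each of these four cases the corresponding QCSP is classically in P: for Horn (resp.~dual Horn) eliminate quantifiers using the canonical minimum (resp.~maximum) solution guaranteed by closure of the solution set under $\wedge$ (resp.~$\vee$); for bijunctive use an Aspvall--Plass--Tarjan-style algorithm for Q2SAT; for affine run Gaussian elimination over $\mathbb{F}_{2}$, which commutes with quantifier alternation.

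For the hardness direction, suppose $\CSP(\Gamma\cup\{x=0,x=1\})$ is NP-hard. Schaefer's theorem yields that $\Gamma\cup\{x=0,x=1\}$ pp-defines every Boolean relation, so $\QCSP(\Gamma\cup\{x=0,x=1\})$ is PSpace-hard by the routine reduction from Quantified 3-SAT: replace each 3-clause by its pp-definition while leaving the quantifier prefix intact. It then remains to establish $\QCSP(\Gamma\cup\{x=0,x=1\})\leq_{P}\QCSP(\Gamma)$, that is, to simulate the two constants using $\Gamma$ together with universal quantification. By the Galois correspondence, $\Gamma$ pp-defines $\{0\}$ (resp.~$\{1\}$) iff $\Gamma$ is not 1-valid (resp.~not 0-valid), so whenever $\Gamma$ is neither 0-valid nor 1-valid both constants come for free.

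The interesting subcase is when $\Gamma$ is 0-valid or 1-valid (or both). Here the key identity is $\forall z\, R(z,\vec y) = R(0,\vec y)\cap R(1,\vec y)$: a universal quantifier produces the intersection of the two ``rows'' of $R$, which can drop the all-zeros tuple when $R(1,\cdot)$ misses it, and symmetrically the all-ones tuple. Since Schaefer-hardness of $\Gamma\cup\{x=0,x=1\}$ forbids every non-trivial idempotent polymorphism, for each constant that is not yet pp-definable there must exist a relation in $\Gamma$ whose rows witness the failure of the corresponding validity property, and one combines such relations through existential projections and one or two universal quantifiers to carve out the singletons $\{0\}$ and $\{1\}$. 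I expect the main obstacle to be exactly this gadget construction: a careful case analysis over the finitely many valid-but-Schaefer-hard polymorphism clones of $\Gamma$ in Post's lattice, ensuring that in each the right mixture of universal and existential quantifiers indeed produces both constants and thereby completes the reduction from $\QCSP(\Gamma\cup\{x=0,x=1\})$ to $\QCSP(\Gamma)$.
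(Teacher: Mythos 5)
The paper does not actually prove this theorem; it cites it to Schaefer and uses it only as background, so there is no in-paper proof to compare against. Evaluating your proof on its own terms, the tractability direction is fine (and is indeed the standard route via Schaefer's dichotomy plus the classical polynomial algorithms for Q-Horn, Q-dual-Horn, Q2SAT, and quantified affine systems).

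The hardness direction, however, has a concrete gap. Your pivotal biconditional ``$\Gamma$ pp-defines $\{0\}$ iff $\Gamma$ is not $1$-valid'' is false in one direction. Take $\Gamma=\{\mathrm{NAE}_3\}$ with $\mathrm{NAE}_3=\{0,1\}^3\setminus\{(0,0,0),(1,1,1)\}$. Here $\Gamma$ is neither $0$-valid nor $1$-valid, yet the Boolean complement operation $x\mapsto 1-x$ is a polymorphism of $\Gamma$, and since it is surjective it is also preserved by every q-definable relation. Hence every relation q-definable (a fortiori pp-definable) from $\Gamma$ is closed under complementation, and in particular neither $\{0\}$ nor $\{1\}$ is pp- or q-definable from $\Gamma$. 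This means your intended reduction $\QCSP(\Gamma\cup\{x=0,x=1\})\le_P\QCSP(\Gamma)$ via ``simulate the constants'' cannot succeed, no matter how the gadget construction in your final paragraph is elaborated: the obstruction is semantic, not a shortage of clever quantifier prefixes. The cases you left as ``the interesting subcase'' ($\Gamma$ is $0$- or $1$-valid, which in Post's lattice corresponds to the idempotent-trivial clones $I_0$, $I_1$, $I$) are actually the \emph{easy} ones for your approach, because there the surjective polymorphisms are only projections, so both constants \emph{are} q-definable. The cases you need to worry about are the clones $N_2$ and $N$ (those containing the negation), where the constants are provably out of reach and PSpace-hardness has to be established by a different reduction (e.g., a direct reduction from Q3SAT to quantified NAE-SAT that exploits the complement-symmetry of the target language rather than fighting it). Your write-up does not contain such a reduction and does not acknowledge that one is needed.
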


\begin{thm}[\cite{QCSPMonstersSTOC,QCSP_Monsters_JACM}]\label{THMThreeElementIdempotentClassificaition}
Suppose $\Gamma\supseteq \{x=a\mid a\in A\}$ is a constraint language on $\{0,1,2\}$.
Then $\QCSP(\Gamma)$ is either in P, or NP-complete, or coNP-complete, or 
PSpace-complete.
\end{thm}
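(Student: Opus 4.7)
My plan is to take the algebraic approach. Because $\Gamma$ contains every constant relation $x=a$, it is a core whose polymorphisms form an \emph{idempotent} clone $\mathcal{C}=\Pol(\Gamma)$ on $A=\{0,1,2\}$. Standard primitive-positive interpretability arguments reduce the classification of $\QCSP(\Gamma)$ to a case analysis over the (finite) lattice of idempotent clones on a three-element set, so I would enumerate these clones and determine the $\QCSP$ complexity of each representative.

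For upper bounds I would combine three tools. First, the Bulatov--Zhuk dichotomy already classifies the existential fragment $\CSP(\Gamma)$ into $\mathrm{P}$ or $\mathrm{NP}$-complete; when $\mathcal{C}$ is not Taylor we inherit $\mathrm{NP}$-hardness from there. Second, if $(A;\mathcal{C})$ has polynomially generated direct powers, every universally quantified variable can be replaced by a polynomial collection of existential copies, which reduces $\QCSP(\Gamma)$ to a polynomial $\CSP$ instance and places it in $\mathrm{NP}$. Third, if some proper subset $B\subsetneq A$ is a \emph{universal subuniverse}, meaning one can restrict $\forall x$ to range over $B$ without changing the truth of any sentence, then the problem collapses to the two-element classification of Schaefer. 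Together these account for every clone that is not $\mathrm{PSpace}$-complete, except for the exotic $\mathrm{coNP}$ case, where I would employ a dual-certificate construction showing that violations can be exhibited by polynomially many universal assignments even though no $\mathrm{NP}$ collapse exists.

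For lower bounds, $\mathrm{NP}$- and $\mathrm{P}$-hardness come from the CSP dichotomy, $\mathrm{PSpace}$-hardness follows by encoding QBF whenever no collapse is available, and $\mathrm{coNP}$-hardness is witnessed by the explicit 3-element language of \cite{QCSPMonstersSTOC,QCSP_Monsters_JACM}. The main obstacle, and the reason the theorem fails for four-element domains, is proving that these four classes are exhaustive: over larger domains one can combine an $\mathrm{NP}$-complete CSP fragment with the coNP gadget to synthesize $\mathrm{DP}$- or $\Theta_{2}^{P}$-completeness. Ruling this out at three elements amounts to showing that every idempotent clone on $\{0,1,2\}$ either admits a universal subuniverse, has polynomially generated powers, fits the dual-PGP coNP pattern, or is rich enough to encode QBF directly, and that incompatible mixtures cannot coexist within a single three-element clone. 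This tight correspondence between the (short) three-element idempotent clone lattice and exactly four target complexity classes is the technical heart of the argument; handling even one new clone type on a larger domain is what opens the door to the exotic complexities of the later theorems in this paper.
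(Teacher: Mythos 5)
This theorem is not proved in the present paper---it is cited verbatim from \cite{QCSPMonstersSTOC,QCSP_Monsters_JACM}---so there is no in-paper argument to compare your proposal against. Your high-level plan does track the algebraic strategy used in those references: restrict to the idempotent clone lattice on a three-element set, use the CSP dichotomy for hardness below NP, use PGP for NP membership, and build explicit encodings for coNP- and PSpace-hardness. At that level of description it is on the right track.

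However, there is a concrete gap. You attribute all non-hard cases to either PGP or a universal subuniverse, and you summarize the heart of the argument as showing that every idempotent clone on $\{0,1,2\}$ falls into one of four patterns. But, as this very paper points out in the introduction, \cite{QCSPMonstersSTOC,QCSP_Monsters_JACM} exhibit two three-element constraint languages whose $\QCSP$ is in $\mathrm{P}$ even though they do \emph{not} satisfy the PGP property, and for which no fixed polynomial-size set $S$ of universal plays exists. For these languages the polynomial-time algorithm computes a polynomial-size $S$ \emph{per instance} by a Turing reduction to CSP (solving many CSP subinstances to find $S$, then solving more to verify it). So "PGP $\Rightarrow$ NP, then CSP in P $\Rightarrow$ P" does not cover the tractable cases, and a new algorithmic idea is genuinely required there. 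Beyond that, the proposal restates the exhaustiveness claim rather than giving a mechanism for proving it (e.g., identifying the specific relations or clone identities that force each complexity class, and showing the "incompatible mixtures" cannot arise); this is exactly the hard step and it is left as an assertion.

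A second, smaller point: "dual-PGP coNP pattern" is not a standard device, and the coNP case in the cited work rests on a specific language and a non-obvious membership argument, not on a dual of PGP. If you want the coNP membership, you need to produce the concrete proof that a No-instance can be refuted by polynomially many universal moves for that particular clone, not merely invoke duality.
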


The statement proved in \cite{QCSPMonstersSTOC,QCSP_Monsters_JACM} is stronger than Theorem \ref{THMThreeElementIdempotentClassificaition} as the authors provide  necessary and sufficient conditions for the $\QCSP(\Gamma)$ to be in each of these classes.
Notice that 
for the QCSP we do not know 
a simple trick that allows us to find an equivalent constraint language 
with all constant relations
$\{x=a\mid a\in A\}$
for a constraint language without.
Recall that for the usual CSP we first consider the core of the language and then 
safely add all the constant relations to it \cite{jeavons1998algebraic,CSPconjecture}.
For the QCSP reducing the domain is not an option as 
the universal quantifier lives on the whole domain.
That is why, Theorem \ref{THMThreeElementIdempotentClassificaition},
has been proved only 
for constraint languages with all constant relations, 
and a complete classification 
for all constraint languages 
on a 3-element domain is wide open.

\subsection{Reduction to CSP}\label{SUBSECTIONReductionTOCSPInIntro}
It is natural to try to reduce the QCSP to its older brother CSP. 
In fact, any QCSP instance 
$\exists y_{0}\forall x_1\exists y_1\dots\forall x_n \exists y_n\; \Psi$ can be viewed as a CSP instance of an exponential size. 
If a QCSP-sentence holds, then there exists a winning strategy for the Existential Player (EP) defined by Skolem functions, i.e.,
$y_{i} = f_{i}(x_1,\dots,x_{i})$. 
We encode every value of $f_{i}(a_1,\dots,a_{i})$ 
by a new variable $y_{i}^{a_1,\dots,a_{i}}$, and 
for any play of the Universal Player (UP)
we list all the constraints that have to be satisfied (see Section \ref{SUBSECTIONInducedCSPInstances} for more details).

Clearly, this procedure gives us nothing algorithmically, because the obtained CSP instance is of exponential size.
Nevertheless, we might ask whether it is necessary to  
look at the whole instance to learn that it does not hold, which can be formulated as follows.
We say that 
the UP \emph{wins on $S\subseteq A^{n}$}
in $\exists y_{0}\forall x_1\exists y_1\dots\forall x_n \exists y_n\; \Psi$
if the instance $$\exists y_0 \forall x_1\exists y_1\dots
\forall x_n\exists y_n(
(x_1,\dots,x_n)\in S\rightarrow \Psi)$$ does not hold.

\begin{problem}\label{QUESTIONSizeOfUPWinningSet}
For a No-instance of $\QCSP(\Gamma)$ with $n$ universal variables, what is the minimal $S\subseteq A^{n}$ such that the UP wins on $S$?
\end{problem}

In this paper we answer this fundamental question by showing that unless the problem is PSpace-hard, 
the set $S$ can be chosen of polynomial size. 
Notice that for the PSpace-hard case we should not expect $S$ to be of non-exponential size, as it would 
send our problem to some class below PSpace. 

It would be even better if the set $S$, on which the UP wins, 
could be fixed for all No-instances or could be calculated efficiently. We can ask the following question.

\begin{problem}
What is the minimal 
$S\subseteq A^{n}$ 
such that 
for any No-instance of $\QCSP(\Gamma)$
with $n$ universal variables
the UP wins on $S$?
\end{problem}

If $S$ can always be chosen of polynomial-size and  
can be computed efficiently, 
then $\QCSP(\Gamma)$ immediately goes to 
the complexity class NP, as it is reduced to a polynomial-size CSP instance that can be efficiently computed.
Surprisingly, all the problems $\QCSP(\Gamma)$ known to be in NP by 2018 satisfy the above property \cite{AU-Chen-PGP,Meditations,QC2017}.
In fact, as it is shown in \cite{zhukPGPReductionArxiv}, 
for all constraint languages whose polymorphisms satisfy the Polynomially Generated Powers (PGP) Property,
the set $S$ can be chosen to be very simple:
there exists $k$ such that 
the UP wins in any No-instance 
on the set of
all tuples having at most 
$k$ switches, where a switch in a tuple 
$(a_1,\dots,a_n)$ is a pair $(a_{i},a_{i+1})$ 
such that $a_i\neq a_{i+1}$.
Moreover, as it was shown in \cite{ZhukGap2015}, if polymorphisms do not 
satisfy the PGP property, they satisfy the Exponential Generated Powers (EGP) Property, which automatically implies that 
such a polynomial-size $S$ cannot exist (at least if the number of existential variables is not limited).

Surprisingly, in \cite{QCSPMonstersSTOC,QCSP_Monsters_JACM} 
two constraint languages on a 3-element domain were discovered such that 
the QCSP over these languages is solvable in polynomial time,
but they do not satisfy the PGP property and, therefore, we cannot fix a polynomial-size $S$.
Nevertheless, for every instance 
we can efficiently calculate a polynomial-size $S$ such that 
if the UP can win, the UP wins on $S$.
We can formulate the following open question.

\begin{problem}
Suppose $\QCSP(\Gamma)$ is in NP. Is it true that 
for any instance of $\QCSP(\Gamma)$ with $n$ universal variables there exists a polynomial-time computable 
set $S\subseteq A^{n}$ such that 
the UP can win if and only if the UP wins on $S$?
\end{problem}

\section{Main Results}\label{SECTIONMainResults}

\subsection{$\Pi_{2}^{P}$ vs PSpace Dichotomy}

The main result of this paper comes from Question \ref{QUESTIONSizeOfUPWinningSet} from the introduction. 
We show that 
if $\QCSP(\Gamma)$ is not PSpace-complete and the UP has a winning strategy in a concrete QCSP instance, 
then this winning strategy can be chosen to be rather simple.
We cannot expect the winning strategy for the UP to be polynomial-time computable because this would 
imply that $\QCSP(\Gamma)$ is in NP, 
and we know that $\QCSP(\Gamma)$ can be coNP-complete
\cite{QCSP_Monsters_JACM}. 
Nevertheless, as we show in the next theorem, 
the UP wins in any No-instance on a set $S$ of polynomial size, that is, we can restrict the UP to polynomially many possible moves and he still wins.

\begin{thm}\label{THMMainUPRestriction}
Suppose $\Gamma$ is a constraint language on a finite set $A$, $\QCSP(\Gamma)$ is not PSpace-hard.
Then for any No-instance
$\exists y_0 \forall x_1\exists y_1\dots
\forall x_n\exists y_n\Psi$ of $\QCSP(\Gamma)$ there exists 
    $S\subseteq A^{n}$ with $|S|\le|A|^{2}\cdot (n\cdot |A|)^{2^{{2|A|}^{|A|+1}}}$ such that  
    $$\exists y_0 \forall x_1\exists y_1\dots
\forall x_n\exists y_n(
(x_1,\dots,x_n)\in S\rightarrow \Psi)$$ does not hold.
\end{thm}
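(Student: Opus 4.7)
The plan is to pass from the QCSP to its Skolem-induced CSP instance and then use the algebraic consequences of not being PSpace-hard to extract a polynomially-indexed unsatisfiable sub-instance. First I would make explicit the construction of Section~\ref{SUBSECTIONReductionTOCSPInIntro}: introduce a fresh variable $y_i^{a_1,\dots,a_i}$ for every partial universal play, and for each full play $\vec a\in A^n$ form a copy $\Psi^{\vec a}$ of the constraint conjunction with the appropriate substitutions. Non-satisfaction of the QCSP sentence is equivalent to unsatisfiability of the CSP instance $\bigcup_{\vec a\in A^n}\Psi^{\vec a}$, and the claim reduces to finding an unsatisfiable sub-instance $\bigcup_{\vec a\in S}\Psi^{\vec a}$ with $|S|$ bounded as stated.

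The algebraic input would be Theorem~\ref{THMMightyTupleIPSpaceHardness}: absence of a mighty tuple in $\Pol(\Gamma)$ (or in any reduct one encounters) would force PSpace-hardness, so under our hypothesis we can always find a mighty tuple. I would iterate the reduction machinery of Theorems~\ref{THMFindUniversalSubuniverse} and~\ref{THMFindSmallerReduction} to peel the problem layer by layer: at each step either (a) locate a universal subuniverse $B\subsetneq A$ and restrict the induced CSP to variables taking values in $B$, obtaining a strictly smaller No-instance over a smaller language, or (b) directly exhibit a short list of plays whose constraint blocks already refute satisfiability, using the mighty-tuple structure to keep that list short. In case (a) each layer costs at most a factor of $n\cdot|A|$ in the size of $S$, corresponding to specifying where in the sequence of universal variables the reduction is triggered.

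Because every such reduction strictly decreases a well-founded parameter on $\Pol(\Gamma)$ — presumably the length of a descending zigzag chain of subuniverses, in the spirit of Theorem~\ref{THMNonemptyReductionIsZigzag} — the depth of iteration is bounded by a function of $|A|$ alone. This is where the exponent $2^{2|A|^{|A|+1}}$ enters: each of the at most $2|A|^{|A|+1}$ levels contributes a squaring of the exponent on $n\cdot|A|$ via the recursive expansion of the tree of sub-instances organized by Theorem~\ref{THMFindSmallTree}. The leading factor $|A|^{2}$ accounts for the initial choice of a distinguished pair $(b,c)\in A^{2}$ used to anchor the first reduction.

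The main obstacle will be proving that every single reduction step is simultaneously unsatisfiability-preserving and play-efficient, i.e., that detecting the failure inside a smaller universal subuniverse really requires only polynomially many additional plays rather than exponentially many. This is the technical heart of the argument and would rely on the mighty-tuple lemmas (Lemmas~\ref{LEMMightyTupleTwoImplies}, \ref{LEMMightyTupleThreeImplies}, \ref{LEMMightyTupleFiveImplies}) to synthesize, from a few universal plays already in $S$, Skolem assignments that extend to every play routed through the same universal subuniverse. Combined with the bounded depth of the zigzag hierarchy, this control yields the stated bound $|A|^{2}\cdot(n\cdot|A|)^{2^{2|A|^{|A|+1}}}$ on $|S|$.
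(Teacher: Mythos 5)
Your proposal reverses the logical direction of the mighty-tuple dichotomy, and this reversal propagates into a mechanism that does not match and would not obviously close the argument. Theorem~\ref{THMMightyTupleIPSpaceHardness} states that \emph{q-definability} of a mighty tuple I forces PSpace-hardness, so under the hypothesis that $\QCSP(\Gamma)$ is not PSpace-hard one concludes that \emph{no} mighty tuple I (hence none of types II--V) is q-definable over $\Gamma$. You write the opposite --- that absence would force hardness, so a mighty tuple can always be found --- and then try to use the mighty-tuple structure constructively to synthesize Skolem assignments and keep $S$ small. But the proof needs the nonexistence of mighty tuples as an obstruction: whenever the machinery would be blocked, the block manufactures a mighty tuple, and that manufacture is exactly what the hypothesis rules out.

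Concretely, the paper's route is: pass to the parametrized exponential instance $\mathcal I_R$, strengthen it to $\widetilde{\mathcal I}_R$, and observe (Theorem~\ref{THMOneConsistentReductionImpliesASolution}) that a nonempty $1$-consistent reduction forces a solution once mighty tuples IV and V are unavailable --- this is where Theorems~\ref{THMNonemptyReductionIsZigzag},~\ref{THMFindUniversalSubuniverse}, and~\ref{THMFindSmallerReduction} are spent, driving a strictly shrinking chain of reductions to contradiction, not to a play set. Since the given No-instance has no solution, no $1$-consistent reduction exists, and Theorem~\ref{THMFindSmallTree} then produces either a mighty tuple III (excluded) or a small unsatisfiable subinstance $\mathcal J$ of $\mathcal I_R$; the set $S$ is read off directly from $\Var(\mathcal J)$. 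The exponent $2^{2|A|^{|A|+1}}$ does not arise from layered squarings of a zigzag depth, but from a single pigeonhole in Lemma~\ref{LEMSplitPathIntoThreeParts}: a repetition-free path in a minimal unsatisfiable tree-covering cannot exceed the number of ordered pairs of nonempty $\zv$-parametrized unary relations, which is $<(2^{|A|})^{2|A|^{|A|}}$, and a repetition splits the path into three pieces yielding a mighty tuple III. The factor $|A|^2$ is also not an anchoring pair $(b,c)$; it counts the plays $(b_{a_1},\dots,b_{a_i},c,d,\dots,d)$ needed to realize, for each $y_i^{a_1,\dots,a_i}\in\Var(\mathcal J)$, every constraint $\mathcal S_R^m$ touching it. Your proposal is missing the arc-consistency/tree-covering step that actually localizes unsatisfiability, and with the dichotomy direction reversed, the ``peel by subuniverses'' loop has no termination certificate and no mechanism to convert a subuniverse restriction into a list of universal plays of the required size.
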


In other words, the above theorem states that 
unless $\QCSP(\Gamma)$ is PSpace-hard, 
for any No-instance the UP wins on a set $S$ of polynomial-size (notice that the domain $A$ is fixed).
If the polynomial-size set $S$ is fixed,
then to confirm that the instance does not hold we need 
to check all the strategies of the EP
defined on prefixes of the words (tuples) from $S$, which is also a polynomial-size set.
Thus, if $\QCSP(\Gamma)$ is not PSpace-hard, 
then to solve an instance 
$\exists y_0 \forall x_1\exists y_1\dots
\forall x_n\exists y_n\Psi$
we need to check that for all $S\subseteq A^{n}$ with $|S|\le|A|^{2}\cdot (n\cdot |A|)^{2^{{2|A|}^{|A|+1}}}$ 
there exists a winning strategy for the EP for the restricted problem, 
which sends the problem to the complexity class $\Pi_2^{P}$.
In fact,  
$\Pi_{2}^{P}$ is the class of problems $\mathcal U$
that can be defined as 
$$\mathcal U(Z) = \forall X ^{\color{black!40}|X|<p(|Z|)} \exists Y
^{\color{black!40}|Y|<q(|Z|)}
\mathcal V(X,Y,Z),$$
for some $\mathcal V\in \mathrm{P}$ and some polynomials $p$ and $q$.
In our case the set $S$ plays the role of $X$ and the restricted Skolem functions play the role of $Y$.
Then, in $\mathcal V$ we need to check for every tuple from $S$ (play of the UP) that the 
corresponding strategy of the EP works, which is obviously computable in polynomial time. 
Thus, we have the following Dichotomy Theorem.

\begin{thm}
Suppose $\Gamma$ is a constraint language on a finite set. 
Then $\QCSP(\Gamma)$ is 
\begin{itemize}
    \item PSpace-complete or
    \item in $\Pi_2^{P}$.
\end{itemize}
\end{thm}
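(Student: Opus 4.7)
The plan is to split on whether $\QCSP(\Gamma)$ is PSpace-hard. In the hard case, I note that $\QCSP(\Gamma)$ is always decidable in PSpace by the standard alternating algorithm that evaluates the prenex sentence quantifier-by-quantifier on the finite domain $A$; hence PSpace-hardness upgrades to PSpace-completeness. So the substantive case is when $\QCSP(\Gamma)$ is not PSpace-hard, and I must place it in $\Pi_{2}^{P}$.

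The key is to reformulate acceptance using Theorem~\ref{THMMainUPRestriction}. Write $N(n) := |A|^{2}\cdot (n\cdot |A|)^{2^{2|A|^{|A|+1}}}$, which is polynomial in $n$ since $|A|$ is fixed. The claim I would prove as an intermediate step is: an instance $\exists y_0 \forall x_1 \exists y_1 \dots \forall x_n \exists y_n \, \Psi$ is a Yes-instance if and only if for every $S\subseteq A^{n}$ with $|S|\le N(n)$ the restricted sentence
$$\exists y_0 \forall x_1 \exists y_1 \dots \forall x_n \exists y_n \bigl((x_1,\dots,x_n)\in S \to \Psi\bigr)$$
holds. The forward direction is immediate (any winning EP strategy for the original instance also wins on every restriction). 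The backward direction is exactly the contrapositive of Theorem~\ref{THMMainUPRestriction}: if the original sentence fails, then there exists an $S$ of size at most $N(n)$ on which the UP already wins.

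It remains to cast this equivalence in the canonical $\Pi_{2}^{P}$ shape $\forall X^{|X|<p(|Z|)} \exists Y^{|Y|<q(|Z|)} \mathcal{V}(X,Y,Z)$. The input $Z$ is the QCSP sentence. The universally quantified object $X$ encodes a subset $S\subseteq A^{n}$ with $|S|\le N(n)$; since $|A|$ is a constant, listing $N(n)$ tuples of length $n$ takes polynomial space. The existentially quantified object $Y$ encodes the EP's restricted strategy, namely values $y_{i}^{a_{1},\dots,a_{i}}\in A$ for every prefix $(a_{1},\dots,a_{i})$ of a tuple from $S$; there are at most $n\cdot |S|$ such prefixes, so $|Y|$ is polynomial. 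The verifier $\mathcal{V}$ iterates over each $(a_{1},\dots,a_{n})\in S$, reads off the EP's moves from $Y$, and checks every conjunct of $\Psi$ under the induced assignment; this is clearly polynomial-time in $|Z|$.

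The whole technical content of the theorem is packed into Theorem~\ref{THMMainUPRestriction}, so there is no real obstacle in this proof — the only step requiring care is verifying that the bound $N(n)$ is polynomial in $n$ for fixed $|A|$ (the tower in the exponent is a constant once the domain is fixed) and that the restricted Skolem strategy really only needs to be specified on prefixes of tuples in $S$, which is what makes $|Y|$ polynomial. Both points are routine.
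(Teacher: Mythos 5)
Your proposal is correct and follows the same route as the paper: dichotomize on PSpace-hardness, observe PSpace membership for the hard case, and for the non-hard case invoke Theorem~\ref{THMMainUPRestriction} to reduce acceptance to ``$\forall$ polynomial-size $S$ $\exists$ restricted Skolem strategy on prefixes of $S$, verify in P,'' which is the canonical $\Pi_{2}^{P}$ shape. The only point worth double-checking, which you do handle, is that the EP's strategy need only be specified on prefixes of tuples in $S$ (since once a partial play falls off $S$ the implication is vacuously true), making $|Y|$ polynomial.
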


\subsection{What is inside $\Pi_{2}^{P}$?}\label{SUBSECTIONWhatIsInside}

We show that the gap between 
Pspace and $\Pi_2^{P}$ cannot be enlarged, 
and there is a constraint language whose QCSP is $\Pi_2^P$-complete.

\begin{thm}\label{THMPiTwoCompleteLanguage}
There exists $\Gamma$ on a 6-element domain
such that 
$\QCSP(\Gamma)$ is $\Pi_2^{P}$-complete.
\end{thm}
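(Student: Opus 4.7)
The plan is to exhibit a specific $\Gamma$ on a six-element domain $A$, deduce $\Pi_2^P$-membership essentially for free from the dichotomy, and then do the real work of establishing $\Pi_2^P$-hardness. Membership is immediate once $\Gamma$ is in hand: by the dichotomy theorem just proved (Theorem~\ref{THMMainUPRestriction} and its corollary), it suffices to verify that $\QCSP(\Gamma)$ is not PSpace-hard, which reduces to a finite algebraic check on the polymorphism clone of $\Gamma$. So the real content of the theorem lies in constructing $\Gamma$ and in the hardness reduction.

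For the construction, I would stack two ingredients on disjoint subsets of $A$: a three-element block carrying a coNP-complete QCSP language analogous to the one from~\cite{QCSP_Monsters_JACM}, and a three-element block (or two Boolean elements plus an auxiliary ``tag'' element) carrying an NP-complete CSP language such as a lifted $1$-in-$3$-SAT or $3$-colouring relation. Gluing relations link the two blocks in a carefully chosen way, so that the universal player's moves select, at each quantifier step, whether the corresponding existential variable is forced into the coNP-block (as a dummy committed to a fixed value) or into the NP-block (as a genuine Boolean witness), and so that the coNP-side gadget forbids the existential player from adaptively revising NP-block commitments in response to later universal moves. The clone must be tuned so that none of the algebraic signatures of PSpace-hardness (the ``mighty tuple'' obstructions developed in the main theorem) appear.

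The hardness reduction is from $\forall\exists$-$3$SAT, the canonical $\Pi_2^P$-complete problem. Given $\forall x_1\dots\forall x_p\,\exists y_1\dots\exists y_q\,\phi(\vec x,\vec y)$, one builds a QCSP instance in which the QCSP universal variables encode the $x_i$ via tagged values in the NP-block, the coNP-gadget fires in parallel and pins the intermediate existential variables to forced values, and a final cluster of QCSP existential variables encodes a satisfying assignment of $\phi$ through the NP-block. Correctness in one direction is routine; in the other direction, one uses the coNP-freezing property of the gadget to extract, from any winning Skolem strategy, witnesses $y_j(\vec x)$ depending only on the full universal prefix, yielding a witness for the source $\forall\exists$-$3$SAT instance.

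The main obstacle will be calibrating the glue between the two blocks. It has to be strong enough that the existential player's Skolem strategy is effectively forced into single-alternation form (otherwise the problem could slip below $\Pi_2^P$ into DP or $\Theta_2^P$), yet weak enough that the polymorphism clone of $\Gamma$ avoids the mighty-tuple signatures characterising PSpace-hardness. The choice of six elements should turn out to be essentially the smallest on which such a balance is achievable, since smaller domains either fail to produce $\Pi_2^P$-hardness or collapse into one of the simpler classes already catalogued in~\cite{QCSP_Monsters_JACM}.
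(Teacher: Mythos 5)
Your membership argument has a genuine gap. You claim that $\Pi_2^P$-membership follows ``essentially for free from the dichotomy'' by verifying that $\QCSP(\Gamma)$ is not PSpace-hard, and that this ``reduces to a finite algebraic check on the polymorphism clone of $\Gamma$.'' No such finite check is available: the paper's characterization of PSpace-hardness is in terms of q-definability of a mighty tuple, which is not a bounded polymorphism identity test, and the paper itself explicitly notes that it has no polymorphism characterization of $\Pi_2^P$-membership (``polymorphisms do not play any role in this paper''). The dichotomy only tells you \emph{either} PSpace-complete \emph{or} in $\Pi_2^P$; it does not hand you a decision procedure for which side a concrete $\Gamma$ falls on. The paper therefore proves membership directly and unconditionally: it exhibits a binary operation $g$ on the 6-element domain that preserves every relation in $\Gamma$ and all constants, and uses $g$ to show that an optimal move for the existential player at any position can be computed in polynomial time given an NP-oracle, whence $\QCSP(\Gamma)\in\mathrm{coNP}^{\mathrm{NP}}=\Pi_2^P$. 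This direct polymorphism argument is indispensable and has no counterpart in your proposal.

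Your hardness sketch is in the right general direction (reduce from a $\Pi_2$-complete quantified CSP) but the ``coNP block plus NP block with a coNP-freezing property'' framing does not correspond to what makes the actual construction work, and I don't see how to make that framing precise. The paper instead works over $A=\{0,1,2,0',1',2'\}$ with two mechanisms that interlock differently from what you describe: the primed triple $\{0',1',2'\}$ carries a lifted $\mathrm{1IN3}$ relation with an escape tuple $(2',2',2')$, and the unprimed triple $\{0,1,2\}$ carries $\mathrm{AND}_2$ and $\mathrm{OR}_2$ as a Boolean circuit that \emph{detects whether the universal player cheated}. Since the UP cannot meaningfully universally quantify over $\{0',1'\}$ directly, each source universal variable is replaced by \emph{two} QCSP-universal variables $x_i^0,x_i^1$, and $\delta_0,\delta_1$ transfer their values into the primed block; the AND/OR circuit together with $\epsilon$ detects an invalid pair $(x_i^0,x_i^1)$ and, when this happens, releases the EP to set all primed variables to $2'$. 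There is no coNP-hard gadget and no need to ``extract witnesses from an arbitrary Skolem strategy'' since the constructed instance already has a single $\forall$-then-$\exists$ block. Finally, your remark that 6 elements is minimal is an overclaim the paper does not make or prove.
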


Thus, we already have 7 complexity classes that can be expressed 
as the QCSP for some constraint language:
P, NP, coNP, DP, $\Theta_{2}^{P}$, $\Pi_{2}^{P}$, and PSpace.
In Figure \ref{fig:ComplexityClasses}
we show all the complexity classes known to be expressible as the QCSP and inclusions between them, 
where the edge is solid if we know that there are no classes between them, and 
dotted otherwise.

\begin{problem}
Are there any other complexity classes up to polynomial 
reduction that can be expressed as
$\QCSP(\Gamma)$ for some $\Gamma$ on a finite set? 
\end{problem}

In fact, we want to prove or disprove the following dichotomy claims:

\begin{problem}
Suppose $\Gamma$ is a constraint language on a finite set. Is it true that
\begin{enumerate}
    \item $\QCSP(\Gamma)$ is either $\Pi_2^{P}$-hard, or in 
    $\Theta_{2}^{P}$?
    \item $\QCSP(\Gamma)$ is either $\Theta_2^{P}$-hard, or in 
    $\mathrm{DP}$?
    \item $\QCSP(\Gamma)$ is either $\mathrm{DP}$-hard, or in 
    $\mathrm{NP}\cup \mathrm{coNP}$?
    \item $\QCSP(\Gamma)$ is either $\mathrm{NP}$-hard, or in 
    $\mathrm{coNP}$?
    \item $\QCSP(\Gamma)$ is either $\mathrm{coNP}$-hard, or in 
    $\mathrm{NP}$?
    \item $\QCSP(\Gamma)$ is either in $\mathrm{P}$, or
    $\mathrm{NP}$-hard, or $\mathrm{coNP}$-hard? 
\end{enumerate}
\end{problem}

\begin{figure}
    \centering

    \begin{tikzpicture}[scale = 1]


\node at (0,2) (p){\LARGE P};
\node at (2.2,3)(np){\LARGE NP}; 
\node at (2.2,1)(conp){\Large coNP}; 
\node at (4.4,2)(dp){\LARGE DP}; green,[rotate =6]
\node at (6.6,1)(theta){\LARGE $\Theta_{2}^{P}$}; 
\node at (9,1)(pi){\LARGE $\Pi_{2}^{P}$}; 
\node at (10,3.3-0.2)(pspace){\Large PSPACE}; 
\draw [line width=0.9mm, -stealth, black!50,dashed] (p) to[in=190,out=40] (np);
\draw [line width=0.9mm, -stealth, black!50,dashed] (p) to[in=160,out=-30] (conp);
\draw [line width=0.9mm, -stealth, black!50,dashed] (conp) to[in=190,out=40] (dp);
\draw [line width=0.9mm, -stealth, black!50,dashed] (np) to[in=140,out=-10] (dp);
\draw [line width=0.9mm, -stealth, black!50,dashed] (dp) to[in=120,out=10] (theta);
\draw [line width=0.9mm, -stealth, black!50,dashed] (theta) to[in=170,out=10] (pi);

\draw [line width=0.9mm, -stealth, black!50] (pi) to[in=250,out=60] (pspace);

\end{tikzpicture}
    \caption{Complexity classes expressible as $\QCSP(\Gamma)$ for some $\Gamma$.}
    \label{fig:ComplexityClasses}
\end{figure}
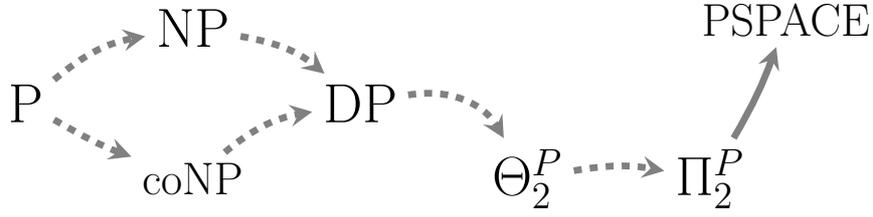

It is not hard to 
build an example showing that 
we cannot just move all universal quantifiers left 
to reduce $\QCSP(\Gamma)$ to a $\Pi_2^{P}$-sentence even 
if $\QCSP(\Gamma)$ is in $\Pi_2^{P}$.
Nevertheless, it is still not clear whether a smarter 
polynomial reduction 
to a $\Pi_2$-sentence over
the same language exists.
We denote the modification of $\QCSP(\Gamma)$ in which 
only $\Pi_2$-sentences are allowed 
by $\Pi_2$-$\QCSP(\Gamma)$. 
Then this question can be formulated as follows.

\begin{problem}
Suppose $\Gamma$ is a constraint language on a finite set and $\QCSP(\Gamma)$ is $\Pi_2^{P}$-complete.
Is it true that $\Pi_2$-$\QCSP(\Gamma)$ is
$\Pi_2^{P}$-complete?
\end{problem}




One may also ask whether 
it is sufficient to consider only $\Pi_2$-sentences 
for all complexity classes but PSpace.

\begin{problem}
Suppose $\Gamma$ is a constraint language on a finite domain and $\QCSP(\Gamma)$ is in $\Pi_2^{P}$.
Is it true that $\Pi_2$-$\QCSP(\Gamma)$ is
polynomially equivalent to 
$\QCSP(\Gamma)$?
\end{problem}

A positive answer to this question would make a complete classification 
of the complexity of $\QCSP(\Gamma)$ for each $\Gamma$ much closer.
Checking a $\Pi_2$-sentence is equivalent to 
solving a Constraint Satisfaction Problem for every evaluation of 
universal variables,
but if we need to check exponentially many of them, it does not give us 
an efficient algorithm.
It is very similar to Question \ref{QUESTIONSizeOfUPWinningSet} from the introduction on whether 
the UP can win only playing strategies from 
a polynomial-size subset, but for the 
$\Pi_{2}$-sentence the situation is much easier as 
the UP plays first and the EP just reacts.

Earlier Hubie Chen 
noticed \cite{AU-Chen-PGP} that in some cases it is sufficient to check only polynomially 
many evaluations to guarantee that the $\Pi_2$-sentence 
holds, 
which implies that the problem is equivalent to the CSP and belongs to NP.
Precisely, this reduction works for constraint languages 
satisfying the Polynomially Generated Powers (PGP) Property
already mentioned in the introduction. 
These are languages such that 
all the tuples of $A^{n}$ can be generated from 
polynomially many tuples by applying polymorphisms 
of $\Gamma$ coordinate-wise.
Notice that in the PGP case this polynomial set of evaluations can be chosen independently 
of the instance and can be calculated efficiently, as it is just the set of all tuples with at most $k$ switches \cite{ZhukGap2015}. This gives us a very simple polynomial reduction to CSP \cite{AU-Chen-PGP}.

Sometimes a similar strategy works even if the polymorphisms of the constraint language do not satisfy the PGP property:
two such constraint languages were presented
in \cite{QCSP_Monsters_JACM}. 
The polynomial algorithm for them works as follows.
First, by solving many CSP instances it calculates the polynomial
set of tuples (evaluations of the universal variables).
Then, again by solving CSP instances, it checks that the quantifier-free part of the instance is satisfiable for every tuple (evaluation)  it found.
This gives us a Turing reduction to the CSP, 
and if the CSP is solvable in polynomial time, it gives us a polynomial 
algorithm. This idea completed the classification of 
the complexity of the QCSP for all constraint languages on a 3-element set containing all constant relations \cite{QCSP_Monsters_JACM}, and we hope that a generalization of this idea will lead to a complete 
classification of the complexity inside $\Pi_2^{P}$.

\subsection{PSpace-complete languages}
The complexity of the CSP for 
a (finite) constraint language $\Gamma$ has a very simple characterization 
in terms of polymorphisms. 
Precisely, $\CSP(\Gamma)$ is solvable in polynomial time if
$\Gamma$ admits a cyclic polymorphism, 
and it is NP-complete otherwise \cite{BulatovFVConjecture,BulatovProofCSP,ZhukFVConjecture, zhuk_CSP_Dichotomy_JACM}.
It is also known that the complexity of 
$\QCSP(\Gamma)$ is determined by surjective polymorphisms of $\Gamma$ \cite{BBCJK},
but we are not aware of a nice characterization of 
 $\Pi_2^{P}$-membership in terms of polymorphisms, 
 moreover polymorphisms do not play any role in this paper.
Nevertheless, we have a nice characterization in terms of relations. 
It turned out all the PSpace-hard cases are similar in the sense that they can express certain relations giving us PSpace-hardness. 
We say that a constraint language $\Gamma$ 
\emph{q-defines} a relation $R$ 
if there exists a 
quantified conjunctive formula 
over $\Gamma$ that defines the relation $R$.
Similarly, we say that 
$\Gamma$ \emph{q-defines} a set $S$ of relations if it 
q-defines each relation from $S$. In this case we also say 
that $S$ is \emph{q-definable over $\Gamma$.}
It is an easy observation that 
$\QCSP(\Gamma_1)$ can be (LOGSPACE) reduced 
to $\QCSP(\Gamma_2)$ if 
$\Gamma_2$ q-defines $\Gamma_1$ \cite{BBCJK}. 

To formulate the classification of all PSpace-complete languages, 
we introduce the notion of a mighty tuple.
Suppose 
$k\ge 0$, $m\ge 1$, 
$Q\subseteq A^{|A|+k+m+2}$, 
$B,C,D\subseteq A^{|A|+k+m+1}$, 
$\Delta\subseteq A^{|A|+k}$.
The relation $Q$ can be viewed as 
a binary relation having three additional parameters 
$\zv\in A^{|A|}$, $\delta\in A^{k}$, and 
$\alpha\in A^m$.
Similarly, 
$\Delta$ is a $k$-ary relation 
with an additional parameter $\zv\in A^{|A|}$, 
$B,C,D$ are unary relations
with additional parameters
$\zv\in A^{|A|}$ and $\delta\in A^{k}$.
By 
$\prescript{\zv}{}\Delta$,
$\prescript{\zv}{\delta}Q^{\alpha}$, 
$\prescript{\zv}{\delta}D$,
$\prescript{\zv}{\delta}B$,
$\prescript{\zv}{\delta}C$
we denote the respective $k$-ary, binary, and unary relations 
where these parameters are fixed.
For instance, we say that 
$(a,b)\in \prescript{\zv}{\delta}Q^{\alpha}$ 
if 
$(\zv,\delta,\alpha,a,b)\in Q$.
Denote 
\begin{align}\label{EQForallInterpretation}
\prescript{\zv}{\delta}Q^{\forall}(y_1,y_2) &= 
\forall x \;
\prescript{\zv}{\delta}Q^{x,x,\dots,x}(y_1,y_2),\\
\label{EQForallForallInterpretation}
\prescript{\zv}{\delta}Q^{\forall\forall}(y_1,y_2) &= 
\forall x_1\dots\forall x_m \;
\prescript{\zv}{\delta}Q^{x_1,x_2,\dots,x_m}(y_1,y_2).
\end{align}
A tuple $(Q,D,B,C,\Delta)$ is called \emph{a mighty tuple I} if 
\begin{enumerate}
    \item $\prescript{\zv}{}\Delta\neq\varnothing$ for every $\zv\in A^{|A|}$;
    \item $\prescript{\zv}{\delta}B$, $\prescript{\zv}{\delta}C$, and $\prescript{\zv}{\delta}D$ are nonempty 
    for every $\zv\in A^{|A|}$ and $\delta\in\prescript{\zv}{}\Delta $;
    \item $\prescript{\zv}{\delta}Q^{\alpha}$ is an equivalence relation 
    on $\prescript{\zv}{\delta}D$ for every  $\zv\in A^{|A|}$, $\delta\in \prescript{\zv}{}\Delta$, and $\alpha\in A^{m}$;
   \item  $\prescript{\zv}{\delta}Q^{\forall}=\prescript{\zv}{\delta}D\times
    \prescript{\zv}{\delta}D$
    for every $\zv\in A^{|A|}$ and $\delta\in \prescript{\zv}{}\Delta $;
    \item $\prescript{\zv}{\delta}B$ and $\prescript{\zv}{\delta}C$
    are equivalence classes of $\prescript{\zv}{\delta}Q^{\forall\forall}$;    
    \item there exists $\zv\in A^{|A|}$ such that 
   $\prescript{\zv}{\delta}B\neq \prescript{\zv}{\delta}C$  for every $\delta\in\prescript{\zv}{}\Delta$. 
\end{enumerate}

The idea behind this definition is as follows.
For fixed $\zv$ and $\delta$ we have a parameterized binary relation $\prescript{\zv}{\delta}Q$, 
which is the full equivalence relation if 
$\alpha$ is a constant tuple and some equivalence relation otherwise. 
Relations $B$ and $C$ are just two equivalence classes that 
the EP has to connect by a complicated formula over $Q$ and 
the UP is trying to prevent this by choosing the parameters $\alpha$.

In the next two theorems and later in the paper we assume that 
$\mathrm{PSpace}\neq \Pi_{2}^{P}$. 

\begin{thm}\label{THMNonIdempotantClassification}
Suppose $\Gamma$ is a constraint language on a finite set $A$.
Then the following conditions are equivalent:
\begin{enumerate}
\item $\QCSP(\Gamma)$ is PSpace-complete;
\item  there exists a 
mighty tuple I q-definable over $\Gamma$.
\end{enumerate} 
\end{thm}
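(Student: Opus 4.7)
I would prove the two directions separately, noting that the nontrivial asymmetry lies in $(1)\Rightarrow(2)$.

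Direction $(2)\Rightarrow(1)$. Assuming that $\Gamma$ q-defines a mighty tuple I $(Q,D,B,C,\Delta)$, I would construct a polynomial reduction from a canonical PSpace-complete problem such as Quantified $3$-Satisfiability to $\QCSP(\Gamma)$. The conceptual content of the mighty-tuple conditions is a parameterized family of equivalence relations $\prescript{\zv}{\delta}Q^{\alpha}$ on $\prescript{\zv}{\delta}D$ that collapses to the full equivalence whenever $\alpha$ is constant (condition (4)), while its common refinement $\prescript{\zv}{\delta}Q^{\forall\forall}$ still keeps $B$ and $C$ in distinct classes (conditions (5)--(6)). This is exactly the gadget needed to simulate one quantifier alternation: the EP tries to link a $B$-element to a $C$-element through a chain of $Q$-equivalences, and the UP prevents this by choosing the $\alpha$-parameters of the links universally. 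Iterating the gadget, with $\zv\in A^{|A|}$ carrying per-level bookkeeping so that condition (1) supplies an admissible $\delta$ at every level, yields a polynomial reduction from an arbitrary alternating QBF instance. Since $Q,D,B,C,\Delta$ are all q-definable, each gadget is expressible, and the whole construction is a $\QCSP(\Gamma)$ instance.

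Direction $(1)\Rightarrow(2)$. Working contrapositively, assume that no mighty tuple I is q-definable over $\Gamma$; I would show $\QCSP(\Gamma)\in\Pi_2^{P}$, which, combined with the standing assumption $\mathrm{PSpace}\neq\Pi_2^{P}$, rules out PSpace-completeness. By Theorem~\ref{THMMainUPRestriction}, $\Pi_2^{P}$-membership follows once the UP's winning strategy in any No-instance can be restricted to a polynomial-size $S\subseteq A^{n}$, so it suffices to prove the contrapositive at that level: given a family of No-instances whose minimal UP-winning sets are super-polynomial, extract, by compactness over the finite signature on the finite set $A$, relations witnessing all six mighty-tuple conditions. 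Concretely, $Q$ is the q-definable binary relation ``there is an EP strategy linking $y_1$ to $y_2$ through the designated block of $m$ universal moves $\alpha$'', with external context recorded by $\zv$ and $\delta$; $B$ and $C$ are two classes that repeatedly resist being merged; and $\Delta$ parameterizes the non-collapsing choices of prior EP bookkeeping. The chain of q-definable approximations comes from inductively exploiting the absence of easier reductions, in the spirit of the auxiliary theorems developed elsewhere in the paper.

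The main obstacle I anticipate is condition (6): finding a single $\zv\in A^{|A|}$ on which $\prescript{\zv}{\delta}B\neq\prescript{\zv}{\delta}C$ holds \emph{simultaneously} for every $\delta\in\prescript{\zv}{}\Delta$, not merely for each $\delta$ separately. This uniformity is precisely what distinguishes the PSpace regime from the $\Pi_2^{P}$ regime: without it, for each $\delta$ where $B=C$ the UP's $\alpha$-plays can be absorbed into a polynomial-size set by induction on $\delta$, pushing the instance into the $\Pi_2^{P}$ regime. I expect condition (6) to be established by a pigeonhole-style selection over the bounded parameter set $A^{|A|}$, together with an iterative argument that peels off any $\delta$-slice admitting a smaller reduction until the remaining core exhibits the strong separation behaviour required of a mighty tuple.
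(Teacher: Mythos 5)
Your high-level decomposition matches the paper's: $(2)\Rightarrow(1)$ via a hardness theorem for a mighty tuple~I, and $(1)\Rightarrow(2)$ contrapositively, using Theorem~\ref{THMMainUPRestriction} under the standing assumption $\mathrm{PSpace}\neq\Pi_2^{P}$. The genuine gap is in $(1)\Rightarrow(2)$. You propose to ``extract, by compactness over the finite signature'' a mighty tuple~I directly from No-instances whose minimal UP-winning sets are superpolynomial, and you single out condition~(6) as the main obstacle. No compactness argument of this kind is available: nothing in the finiteness of $A$ automatically produces the six interlocking structural conditions, and condition~(6) is not the bottleneck. The paper's route avoids the direct extraction entirely. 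It reformulates the QCSP instance as the exponential CSP instance $\mathcal I_{R}$ (Lemma~\ref{LEMEquivalentCSPInstances}), shows that if no mighty tuple is q-definable then $\mathcal I_{R}$ is solved by arc-consistency (Theorems~\ref{THMFindSmallTree}, \ref{THMNonemptyReductionIsZigzag}, \ref{THMFindUniversalSubuniverse}, \ref{THMFindSmallerReduction}, culminating in Theorem~\ref{THMOneConsistentReductionImpliesASolution}), and whenever that reduction chain breaks, extracts one of the \emph{simpler} intermediate mighty tuples III, IV, or V. These intermediate tuples have far fewer conditions (no $\delta$-parameter, no equivalence-relation structure) and are only afterwards normalized into a mighty tuple~I via Lemmas~\ref{LEMMightyTupleTwoThreeFourEquivalence}, \ref{LEMMightyTupleTwoImplies}, and \ref{LEMMightyTupleFiveImplies}. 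Without the arc-consistency machinery on $\mathcal I_{R}$ and the intermediate tuples, your ``pigeonhole over $A^{|A|}$'' and ``iterative peeling of $\delta$-slices'' is not a checkable argument; the real difficulty is assembling the full package of six conditions at all, which is precisely what the intermediate tuples are designed to sidestep.

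For $(2)\Rightarrow(1)$ your gadget idea (EP links a $B$-element to a $C$-element through chained $Q$-edges, UP obstructs by picking $\alpha$) is on target, but $\zv$ is not ``per-level bookkeeping.'' In the paper $\zv\in A^{|A|}$ is a single block of $|A|$ universally quantified variables at the very start of the sentence, standing in for the missing constant relations, and $\delta$ is chosen once by the EP (depending on $\zv$), not re-chosen at each alternation level. The reduction then proceeds via the mighty tuple~I$'$ normalization (Lemma~\ref{LEMMightyTupleOnePrime}), the operator $\mathcal T^{\Phi}$, and an iteratively growing multi-parameter equivalence relation $S_{0},S_{1},\dots$; these are elaborations of your gadget idea rather than a different approach, but they carry the actual weight of the argument and would need to be supplied.
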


For constraint languages containing 
all constant relations we get an easier characterization.

\begin{thm}\label{THMIdempotantClassification}
Suppose $\Gamma\supseteq \{x=a\mid a\in A\}$ is a constraint language on a finite set $A$.
Then the following conditions are equivalent:
\begin{enumerate}
\item $\QCSP(\Gamma)$ is PSpace-complete;
\item  there exist an equivalence relation $\sigma$ on $D\subseteq A$ and $B,C\subsetneq A$
such that 
$B\cup C = A$ and 
    $\Gamma$
    q-defines the relations 
    $(y_{1},y_{2}\in D)\wedge(\sigma(y_1,y_2)\vee (x\in B))$ and 
    $(y_{1},y_{2}\in D)\wedge(\sigma(y_1,y_2)\vee (x\in C))$.
\end{enumerate} 
\end{thm}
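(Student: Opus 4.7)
The plan is to prove this theorem as a specialization of Theorem~\ref{THMNonIdempotantClassification} to the idempotent setting. The availability of all constant relations lets us substitute values for auxiliary coordinates, collapsing the parameters $\mathbf{z}$, $\delta$, $\alpha$ of the general mighty tuple into the single universal selector $x$ of the simple characterization.

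For the direction $(2)\Rightarrow(1)$, I would construct a mighty tuple I directly from the simple data $\sigma, D, B, C$. Let $R_B, R_C$ denote the two ternary relations q-defined by $\Gamma$. Since $B, C\subsetneq A$ and $B\cup C=A$, there exist $b\in B\setminus C$ and $c\in C\setminus B$. Take $k=0$, $m=2$, $\Delta=A^{|A|}$, and consider the q-formula
\[Q(x_{1},x_{2},y_{1},y_{2})\;\equiv\;\exists z\bigl(R_{B}(x_{1},y_{1},z)\wedge R_{C}(x_{2},z,y_{2})\bigr).\]
A short case analysis shows that $Q^{(x_{1},x_{2})}$ equals $\sigma$ when $x_{1}\notin B$ and $x_{2}\notin C$, and equals $D\times D$ otherwise; in particular each $Q^{(x_{1},x_{2})}$ is an equivalence on $D$. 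Since $B\cup C=A$, no diagonal $(x,x)$ ever triggers the $\sigma$-forcing case, so $Q^{\forall}=D\times D$, while the choice $(x_{1},x_{2})=(c,b)$ activates both $R_B$ and $R_C$ simultaneously and yields $Q^{\forall\forall}=\sigma$. Assuming $\sigma\neq D\times D$ (the alternative renders $(2)$ vacuous), any two distinct $\sigma$-classes on $D$ serve as the $B$ and $C$ of the mighty tuple, and all six conditions are verified. Theorem~\ref{THMNonIdempotantClassification} then delivers PSpace-completeness.

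For the direction $(1)\Rightarrow(2)$, I would apply Theorem~\ref{THMNonIdempotantClassification} to obtain a mighty tuple I q-definable over $\Gamma$ and substitute constants to simplify its parameters. Fix $\mathbf{z}^{\ast}$ witnessing condition~6 and $\delta^{\ast}\in\prescript{\mathbf{z}^{\ast}}{}\Delta$ (nonempty by conditions 1 and 2), then pin the first $|A|+k$ coordinates via constant relations. Set $D=\prescript{\mathbf{z}^{\ast}}{\delta^{\ast}}D$ and take $\sigma=\prescript{\mathbf{z}^{\ast}}{\delta^{\ast}}Q^{\forall\forall}$; by conditions 3, 5, 6 this is a nontrivial equivalence on $D$ containing two distinct classes corresponding to the $B, C$ of the mighty tuple. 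Next collapse the $m$-ary selector $\alpha$ to a single variable $x$ by fixing $m-1$ of its coordinates to constants chosen so that the two resulting ternary projections have trivializing regions $B, C\subsetneq A$ with $B\cup C=A$; the covering is forced by condition~4, while properness follows from conditions 5 and 6.

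The main obstacle lies in the direction $(1)\Rightarrow(2)$: selecting the $m-1$ constants so that the two extracted ternary relations jointly take the form required by the simple theorem, in particular so that their trivializing regions along the remaining coordinate are proper subsets of $A$ whose union is all of $A$. One must choose the constants so that each projection loses the $\sigma$-forcing property on a subset complementary (up to $A$) to that of the other. The rest of the argument is a routine translation via constant substitution.
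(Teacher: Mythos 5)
Your $(2)\Rightarrow(1)$ direction is correct and essentially matches the paper's construction: the composition $Q^{(x_1,x_2)}(y_1,y_2)=\exists z\,(R_B(y_1,z,x_1)\wedge R_C(z,y_2,x_2))$ is exactly the relation the paper plugs into a mighty tuple (the paper builds a mighty tuple II and then invokes Lemma~\ref{LEMMightyTupleTwoImplies}, whereas you assemble a mighty tuple I directly, but the underlying formula and case analysis are the same). The observation that $\sigma\neq D\times D$ is implicitly needed is correct; the paper makes the same implicit assumption when it picks $b,c\in D$ that are $\sigma$-inequivalent.

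The gap is in $(1)\Rightarrow(2)$. After fixing $\mathbf{z}^{\ast}$ and $\delta^{\ast}$ (fine so far), your plan of ``fix $m-1$ coordinates of $\alpha$ to constants'' cannot produce the target relations. First, nothing ensures the intermediate slices $Q^{\alpha}$ are only $\sigma$ or $D\times D$; the paper must explicitly restrict the domain $D$ (the ``every equivalence relation must be trivial'' step) before any arity reduction is possible. Second, you need \emph{two} relations with complementary trivializing regions, but fixing $m-1$ coordinates of a single $Q$ yields only \emph{one} ternary relation; the paper obtains two by a compositional trick with all injections $[|A|]\hookrightarrow[|A|^2]$ (the relations $U_n$), exploiting that any $|A|^2$-tuple repeats some value $|A|$ times so some injection lands on the diagonal, and then splitting $U_N=U_n+R$ at the maximal $n$ where $U_n$ is still proper. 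Third, even after that, the arity reduction from $|A|^2$ to $1$ is not a single constant substitution but an inductive alternation of universal quantification and constant substitution, the case split being whether $B_0\cup C_0=A$ after the $\forall$-projection. None of these steps is ``a routine translation via constant substitution,'' and without them your outline does not yield the required $L$ and $R$.
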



The above theorems show that all the hardness cases have the same nature.
In the next section we provide a sketch of a proof 
of the PSpace-hardness in Theorem \ref{THMIdempotantClassification} 
for the case when $A = \{+,-,0,1\}$, 
$D = \{+,-\}$,
$\sigma$ is the equality on $D$,
$B = \{+,-,1\}$ and $C =\{+,-,0\}$,
but one may check that the same proof works word for word for the arbitrary $A$, $B$, $C$, $D$, and $\sigma$.
Moreover, as we show in Section \ref{SUBSECTIONPSPACEHARDNESS}
a very similar reduction 
works for the general case in Theorem \ref{THMNonIdempotantClassification}.

\subsection{Idea of the proof}

The proof of Theorem \ref{THMMainUPRestriction}, 
which is the main result of the paper, comes from the 
exponential-size CSP instance we discussed in 
Section \ref{SUBSECTIONReductionTOCSPInIntro}. 
Even though we cannot actually run any algorithm on it, 
one may ask whether 
it is solvable by local consistency methods.
Surprisingly, 
unless $\QCSP(\Gamma)$ is PSpace-hard, a slight modification of the exponential-size CSP instance 
can be solved even by arc-consistency, which is the biggest discovery of this paper (see Theorem \ref{THMOneConsistentReductionImpliesASolution}). 
The reader should not think that the trick is hidden in the modification, as we just replace the constraint language $\Gamma$ by 
the relation that is defined by 
the quantifier-free part of the instance (see Section \ref{SUBSECTIONInducedCSPInstances}).

This result does not give immediate consequences on the complexity of the QCSP as the instance is still of exponential size.
Nevertheless, we prove that 
unless the $\QCSP(\Gamma)$ is PSpace-hard, 
the arc-consistency can show that the instance has no solutions 
only by looking at the polynomial part of it, 
and this is the second main discovery of the paper (see Corollary \ref{CORBigTreeImpliesMightyTupleI} and Theorem \ref{THMFindSmallTree}).

Notice that most of the previous results on the complexity of the QCSP 
were proved for constraint languages with all constant relations 
$x=a$ \cite{AU-Chen-PGP,QC2017,QCSPMonstersSTOC}. Here, we obtain results for the general case replacing 
constant relations by $|A|$ new variables that are universally quantified at the very beginning, and therefore 
can be viewed  as external parameters.
The price we pay for the general case is that 
all the relations and instances are parameterized by two additional parameters $\zv$ and $\delta$, which you already saw in the classification of all PSpace-complete languages.

\subsection{Are there other complexity classes?}

As we now know,
$\QCSP(\Gamma)$ can be solvable in polynomial time, 
NP-complete, coNP-complete, DP-complete, 
$\Theta_{2}^{P}$-complete, $\Pi_{2}^{P}$-complete, and PSpace-complete.  
Knowing this, most of the readers probably expect 
infinitely many other complexity classes up to polynomial equivalence that can be expressed via the QCSP by fixing the constraint language.
In our opinion it is highly possible that these 7 complexity classes are everything we can attain, 
as we mostly expected new classes between $\Pi_{2}^{P}$ and PSpace, and now we know that there are none.
In this section we share our speculations on the question.

First, let us formulate 
what each of the classes means from the game theoretic point of view.
A QCSP instance is a game between the Universal Player (he) and the Existential Player (she):
he tries to make the quantifier-free part false, and 
she tries to make it true \cite{arora2009computational}. 
We say that a move of a player is \emph{trivial} 
if the optimal move can be calculated in polynomial time.
Then, those complexity classes just show how much they can interact with each other.

\noindent $\mathbf{P}$:  
    the play of both players is trivial;
    
   \noindent  $\mathbf{NP}$: only the EP plays, the play of 
    the UP is trivial;
    
   \noindent  $\mathbf{coNP}$: only the UP plays, the play of 
    the EP is trivial;
    
   \noindent  $\mathbf{DP}=\mathbf{NP}\wedge\mathbf{coNP}$:
each player plays their own game. Yes-instance: EP wins, UP loses;

\noindent $\mathbf{\Theta_{2}^{P}}
=(\mathbf{NP}\vee\mathbf{coNP})\wedge \dots\wedge (\mathbf{NP}\vee\mathbf{coNP})$: each player plays many games (no interaction), the result is a boolean combination of the results of those games;

\noindent $\mathbf{\Pi_{2}^{P}}$: the UP plays first,  then the EP plays; 

\noindent $\mathbf{PSpace}$: they play against each other (no restrictions).

We do not expect anything new between P and DP
because we do not have anything between P and NP for the CSP.
As the conjunction is given for free in the QCSP, 
whenever we can combine NP and coNP by anything else than conjunction, we expect to obtain a disjunction and, therefore, get $\Theta_{2}^{P}$.
Thus, the only
place where we expect new classes is between 
$\Theta_{2}^{P}$ and $\Pi_{2}^{P}$.
Nevertheless, even here we cannot imagine an interaction which is weaker than in $\Pi_{2}^{P}$,
and we must have some interaction 
as the class $\Theta_{2}^{P}$ is the strongest class where we just combine the results of the independent games.

Notice that everything we wrote in this section is only speculation, and we need a real proof of all of the dichotomies formulated in 
Section \ref{SUBSECTIONWhatIsInside}.

    


\subsection{Structure of the paper}

The rest of the paper is organized as follows. 
In Section \ref{SectionWithSimpleHardnessReduction} we 
show a concrete constraint language on a 4-element domain
whose complexity of the QCSP is PSpace-complete.
Then, in Section \ref{SECTIONPiTwoCompleteLanguage} 
we present a constraint language on a 6-element domain whose QCSP 
is $\Pi_{2}^{P}$-complete.

In Section \ref{SECTIONMainProof} 
we provide necessary definitions and derive all the main results of the 
paper from the statements that are proved later.
Here, 
we define 5 tuples of relations, called mighty tuples, such that 
the QCSP over any of them is PSpace-hard. 
In Section \ref{SECTIONFindingSolution}
we prove all the necessary statements 
under the assumption that 
a mighty tuple is not q-definable.
Finally, in Section \ref{SECTIONHardnessResults}
we prove PSpace-hardness for a mighty tuple I, 
show the equivalence and reductions between mighty tuples.

\section{The most general PSpace-hard constraint language}\label{SectionWithSimpleHardnessReduction} 

In this section for a concrete constraint language $\Gamma$ 
on a 4-element domain we show how to 
reduce the complement of Quantified-3-CNF 
to $\QCSP(\Gamma)$ and therefore prove PSpace-hardness. 
This constraint language is important because 
a similar reduction works for all the PSpace-hard 
cases.
 
\begin{lem}
Suppose 
$A = \{+,-,0,1\}$,
$R_{0}(y_1,y_2,x) = (y_1,y_2\in\{+,-\})\wedge (x=0\rightarrow y_1=y_2)$,
$R_{1}(y_1,y_2,x) = (y_1,y_2\in\{+,-\})\wedge (x=1\rightarrow  y_1=y_2)$,
$\Gamma = \{R_0,R_1,\{+\},\{-\}\}$.
Then $\QCSP(\Gamma)$ is PSpace-hard.
\end{lem}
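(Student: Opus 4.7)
The plan is to reduce from the complement of Quantified-3-SAT (a PSpace-complete problem) to $\QCSP(\Gamma)$. Given a QBF $\Phi = \forall u_1 \exists v_1 \cdots \forall u_n \exists v_n \phi$ with $\phi$ in 3-CNF, I will construct $\Phi^\star \in \QCSP(\Gamma)$ such that $\Phi^\star$ holds iff $\Phi$ is false. Since $\neg\Phi \equiv \exists u_1 \forall v_1 \cdots \exists u_n \forall v_n \neg\phi$ has alternation $\exists\forall\exists\forall\cdots$ matching the QCSP prefix, I align the QCSP's existential player (EP) with the QBF's universal player (choosing the $u_i$'s) and the QCSP's universal player (UP) with the QBF's existential player (choosing the $v_i$'s). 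For each $u_i$, introduce a QCSP existential $y_i$ restricted to $D=\{+,-\}$; the restriction is q-definable from $\Gamma$ by $\exists w\,R_0(y_i,y_i,w)$. I intend $y_i=+$ to encode ``$u_i$ true''. For each $v_i$, introduce a QCSP universal $x_i$ ranging over all of $A=\{+,-,0,1\}$, with Boolean UP-plays $x_i\in\{+,-\}$ encoding $v_i$'s value.

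For each clause $C_j = \ell_1\vee\ell_2\vee\ell_3$, I build a length-three chain gadget with variables $a_0^j = +,\, a_1^j,\, a_2^j,\, a_3^j = -$ whose $k$-th link is an $R_0$- or $R_1$-constraint (the choice of relation and of the control variable among the $y_i,x_i$'s being determined by the $k$-th literal's polarity and underlying variable) so that the link is free exactly when the literal is true in the intended semantics; the gadget is then satisfiable iff clause $C_j$ is satisfied. To encode the disjunction $\neg\phi = \bigvee_j \neg C_j$ inside a purely conjunctive QCSP, I add EP witness variables $W_1,\dots,W_m$ (one per clause) together with constraints of the form ``$W_j=0$ implies that each literal of $C_j$ is false''. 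For literals controlled by EP's $y_i$'s this is directly expressed by $R_0(y_i,\pm,W_j)$ constraints, which say precisely ``$W_j = 0 \Rightarrow y_i$ equals the given constant''; symmetric $R_1$-constraints handle the opposite polarity, and an analogous scheme handles UP-controlled $x_i$-literals while remaining inert on UP's off-Boolean plays $x_i\in\{0,1\}$, so that such moves never benefit UP.

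The main obstacle is the size of the QCSP domain: EP has four values per variable but only two encode meaningful Booleans, and the whole construction depends on at least one witness $W_j$ taking the value $0$. The unary relation $\{0\}$ is not q-definable from $\Gamma$ --- as witnessed by the binary polymorphism $f(a,b)=+$ for $(a,b)\neq(-,-)$ and $f(-,-)=-$, which preserves $R_0,R_1,\{+\},\{-\}$ but not $\{0\}$ --- so no primitive positive formula can force $W_j=0$. The QCSP, however, allows universal quantification beyond pp-definability, and the plan is to pair the witnesses with fresh universal variables placed after them, together with a combination of $R_0$- and $R_1$-based chains on the $W_j$'s which, by interaction with these universals, become unsatisfiable for every EP-assignment leaving all $W_j\neq 0$. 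Designing these verification gadgets and checking that they neither over- nor under-constrain EP is where essentially all of the work lies; this is also precisely the template which, as the paper points out, extends to the general mighty-tuple case in Section~\ref{SUBSECTIONPSPACEHARDNESS}.

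Once the construction is fixed, correctness is a two-way strategy translation. Any winning strategy for the universal player of $\Phi$ (i.e., a $u$-assignment under which every $v$-response falsifies some clause of $\phi$) becomes an EP-winning strategy for $\Phi^\star$: EP copies the $u$-values into her $y_i$'s and, after seeing UP's plays, sets $W_j=0$ for a falsified clause and $W_j=+$ elsewhere. Conversely, any EP-winning strategy for $\Phi^\star$, restricted to UP-plays $x_i\in\{+,-\}$, projects to a winning strategy for the universal player of $\Phi$ that falsifies $\phi$.
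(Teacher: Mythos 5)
The proposal identifies the right reduction target (complement of Quantified-3-SAT) but the concrete encoding is genuinely different from the paper's and has two gaps that block it as stated.

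First, the Boolean convention for UP moves is incoherent with $\Gamma$. You intend $x_i\in\{+,-\}$ to encode $v_i$ and the constraints to be ``inert'' on off-Boolean plays $x_i\in\{0,1\}$. But a variable can enter $R_0,R_1$ either as the control (third) coordinate or as an endpoint. In the control slot, $R_0$ and $R_1$ cannot distinguish $+$ from $-$ (both are non-triggering), so a $\{+,-\}$-valued encoding of $v_i$ carries no information; the trigger values are exactly $0$ and $1$, i.e.\ the plays you declared off-Boolean. In an endpoint slot, $R_0,R_1$ force that coordinate into $\{+,-\}$, so a play $x_i\in\{0,1\}$ falsifies the quantifier-free part outright and hands the game to UP --- precisely the opposite of ``such moves never benefit UP''. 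The paper makes the opposite choice: UP's Boolean lives in $\{0,1\}$ placed in the control slot, and UP plays in $\{+,-\}$ are the wasted ones (all edges become free).

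Second, the witness-variable scheme needs at least one $W_j$ forced to $0$, and this is not achievable. The constraints over $\Gamma$ can only force a coordinate into $\{+,-\}$, force the constants $+$ or $-$, or equate two $\{+,-\}$-valued variables when a control variable is $0$ (resp.\ $1$); there is no primitive that asserts a control variable equals $0$. In fact a short monotonicity argument shows any q-definable unary relation in which the free variable never occupies an endpoint position must contain $+$ whenever it is nonempty, so $\{0\}$ is not q-definable at all (your non-surjective polymorphism rules out pp-definability only, but the stronger fact holds). You acknowledge this is ``where essentially all of the work lies'' and do not construct the gadget. The paper sidesteps the problem entirely by using a \emph{negative} encoding: the DNF is drawn as a chain of parallel-edge blocks from $+$ to $-$, EP's task is merely to break the chain at the falsified clause (which requires no forcing), and the ``existential'' propositional moves are extracted from UP via a side gadget in which EP's choice $y_i\in\{+,-\}$ coerces the only non-losing UP play $x_i\in\{0,1\}$. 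Those three ingredients --- $\{0,1\}$-in-control-slot for UP, the $+$-to-$-$ chain, and the EP-controls-UP gadget --- are exactly what is missing from the proposal.
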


This example may be viewed as the weakest constraint language 
whose QCSP is PSpace-hard. 
It can be seen from the definition that 
the UP should only play $x$-variables (the last coordinates of $R_{0}$ and $R_1$)
and the EP should only play $y$-variables (the first two coordinates),
and originally we did not see how they can interact with each other and, therefore, did not expect the QCSP to be PSpace-hard. 
Below we demonstrate on a concrete example 
how the EP can control the moves of the UP and therefore, 
interact in the area of the UP.

\begin{sketch}
We build a reduction from 
the complement of the Quantified-3-CNF.
Let the sentence be
$$\neg (\exists x_1\forall x_2\exists x_3\;\;((x_1\vee \overline x_2\vee x_3)\wedge 
(\overline x_1\vee x_2\vee \overline x_3)
\wedge 
(x_1\vee \overline x_2\vee \overline x_3)).$$
Instead of formulas we draw graphs whose 
vertices are variables and edges are relations.
$R_0(y_1,y_2,x)$ is drawn as
a red edge from $y_1$ to $y_2$ labeled with $x$.
Similarly 
$R_1(y_1,y_2,x)$ is drawn as
a blue edge from $y_1$ to $y_2$ labeled with $x$. 
If a vertex has no name, then we assume that the variable is 
existentially quantified after all other variables are quantified.
If the vertex is marked with $+$ or $-$, then we assume 
that the corresponding variable is equal to $+$ or $-$ respectively.
In Figure \ref{ExampleOfFormulasEncoding}
you can see an example of a graph and the corresponding formula.

\begin{figure}
    \centering
\begin{tikzpicture}[scale = 1]
\node at (2,1)[black,scale =1.0](x) {
$y_1$}; 
\node at (8,1)[black,scale =1.0](y) {$+$};
\draw[thick,blue!90,text = black, line width=0.5mm, in =170, out =10] (x) 
to node[above,scale =1.0]{$x_1$} node[fill=white,scale=0.5,text = blue]{1}(4,1);
\draw[thick,red!90,text = black, line width=0.5mm, in =170, out =10] (4,1) 
to node[above,scale =1.0]{$x_2$} 
node[fill=white,scale=0.5,text = red]{0}(6,1);
\draw[thick,blue!90,text = black, line width=0.5mm, in =170, out =10] (6,1) 
to node[above,scale =1.0]{$x_3$} node[fill=white,scale=0.5,text = blue]{1}(y);
\end{tikzpicture}
    \caption{A graph for
    $\exists u_1\exists u_2 \exists u_3
R_{1}(y_1,u_1,x_1)\wedge 
R_{0}(u_1,u_2,x_2)\wedge 
R_{1}(u_2,u_3,x_3)\wedge (u_3=+)$}
    \label{ExampleOfFormulasEncoding}
\end{figure}
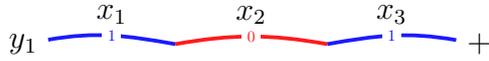

Thus, these graphs can be viewed as electrical circuits 
where the ends of an edge are connected (equal) whenever 
the variable written on it has the corresponding value.  
Then the encoding of 
the quantifier-free part
$$((x_1\vee \overline x_2\vee x_3)\wedge 
(\overline x_1\vee x_2\vee \overline x_3)
\wedge 
(x_1\vee \overline x_2\vee \overline x_3))$$
is shown in Figure \ref{QuantifierFreePartEncoding}.
If we assume that all the $x$-variables are from $\{0,1\}$,
which will be the case, 
then 
the formula in Figure \ref{QuantifierFreePartEncoding}
holds if and only if 
the 3-CNF does not hold. 
In fact, if the 3-CNF holds then $+$ is connected (equal) to 
$-$ through three edges, which gives a contradiction.
If the formula in Figure \ref{QuantifierFreePartEncoding}
holds, then at some point we go from $+$ to $-$, which 
means that the corresponding clause does not hold.

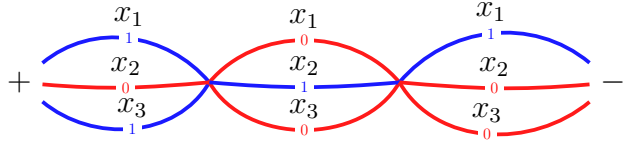
\begin{figure}
    \centering
\begin{tikzpicture}[scale = 1]
\node at (1,1)[black,scale =1.0](x) {
$+$}; 
\node at (8.5,1)[black,right, scale =1.0](y) {$-$};

\draw[thick,blue!90,text = black, line width=0.5mm, in =130, out =40] (x) to node[above,scale =1.0]{$x_1$} node[fill=white,scale=0.5,text = blue]{1}(3.5,1);

\draw[thick,blue!90,text = black, line width=0.5mm, in =240, out =-40] (x) to node[above,scale =1.0]{$x_3$}node[fill=white,scale=0.5,text = blue]{1} (3.5,1);

\draw[thick,red!90,text = black, line width=0.5mm, in =185, out =-5]
(x) to node[above,scale =1.0]{$x_2$} node[fill=white,scale=0.5,text = red]{0}(3.5,1);

\draw[thick,red!90,text = black, line width=0.5mm, in =130, out =50] (3.5,1) to node[above,scale =1.0]{$x_1$} node[fill=white,scale=0.5,text = red]{0}(6,1);
\draw[thick,red!90,text = black, line width=0.5mm, in =240, out =-60] (3.5,1) 
to node[above,scale =1.0]{$x_3$} node[fill=white,scale=0.5,text = red]{0}(6,1);
\draw[thick,blue!90,text = black, line width=0.5mm, in =185, out =-5] (3.5,1) 
to node[above,scale =1.0]{$x_2$} node[fill=white,scale=0.5,text = blue]{1}(6,1);

\draw[thick,blue!90,text = black, line width=0.5mm, in =140, out =50] (6,1) 
to node[above,scale =1.0]{$x_1$} node[fill=white,scale=0.5,text = blue]{1}(y);
\draw[thick,red!90,text = black, line width=0.5mm, in =220, out =-60] (6,1) 
to node[above,scale =1.0]{$x_3$} node[fill=white,scale=0.5,text = red]{0}(y);
\draw[thick,red!90,text = black, line width=0.5mm, in =185, out =-5] (6,1) 
to node[above,scale =1.0]{$x_2$} 
node[fill=white,scale=0.5,text = red]{0}(y);
\end{tikzpicture}
    \caption{A graph expressing
    $\neg((x_1\vee \overline x_2\vee x_3)\wedge 
(\overline x_1\vee x_2\vee \overline x_3)
\wedge 
(x_1\vee \overline x_2\vee \overline x_3)).$}
    \label{QuantifierFreePartEncoding}
\end{figure}

If we add universal quantifiers to the formula in Figure \ref{QuantifierFreePartEncoding}
we get 
\begin{align*}
\forall x_1\forall x_2\forall x_3\;\;\neg((x_1\vee \overline x_2\vee x_3)\wedge &
(\overline x_1\vee x_2\vee \overline x_3)
\wedge 
(x_1\vee \overline x_2\vee \overline x_3))=\\
&\neg (\exists x_1\exists x_2\exists x_3\; 
((x_1\vee \overline x_2\vee x_3)\wedge 
(\overline x_1\vee x_2\vee \overline x_3)
\wedge 
(x_1\vee \overline x_2\vee \overline x_3)))    
\end{align*}
Notice that it does not make sense
for the UP 
to play values $+$ and $-$ because 
the relations $R_0$ and $R_1$ hold whenever 
the last coordinate is from $\{+,-\}$.
Thus, we already encoded the complement to 3-CNF-Satisfability, 
which means that $\QCSP(\Gamma)$ is coNP-hard.

To show PSpace-hardness we need to add existential quantifiers.
We cannot just add $\exists x_2$ because the obvious choice for the 
EP would be $+$ or $-$.
As shown in Figure \ref{Q3CNFExampleEncoding},
whenever we want to add $\exists x_2$, we add a new existential variable 
$y_2$ and universally quantify $x_2$.
The goal of the UP is to connect $+$ and $-$. 
Hence, if the EP plays $y_2=+$, then 
the only reasonable choice for the UP is to play 
$x_2 = 1$;
and if 
the EP plays $y_2=-$, then 
the UP must play 
$x_2 = 0$.
Thus, the EP controls the moves of the UP, which is equivalent to 
the EP playing on the set $\{0,1\}$.

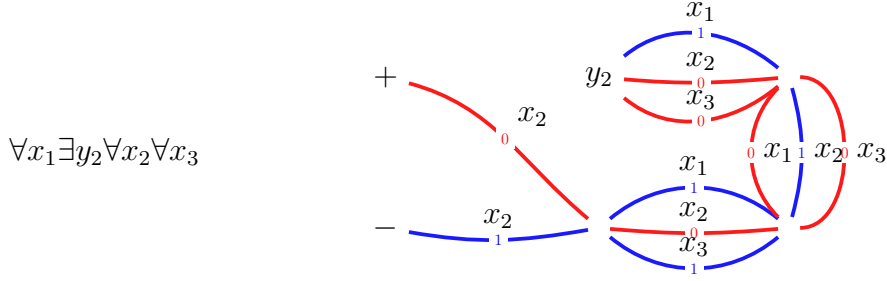
\begin{figure}
    \centering
\begin{tikzpicture}[scale = 1]
\node at (1+2.5+2.5,1)[black,scale =1.0](u2){};
\node at (1+2.5+2.5+2.5,3)[black,scale =1.0](u3){};
\node at (1+2.5+2.5+2.5,1)[black,scale =1.0](u4){};

\node at (-0.5,2)[black,scale =1.0] {$\forall x_1\exists y_2 \forall x_2\forall x_3$};

\node at (1+2.5,3)[black,left,scale =1.0](plus) {
$+$}; 
\node at (1+2.5,1)[black,left, scale =1.0](minus) {$-$};
\draw[thick,red!90,text = black, line width=0.5mm, in =140, out =-15] (plus) to node[above right,scale =1.0]{$x_2$} 
node[fill=white,scale=0.5,text = red]{0}
(u2);
\draw[thick,blue!90,text = black, line width=0.5mm, in =190, out =-10] (minus) to node[above,scale =1.0]{$x_2$} node[fill=white,scale=0.5,text = blue]{1}(u2);
\node at (1+2.5+2.5,3)[black,scale =1.0](y1){$y_2$};

\draw[thick,blue!90,text = black, line width=0.5mm, in =140, out =40] (u2) to node[above,scale =1.0]
{$x_1$} node[fill=white,scale=0.5,text = blue]{1} (u4);
\draw[thick,red!90,text = black, line width=0.5mm, in =185, out =-5] (u2) to node[above,scale =1.0]{$x_2$} 
node[fill=white,scale=0.5,text = red]{0}
(u4)
;
\draw[thick,blue!90,text = black, line width=0.5mm, in =220, out =-40] (u2) to node[above,scale =1.0]{$x_3$} 
node[fill=white,scale=0.5,text = blue]{1}(u4);

\draw[thick,blue!90,text = black, line width=0.5mm, in =140, out =40] (y1) to node[above,scale =1.0]{$x_1$} node[fill=white,scale=0.5,text = blue]{1}(u3);
\draw[thick,red!90,text = black, line width=0.5mm, in =185, out =-5] (y1) to node[above,scale =1.0]{$x_2$} node[fill=white,scale=0.5,text = red]{0}(u3);
\draw[thick,red!90,text = black, line width=0.5mm, in =220, out =-40] (y1) to node[above,scale =1.0]{$x_3$} 
node[fill=white,scale=0.5,text = red]{0}(u3);

\draw[thick,red!90,text = black, line width=0.5mm, in =0, out =0] (u4) to node[right,scale =1.0]{$x_3$} node[fill=white,scale=0.5,text = red]{0}(u3);
\draw[thick,blue!90,text = black, line width=0.5mm, in =185+90+10, out =-15+90] (u4) to node[right,scale =1.0]{$x_2$} node[fill=white,scale=0.5,text = blue]{1}(u3);
\draw[thick,red!90,text = black, line width=0.5mm, in =140+90-5, out =40+90+5] (u4) to node[right,scale =1.0]{$x_1$} node[fill=white,scale=0.5,text = red]{0}(u3);
\end{tikzpicture}
    \caption{A graph expressing
    $\forall x_1\exists x_2\forall x_3\;\;\neg((x_1\vee \overline x_2\vee x_3)\wedge 
(\overline x_1\vee x_2\vee \overline x_3)
\wedge 
(x_1\vee \overline x_2\vee \overline x_3))$}
    \label{Q3CNFExampleEncoding}
\end{figure}

Thus, we encoded 
\begin{align*}
\forall x_1\exists x_2\forall x_3\;\;\neg((x_1\vee \overline x_2\vee x_3)&\wedge 
(\overline x_1\vee x_2\vee \overline x_3)
\wedge 
(x_1\vee \overline x_2\vee \overline x_3))=\\
&\neg (\exists x_1\forall x_2\exists x_3\;\;((x_1\vee \overline x_2\vee x_3)\wedge 
(\overline x_1\vee x_2\vee \overline x_3)
\wedge 
(x_1\vee \overline x_2\vee \overline x_3)),
\end{align*}
which is a complement to the Quantified-3-CNF.
\end{sketch}

As we saw in 
Theorem \ref{THMIdempotantClassification} 
two relations equivalent to $R_0$ and $R_1$ 
can be q-defined from $\Gamma$ 
whenever $\Gamma$ contains all constant relations 
and $\QCSP(\Gamma)$ is not in $\Pi_2^{P}$.
The criterion for the general case 
is formulated using mighty tuples, 
but as you can see 
in the proof of Theorem 
\ref{THMMightyTupleIPSpaceHardness}
in Section \ref{SUBSECTIONPSPACEHARDNESS}
we still build two relations $R_{0}$ and $R_{1}$
and use almost the same construction to prove PSpace-hardness.

\section{$\Pi_2^{P}$-complete constraint language}
\label{SECTIONPiTwoCompleteLanguage}

In this section we define a concrete constraint language $\Gamma$ on 
a 6-element domain $A = \{0,1,2,0',1',2'\}$
such that 
$\QCSP(\Gamma)$ is $\Pi_{2}^{P}$-complete.

First, we define two ternary relations $\mathrm{AND}_{2}$ and
$\mathrm{OR}_{2}$ 
corresponding to the operations $\wedge$ and $\vee$ on $\{0,1\}$.
If one of the first two coordinates is from
$\{2,0',1',2'\}$, then the remaining elements can be chosen arbitrary, 
i.e., 
\begin{align*}
\{2,0',1',2'\}\times A\times A\subseteq &\mathrm{AND}_{2}, 
&\{2,0',1',2'\}\times A\times A\subseteq \mathrm{OR}_{2},\\
A\times \{2,0',1',2'\}\times A\subseteq &\mathrm{AND}_{2},
&A\times \{2,0',1',2'\}\times A\subseteq \mathrm{OR}_{2}.
\end{align*}
If $a,b\in\{0,1\}$, then 
$(a,b,c)\in \mathrm{AND}_{2}\Rightarrow
(a\wedge b=c)$ and 
$(a,b,c)\in \mathrm{OR}_{2}\Rightarrow
(a\vee b=c)$.
In other words 
$$\mathrm{AND}_{2}\cap (\{0,1\}\times\{0,1\}\times A) = 
\begin{pmatrix}
0 & 0 & 1 & 1 \\
0 & 1 & 0 & 1 \\
0 & 0 & 0 & 1
\end{pmatrix},
\mathrm{OR}_{2}\cap (\{0,1\}\times\{0,1\}\times A) = 
\begin{pmatrix}
0 & 0 & 1 & 1 \\
0 & 1 & 0 & 1\\ 
0 & 1 & 1 & 1
\end{pmatrix},$$
where each matrix should be understood 
as the set of tuples written as columns.
The remaining four relations are defined by 
\begin{align*}
\oneinthree &= \{(2',2',2'),(1',0',0'),(0',1',0'),(0',0',1')\},\\    
\delta_{0} &= \{1\}\times\{0',2'\}\cup(A\setminus\{1\})\times \{0',1',2'\},\\
\delta_{1} &= \{1\}\times\{1',2'\}\cup(A\setminus\{1\})\times \{0',1',2'\},\\
\epsilon &= \{0\}\times \{0',1'\}\cup(A\setminus\{0\})\times \{0',1',2'\}.
\end{align*}
The relation $\oneinthree$ is the usual relation $\mathrm{1IN3}$ on $\{0',1'\}$ 
with an additional tuple $(2',2',2')$.
The relations 
$\delta_{0}$, $\delta_{1}$ and $\epsilon$ 
can also be viewed as
\begin{align*}
\delta_{0}(x,y) &= (y\in\{0',1',2'\})\wedge (x = 1\Rightarrow y\neq 1'),\\
\delta_{1}(x,y) &= (y\in\{0',1',2'\})\wedge (x = 1\Rightarrow y\neq 0'),\\
\epsilon(x,y) &= (y\in\{0',1',2'\})\wedge (x = 0\Rightarrow y\neq 2').
\end{align*}

Note that $\delta_{1}$ can be derived from 
$\delta_{0}$ and $\oneinthree$ by the formula 
$$\delta_{1}(x,y) = \exists u_1 \exists u_2\exists u_3 \;
\delta_{0}(x,u_1)\wedge \oneinthree(y,u_1,u_2)\wedge 
\oneinthree(u_2,u_2,u_3).$$


Let 
$\Gamma = \{\mathrm{AND}_{2},\mathrm{OR}_{2},\oneinthree,\delta_{0},\delta_{1},\epsilon\}$.


\begin{lem}
$\QCSP(\Gamma)$ is $\Pi_{2}^{P}$-hard.
\end{lem}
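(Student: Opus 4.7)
The plan is a polynomial reduction from $\Pi_{2}$-3SAT, the canonical $\Pi_{2}^{P}$-complete problem. Given an instance $\forall x_{1}\cdots\forall x_{m}\exists y_{1}\cdots\exists y_{n}\,\phi(\vec{x},\vec{y})$ with $\phi$ in 3-CNF, I build an equivalent $\Pi_{2}$-sentence over $\Gamma$. The design exploits the fact that $A$ splits into an unprimed part $\{0,1,2\}$ used for universal play (with $2$ as a wildcard for $\mathrm{AND}_{2}$ and $\mathrm{OR}_{2}$) and a primed part $\{0',1',2'\}$ used for encoding literals in 1-IN-3 style (with $2'$ as a wildcard for $\oneinthree$). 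The bridge relations $\delta_{0},\delta_{1},\epsilon$ translate universal Boolean choices into constraints on primed literal variables.

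Concretely, I represent each Boolean universal $x_{i}$ by two QCSP universal variables $X_{i}^{+},X_{i}^{-}\in A$, with the ``canonical'' plays $(X_{i}^{+},X_{i}^{-})=(1,0)$ and $(0,1)$ encoding $x_{i}=1$ and $x_{i}=0$ respectively. For each literal occurrence in $\phi$ I introduce an existential primed literal variable $L$ constrained so that canonical UP play pins $L$ to the intended truth value in $\{0',1'\}$; for instance, the gadget $\delta_{1}(X_{i}^{+},L)\wedge\delta_{0}(X_{i}^{-},L)$ combined with a $\oneinthree$-anchor forcing $L\in\{0',1'\}$ yields $L=1'$ under $(1,0)$ and $L=0'$ under $(0,1)$. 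For each existential $y_{j}$ I introduce primed literal variables $M_{j}^{+},M_{j}^{-}\in\{0',1'\}$ with complementarity enforced by a $\oneinthree$-gadget. Finally, I convert $\phi$ to 1-IN-3 SAT in the standard way and encode each resulting clause by a single $\oneinthree$ constraint on the three literal variables, with auxiliary existentials as needed.

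Correctness splits into two claims. First, UP's optimal play stays in the canonical set $\{(1,0),(0,1)\}^{m}$: any coordinate played in $\{2,0',1',2'\}$ makes the associated $\mathrm{AND}_{2},\mathrm{OR}_{2}$ constraints vacuously satisfied and reduces $\delta_{0},\delta_{1},\epsilon$ to their weakest form, both of which only help EP; and a Boolean but non-canonical play such as $(0,0)$ effectively hands the choice of $x_{i}$ to EP, again only helping her. Second, under any canonical UP play the pinned literal values together with the $\oneinthree$-clause gadgets exactly simulate 1-IN-3 SAT, so EP wins if and only if some $\vec{y}$ satisfies $\phi(\vec{x},\vec{y})$.

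The main obstacle is the degenerate Boolean play $(X_{i}^{+},X_{i}^{-})=(1,1)$, which with the naive gadget above forces a literal variable to an impossible value in $\{0',1'\}$ and would hand UP a free win regardless of $\phi$. The remedy is to allow $L\in\{0',1',2'\}$ rather than strictly $\{0',1'\}$, and to engineer the surrounding $\oneinthree$-cluster so that the resulting $L=2'$ can always be absorbed by auxiliary existentials via the tuple $(2',2',2')$ of $\oneinthree$, never propagating into a variable pinned to $\{0',1'\}$ by another canonical gadget or by an $M$-literal gadget. Verifying this absorption in every case, and showing that non-canonical UP plays are thereby never strictly better for UP than the canonical ones, is the technical heart of the argument.
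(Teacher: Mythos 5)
Your high‑level plan matches the paper's: split the domain into an unprimed Boolean part and a primed 1‑IN‑3 part, encode each universal Boolean variable by a pair of universal QCSP variables whose canonical plays are $(1,0)$ and $(0,1)$, use $\delta_{0},\delta_{1}$ to translate those plays into constraints on a primed literal variable, and encode the clause structure with $\oneinthree$, with $2'$ as an escape hatch for degenerate universal plays. The reduction from $\Pi_{2}$-3SAT versus directly from $\Pi_{2}$-$\QCSP(\mathrm{1IN3})$ is an immaterial difference.

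However, the crux of the argument is exactly the step you flag as ``the technical heart'' and do not fill in, and I don't believe it can be filled in along the lines you sketch. You face a dilemma: if the literal variables are hard‑pinned to $\{0',1'\}$ (via a $\oneinthree$-anchor), UP wins unconditionally by playing $(1,1)$ in a single pair, since then $\delta_{1}$ and $\delta_{0}$ force $L\notin\{0',1'\}$; but if you relax to $L\in\{0',1',2'\}$ as you propose, EP wins unconditionally by setting every primed variable to $2'$ and satisfying all $\oneinthree$ constraints with the tuple $(2',2',2')$, regardless of $\phi$. The ``absorption'' idea — letting the single bad $L=2'$ propagate only through $(2',2',2')$ tuples — cannot work locally, because a clause can mix one literal coming from a bad pair with literals coming from canonical pairs that are still pinned to $\{0',1'\}$; $\oneinthree$ has no satisfying tuple mixing $2'$ with $\{0',1'\}$. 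What is needed is a \emph{global, conditional} pinning mechanism: under canonical UP play, forbid $2'$ for all literal variables; under any non‑canonical play, release that restriction everywhere at once. This is precisely what the paper's $\mathrm{AND}_{2}/\mathrm{OR}_{m}/\epsilon$ chain provides: each pair $(x_{i}^{0},x_{i}^{1})$ feeds an $\mathrm{AND}_{2}$ gate into $z_{i}$, an $\mathrm{OR}_{m}$ aggregates the $z_{i}$ into a single bit $z$ that equals $1$ exactly when some pair is non‑canonical, and then $\epsilon(z,x_{i})$ forbids $x_{i}=2'$ only when $z=0$. Your sketch mentions $\mathrm{AND}_{2}$, $\mathrm{OR}_{2}$, and $\epsilon$ in passing but never uses them for this purpose; without such a signalling gadget the reduction breaks.
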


\begin{proof}
First, we derive the $\mathrm{OR}$ relation of larger arity by  
$$\mathrm{OR}_{n+1}(x_1,\dots,x_{n+1},y)= 
\exists y' \; \mathrm{OR}_{n}(x_1,\dots,x_{n},y')
\wedge \mathrm{OR}_{2}(y',x_{n+1},y).$$
Then we assume that  
$\mathrm{OR}_{n}$ is in our language.
To prove the $\Pi_{2}^{P}$-hardness we build a reduction 
from $\Pi_{2}$-$\QCSP(\mathrm{1IN3})$, 
where $\mathrm{1IN3} = \{(1,0,0),(0,1,0),(0,0,1)\}$.
Let 
$$\Phi = \forall x_1 \dots\forall x_m 
\exists x_{m+1}\dots\exists x_{n}\;
\mathrm{1IN3}(x_{i_1},x_{j_1},x_{k_1})\wedge \dots\wedge
\mathrm{1IN3}(x_{i_s},x_{j_s},x_{k_s}).$$
Since we can always add dummy variables, we assume that $\Phi$ has at least two universally quantified variables 
for the general construction to make sense.
The problem of checking whether $\Phi$ holds is $\Pi_{2}^{P}$-complete \cite{arora2009computational}.
We will encode $\Phi$ by the following instance of $\QCSP(\Gamma)$.
\begin{align*}
\Psi = \forall x_{1}^{0}\forall x_{1}^1\dots\forall x_{m}^0\forall x_{m}^{1}
\;\; \exists x_{1}\exists x_2\dots\exists x_n\;\;\exists z_{1}\dots \exists z_{n}\;\;\exists z\;\;
\bigwedge\limits_{i=1}^{m}\left(\delta_{0}(x_{i}^0,x_{i})\wedge \delta_{1}(x_{i}^1,x_{i})\right) \wedge &\\
\;
  \bigwedge\limits_{i=1}^{m}
 \mathrm{AND}_{2}(x_i^{0},x_i^{1},z_{i})
 \wedge \mathrm{OR}_{m}(z_{1}\dots,z_{n},z)\wedge 
\bigwedge\limits_{i=1}^{n} \epsilon(z,x_{i})
\wedge 
\bigwedge\limits_{\ell=1}^{s} \oneinthree(x_{i_\ell},x_{j_\ell}&,x_{k_\ell})
\end{align*}

\begin{figure}
    \centering
\begin{tikzpicture}[
trnode/.style={isosceles triangle,isosceles triangle apex angle=120,draw,draw=green!60},
andnodedashed/.style={regular polygon,regular polygon sides=3, draw,draw=green!60!black, shape border rotate= 30, fill=green!20, very thick, minimum size=7mm,inner sep = -3pt,dashed},
andnode/.style={regular polygon,regular polygon sides=3, draw,draw=green!60!black, shape border rotate= 30, fill=green!20, very thick, minimum size=7mm,inner sep = -3pt},
rnode/.style={rectangle, draw=red!60, fill=red!5, very thick, minimum size=5mm},
]


\DrawAnd{1}{7}{1}
\DrawAnd{1}{5}{2}

\DrawAndDotted{1}{2.5}{i}

\DrawAnd{1}{0}{m}

\node  (xx10)             at (4,7.5)                 {$x_{1}^{0}$};
\DrawBlock{4}{6}{10}{$=1$ $\downarrow$ $\neq 1'$}
\draw[->] (xx10) to (b10);

\node  (xx11)             at (5,7.5)                 {$x_{1}^{1}$};
\DrawBlock{4+1}{6}{11}{$=1$ $\downarrow$ $\neq 0'$}
\draw[->] (xx11) to (b11);

\node  (xx20)             at (6.2,7.5)                 {$x_{2}^{0}$};
\DrawBlock{6.2}{6}{20}{$=1$ $\downarrow$ $\neq 1'$}
\draw[->] (xx20) to (b20);

\node  (xx21)             at (7.2,7.5)                 {$x_{2}^{1}$};
\DrawBlock{7.2}{6}{21}{$=1$ $\downarrow$ $\neq 0'$}
\draw[->] (xx21) to (b21);

\node  (xxi0)             at (8.4+0.2,7.5)                 {$x_{i}^{0}$};
\DrawBlockDashed{8.4+0.2}{6}{i0}{$=1$ $\downarrow$ $\neq 1'$}
\draw[->,dashed] (xxi0) to (bi0);

\node  (xxi1)             at (9.4+0.2,7.5)                 {$x_{i}^{1}$};
\DrawBlockDashed{9.4+0.2}{6}{i1}{$=1$ $\downarrow$ $\neq 0'$}
\draw[->,dashed] (xxi1) to (bi1);

\node  (xxn0)             at (9.4+1.2+0.2+0.2,7.5)                 {$x_{m}^{0}$};
\DrawBlock{9.4+1.2+0.2+0.2}{6}{n0}{$=1$ $\downarrow$ $\neq 1'$}
\draw[->] (xxn0) to (bn0);

\node  (xxn1)             at (1+9.4+1.2+0.2+0.2,7.5)                 {$x_{m}^{1}$};
\DrawBlock{1+9.4+1.2+0.2+0.2}{6}{n1}{$=1$ $\downarrow$ $\neq 0'$}
\draw[->] (xxn1) to (bn1);

\DrawBlock{4.2}{2}{b1}{$=0$ $\downarrow$ $\neq 2'$}
\draw[->] (4.2,3.5) to (bb1);
\DrawBlock{6.4}{2}{b2}{$=0$ $\downarrow$ $\neq 2'$}
\draw[->] (6.4,3.5) to (bb2);
\DrawBlockDashed{8.4+0.2+0.2}{2}{bi}{$=0$ $\downarrow$ $\neq 2'$}
\draw[->,dashed] (8.4+0.2+0.2,3.5) to (bbi);
\DrawBlock{9.4+1.2+0.2+0.2+0.2}{2}{bn}{$=0$ $\downarrow$ $\neq 2'$}
\draw[->] (9.4+1.2+0.2+0.2+0.2,3.5) to (bbn);

\DrawBlock{9.4+1.2+0.2+0.2+0.2+1.6}{2}{bnp}{$=0$ $\downarrow$ $\neq 2'$}
\draw[->] (9.4+1.2+0.2+0.2+0.2+1.6,3.5) to (bbnp);
\draw[->] (bbnp) to (9.4+1.2+0.2+0.2+0.2+1.6,0.58);

\DrawBlockDashed{9.4+1.2+0.2+0.2+0.2+2.4}{2}{bnpp}{$=0$ $\downarrow$ $\neq 2'$}

\draw[->,dashed] (9.4+1.2+0.2+0.2+0.2+2.4,3.5) to (bbnpp);
\draw[->,dashed] (bbnpp) to (9.4+1.2+0.2+0.2+0.2+2.4,0.58);

\DrawBlock{9.4+1.2+0.2+0.2+0.2+3.2}{2}{bm}{$=0$ $\downarrow$ $\neq 2'$}

\draw[->] (9.4+1.2+0.2+0.2+0.2+3.2,3.5) to (bbm);
\draw[->] (bbm) to (9.4+1.2+0.2+0.2+0.2+3.2,0.58);

\node (xxx1) at (4.81,0.42) {};
\draw[->,out=-70,in = 90] (b10.south) to (xxx1);
\draw[->] (b11) to (xxx1);
\draw[->,out=-70,in = 90] (bb1.south) to (xxx1);

\node (xxx2) at (6.87,0.42) {};
\draw[->,out=-70,in = 90] (b20.south) to (xxx2);
\draw[->] (b21) to (xxx2);
\draw[->,out=-70,in = 90] (bb2.south) to (xxx2);

\node (xxxi) at (9.3,0.42) {};
\draw[->,out=-70,in = 90,dashed] (bi0.south) to (xxxi);
\draw[->,dashed] (bi1) to (xxxi);
\draw[->,out=-70,in = 90,dashed] (bbi.south) to (xxxi);

\node (xxxn) at (11.75,0.42) {};
\draw[->,out=-70,in = 90] (bn0.south) to (xxxn);
\draw[->] (bn1) to (xxxn);
\draw[->,out=-70,in = 90] (bbn.south) to (xxxn);

\draw[->,out=-70,in = 90] (bbn.south) to (xxxn);

\node[rectangle, draw=blue!60, fill=blue!5, very thick, minimum size=5mm,text width=11.3cm]
at (9,0){
{\color{blue}
\;\;\;\;\;\;\;\;\;\;\;$x_1$\;\;\;\;\;\;\;\;\;\;\;\;\;\;
$x_2$\;\;\;\;\;\;\ldots\;\;\;\;\;
$x_i$\;\;\;\;\;\;\ldots\;\;\;\;
$x_m$\;\;\;
$x_{m+1}$\;$\ldots$\;\;$x_n$}
$\oneinthree(x_{i_1},x_{j_1},x_{k_1})\wedge
\oneinthree(x_{i_2},x_{j_2},x_{k_2})\wedge 
\dots\wedge \oneinthree(x_{i_s},x_{j_s},x_{k_s})
$};

\node[isosceles triangle,isosceles triangle apex angle=147,draw,draw=green!60!black,fill=green!20,very thick] (or) at (3,3.5) {$\mathrm{OR}_{m}$};

\draw[->] (and1) to (or.100);
\draw[->] (and2) to (or.112.5);
\draw[->,dashed] (andi) to (or.237);
\draw[->] (andm) to (or.260);

\draw[->] (or) to (5.5,3.5);
\draw[->] (5.5,3.5) to (7.5,3.5);
\draw[->,dashed] (7.5,3.5) to (10.5,3.5);

\draw[->] (10.5,3.5) to (12.8,3.5);
\draw[->,dashed] (12.8,3.5) to (9.4+1.2+0.2+0.2+0.2+3.2,3.5);




\end{tikzpicture}
    \caption{Reduction from $\Pi_{2}$-$\QCSP(\mathrm{1IN3})$ to 
    $\QCSP(\Gamma)$.}
    \label{PiTwoReductionDiagram}
\end{figure}
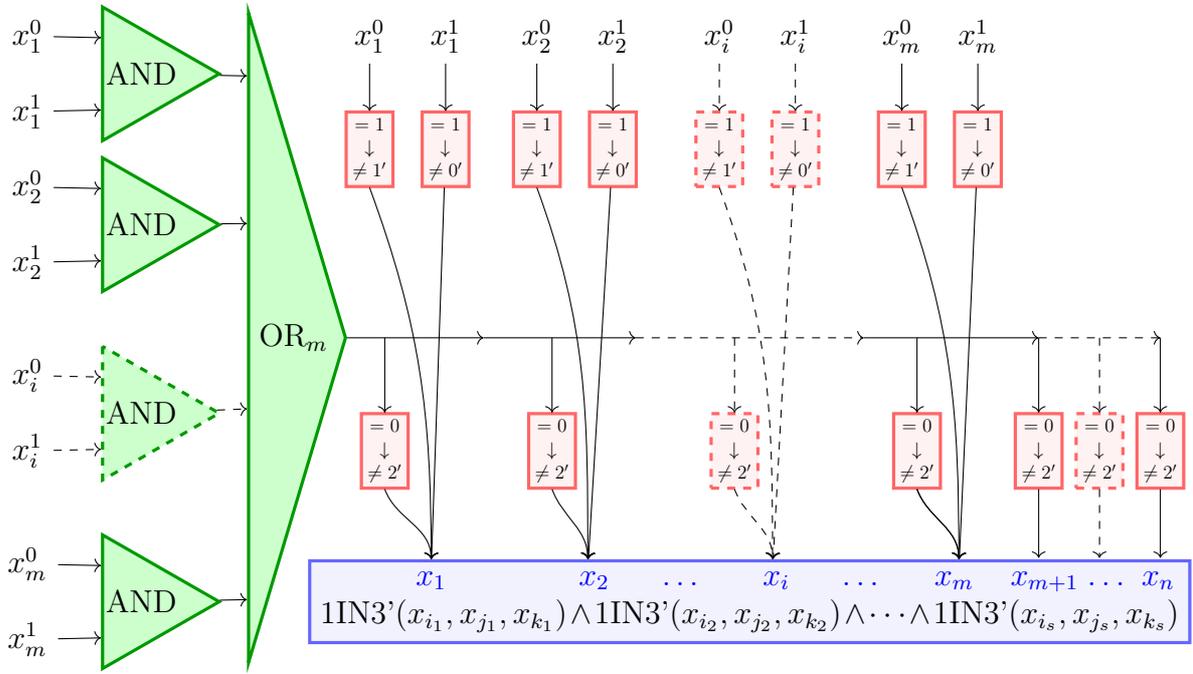

The quantifier free part of $\Psi$ is shown in Figure \ref{PiTwoReductionDiagram},  where triangle elements are $\mathrm{AND}_{2}$ 
and $\mathrm{OR}_{m}$, rectangular elements are $\delta_{0}$, $\delta_{1}$, 
and $\epsilon$, and the big block at the bottom is 
just the conjunction of the corresponding 
$\oneinthree$-relations.
The variable $x_{i}$ in $\Psi$ 
will take values from $\{0',1',2'\}$, 
which makes the use of universal quantifiers directly impossible. 
For the universal variables to be applicable we introduce two new variables 
$x_{i}^0$ and $x_{i}^1$ for each $x_{i}$ where $i\in\{1,2,\dots,m\}$. 
We expect exactly one of the two values $x_{i}^0$ and $x_{i}^1$ to be equal to 1.
$x_{i}^{0}=1$ means that $x_{i} = 0'$,  
$x_{i}^{1}=1$ means that $x_{i} = 1'$. 
Using the relations $\delta_{0}$ and $\delta_{1}$ we make 
$x_{i}$ equal to the value we need.
Notice that 
$x_{i}$ can also be equal to $2'$ and this value should be forbidden by $\epsilon$. 
Let us prove that $\Phi$ and $\Psi$ are equivalent.

$\Phi\Rightarrow \Psi$.
Suppose we have a winning strategy for the Existential Player (EP) in $\Phi$, 
let us define a winning strategy for the EP in $\Psi$.
If the Universal Player (UP) in $\Psi$ plays 
$x_{i}^0 = x_{i}^1 = 1$, or $x_{i}^0\notin\{0,1\}$, or $x_{i}^1\notin\{0,1\}$ for some $i$, then  
the winning strategy for the EP is to choose $x_{1} = \dots = x_{n} = 2'$.
Then the $\oneinthree$-block of $\Psi$ is satisfied as 
$(2',2',2',\dots,2')$ is its trivial solution.
Only value 1 restricts the second coordinate of the relations $\delta_{0}$ and $\delta_{1}$,
hence the best choice for the UP is to 
make $(x_{i}^0,x_{i}^1)\in\{(0,1),(1,0)\}$.
We interpret $(x_{i}^0,x_{i}^1)=(0,1)$ as $x_{i} = 1$ and 
$(x_{i}^0,x_{i}^1)=(1,0)$ as $x_{i} =0$.
Then the EP in $\Psi$ plays 
$x_{1},\dots,x_{m}$ according to $(x_{i}^0,x_{i}^1)$ and 
plays $x_{m+1},\dots,x_{n}$  
just copying the moves of the EP in $\Phi$
but $0'$ instead of $0$ and $1'$ instead of $1$.
Since the quantifier-free part of $\Phi$ is satisfied, 
the $\oneinthree$-block of $\Psi$ is also satisfied.

$\Psi\Rightarrow \Phi$. Suppose 
the UP in $\Phi$ plays 
$x_1,\dots,x_m$. 
Let the UP in $\Psi$ play $x_{i}^0 =1$, $x_{i}^1 =0$ if $x_{i} = 0$ and 
$x_{i}^0=0$, $x_{i}^1 =1$ if $x_{i} = 1$.
Then the EP in $\Psi$ should play 
only values from $\{0',1'\}$ for $x_{1},\dots,x_{n}$.
The EP in $\Phi$ just copies the moves of the EP in $\Psi$ playing $0$ instead of $0'$, and $1$ instead of $1'$.
The satisfiability of the $\oneinthree$-block of $\Psi$ 
implies the satisfiability of the quantifier-free part of $\Phi$.
\end{proof}

\begin{lem}
$\QCSP(\Gamma)$ is in $\Pi_{2}^{P}$.
\end{lem}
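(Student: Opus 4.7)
The plan is to invoke the paper's dichotomy theorem: $\QCSP(\Gamma)$ is either PSpace-complete or in $\Pi_{2}^{P}$, so it suffices to show that $\QCSP(\Gamma)$ is not PSpace-hard. By Theorem~\ref{THMNonIdempotantClassification} this reduces to showing that no mighty tuple~I is q-definable over $\Gamma$, and the rest of the argument is a structural analysis of q-definable relations over $\Gamma$.

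The relations of $\Gamma$ have a very specific shape: $\delta_{0}, \delta_{1}, \epsilon$ force their second coordinate into $\{0', 1', 2'\}$; $\oneinthree$ has only tuples over $\{0', 1', 2'\}$; and $\mathrm{AND}_{2}, \mathrm{OR}_{2}$ treat any value in $\{2, 0', 1', 2'\}$ appearing in one of the first two coordinates as a wildcard. Consequently, every q-defined relation inherits a bipartition of its coordinates into \emph{propositional} positions taking values in $\{0', 1', 2'\}$ and \emph{control} positions whose only meaningful values are $0$ and $1$ -- any control value outside $\{0,1\}$ trivialises the constraints it appears in. Given a candidate tuple $(Q,D,B,C,\Delta)$ purportedly q-defined over $\Gamma$, the plan is to trace through its six axioms: conditions (3)--(4) force $\prescript{\zv}{\delta}Q^{\alpha}$ to be the full equivalence on $\prescript{\zv}{\delta}D$ as soon as $\alpha$ is constant, while (5)--(6) demand that some non-constant $\alpha$ refine it strictly into classes $B$ and $C$. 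The role separation should make these two requirements incompatible: a constant and a non-constant tuple in the control positions can only differ in values outside $\{0,1\}$, which trivialise the same set of constraints in $\Gamma$, so $\prescript{\zv}{\delta}Q^{\alpha}$ cannot actually shrink as $\alpha$ leaves the diagonal.

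The main obstacle is making this structural analysis airtight: the number of potential q-definable $(Q,D,B,C,\Delta)$ is a priori large, and each family of shapes must be checked to violate some mighty-tuple axiom. If the case analysis becomes unwieldy, the alternative is the direct route: exhibit a $\Pi_{2}^{P}$-algorithm by hand, using the same role separation to rewrite each QCSP instance with only polynomial blow-up so that all universal quantifiers precede all existential ones, reducing the problem to $\Pi_{2}$-$\QCSP(\Gamma)$, which is trivially in $\Pi_{2}^{P}$. I would likely pursue the direct algorithmic route first, since it avoids mighty-tuple bookkeeping altogether and matches the spirit of the lemma statement, and fall back to the dichotomy route only if the quantifier-rearrangement step turns out to be subtle.
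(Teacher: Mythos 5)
Both routes you sketch have genuine gaps, and neither matches the paper's proof. Your preferred route proposes to rewrite each instance ``with only polynomial blow-up so that all universal quantifiers precede all existential ones,'' reducing to $\Pi_2$-$\QCSP(\Gamma)$. But the paper explicitly notes in Section~\ref{SUBSECTIONWhatIsInside} that naively moving universal quantifiers left does \emph{not} produce an equivalent $\Pi_2$-sentence even when $\QCSP(\Gamma)$ is in $\Pi_2^P$, and it poses as an \emph{open question} whether a smarter polynomial reduction to a $\Pi_2$-sentence over the same language exists. The informal ``role separation'' of coordinates gives no reason to expect such a rewrite for this particular $\Gamma$, and you do not attempt one; you are proposing to resolve the paper's open problem as your main route.

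The dichotomy fallback is logically sound in principle (showing that no mighty tuple is q-definable over $\Gamma$ would yield $\Pi_2^P$-membership via Theorem~\ref{THMSmallWinningSetOrMightyTuple}), but it replaces the lemma by a heavy, uncarried-out structural case analysis over all q-definable parameterized relations, and the one concrete structural claim you make is wrong: it is not true that constant and non-constant control tuples ``trivialise the same constraints.'' In $\Gamma$, $\epsilon$ is triggered by the control value $0$, $\delta_0,\delta_1$ are triggered by $1$, and $\mathrm{AND}_2,\mathrm{OR}_2$ are trivialised by values outside $\{0,1\}$; so a non-constant $\alpha$ over $\{0,1\}$ genuinely changes which constraints fire, and the claimed incompatibility between axioms (3)--(4) and (5)--(6) does not follow from role separation. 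The paper's actual proof is direct and uses an ingredient absent from your plan: it exhibits an explicit binary operation $g$ preserving $\Gamma$ and all constants (returning either its first argument or an element of $\{0',1',2\}$, and returning $1$ only on $(1,1)$ and $2'$ only on $(2',2')$), and uses $g$ to show that an optimal move for the Existential Player can be computed in polynomial time with an NP-oracle by testing, for each candidate value $d$, satisfiability of the quantifier-free part with all \emph{remaining} universal variables set to $1$. That gives $\Pi_2^P$-membership via the $\mathrm{coNP}^{\mathrm{NP}}$ characterization. Finding $g$ and the ``set the tail of universals to $1$'' oracle strategy is the core of the proof, and neither appears in your proposal.
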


\begin{proof}
One of the definitions of the class 
$\Pi_{2}^{P}$ is $\mathrm{coNP}^{\mathrm{NP}}$, that is the class of problem solvable 
by a nondetermenistic Turing machine 
augmented by an oracle for some NP-complete problem \cite{arora2009computational}.
Thus, to prove the membership in $\Pi_{2}^P$, it is sufficient to show that an optimal strategy 
for the EP can be calculated in polynomial time using 
the NP-oracle. 

Suppose we have an instance 
$\forall x_1\exists y_1 \forall x_2 \exists y_2 \dots \forall x_{n} \exists y_{n} \Phi$.
Suppose 
the variables 
$x_{1},\dots,x_{i}$ and $y_{1},\dots,y_{i-1}$
are already evaluated with 
$a_1,\dots,a_{i}$ and $b_{1},\dots,b_{i-1}$, respectively. 
We need to calculate an optimal value $b_{i}$ for $y_{i}$, i.e.,
suppose the EP can still win in this position then
she should be able to win after making the move $y_{i} = b_{i}$.

We will explain the algorithm first and then we argue why it returns an optimal move.
Using the NP-oracle for every $d\in A$ we check the satisfiability of the instance
\begin{align}\label{FORMULAWITHONEs}
    \Phi
\wedge \bigwedge_{j=1}^{j=i} x_{j} = a_{j}
\wedge \bigwedge_{j=i+1}^{j=n} x_{j} = 1
\wedge \bigwedge_{j=1}^{j=i-1} y_{j} = b_{j}
\wedge y_{i} = d.
\end{align}
Thus, we just send the previous variables to their values, evaluate all further 
universal variables to 1 and the variable $y_{i}$ to $d$.
Let $D$ be the set of $d$ such that 
(\ref{FORMULAWITHONEs}) has a solution.
Then we choose an optimal value as follows:
\begin{enumerate}
    \item If $D$ is empty, then the EP cannot win and an optimal move does not exist.
    \item If $D=\{b\}$ for some $b$ then $y_{i}=b$ is the optimal move.
    \item If $c\in D\cap \{0',1',2\}$ then 
    $y_i = c$ is an optimal move.
\end{enumerate}

The algorithm is obviously polynomial. Hence it is sufficient to prove that
cases 1-3 cover all the possible cases and 
the move chosen by 3 is optimal.

The binary operation $g(x,y) = 
\begin{cases}
x, & \text{if $y=1$}\\
x, & \text{if $x=y$}\\
y, & \text{if $y\in\{0',1'\}$}\\
x, & \text{if $x\in\{0',1'\}$ and $y=2'$}\\
2, & \text{otherwise}
\end{cases}$ preserves all the relations from $\Gamma$ and all constant relations,
which follows from the following properties of $g$ and manual checking of some cases:
\begin{itemize}
    \item $g$ either returns the first variable or an element of $\{0',1',2\}$.
    \item $g$ preserves $\{0',1',2'\}$ and $g$ restricted to $\{0',1',2'\}$ returns 
    the last non-$2'$ value if it exists. 
    \item $g$ returns 1 only on the tuple $(1,1)$ and it returns $2'$ only on the tuple $(2',2')$.
\end{itemize}
This implies (see \cite{bartopolymorphisms}) that 
$g$ preserves the solution set of any instance of $\CSP(\Gamma)$.

The only cases that are not covered by cases 1-3 
are $\{0,1\}\subseteq D$,
$\{0,2'\}\subseteq D$, or 
$\{1,2'\}\subseteq D$.
Since $g(1,0)=g(0,2') = g(1,2')=2$, we have $2\in D$, which means that $D$ satisfies case 3.

It remains to show that any value
$c\in D\cap \{0',1',2\}$ is an optimal move for the EP.
Let 
$y_{j}:=f_{j}(x_{i+1},\dots,x_{j})$, where $j=i,i+1,\dots,n$, be the winning strategy 
for the EP.  
Then 
the tuple 
\begin{align}\label{FORMULASkolemFunctions}
    (a_{1},b_1,\dots,a_{i-1},b_{i-1},a_{i},f_{i}(), a_{i+1},
f_{i+1}(a_{i+1}),a_{i+2},f_{i+2}(a_{i+1},a_{i+2}),\dots,
a_{n},f_{n}(a_{i+1},\dots,a_n))
\end{align} 
is a solution of the quantifier-free part $\Phi$ 
for any $a_{i+1},\dots,a_{n}\in A$.
By the definition of 
$D$
the tuple
\begin{align}\label{FORMULAAfterOnes}
(a_{1},b_1,\dots,a_{i-1},b_{i-1},a_{i},c, 1,
c_{i+1},1,c_{i+2},\dots,
1,c_{n})
\end{align} 
is a solution of $\Phi$ for some $c_{i+1},\dots,c_{n}\in A$.
Applying $g$ 
to the tuples (\ref{FORMULASkolemFunctions})
and
(\ref{FORMULAAfterOnes}) coordinate-wise we derive that the tuple 
\begin{align*}
  (a_{1},b_1,\dots,a_{i-1},b_{i-1},a_{i},c,& \\ a_{i+1},
g(f_{i+1}(a_{i+1}),&c_{i+1}),a_{i+2},g(f_{i+2}(a_{i+1},a_{i+2}),c_{i+2}),\dots,
a_{n},g(f_{n}(a_{i+1},\dots,a_n),c_{n}))
\end{align*} 
is a solution of $\Phi$ for any $a_{i+1},\dots,a_{n}\in A$.
Hence $y_{i} = c$ is an optimal move for the EP, which completes the proof.
\end{proof}

Thus, we proved Theorem \ref{THMPiTwoCompleteLanguage} from 
Section \ref{SECTIONMainResults}.


\section{Proof of the main result}\label{SECTIONMainProof}

\subsection{Necessary definitions and notations}
In the paper we assume that the overall domain $A$ is finite and fixed. 
To simplify notations we even assume that $A = \{1,2,\dots,|A|\}$.
For a positive integer $m$ by $[m]$ we denote the set $\{1,2,\dots,m\}$.

For two binary relations 
$S_1$ and $S_2$ by 
$S_1+S_2$ we denote the composition of two relations, 
that is $S_1+S_2 =S$, 
where $S(x,y) = \exists z\; S_1(x,z)\wedge S_{2}(z,y)$.
By $S_{1}-S_{2}$ we denote the binary relation 
$S(x,y) = \exists z\; S_1(x,z)\wedge S_{2}(y,z)$,
that is, 
$S_{1}-S_{2} = S_{1} + S_{2}'$, where 
$S_{2}'$ is obtained from $S_{2}$ by switching the coordinates.
Similarly, we can write 
$U_1+S_2$ if $U_{1}$ is unary, 
that is $U_1+S_2 =U$, 
where $U(x) = \exists z\; U_1(z)\wedge S_{2}(z,x)$.

For a formula $\Phi$ and some free variables $u_1,\dots,u_s$ of this formula 
by 
$\Phi\downarrow \begin{smallmatrix}u_1& \dots & u_{s} \\
v_1 & \dots & v_{s} \end{smallmatrix}$
we denote the formula 
obtained from $\Phi$ 
by substituting each $u_i$ by $v_{i}$.

For a relation $R\subseteq A^{n}$ and $i\in[n]$
by $\proj_{i}(R)$ we denote the projection 
of $R$ onto the $i$-th coordinate.
Similarly for a constraint $C = R(u_1,\dots,u_s)$
by $\proj_{u_{i}}(C)$ we denote $\proj_{i}(R)$.

Sometimes it will be convenient to assume 
that some of the variables of a relation are external parameters.
Thus, a relation of arity $|A|+2$ is called
a $\zv$-parameterized binary relation, where 
$\zv\in A^{|A|}$.
Some relations have two or even three parameters. 
Thus, we may consider 
$(\zv,\delta,\alpha)$-parameterized binary relation $Q$, 
where $\zv\in A^{|A|}$, $\delta\in A^{k}$, and 
$\alpha \in A^{m}$, 
which is a relation of arity 
$|A|+k+m+2$.
To refer to the
binary relation for the fixed parameters 
$\zv,\delta$, and $\alpha$ 
we write 
$\prescript{\zv}{\delta}{Q^{\alpha}}$.
Sometimes we replace the $\alpha$-parameter
with $\forall$ or $\forall\forall$ meaning that 
we universally quantify this parameter in two 
different ways 
(see equations (\ref{EQForallInterpretation}) and (\ref{EQForallForallInterpretation})).

A parameterized unary relation is called 
\emph{nonempty} if it is nonempty for every choice of 
the parameters.
For an instance $\mathcal I$ by 
$\Var(\mathcal I)$ we denote the set of all variables appearing in this instance.

For a CSP instance $\mathcal I$ (conjunctive formula) 
and some variables $u_1,\dots,u_k\in\Var(\mathcal I)$  
by 
$\mathcal I(u_{1},\dots,u_{k})$ we denote the
set of all tuples $(a_1,\dots,a_k)$ such that 
$\mathcal I$ has a solution with 
$u_{i} = a_i$ for every $i$.
Thus, $\mathcal I(x_{1},\dots,x_{k})$
defines a $k$-ary relation.
Since this relation is defined from $\mathcal I$ by adding 
existential quantifiers, 
the relation $\mathcal I(x_{1},\dots,x_{k})$ 
is q-definable from relations in $\mathcal I$.

\subsection{Universal subset}

Suppose $S\subseteq W\subseteq A^{t}$ 
and $\Sigma$ is a set of relations on $A$.
We say that 
\emph{$S$ is a universal subset of $W$ over $\Sigma$}, denote 
$S\useq^{\Sigma} W$, if 
there exist $s$ and a relation $R\subseteq A^{t+s}$ q-definable from $\Sigma$ such that 
\begin{align*}
    S(y_1,\dots,y_t) &= \forall x_1 \dots \forall x_s \;
R(y_1,\dots,y_t,x_1,\dots,x_s),\\
W(y_1,\dots,y_t) &= \forall x \;
R(y_1,\dots,y_t,x,\dots,x).
\end{align*}
Notice that $S\useq^{\Sigma} W$
also implies that $\Sigma$ q-defines both $S$ and $W$.
To emphasize that $S$ and $W$ are different 
we write 
$S\us^{\Sigma} W$
instead of $S\useq^{\Sigma} W$.

\begin{lem}[proved in Section \ref{SUBSECTIONAuxiliaryStatements}]\label{LEMUniversalSubuniverseImplies}
Suppose 
\begin{align*}
W(y_1,\dots,y_t) &= \exists u_1 \exists u_{2} \dots\exists u_{\ell} 
\;\;\;W_{1}(z_{1,1},\dots,z_{1,n_1})\wedge \dots\wedge W_{m}(z_{m,1},\dots,z_{m,n_m}),\\
    S(y_1,\dots,y_t) &= \exists u_1 \exists u_{2} \dots\exists u_{\ell} 
\;\;\;S_{1}(z_{1,1},\dots,z_{1,n_1})\wedge \dots\wedge S_{m}(z_{m,1},\dots,z_{m,n_m}),
\end{align*} 
where each $z_{i,j}\in\{y_1,\dots,y_t,u_1,\dots,u_{\ell}\}$, 
and $S_{i}\useq^\Sigma W_{i}$ for every $i$.
Then $S\useq^\Sigma W$.
\end{lem}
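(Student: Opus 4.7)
The plan is to construct an explicit witness relation $R$ realizing $S\useq^\Sigma W$. For each $i\in[m]$, let $R_i(z_{i,1},\dots,z_{i,n_i},x_{i,1},\dots,x_{i,s_i})$ be the $\Sigma$-q-definable relation given by $S_i\useq^\Sigma W_i$. Set $s=s_1+\dots+s_m$ and $K=|A|^s$, and for each $k\in[K]$ introduce fresh tuples $\bar x^{(k)}=(\bar x^{(k)}_1,\dots,\bar x^{(k)}_m)$ with $|\bar x^{(k)}_i|=s_i$. Define
\[
R(y_1,\dots,y_t,(\bar x^{(k)})_{k=1}^{K})\;:=\;W(y_1,\dots,y_t)\,\wedge\,\exists u_1\cdots\exists u_{\ell}\,\bigwedge_{k=1}^{K}\bigwedge_{i=1}^{m} R_i(z_{i,1},\dots,z_{i,n_i},\bar x^{(k)}_i),
\]
with the $z_{i,j}$ as in the formulas defining $S$ and $W$ (so they depend on the $y$'s and the existentially-quantified $u$'s). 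Both $W$ and the $R_i$ are q-definable from $\Sigma$ (the former because each $W_i$ is, by the remark following the definition of $\useq^\Sigma$), hence $R$ is q-definable from $\Sigma$.

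The diagonal equation $\forall x\,R(y_1,\dots,y_t,x,\dots,x)=W(y_1,\dots,y_t)$ is the easier check. Setting every universal variable to the same $x$ makes each $\bar x^{(k)}_i$ equal to $(x,\dots,x)$; the conjunction over $k$ collapses to $m$ identical copies of $R_i(\bar z_i,x,\dots,x)$, yielding
\[
\forall x\,R(y_1,\dots,y_t,x,\dots,x)=W(y_1,\dots,y_t)\wedge\forall x\,\exists\bar u\,\bigwedge_i R_i(\bar z_i,x,\dots,x).
\]
The second factor contains $\exists\bar u\,\forall x\,\bigwedge_i R_i=\exists\bar u\,\bigwedge_i W_i(\bar z_i)=W(y_1,\dots,y_t)$, so intersecting with $W(y_1,\dots,y_t)$ again yields $W(y_1,\dots,y_t)$.

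The main work and the key obstacle is the equation $\forall\bar x^{(1)}\cdots\bar x^{(K)}\,R=S(y_1,\dots,y_t)$. The inclusion $S\subseteq\forall R$ is immediate: any $\bar u^*$ witnessing $S$ satisfies every $S_i(\bar z_i)$ and hence every $R_i(\bar z_i,\cdot)$ regardless of argument. For the reverse inclusion one must get past the fact that $\forall\bar x\,\exists\bar u$ is in general strictly weaker than $\exists\bar u\,\forall\bar x$, so the naive $K=1$ choice does not suffice. The point of taking $K=|A|^s$ copies is that the outer universal quantifier then ranges in particular over the specific assignment in which $(\bar x^{(k)})_{k=1}^{K}$ enumerates all of $A^{s}$; at that assignment, for each $i$ the projections $\{\bar x^{(k)}_i\}_{k}$ cover $A^{s_i}$, so $\bigwedge_k R_i(\bar z_i,\bar x^{(k)}_i)$ is precisely $\forall\bar x_i\,R_i(\bar z_i,\bar x_i)=S_i(\bar z_i)$. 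Therefore the inner $\exists\bar u$ at that assignment forces $\bar u$ to be a witness of $S(y_1,\dots,y_t)$, giving $\forall R\subseteq S$. Combined with $S\subseteq W$, we conclude $\forall R=S\cap W=S$, completing the proof that $R$ realizes $S\useq^\Sigma W$.
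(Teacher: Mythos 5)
Your proof is correct. It follows the same high-level strategy as the paper's proof—build an explicit witness relation $R$ with enough universal variables that the $\forall\forall$-specialization forces each $R_i$ through all of its extra arguments (yielding $S_i$), while conjoining ``$W$-information'' so that the diagonal $\forall$-specialization collapses to $W$—but the concrete witness is different. The paper introduces only $k=|A|$ fresh universal variables and, for each $i$, conjuncts $R_i(\bar z_i, x_{\phi(1)},\dots,x_{\phi(k_i)})$ over all index maps $\phi\colon[k_i]\to[k]$; on the diagonal these conjuncts all coincide, and at the distinct assignment $(x_1,\dots,x_k)=(1,\dots,|A|)$ they enumerate $A^{k_i}$. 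You instead materialize the enumeration directly, using $|A|^{s}\cdot s$ universal variables grouped into $|A|^{s}$ blocks of length $s$ and conjuncting over all blocks; the special assignment where the blocks enumerate $A^{s}$ then forces each projection onto the $i$-th coordinates to cover $A^{s_i}$. Both are valid; the paper's witness is more economical (arity $t+|A|$ versus $t+s\cdot|A|^{s}$), and conjoins the individual $W_i(\bar z_i)$ inside the $\exists\bar u$, whereas you conjoin $W(y_1,\dots,y_t)$ as a top-level factor—but since the lemma only requires q-definability and places no bound on the number of added universal variables, neither of these differences affects correctness.
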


For $k\ge 0$ we write $S\uuus_{k}^{\Sigma} W$ if 
$S\useq^{\Sigma} C_{1}\useq^{\Sigma} C_{2}\useq^{\Sigma}\dots\useq^{\Sigma} C_{k}\useq^{\Sigma} W$
for some relations $C_{1},\dots,C_{k}$ q-definable from 
$\Sigma$. We often omit $k$, if we do not want to specify the length of the sequence.
In Section \ref{SECTIONFindingSolution} we usually omit $\Sigma$ and write just $\useq$, $\us$, or $\uuus$ 
meaning that $\Sigma = \{R\}$.

It follows from the definition that 
for any $\alpha$-parameterized relation $Q$ 
we have 
$Q^{\forall\forall}\useq^{\{Q\}} Q^{\forall}$.


\subsection{Induced CSP Instances}\label{SUBSECTIONInducedCSPInstances}

Suppose $\Psi=\exists y_0 \forall x_1 \exists y_1 \dots
\forall x_{n} \exists y_{n} \Phi$ is 
an instance of $\QCSP(\Gamma)$, 
where $\Phi$ a quantifier-free conjunctive formula.
Let us show how to build an equivalent CSP instance of an exponential size.
If the sentence 
$\Psi$ holds, then there exist Skolem functions 
$f_{0},\dots,f_{n}$
defining a winning strategy for the existential 
player (EP), that is, she can play 
$y_{i} = f_{i}(x_1,\dots,x_{i})$.
Since it is a winning strategy, if the universal player (UP) plays 
$(x_{1},x_2,\dots,x_{n})=(a_1,a_2,\dots,a_{n})$ and 
the EP plays 
$(y_0,y_1,\dots,y_{n})=(f_{0}(),f_{1}(a_1),f_{2}(a_1,a_2),\dots,f_{n}(a_1,\dots,a_{n}))$,
then the obtained evaluation should satisfy 
the quantifier-free part $\Phi$.
We introduce 
an existential variable 
$y_{i}^{a_1,\dots,a_i}$ 
for every $i$ and $a_{1},\dots,a_{i}\in A$.
Then 
$\Psi$ is equivalent 
to the satisfiability of 
the CSP instance
$\bigwedge\limits_{a_1,\dots,a_{n}\in A}
    (\Phi\downarrow \begin{smallmatrix}
x_1 & \dots & x_{n} & y_1 &y_2 &\dots & y_{n}\\
    a_1& \dots & a_{n} & y_{1}^{a_1}&y_{2}^{a_{1},a_2}&
\dots&y_{n}^{a_1,\dots,a_{n}}\end{smallmatrix})$. 
Notice that this instance is of exponential size, that is 
why we cannot really use it in the algorithm. 
Since usually we do not have constant relations in our 
constraint language $\Gamma$ we replace the constants 
$1,2,\dots,|A|\in A$ by the respective universally quantified variables 
$z_{1},\dots,z_{|A|}$. 
Since in the paper 
we do not care about a concrete conjunctive formula $\Phi$, 
we usually define the relation $R\subseteq A^{2n+1}$ by  
$R(y_0,y_1,\dots,y_{n},x_1,\dots,x_n)=\Phi$ and work with it 
instead of constraints of $\Phi$.

In Section \ref{SECTIONFindingSolution}
the crucial idea is to show that
the induced exponential-size CSP instance
can be solved by arc-consistency, and to make 
it work we replace
the relations $R$ by stronger relations defined below.
For $R\subseteq A^{2n+1}$ and $m \in \{0,1,\dots,n-1\}$ put  
\begin{align*}
\mathcal W_R^{m}
(y_0,\dots,y_m,x_1,\dots,x_m)
=&
\forall x\exists y_{m+1}\exists y_{m+2}\dots\exists y_{n} 
R(y_0,\dots,y_n,x_1,\dots,x_m,x,x,\dots,x),\\
\mathcal S_R^{m}
(y_0,\dots,y_m,x_1,\dots,x_m)
=&
\forall x\exists y_{m+1}\forall x'
\exists y_{m+2}\dots\exists y_{n} 
R(y_0,\dots,y_n,x_1,\dots,x_m,x,x',\dots,x').
\end{align*}
Notice that $\mathcal S_R^{n-1}=\mathcal W_R^{n-1}$.
We set by definition that 
$\mathcal S_R^{n}=\mathcal W_R^{n}=R$. 
The crucial property of 
$\mathcal W_R^{m}$ and $\mathcal S_R^{m}$
is formulated in the following lemma.

\begin{lem}\label{LEMSIsUniversalInW}
Suppose $R\subseteq A^{2n+1}$ and $m\in\{0,1,\dots,n\}$, 
then 
$\mathcal {S}_R^{m}\useq^{\{R\}} \mathcal {W}_R^{m}$.
\end{lem}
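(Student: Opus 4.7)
The plan is to Skolemize the alternating quantifier structure of $\mathcal S_R^m = \forall x \exists y_{m+1} \forall x' \exists y_{m+2} \dots \exists y_n \; R(y_0,\dots,y_n,x_1,\dots,x_m,x,x',\dots,x')$ into a $\Pi_1$-formula whose matrix is q-definable from $\{R\}$. The case $m = n$ is trivial, so assume $m < n$. A winning strategy in the $\mathcal S_R^m$-game is a pair of Skolem functions $f\colon A \to A$ (for $y_{m+1}$, depending on $x$ only) and $g_\ell\colon A \times A \to A$ (for $y_\ell$, $\ell = m+2,\dots,n$), and I will encode $f$ by the existential tuple $(y_{m+1}^{(1)},\dots,y_{m+1}^{(|A|)})$ and each $g_\ell$ by $(y_\ell^{(i,j)})_{i,j\in[|A|]}$.

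Take $s = 2|A|$ with universal slots $u_1,\dots,u_{|A|}$ and $v_1,\dots,v_{|A|}$, and let $R'(y_0,\dots,y_m,x_1,\dots,x_m,u_1,\dots,u_{|A|},v_1,\dots,v_{|A|})$ be the existential closure over all the Skolem variables of
\[\bigwedge_{i,j\in[|A|]} R\bigl(y_0,\dots,y_m,y_{m+1}^{(i)},y_{m+2}^{(i,j)},\dots,y_n^{(i,j)},x_1,\dots,x_m,u_i,v_j,v_j,\dots,v_j\bigr).\]
This is manifestly q-definable from $\{R\}$. For the off-diagonal identity $\forall u_1,\dots,u_{|A|},v_1,\dots,v_{|A|}\; R' = \mathcal S_R^m$, direction $(\Leftarrow)$ is obtained by substituting $y_{m+1}^{(i)} := f(u_i)$ and $y_\ell^{(i,j)} := g_\ell(u_i,v_j)$, upon which each $R$-atom becomes the $\mathcal S_R^m$-constraint at $(x,x') = (u_i,v_j)$. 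Direction $(\Rightarrow)$ uses the finiteness of $A$: specialise the universal tuple to the identity enumeration $(u_1,\dots,u_{|A|}) = (v_1,\dots,v_{|A|}) = (1,2,\dots,|A|)$; the witnesses $y_{m+1}^{(i)}, y_\ell^{(i,j)}$ then define Skolem functions $f(i) := y_{m+1}^{(i)}$ and $g_\ell(i,j) := y_\ell^{(i,j)}$ that satisfy the $\mathcal S_R^m$-condition at every play $(c,c')\in A^2$.

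For the diagonal $\forall x\; R'(\dots,x,\dots,x)$: after substituting $u_i = v_j = x$ uniformly, each $R$-atom collapses to $R(\dots,y_{m+1}^{(i)}, y_\ell^{(i,j)},\dots,x,x,\dots,x)$, in which the index $j$ now appears only inside the fresh existentials $y_\ell^{(i,j)}$. These existentials therefore decouple across $j$, the witnesses $y_{m+1}^{(i)}$ for different $i$ may all be taken equal, and the conjunction reduces to $\exists y_{m+1},\dots,y_n\; R(\dots,x,x,\dots,x)$; quantifying $\forall x$ recovers exactly $\mathcal W_R^m$. The main subtlety, which I expect to be the only delicate step, is the $(\Rightarrow)$ direction of the off-diagonal identity: a priori, $\forall u,v\; R'$ only produces witnesses that may depend on $(u,v)$, which would be strictly weaker than the uniform Skolem functions that $\mathcal S_R^m$ demands. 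The trick is that the identity enumeration contains every possible play $(c,c') \in A^2$ as some pair $(u_i,v_j) = (i,j)$ in a single specialisation, so no averaging over $(u,v)$ is needed; this is also precisely why $s$ must be $2|A|$ rather than $2$, and no induction on $m$ is required.
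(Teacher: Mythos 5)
Your proof is correct and follows essentially the same idea as the paper: build a witness relation with $O(|A|)$ extra universal slots whose diagonal recovers $\mathcal W_R^m$ and whose full universal closure recovers $\mathcal S_R^m$, with the identity enumeration supplying the key direction. The only difference is cosmetic — the paper's $Q$ uses $1+|A|$ slots, keeping a single shared existential $y_{m+1}$ outside a conjunction $\bigwedge_{a\in A}$ and Skolemizing only the inner $\forall x'$ into slots $x^1,\dots,x^{|A|}$, whereas you Skolemize both quantifiers into $2|A|$ slots with per-atom existentials $y_{m+1}^{(i)}, y_\ell^{(i,j)}$ — and both variants are equally valid.
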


\begin{proof}
We define a relation $Q$  
by
\begin{align*}
Q(y_0,\dots,y_m,&x_1,\dots,x_m,x,x^{1},\dots,x^{|A|}) = 
\\
\exists y_{m+1} \bigwedge\limits_{a\in A}
&\left(
\exists y_{m+2}\exists y_{m+3} \dots\exists y_{n}\;
R(y_0,\dots,y_{m},y_{m+1},y_{m+2},\dots,y_{n},x_1,\dots,x_m,x,x^{a},\dots,x^{a})
\right).
\end{align*}
Then the relation $Q$ witnesses that $\mathcal S_R^{m}\useq \mathcal W_R^{m}$:
\begin{align*}
\mathcal W_R^{m}
(y_0,\dots,y_m,x_1,\dots,x_m) 
&=\forall x \; Q(y_0,\dots,y_m,x_1,\dots,x_m,x,x,\dots,x), \\
\mathcal S_R^{m}
(y_0,\dots,y_m,x_1,\dots,x_m)  
&=\forall x \forall x^1\forall x^2\dots\forall x^{|A|} \; Q(y_0,\dots,y_m,x_1,\dots,x_m,x,x^{1},x^{2},\dots,x^{|A|}).\end{align*}%
\end{proof}

Thus, we can define a bunch of CSP instances equivalent to the original 
QCSP instance.  

\begin{lem}\label{LEMEquivalentCSPInstances}
Suppose $\exists y_0 \forall x_1 \exists y_1 \dots
\forall x_{n} \exists y_{n} \Phi$ is 
an instance of $\QCSP(\Gamma)$
and $R\subseteq A^{2n+1}$ is defined by 
$R(y_0,y_1,\dots,y_{n},x_1,\dots,x_n) = \Phi$. 
Then the following conditions are equivalent:
\begin{enumerate}
    \item $\exists y_0 \forall x_1 \exists y_1 \dots
\forall x_{n} \exists y_{n} \Phi$ holds;
    \item $\bigwedge\limits_{a_1,\dots,a_{n}\in A}
    (\Phi\downarrow \begin{smallmatrix}
x_1 & \dots & x_{n} & y_1 &y_2 &\dots & y_{n}\\
    a_1& \dots & a_{n} & y_{1}^{a_1}&y_{2}^{a_{1},a_2}&
\dots&y_{n}^{a_1,\dots,a_{n}}\end{smallmatrix})$ 
has a solution;
    \item $\bigwedge\limits_{a_1,\dots,a_{n}\in A}
R(y_{0},y_{1}^{a_1},y_{2}^{a_{1},a_2},\dots,y_{n}^{a_1,\dots,a_{n}},a_1,\dots,a_{n})$ 
has a solution;
    \item $\bigwedge\limits_{a_1,\dots,a_{n}\in A}
    (\Phi\downarrow \begin{smallmatrix}
x_1 & \dots & x_{n} & y_1 &y_2 &\dots & y_{n}\\z_{a_1}& \dots & z_{a_n} & y_{1}^{a_1}&y_{2}^{a_{1},a_2}&
\dots&y_{n}^{a_1,\dots,a_{n}}\end{smallmatrix})$
has a solution for every $z_1,\dots,z_{|A|}\in A$;
    \item $\bigwedge\limits_{a_1,\dots,a_{n}\in A}
R(y_{0},y_{1}^{a_1},y_{2}^{a_{1},a_2},\dots,y_{n}^{a_1,\dots,a_{n}},z_{a_1},\dots,z_{a_{n}})$
has a solution for every $z_1,\dots,z_{|A|}\in A$;
    \item $\bigwedge\limits_{\substack{ m\in\{0,1,\dots,n\}\\a_1,\dots,a_{m}\in A}}
\mathcal S_{R}^{m}(y_{0},y_{1}^{a_1},y_{2}^{a_{1},a_2},\dots,y_{m}^{a_1,\dots,a_{m}},
z_{a_1},\dots,z_{a_{m}})$
has a solution for every $z_1,\dots,z_{|A|}\in A$.
\end{enumerate}
\end{lem}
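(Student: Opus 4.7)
The plan is to handle the six conditions via a small cycle of implications, most of which are either bookkeeping or immediate from definitions. First I note that since the relation $R$ is \emph{defined} by $R(y_0,\dots,y_n,x_1,\dots,x_n) = \Phi$, the pair (2) $\Leftrightarrow$ (3) is just two notations for the same CSP instance, and likewise (4) $\Leftrightarrow$ (5). Next, (5) $\Rightarrow$ (3) is obtained by specializing the external parameters to $z_i := i$, so that $z_{a_j} = a_j$; and (6) $\Rightarrow$ (5) is obtained by keeping only the $m = n$ conjuncts of (6), since $\mathcal{S}_R^n = R$. It therefore suffices to establish (3) $\Rightarrow$ (1) and (1) $\Rightarrow$ (6).

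For (3) $\Rightarrow$ (1) I would use textbook Skolemization: given a satisfying assignment $b_0, \{b_i^{a_1,\dots,a_i}\}$ of the conjunction in (3), define Skolem functions $f_0 := b_0$ and $f_i(a_1,\dots,a_i) := b_i^{a_1,\dots,a_i}$. For any UP play $(a_1,\dots,a_n)$, setting $y_i := f_i(a_1,\dots,a_i)$ satisfies $R(y_0,\dots,y_n,a_1,\dots,a_n)$ and hence the quantifier-free part $\Phi$, so $\Psi$ holds.

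For (1) $\Rightarrow$ (6), I would fix winning Skolem functions $f_0,\dots,f_n$ and arbitrary parameters $z_1,\dots,z_{|A|}\in A$. Assign $y_0 := f_0()$ and $y_i^{a_1,\dots,a_i} := f_i(z_{a_1},\dots,z_{a_i})$. For each $m\in\{0,\dots,n\}$ and $a_1,\dots,a_m\in A$, I claim the constraint $\mathcal{S}_R^m(y_0, y_1^{a_1},\dots,y_m^{a_1,\dots,a_m}, z_{a_1},\dots,z_{a_m})$ holds. Unfolding its definition, given $x\in A$ I pick $y_{m+1} := f_{m+1}(z_{a_1},\dots,z_{a_m},x)$, and then given $x'\in A$ I pick $y_j := f_j(z_{a_1},\dots,z_{a_m},x,x',\dots,x')$ for $j \in \{m+2,\dots,n\}$; the winning property of $f_0,\dots,f_n$ against the universal play $(z_{a_1},\dots,z_{a_m},x,x',\dots,x')$ supplies the required satisfaction of $R$.

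Closing the cycle (6) $\Rightarrow$ (5) $\Rightarrow$ (3) $\Rightarrow$ (1) $\Rightarrow$ (6) together with the two definitional equivalences gives all six conditions. I do not foresee a substantive obstacle: the whole argument is a rearrangement of one Skolem-function construction, and the only thing to watch is the correspondence between the nested superscripts on the $y_i^{a_1,\dots,a_i}$ and the arguments of the $f_i$. The reason the lemma bothers to state all six forms is that each is the convenient starting point for a different downstream argument; in particular, condition (6), with its stronger $\mathcal{S}_R^m$ constraints, is the form that will feed into the arc-consistency analysis of Section \ref{SECTIONFindingSolution}.
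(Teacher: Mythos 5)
Your proof is correct and takes essentially the same approach as the paper's. The paper proves $3\rightarrow 6$ directly via the re-indexing $y_i^{a_1,\dots,a_i} = b_i^{z_{a_1},\dots,z_{a_i}}$, whereas you route through $(3)\rightarrow(1)\rightarrow(6)$, but your $(1)\rightarrow(6)$ step amounts to the very same substitution expressed through Skolem functions, so the underlying idea is identical.
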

\begin{proof}
Trivially, we have 
$1\leftrightarrow 2\leftrightarrow 3$ and
$6\rightarrow 5\leftrightarrow 4\rightarrow 2$.
To complete the proof 
let us show that 
$3\rightarrow 6$.
Let us define a solution to 6 for a concrete 
$z_{1},\dots,z_{|A|}\in A$.
Let $y_{i}^{a_1,\dots,a_{i}} = b_{i}^{a_1,\dots,a_{i}}$ be a solution of 
3.
Then a solution to 6 can be defined by
$y_{i}^{a_1,\dots,a_{i}} = b_{i}^{z_{a_1},\dots,z_{a_{i}}}$. 
\end{proof}

Denote $\mathcal I_{R} =\bigwedge\limits_{\substack{ m\in\{0,1,\dots,n\}\\a_1,\dots,a_{m}\in A}}
\mathcal S_{R}^{m}(y_{0},y_{1}^{a_1},y_{2}^{a_{1},a_2},\dots,y_{m}^{a_1,\dots,a_{m}},
z_{a_1},\dots,z_{a_{m}})$, that is the equivalent CSP instance from item 6.
Notice that 
the variables $z_{1},\dots,z_{|A|}$ are viewed as
external parameters of the instance $\mathcal I_{R}$
and we call such instances \emph{$\zv$-parameterized}. 
That is why, we assume that
$z_{1},\dots,z_{|A|}$ are not in $\Var(\mathcal I_{R})$.
Also, when we refer to 
the constraint $$\mathcal S_{R}^{m}(y_{0},y_{1}^{a_1},y_{2}^{a_{1},a_2},\dots,y_{m}^{a_1,\dots,a_{m}},
z_{a_1},\dots,z_{a_{m}})$$
we usually omit these variables and write 
$\mathcal S_{R}^{m}(y_{0},y_{1}^{a_1},y_{2}^{a_{1},a_2},\dots,y_{m}^{a_1,\dots,a_{m}})$
instead.

\subsection{Mighty tuples}\label{SUBSECTIONMightyTuplesDefinitions}
In this subsection, we formulate five sufficient conditions 
for the QCSP over a constraint language $\Gamma$  
to be PSpace-hard. One of them, already defined in Section \ref{SECTIONMainResults}, is also a necessary condition.

\textbf{Mighty tuple I.}
A tuple $(Q,D,B,C,\Delta)$ is called \emph{a mighty tuple I} if 
$\Delta$ is a $\zv$-parameterized $k$-ary relation, 
$Q$ is a $(\zv,\delta,\alpha)$-parameterized binary relations, 
$D$, $B$, and $C$ are $(\zv,\delta)$-parameterized unary relations, and 
they 
satisfy the following conditions: 
\begin{enumerate}
    \item $\prescript{\zv}{}\Delta\neq\varnothing$ for every $\zv\in A^{|A|}$;
    \item $\prescript{\zv}{\delta}B$, $\prescript{\zv}{\delta}C$, and $\prescript{\zv}{\delta}D$ are nonempty 
    for every $\zv\in A^{|A|}$ and $\delta\in\prescript{\zv}{}\Delta $;
    \item $\prescript{\zv}{\delta}Q^{\alpha}$ is an equivalence relation 
    on $\prescript{\zv}{\delta}D$ for every  $\zv\in A^{|A|}$, $\delta\in \prescript{\zv}{}\Delta$ and $\alpha\in A^{m}$;

   \item  $\prescript{\zv}{\delta}Q^{\forall}=\prescript{\zv}{\delta}D\times
    \prescript{\zv}{\delta}D$
    for every $\zv\in A^{|A|}$ and $\delta\in \prescript{\zv}{}\Delta $;
    \item $\prescript{\zv}{\delta}B$ and $\prescript{\zv}{\delta}C$
    are equivalence classes of $\prescript{\zv}{\delta}Q^{\forall\forall}$;
    \item there exists $\zv\in A^{|A|}$ such that 
   $\prescript{\zv}{\delta}B\neq \prescript{\zv}{\delta}C$  for every $\delta\in\prescript{\zv}{}\Delta $. 
\end{enumerate}

In Section \ref{SECTIONHardnessResults} we prove the following theorem.

\begin{thm}\label{THMMightyTupleIPSpaceHardness}
Suppose $(Q,D,B,C,\Delta)$ is a mighty tuple I.
Then $\QCSP(\{Q,D,B,C,\Delta\})$ is PSpace-hard.
\end{thm}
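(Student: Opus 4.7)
The plan is to reduce from the complement of Quantified 3-CNF SAT (which is PSpace-complete) to $\QCSP(\{Q,D,B,C,\Delta\})$, lifting the four-element reduction carried out in Section~\ref{SectionWithSimpleHardnessReduction} almost verbatim. In that concrete reduction, the constants $\{+\},\{-\}$ and the pair of equivalence-parameterized relations $R_0,R_1$ carry the whole argument; in the general setting, these are replaced respectively by the parameterized unary relations $\prescript{\zv}{\delta}B,\prescript{\zv}{\delta}C$ and by the parameterized binary relation $\prescript{\zv}{\delta}Q^{\alpha}$ at suitable values of $\alpha$. The domain constants unavailable in $\Gamma$ are produced by universally quantifying $|A|$ fresh variables $z_1,\dots,z_{|A|}$ at the very beginning of the constructed sentence, which is exactly the $\zv$-parameter role.

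I would first prepend $\forall z_1\dots\forall z_{|A|}$ to the constructed sentence and then existentially quantify a witness $\delta$ for $\Delta$, which is nonempty by property~1. Property~6 provides a concrete $\zv^{*}$ for which $\prescript{\zv^{*}}{\delta}B\neq\prescript{\zv^{*}}{\delta}C$ no matter which $\delta\in\prescript{\zv^{*}}{}\Delta$ the EP plays, and the reduction is designed so that hardness is witnessed precisely when the UP plays $\zv=\zv^{*}$; for any other $\zv$, the nonemptiness conditions in properties~1--2 let the instance be satisfied trivially. With $\zv^{*},\delta$ fixed, I construct the electrical-circuit gadget from Figures~\ref{ExampleOfFormulasEncoding}--\ref{Q3CNFExampleEncoding}: every circuit vertex is an existential variable constrained to lie in $\prescript{\zv^{*}}{\delta}D$ (nonempty by property~2), every edge is a constraint $\prescript{\zv^{*}}{\delta}Q^{\alpha}(u,v)$, and two distinguished terminals are pinned into $\prescript{\zv^{*}}{\delta}B$ and $\prescript{\zv^{*}}{\delta}C$ respectively. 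A clause $(\ell_1\vee\ell_2\vee\ell_3)$ is encoded as three parallel edges with $\alpha$-labels derived from the literals: satisfaction of a literal forces every coordinate of $\alpha$ to coincide, so by property~4 we get $\prescript{\zv^{*}}{\delta}Q^{\alpha}=\prescript{\zv^{*}}{\delta}D\times\prescript{\zv^{*}}{\delta}D$ and the edge becomes vacuous, breaking the chain from the $B$-terminal to the $C$-terminal. If, on the other hand, every literal is violated, the intersection of all the clause's edges collapses to $\prescript{\zv^{*}}{\delta}Q^{\forall\forall}$, whose classes separate $B$ from $C$ by property~5; thus the $B$-terminal and $C$-terminal get equated, a contradiction.

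The last ingredient is the EP-controls-UP trick shown in Figure~\ref{Q3CNFExampleEncoding}. For every existential variable of the Q3CNF instance I introduce an existential variable $y_i$ whose value in $\prescript{\zv^{*}}{\delta}B$ versus $\prescript{\zv^{*}}{\delta}C$ encodes the intended truth assignment, together with a small subgadget that forces the corresponding universally quantified variable of the constructed sentence to mirror $y_i$ on pain of opening an irreparable $B$-to-$C$ path. Disjointness of $\prescript{\zv^{*}}{\delta}B$ and $\prescript{\zv^{*}}{\delta}C$ as $Q^{\forall\forall}$-classes (property~5) is exactly what makes this gadget correct.

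The main obstacle is the clause gadget when the arity $m$ of $\alpha$ is larger than $1$: satisfaction of a literal must correspond \emph{exactly} to all $m$ coordinates of the associated $\alpha$ being equal, since property~4 only guarantees trivialization on constant tuples. In the $m=1$ case of Section~\ref{SectionWithSimpleHardnessReduction} this is automatic, but in general I would introduce $m$ auxiliary universal variables per literal and couple them via equality-forcing subgadgets constructed from the chain $Q^{\forall\forall}\useq Q^{\forall}$; Lemma~\ref{LEMUniversalSubuniverseImplies} is then the tool that guarantees these auxiliary $\useq$-gadgets remain q-definable from $(Q,D,B,C,\Delta)$, so the whole reduction stays inside $\QCSP(\{Q,D,B,C,\Delta\})$.
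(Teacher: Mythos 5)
Your plan—lift the four-element gadget of Section~\ref{SectionWithSimpleHardnessReduction} directly, with $B,C$ replacing $\{+\},\{-\}$ and the clause gadget built from edges labeled $\prescript{\zv}{\delta}Q^{\alpha}$—breaks down exactly at the point you flag as an ``obstacle,'' and the fix you sketch does not close the gap. The problem is not merely that $\alpha$ has arity $m>1$; it is that a mighty tuple I gives you no control whatsoever over $\prescript{\zv}{\delta}Q^{\alpha}$ at a \emph{non-constant} $\alpha$. Property~4 only says that each $\prescript{\zv}{\delta}Q^{a,\dots,a}$ equals $\prescript{\zv}{\delta}D\times\prescript{\zv}{\delta}D$, and property~5 only says that the intersection over \emph{all} $\alpha\in A^m$ separates $B$ from $C$. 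For a single non-constant $\alpha$ the UP plays, $\prescript{\zv}{\delta}Q^{\alpha}$ might be the full relation, might already separate $B$ from $C$, or might be something in between. So your claim that when every literal is violated ``the intersection of the clause's edges collapses to $Q^{\forall\forall}$'' is false: you only ever see the finitely many $\alpha$'s the UP actually plays, not the intersection over all of $A^m$. And you cannot ``force the UP to play constant tuples'' by any subgadget; the UP plays whatever they want, and the chain $Q^{\forall\forall}\useq Q^{\forall}$ gives definability facts, not control of the adversary. Lemma~\ref{LEMUniversalSubuniverseImplies} is also not the right tool here: it propagates $\useq$ through conjunctive definitions, it does not manufacture an equality-forcing constraint.

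The paper circumvents this in two essential steps your proposal is missing. First, Lemma~\ref{LEMMightyTupleImplies} (after upgrading to a mighty tuple I$'$ via Lemma~\ref{LEMMightyTupleOnePrime}) re-parameterizes $Q$ to produce \emph{two} relations $R_0,R_1$ with the pivotal covering property $\prescript{\zv}{\delta}B\times\prescript{\zv}{\delta}C\subseteq\prescript{\zv}{\delta}R_0^{\xv}\join\prescript{\zv}{\delta}R_1^{\xv}$ for \emph{every} $\xv$; this is exactly what replaces the ``one of the two colours is always an equation'' feature of the four-element example, and it is obtained by a pigeonhole join over all injections $[|A|]\to[|A|^2]$, not by any gadget in the QCSP sentence. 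Second, even with $R_0,R_1$ in hand, a single plugging of them into the clause circuit need not yield a correct reduction, because for No-instances one must guarantee that some $\xv$ keeps $B$ and $C$ apart. The paper handles this with the multi-parameter operator $\mathcal{T}^\Phi$ and an iteration $S_0,S_1,\dots,S_N$ that strictly grows (Lemmas~\ref{NotReduceUniversalRelation}--\ref{LEMNoInstanceImpliesNoBC}) until either the growth stops or formula~(\ref{QCSPRecuction}) becomes a valid reduction; the delicate quantifier alternation $\forall\zv\,\exists\delta\in\prescript{\zv}{}\Delta\,\forall\xv$ versus $\exists\zv\,\forall\delta\,\exists\xv$ in those lemmas is also why you cannot just fix a single ``hard'' $\zv^{*}$ and $\delta_0$ in advance the way you propose. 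Without these two ingredients—the $R_0,R_1$ extraction and the $\mathcal{T}^\Phi$ iteration—your outline is not a proof of the theorem; it only reproves the special case that motivated Section~\ref{SectionWithSimpleHardnessReduction}.
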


In the paper instead of deriving a mighty tuple I, which is rather complicated,
we derive one of the easier tuples that we call
\emph{a mighty tuple II},
\emph{a mighty tuple III}, \emph{a mighty tuple IV}, 
and \emph{a mighty tuple V}.
They are defined below.

\textbf{Mighty tuple II.} 
A tuple $(Q,D,B,C)$ is called \emph{a mighty tuple II} if 
$Q$ is a $(\zv,\alpha)$-parameterized binary relation, 
$D$, $B$, and $C$ are $\zv$-parameterized unary relations,
and they 
satisfy the following conditions: 
\begin{enumerate}
    \item 
    $\prescript{\zv}{}B\neq \varnothing$
and        $\prescript{\zv}{}C\neq \varnothing$
     for every $\zv\in A^{|A|}$;
    \item $\prescript{\zv}{}{Q^{\alpha}}$ is an equivalence relations 
    on $\prescript{\zv}{}{D}$ for every $\zv$ and $\alpha$;
    \item $\prescript{\zv}{}{B}+\prescript{\zv}{}{Q^{\forall\forall}}=\prescript{\zv}{}{B}$ and $\prescript{\zv}{}{C}+\prescript{\zv}{}{Q^{\forall\forall}}=\prescript{\zv}{}{C}$ for every $\zv\in A^{|A|}$;
    \item 
    $\prescript{\zv}{}{B}+\prescript{\zv}{}{Q^{\forall}}=\prescript{\zv}{}{C}+\prescript{\zv}{}{Q^{\forall}}=\prescript{\zv}{}{D}$ for every $\zv\in A^{|A|}$;
     \item $\prescript{\zv}{}B\cap \prescript{\zv}{}C = \varnothing$ for some 
$\zv\in A^{|A|}$.
\end{enumerate}

\textbf{Mighty tuple III.} 
A tuple $(Q,B,C)$ is called \emph{a mighty tuple III} if 
$Q$ is a $(\zv,\alpha)$-parameterized binary relation, 
$B$ and $C$ are $\zv$-parameterized unary relations,
and they 
satisfy the following conditions: 
\begin{enumerate}
    \item 
    $\prescript{\zv}{}B\neq \varnothing$
and        $\prescript{\zv}{}C\neq \varnothing$
     for every $\zv\in A^{|A|}$;
    \item $\prescript{\zv}{}{B}+\prescript{\zv}{}{Q^{\forall\forall}}=\prescript{\zv}{}{B}$ for every $\zv\in A^{|A|}$;
    \item $\prescript{\zv}{}{Q^{\forall\forall}}+\prescript{\zv}{}{C}=\prescript{\zv}{}{C}$ for every $\zv\in A^{|A|}$;
   \item $\prescript{\zv}{}Q^{\forall}\cap (\prescript{\zv}{}B\times \prescript{\zv}{}C)\neq \varnothing$
    for every $\zv\in A^{|A|}$;
     \item $\prescript{\zv}{}B\cap \prescript{\zv}{}C = \varnothing$ for some 
$\zv\in A^{|A|}$.
\end{enumerate}

\textbf{Mighty tuple IV.} 
A tuple $(Q,D,B,C)$ is called \emph{a mighty tuple IV} if 
$Q$ is a $(\zv,\alpha)$-parameterized binary relation, 
$D$, $B$, and $C$ are $\zv$-parameterized unary relations,
and they 
satisfy the following conditions: 
\begin{enumerate}
    \item 
    $\varnothing\neq\prescript{\zv}{}B\subseteq \prescript{\zv}{}D$
and        $\varnothing\neq\prescript{\zv}{}C\subseteq \prescript{\zv}{}D$
     for every $\zv\in A^{|A|}$;
    \item $\prescript{\zv}{}{B}+\prescript{\zv}{}{Q^{\forall\forall}}=\prescript{\zv}{}{B}$ for every $\zv\in A^{|A|}$;
    \item 
    $\prescript{\zv}{}{B}+\prescript{\zv}{}{Q^{\forall}}=\prescript{\zv}{}{D}$ for every $\zv\in A^{|A|}$;
    \item 
     $\prescript{\zv}{}{D}+\prescript{\zv}{}{Q^{\forall\forall}}=\prescript{\zv}{}{D}$ for every $\zv\in A^{|A|}$;
     \item $\prescript{\zv}{}B\cap \prescript{\zv}{}C = \varnothing$ for some 
$\zv\in A^{|A|}$.
\end{enumerate}

As we prove in Section \ref{SUBSECTIONMightyTuplesTwoThreeFour} the existence of a mighty tuple II, a mighty tuple III, 
and a mighty tuple IV are equivalent:

\begin{lem}\label{LEMMightyTupleTwoThreeFourEquivalence}
Suppose $\Sigma$ is a set of relations on $A$.
Then the following conditions are equivalent:
\begin{enumerate}
    \item $\Sigma$ q-defines a mighty tuple II;
    \item $\Sigma$ q-defines a mighty tuple III;
    \item $\Sigma$ q-defines a mighty tuple IV.
\end{enumerate} 
\end{lem}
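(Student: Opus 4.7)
The plan is to prove the equivalences by establishing the cycle II$\Rightarrow$III$\Rightarrow$IV$\Rightarrow$II, keeping all constructions inside q-definability over $\Sigma$. Finiteness of $A$ is invoked throughout to substitute bounded existential nesting for closure operations, which would otherwise require disjunction.

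For II$\Rightarrow$III, I would keep the same $(Q,B,C)$. The equivalence-relation clause of II makes $Q^{\forall\forall}$ itself an equivalence relation on $D$ (intersection of equivalences), and in particular symmetric; this turns $C+Q^{\forall\forall}=C$ into $Q^{\forall\forall}+C=C$. The III-connectivity $Q^\forall\cap(B\times C)\neq\varnothing$ is extracted by picking any $d\in D$, using $B+Q^\forall=C+Q^\forall=D$ to obtain $b\in B$ and $c\in C$ with $(b,d),(c,d)\in Q^\forall$, and invoking the symmetry and transitivity of each $Q^{x,\dots,x}$ on $D$ to conclude $(b,c)\in Q^\forall$.

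For III$\Rightarrow$IV, I would keep $Q$ and $B$ and build $D$ by iterating $D_0:=B+Q^\forall$, $D_{k+1}:=D_k+Q^{\forall\forall}$: because $B+Q^{\forall\forall}=B$, induction gives $B\subseteq D_k$ for all $k$, and finiteness of $A$ lets one pick a stable, q-definable $D$ satisfying $D+Q^{\forall\forall}=D$. Setting $C':=C\cap D$, the III-connectivity makes $C'$ nonempty, and the inclusion $C'\subseteq C$ preserves disjointness $B'\cap C'=\varnothing$ at the witnessing $\zv$.

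The main obstacle is IV$\Rightarrow$II, where a genuine equivalence-relation structure on $Q^\alpha$ must be manufactured from the weaker closure/reachability data of IV. My plan is to pass to a symmetrized relation
\[
\tilde Q^{\alpha}(y_1,y_2)\ :=\ \exists z\in D\ \ Q^\alpha(y_1,z)\wedge Q^\alpha(y_2,z),
\]
which is automatically symmetric in $y_1,y_2$, to restrict all arguments to $D$ in order to obtain reflexivity on $D$, and to take $|A|$-fold nested compositions of $\tilde Q^\alpha$ to enforce transitivity---q-definable because the nesting depth is bounded by $|A|$ and no disjunction is used. New $B'$ and $C'$ are taken as $\tilde Q^{\forall\forall}$-saturations of suitable fragments of $B$ and $C$. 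The truly delicate verification is that all of II's clauses survive this transformation---especially $B'+\tilde Q^\forall=C'+\tilde Q^\forall=D$ (nontrivial because $\tilde Q^\forall$ can be strictly smaller than $Q^\forall$, since reflexivity of $Q^{x,\dots,x}$ at $d$ is not given in IV) and the disjointness $B'\cap C'=\varnothing$ for the $\zv$ inherited from IV. Tracking the $\zv$-parameter carefully through the symmetrization and the $|A|$-fold nesting is where the bulk of the technical effort sits.
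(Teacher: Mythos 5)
Your II$\Rightarrow$III step is correct and matches the paper's (unstated) reasoning, but the other two legs of your cycle have genuine gaps, and the paper in fact chooses a different decomposition that avoids them.

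\textbf{III$\Rightarrow$IV.} The iteration $D_{k+1}:=D_k+Q^{\forall\forall}$ need not stabilize: a mighty tuple III gives no reflexivity or transitivity for $Q^{\forall\forall}$, so the sequence is only eventually periodic, and a period $>1$ forces a union of the $D_k$, which is not q-definable (no disjunction). This can be repaired by first replacing $Q$ with $(|A|!\cdot|A|^2)\cdot Q$ (Lemma~\ref{LEMFactorialRepetition}) to force $Q^{\forall\forall}+Q^{\forall\forall}=Q^{\forall\forall}$, after which $D_1$ already satisfies $D+Q^{\forall\forall}=D$. But a more fundamental problem remains even then: condition~3 of IV demands $B+Q^{\forall}=D$, and your $D$ strictly contains $B+Q^{\forall}$ whenever $Q^{\forall\forall}$ extends $D_0$ at all. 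Keeping ``$Q$ and $B$'' fixed is not an option; $Q^{\forall}$ itself must be enlarged (or $D$ shrunk) in tandem, which is exactly what the paper's Claims~\ref{CLAIMSecondTransitive} and~\ref{CLAIMSecondPlusC} (for the IV$\Rightarrow$III direction) and the long termination argument in Claim~\ref{CLAIMPropertiesDerivation} do.

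\textbf{IV$\Rightarrow$II.} Your symmetrization $\tilde Q^{\alpha}(y_1,y_2)=\exists z\in D\,(Q^{\alpha}(y_1,z)\wedge Q^{\alpha}(y_2,z))$ and the $|A|$-fold nesting for transitivity are the right primitives (both appear in the paper's Claims~\ref{CLAIMAddSDR}, \ref{makeSymmetricOne}, \ref{makeSymmetricTwo}), but you have correctly identified---and left unresolved---the central difficulty: after symmetrizing and restricting, $\tilde Q^{\forall}$ can shrink strictly, and there is no single-pass way to recover $B'+\tilde Q^{\forall}=C'+\tilde Q^{\forall}=D$. The paper does not resolve this in one transformation; it runs an iterative process that, at each step, either strictly decreases $\sum_{\zv}|\prescript{\zv}{}D|$ or strictly increases $\sum_{\zv}|\prescript{\zv}{}C|$ (Claims~\ref{reduceDBright}, \ref{reduceDCleft}, \ref{reduceDCright}, \ref{increaseC}), terminating by finiteness, and only at the end imposes symmetry (Claims~\ref{makeSymmetricOne}, \ref{makeSymmetricTwo}). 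Also, your claim that restricting to $D$ yields reflexivity of $\tilde Q^{\alpha}$ on $D$ is not immediate: $\tilde Q^{\alpha}(d,d)$ requires $d\in\proj_1(Q^{\alpha})$, and IV only gives control over $\proj_2$ (via $D+Q^{\forall\forall}=D$ and $B+Q^{\forall\forall}=B$).

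\textbf{Decomposition.} The paper does not close a cycle II$\Rightarrow$III$\Rightarrow$IV$\Rightarrow$II. It shows II$\Rightarrow$III (trivially, as you did) and IV$\Rightarrow$III (via a short chain of property upgrades), and then concentrates \emph{all} the work in Claim~\ref{CLAIMPropertiesDerivation}, which turns a mighty tuple III into a quadruple satisfying every property of II (plus $\kappa$); such a quadruple is simultaneously II, III, and IV. This sidesteps the need for a separate III$\Rightarrow$IV argument and puts the termination argument in one place. Your cycle is a legitimate alternative shape, but both of its hard legs would ultimately need the same machinery you have only gestured at.
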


Moreover, each of them implies a mighty tuple I and therefore 
guarantees PSpace-hardness.

\begin{lem}\label{LEMMightyTupleTwoImplies}
Suppose $T$ is a mighty tuple of type II, III, or IV.
Then relations of $T$ q-define a mighty tuple I.
\end{lem}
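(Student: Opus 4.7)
The plan is to invoke Lemma~\ref{LEMMightyTupleTwoThreeFourEquivalence} to reduce to the case where $T = (Q, D, B, C)$ is a mighty tuple II, and then to manufacture a mighty tuple I by introducing a two-element parameter $\delta = (b_0, c_0)$ that slices out a single $Q^{\forall}$-equivalence class of $D$ together with distinguished representatives in $B$ and in $C$. Concretely, I would take $k = 2$ and q-define
\begin{align*}
\Delta'(\zv, b_0, c_0) &:= B(\zv, b_0) \wedge C(\zv, c_0) \wedge Q^{\forall}(b_0, c_0),\\
D'(\zv, b_0, c_0, y) &:= Q^{\forall}(b_0, y),\\
Q'(\zv, b_0, c_0, \alpha, y_1, y_2) &:= D'(\zv, b_0, c_0, y_1) \wedge D'(\zv, b_0, c_0, y_2) \wedge Q^{\alpha}(y_1, y_2),\\
B'(\zv, b_0, c_0, y) &:= Q^{\forall\forall}(b_0, y), \qquad C'(\zv, b_0, c_0, y) := Q^{\forall\forall}(c_0, y).
\end{align*}
Each of these is q-definable from $\{Q, D, B, C\}$, since $Q^{\forall}$ and $Q^{\forall\forall}$ arise from $Q$ by universal quantification over the $\alpha$-parameters.

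The verification of the six conditions of a mighty tuple I is then largely bookkeeping. Condition~1 uses that $C \neq \varnothing$ for every $\zv$ together with $B + Q^{\forall} = D \supseteq C$ to produce a witness $(b_0, c_0) \in \prescript{\zv}{}\Delta'$. For condition~2 one notes that $B, C \subseteq D$ (forced by $B + Q^{\forall\forall} = B$ and the analogous identity for $C$), so $b_0, c_0 \in D$ and the reflexivity of $Q^{\forall}, Q^{\forall\forall}$ on $D$ yields $b_0 \in D'_\delta \cap B'_\delta$ and $c_0 \in C'_\delta$. Condition~3 is immediate since $\prescript{\zv}{\delta}{Q'}^{\alpha}$ is the restriction of the equivalence relation $\prescript{\zv}{}Q^{\alpha}$ to the set $\prescript{\zv}{\delta}D'$. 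Condition~4 is precisely the point of the construction: by definition $\prescript{\zv}{\delta}D'$ is a single $Q^{\forall}$-class, so the restriction of $Q^{\forall}$ to $\prescript{\zv}{\delta}D' \times \prescript{\zv}{\delta}D'$ is the full square. For condition~5, $B'_\delta$ (resp.\ $C'_\delta$) is the $Q^{\forall\forall}$-class of $b_0$ (resp.\ $c_0$), which sits inside $D'_\delta$ because $Q^{\forall\forall} \subseteq Q^{\forall}$, and hence is also a $\prescript{\zv}{\delta}{Q'}^{\forall\forall}$-class.

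The delicate point, which I would flag as the main obstacle, is condition~6: one must exhibit a single $\zv$ for which $\prescript{\zv}{\delta}B' \neq \prescript{\zv}{\delta}C'$ holds \emph{uniformly} for every $\delta \in \prescript{\zv}{}\Delta'$, not merely for some $\delta$. The natural candidate is the $\zv$ furnished by condition~5 of mighty tuple II, for which $\prescript{\zv}{}B \cap \prescript{\zv}{}C = \varnothing$. For any $\delta = (b_0, c_0) \in \prescript{\zv}{}\Delta'$ we have $b_0 \in \prescript{\zv}{}B$ and $c_0 \in \prescript{\zv}{}C$; if $B'_\delta = C'_\delta$ then $b_0$ and $c_0$ lie in the same $Q^{\forall\forall}$-class, so $c_0 \in \prescript{\zv}{}B + \prescript{\zv}{}Q^{\forall\forall} = \prescript{\zv}{}B$ by condition~3 of mighty tuple II, contradicting disjointness. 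Once this is checked, $(Q', D', B', C', \Delta')$ is a mighty tuple I q-defined from the relations of~$T$, and the lemma follows; for types III and IV the same reasoning applies after first routing through type II via Lemma~\ref{LEMMightyTupleTwoThreeFourEquivalence}.
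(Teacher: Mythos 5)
Your proposal is correct and, modulo routing through Lemma~\ref{LEMMightyTupleTwoThreeFourEquivalence}, the key construction is in substance the same as the one the paper uses in Lemma~\ref{LEMMakeAMightyTupleFromProperties}: introduce a two-element parameter $\delta=(u,v)$ with $u\in B$, $v\in C$, $(u,v)\in Q^{\forall}$, take the $Q^{\forall}$-class of $u$ as the new domain, and take the $Q^{\forall\forall}$-classes of $u$ and $v$ as the new $B$ and $C$. Your verification of conditions 1--6, including the disjointness argument for condition~6 via $\prescript{\zv}{}B+\prescript{\zv}{}Q^{\forall\forall}=\prescript{\zv}{}B$, is sound.

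Where you diverge from the paper is mainly in bookkeeping. The paper does not apply the construction directly to a raw mighty tuple II; instead it first massages the quadruple through a long chain of Claims to reach the property set $\mathrm{II}\cup\{\kappa\}$, and its version of the construction uses $R^{\kappa}$ where you use $Q^{\forall\forall}$. These coincide once $(\kappa)$ is in force, but your formulation sidesteps $(\kappa)$ entirely and so is slightly leaner. The reason the paper takes the longer path is that it proves Lemma~\ref{LEMMightyTupleTwoThreeFourEquivalence} and Lemma~\ref{LEMMightyTupleTwoImplies} simultaneously, extracting the equivalence of types II/III/IV as a byproduct of the same Claims; since you take that equivalence lemma as given, you can jump straight to type II. This is legitimate --- the equivalence lemma is proved via the Claims without any appeal to Lemma~\ref{LEMMightyTupleTwoImplies}, so there is no circularity --- but be aware that the economy of your argument is partly inherited from the machinery that lemma already contains. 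One small omission: you should note explicitly that the $\alpha$-parameter dimension $m\ge 1$ carries over (it is forced for any mighty tuple II: if $m=0$ then $Q^{\forall}=Q^{\forall\forall}$, which with conditions 3 and 4 gives $B=C=D$, contradicting conditions 1 and 5).
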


\textbf{Mighty tuple V.} 
A tuple $(Q,D)$ is called \emph{a mighty tuple V} if 
$Q$ is a $(\zv,\alpha)$-parameterized binary relation, 
$D$ is a nonempty $\zv$-parameterized unary relation,
and they 
satisfy the following conditions: 
\begin{enumerate}
    \item  $\{(d,d)\mid d\in \prescript{\zv}{}D\}\subseteq \prescript{\zv}{}Q^{\forall}$ 
    for every $\zv\in A^{|A|}$; \;\quad($\prescript{\zv}{}Q^{\forall}$ is reflexive)
    \item $\proj_{1}(\prescript{\zv}{}Q^{\forall\forall})=\proj_{2}(\prescript{\zv}{}Q^{\forall\forall})=\prescript{\zv}{}D$ for every $\zv\in A^{|A|}$;
    \item $\prescript{\zv}{}Q^{\forall\forall}\cap \{(d,d)\mid d\in A\}=\varnothing$  
    for some $\zv\in A^{|A|}$. \quad ($\prescript{\zv}{}Q^{\forall\forall}$ is irreflexive)
\end{enumerate}

In Section \ref{SUBSECTIONMightyTupleV} 
we prove that a mighty tuple V implies a mighty tuple I.

\begin{lem}\label{LEMMightyTupleFiveImplies}
Suppose $(Q,D)$ is a mighty tuple V.
Then there exists a mighty tuple I q-definable from $\{Q,D\}$.
\end{lem}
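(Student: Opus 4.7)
The plan is to construct a mighty tuple I (possibly via an intermediate mighty tuple II, III, or IV) q-definable from $\{Q,D\}$, exploiting the asymmetry of mighty tuple V: $\prescript{\zv}{}Q^{\forall}$ is reflexive on $\prescript{\zv}{}D$, $\prescript{\zv}{}Q^{\forall\forall}$ has full projections onto $\prescript{\zv}{}D$, while $\prescript{\zv^*}{}Q^{\forall\forall}$ is irreflexive for some distinguished $\zv^*$. The $\forall$-reflexivity will drive the collapse of $Q'^{\forall}$ to the full relation on $D'$, while the $\forall\forall$-irreflexivity at $\zv^*$ will yield the distinct equivalence classes needed for condition~6.

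First I would build a symmetric version of $Q$ inside $D$ by setting $P(y_1,y_2;\zv,\alpha)=\exists u\;(u\in\prescript{\zv}{}D)\wedge Q(y_1,u;\zv,\alpha)\wedge Q(y_2,u;\zv,\alpha)$. This is symmetric in $y_1,y_2$ and, when $\alpha$ is constant, inherits reflexivity on $\prescript{\zv}{}D$ by taking $u=y_i$ and invoking reflexivity of $Q^{\forall}$. I would then iterate the composition of $P$ to a length bounded by $|A|$, restrict both endpoints to $D$, and call the result $Q'$. For every $\alpha$, $Q'^{\alpha}$ is symmetric and transitive on $\prescript{\zv}{}D$; to secure full equivalence (reflexivity for arbitrary $\alpha$) I would shrink $D'$ to a q-defined subset of elements having the required outgoing $P^{\alpha}$-edges for every $\alpha$, being careful to retain the witnesses needed later.

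Next I would introduce $\delta$ to encode witnesses. A natural choice is to let $\delta=(d_1,d_2)$ range over pairs in $\prescript{\zv}{}Q^{\forall\forall}\cap (\prescript{\zv}{}D)^2$; the full projections of $Q^{\forall\forall}$ guarantee $\prescript{\zv}{}\Delta\neq\varnothing$ for every $\zv$. Define $\prescript{\zv}{\delta}B$ and $\prescript{\zv}{\delta}C$ to be the $Q'^{\forall\forall}$-equivalence classes of $d_1$ and $d_2$ respectively. Irreflexivity at $\zv^*$ forces $d_1\neq d_2$ for every $\delta\in\prescript{\zv^*}{}\Delta$, and the construction of $Q'$ should be calibrated so that no $Q'^{\forall\forall}$-path connects $d_1$ to $d_2$, delivering $B\neq C$ and condition~6.

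The main obstacle is balancing conditions~4 and~6 of mighty tuple I. Condition~4 demands $\prescript{\zv}{\delta}{Q'}^{\forall}=\prescript{\zv}{\delta}D'\times\prescript{\zv}{\delta}D'$, meaning $D'$ must be fully $Q'$-connected at every constant $\alpha$, which is strictly stronger than pointwise reflexivity of $Q^{\forall}$; securing this will likely require iterating the symmetric closure while using the reflexivity of $Q^{\forall}$ as ``glue'' at each constant $\alpha$. Condition~6 requires that the $\forall\forall$-intersection remain fine enough that the witnesses $d_1,d_2$ stay in different classes. The cleanest route is likely to first derive a mighty tuple II, III, or IV by selecting $B$ and $C$ as absorbing subsets of $D$ coming from $Q^{\forall\forall}$-components, invoke the equivalence given by Lemma~\ref{LEMMightyTupleTwoThreeFourEquivalence}, and then close the argument via Lemma~\ref{LEMMightyTupleTwoImplies}; this reduces the simultaneous enforcement of conditions 4 and~6 to the simpler absorption conditions of the intermediate tuple, where mighty tuple V supplies exactly the ingredients (reflexivity at constant $\alpha$, full projections, and an irreflexive witness at $\zv^*$) required to pick nonempty $B,C$ with the needed absorbing behaviour and disjointness at $\zv^*$.
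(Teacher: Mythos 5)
Your high-level scaffolding (symmetrize $Q$, shrink $D$ to make things reflexive/full, use a pair from $Q^{\forall\forall}$ as a $\delta$-parameter so that irreflexivity at $\zv^*$ yields two witnesses, and aim for distinct equivalence classes) matches the spirit of what the paper does: the paper's Case~1 construction in this lemma introduces $\prescript{\zv}{}\Delta_2(u,v)=\prescript{\zv}{}R^{\kappa}(u,v)$ and takes $B,C$ to be the classes of $u$ and $v$ under a symmetrized power of $R$. However, there are two genuine gaps that leave the core of the argument unproved.

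First, iterating $P$ ``to a length bounded by $|A|$'' does not give a transitive (hence q-definable) relation. A q-definable composition of fixed length $m$ need not satisfy $S+S=S$; for that the paper uses Lemma~\ref{LEMFactorialRepetition} with the power $N=|A|!\cdot|A|^2$, precisely so that $N\cdot R$ stabilizes and the symmetric version $R_0^{\alpha}(x,y)\wedge R_0^{\alpha}(y,x)$ is a bona fide equivalence relation. Your ``$Q'^{\alpha}$ is symmetric and transitive'' claim does not hold for a $|A|$-fold iterate without this care.

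Second, and more fundamentally, the part you flag as the ``main obstacle'' --- arranging that no $Q'^{\forall\forall}$-path joins $d_1$ to $d_2$ --- is not an engineering detail but the heart of the lemma, and your proposed sidestep via mighty tuples II/III/IV does not resolve it. If $\prescript{\zv^*}{}Q^{\forall\forall}$ contains an odd cycle, then after symmetrization and transitive closure $d_1$ and $d_2$ land in the same class, and exactly the same obstruction kills the disjointness condition for the absorbing sets $B,C$ in a mighty tuple~III or~IV (any set absorbing under $Q^{\forall\forall}$ and containing one element of an odd cycle swallows the whole cycle). The paper isolates this parity problem into Lemma~\ref{LEMSymmetricMightyVImply}, whose proof is a genuine induction on a measure built from the minimal odd cycle length $m$ of $\prescript{\zv}{\delta}R^{\kappa}$ (and the size of $D$), introducing a fresh $\delta$-component that restricts to elements on a minimal-length odd cycle so as to strictly increase the measure. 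In the lemma itself, the proof further splits into the case $R_1^\kappa\cap R^\kappa=\varnothing$ for some $\zv_0$ (when the symmetrized power and the original disagree enough to separate $u,v$ directly) and the complementary case, which requires replacing $R$ by $2^{k-1}\cdot R$ for a well-chosen $k$ to manufacture a symmetric mighty tuple~V' and then invoke the induction. None of this machinery appears in your proposal; asserting it can be ``calibrated'' or deferred to the II/III/IV equivalence lemmas leaves the argument unclosed.
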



\subsection{Reductions}

\emph{A $\zv$-parameterized reduction} $D^{(\top)}$ for 
a $\zv$-parameterized CSP instance 
$\mathcal I$ 
is a mapping that assigns a 
$\zv$-parameterized unary relation  
$D_{u}^{(\top)}$ to every variable $u$ of $\mathcal I$.
Then for any constraint $C=R(u_1,\dots,u_s)$
by $C^{(\top)}$ we denote 
the constraint $R^{(\top)}(u_1,\dots,u_s)$, 
where 
$\prescript{\zv}{}R^{(\top)}(u_1,\dots,u_s) = 
\prescript{\zv}{}R(u_1,\dots,u_s)\wedge \bigwedge_{i=1}^{s} 
(u_{i}\in \prescript{\zv}{}D_{u_i}^{(\top)})$.
A reduction $D^{(\top)}$ of $\mathcal I$ is called \emph{1-consistent} 
if  for
any constraint $C=R(u_1,\dots,u_s)$ in $\mathcal I$ 
and any $i\in \{1,2,\dots,s\}$ we have 
$\proj_{u_i}(C^{(\top)}) = D_{u_{i}}^{(\top)}$.

We say that a $\zv$-parameterized reduction $D^{(\bot)}$ 
is \emph{smaller than} a $\zv$-parameterized reduction $D^{(\top)}$ if 
$\prescript{\zv}{}D_{u}^{(\bot)}\subseteq \prescript{\zv}{}D_{u}^{(\top)}$
for every $u$ and $\zv$, and $D^{(\bot)}\neq D^{(\top)}$.
In this case we write $D^{(\bot)}\subsetneq D^{(\top)}$.
A reduction $D^{(\top)}$ is called \emph{nonempty} 
if $\prescript{\zv}{}D_{u}^{(\top)}$ is nonempty 
for every $u$ and $\zv$.

For a $\zv$-parameterized reduction 
$D^{(\top)}$ of $\mathcal I_{R}$ 
by 
$D^{(\top,0)}_{y_{m}^{a_1,\dots,a_{m}}}$ 
we denote the 
$\zv$-parameterized unary relation defined by
$$\left(\mathcal W_{R}^{m}(y_{0},y_{1}^{a_1},y_{2}^{a_{1},a_2},\dots,y_{m}^{a_1,\dots,a_{m}}) \wedge \bigwedge\limits_{i=0}^{m-1} 
(y_{i}^{a_{1},\dots,a_{i}}\in D_{y_{i}^{a_{1},\dots,a_{i}}}^{(\top)})\right) (y_{m}^{a_1,\dots,a_{m}}),$$
which is the set of all possible values
of $y_{m}^{a_1,\dots,a_{m}}$ in solutions of the 
conjunction in the brackets.
In other words, 
$D^{(\top,0)}_{u}$ is the restriction on $u$ we get  
by restricting 
the variables 
$u_{0},u_{1},\dots,u_{m-1}$  
in 
$\mathcal W_{R}^{m}(u_0,u_1,\dots,u_{m-1},u)$
to $D^{(\top)}$. 
Notice that 
if $D^{(\top)}$ is 1-consistent
then $D^{(\top)}_{u}\subseteq D^{(\top,0)}_{u}$ for every $u$.
A reduction $D^{(\top)}$ of $\mathcal I_{R}$ is called \emph{a universal reduction} if
$D^{(\top)}_{u}\uuus D^{(\top,0)}_{u}$ for every $u\in\Var(\mathcal I_{R})$.

\subsection{Proof of Theorems \ref{THMMainUPRestriction}, \ref{THMNonIdempotantClassification}, and \ref{THMIdempotantClassification}}

As we said before
for a given instance of the QCSP we define 
an equivalent $\zv$-parameterized exponential CSP
$\mathcal I_{R}$
whose only relations are 
$\mathcal S_{R}^{m}$.
The next theorem shows that 
either we can find a 1-consistent reduction of $\mathcal I_{R}$, or 
we can find a small subinstance without 
a solution for some $\zv$,  
or the QCSP over this language is PSpace-hard.

\begin{thm}\label{THMFindSmallTree}
Suppose $R\subseteq A^{2n+1}$.
Then one of the following conditions holds:
\begin{enumerate}
    \item there exists a $\zv$-parameterized
nonempty 1-consistent reduction for 
$\mathcal I_{R}$;
    \item  
there exists a subinstance  
$\mathcal J\subseteq \mathcal I_{R}$ with at most 
$(n\cdot |A|)^{2^{{2|A|}^{|A|+1}}}$ variables not having a solution for some $\zv\in A^{|A|}$;
\item there exists
    a mighty tuple III q-definable from $R$.
\end{enumerate}
\end{thm}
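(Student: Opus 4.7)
The plan is to attempt to build a 1-consistent reduction for $\mathcal{I}_R$ via an iterative arc-consistency-style propagation, and to extract either case 2 or case 3 from the failure mode when the attempt collapses. I would start from the maximal reduction $D^{(0)}$ defined by $\prescript{\zv}{}D^{(0)}_{y_m^{a_1,\ldots,a_m}} := \proj_{m+1}(\prescript{\zv}{}\mathcal{S}_R^m)$ (with the obvious convention for $y_0$). If the current reduction is not 1-consistent, some constraint $C = \mathcal{S}_R^m(u_0,\ldots,u_m)$ satisfies $\proj_{u_i}(C^{(\top)}) \subsetneq D^{(\top)}_{u_i}$; replace $D^{(\top)}_{u_i}$ by the smaller projection. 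If this procedure halts at a nonempty reduction, we are in case 1 and are done.

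The first key point is a bound on the length of the chain $D^{(0)} \supsetneq D^{(1)} \supsetneq \ldots$ Because of the tree symmetry of $\mathcal{I}_R$, each $D^{(i)}$ admits a compact description by a "type at depth $m$" — a $\zv$-parameterized unary relation depending only on $m$ and the parameter position, not on the actual path $a_1,\ldots,a_m$. The number of such types is at most $2^{|A|^{|A|+1}}$, so the number of distinct global patterns that can appear in the chain is at most $2^{2|A|^{|A|+1}}$. Thus the propagation either stabilizes within this many steps at a nonempty reduction, or reaches an empty domain for some $\zv$ after at most $2^{2|A|^{|A|+1}}$ distinct shrinking steps.

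When the procedure terminates because some $\prescript{\zv}{}D^{(i)}_u = \varnothing$, form the subinstance $\mathcal{J} \subseteq \mathcal{I}_R$ consisting of the "propagation history": start with the constraint that forced the last shrink, then recursively add, for each variable appearing therein, the constraints that forced its domain to shrink at previous stages. Since each propagation step introduces a single constraint on at most $n|A|$ variables, and the recursion depth is bounded by the chain length $2^{2|A|^{|A|+1}}$, the number of variables of $\mathcal{J}$ is bounded by $|A|^2 \cdot (n|A|)^{2^{2|A|^{|A|+1}}}$. If $\mathcal{J}$ has no solution for the witnessing $\zv$, we are in case 2.

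The main obstacle — and the source of case 3 — is the remaining scenario: $\mathcal{J}$ does admit a solution for every $\zv$, yet the propagation nevertheless forces emptiness. In this situation the discrepancy must be localized at some shrinking step where the constraint $\mathcal{S}_R^m$ strictly refines what $\mathcal{W}_R^m$ would allow, which is precisely the $Q^{\forall\forall}$-vs-$Q^{\forall}$ gap (Lemma~\ref{LEMSIsUniversalInW}). I would take $Q$ to be the $(\zv,\alpha)$-parameterized binary relation extracted from the responsible constraint via the q-definition witnessing $\mathcal{S}_R^m \useq \mathcal{W}_R^m$, let $B$ be the closure under "left-composition with $Q^{\forall\forall}$" of the shrunken domain and $C$ the corresponding right-closure, and verify conditions 1--5 of mighty tuple III: closures give $\prescript{\zv}{}B + \prescript{\zv}{}Q^{\forall\forall} = \prescript{\zv}{}B$ and $\prescript{\zv}{}Q^{\forall\forall} + \prescript{\zv}{}C = \prescript{\zv}{}C$; solvability of $\mathcal{J}$ in the $\forall$-relaxation yields $\prescript{\zv}{}Q^{\forall} \cap (\prescript{\zv}{}B \times \prescript{\zv}{}C) \neq \varnothing$; unsolvability of the propagation step in the $\forall\forall$-version yields $\prescript{\zv}{}B \cap \prescript{\zv}{}C = \varnothing$ for the bad $\zv$. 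The delicate point is to carry all these relational identities through the q-definition chains, for which Lemma~\ref{LEMUniversalSubuniverseImplies} is the correct tool.
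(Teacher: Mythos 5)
Your proposal diverges substantially from the paper's route and, as written, has two genuine gaps.

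The first gap is the chain-length bound. You claim that the sequence $D^{(0)} \supsetneq D^{(1)} \supsetneq \cdots$ of arc-consistency steps has length at most $2^{2|A|^{|A|+1}}$ because each reduction ``admits a compact description by a type at depth $m$'', i.e.\ a $\zv$-parameterized unary relation depending only on $m$ and not on the path $(a_1,\ldots,a_m)$. This is false: the constraint at $y_m^{a_1,\ldots,a_m}$ substitutes $z_{a_1},\ldots,z_{a_m}$, so as a function of $\zv$ the domain at $y_m^{a_1,\ldots,a_m}$ in the maximal $1$-consistent reduction genuinely depends on $(a_1,\ldots,a_m)$ (the $S_{|A|}$-symmetry of $\mathcal I_R$ only relates them by $\zv \mapsto \pi(\zv)$ for $\pi\in S_{|A|}$, not by equality). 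Moreover, even granting a finite bound on the number of ``types'', that does not bound the number of propagation \emph{steps}: $\mathcal I_R$ has $\Theta(|A|^n)$ variables, so the naive count of shrinking events is exponential, and your propagation rule shrinks one variable at a time and does not preserve any per-depth uniformity after the first non-uniform step. Without a valid chain-length bound, the ``recursion depth'' bound on $|\Var(\mathcal J)|$ collapses.

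The second gap is in the case split between conditions 2 and 3. If $\mathcal J$ really is the full propagation history of a run of arc-consistency that derives $\varnothing$, then arc-consistency run on $\mathcal J$ alone also derives $\varnothing$, and arc-consistency is sound, so $\mathcal J$ has no solution for that $\zv$. The ``remaining scenario'' in which $\mathcal J$ has a solution for every $\zv$ therefore cannot occur, and you never actually enter case~3. But case~3 is precisely what must cover the situation where the unsatisfiability certificate cannot be made small. The paper's mechanism for extracting a mighty tuple~III is not a localization to a single $\mathcal S_R^m$-vs-$\mathcal W_R^m$ step; it comes from a global argument. Concretely, the paper first replaces $\mathcal I_R$ by the augmented instance $\mathcal{\widetilde I}_R$ (adding the cumulative $\mathcal{\widetilde S}_R^m$ and $\mathcal{\widetilde W}_R^m$ constraints), certifies emptiness of the maximal $1$-consistent reduction by a tree-covering $\mathcal T$ (Lemmas~\ref{LEMExistenceOfTreeCoverings}, \ref{LEMMaximalReductionSimple}), and when $\mathcal T$ is large it locates a long path by Lemmas~\ref{LEMIndexOfAVariable} and~\ref{LEMMinPathInATree}, then applies a pigeonhole to the pairs $(B_i,C_i)$ of $\zv$-parameterized unary relations along that path (Lemma~\ref{LEMSplitPathIntoThreeParts}); the resulting repetition $B_i=B_{i'}$, $C_i=C_{i'}$ is what produces the closure conditions $B+Q^{\forall\forall}=B$ and $Q^{\forall\forall}+C=C$ of a mighty tuple~III, with $Q$ coming from the whole middle segment $\mathcal I_2$, not from one constraint. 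The pigeonhole count on $\zv$-parameterized unary relations is also where the exponent $2^{2|A|^{|A|+1}}$ actually comes from. You should look at the tree-covering lemma and the splitting lemma; the ``supervised universal subset along a long path'' idea is the core of the argument and is absent from your proposal.
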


The next theorem as well as Theorem \ref{THMFindSmallTree}
is proved in Section \ref{SUBSECTIONFindingConsistentReduction}.

\begin{thm}\label{THMNonemptyReductionIsZigzag}
Suppose $R\subseteq A^{2n+1}$, 
$D^{(\top)}$ is an inclusion-maximal $\zv$-parameterized 
1-consistent nonempty reduction 
for $\mathcal I_{R}$.
Then 
$D^{(\top)}$ is a universal reduction.
\end{thm}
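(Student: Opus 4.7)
The plan is to show, for each variable $u=y_m^{a_1,\dots,a_m}\in\Var(\mathcal I_R)$, that $D^{(\top)}_u\useq^{\{R\}}D^{(\top,0)}_u$; since $\useq$ is the shortest $\uuus$-chain, this gives the theorem. The two tools driving the argument are Lemma \ref{LEMSIsUniversalInW}, which supplies $\mathcal S_R^m\useq^{\{R\}}\mathcal W_R^m$, and the closure Lemma \ref{LEMUniversalSubuniverseImplies}, which propagates $\useq$ through pp-definitions.

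First, apply Lemma \ref{LEMUniversalSubuniverseImplies} to the pp-definition of $D^{(\top,0)}_u$ after replacing $\mathcal W_R^m$ with $\mathcal S_R^m$. Setting
$$S(u) := \exists y_0\exists y_1^{a_1}\cdots\exists y_{m-1}^{a_1,\dots,a_{m-1}}\;\mathcal S_R^m(y_0,\dots,u)\wedge\bigwedge_{i=0}^{m-1}y_i^{a_1,\dots,a_i}\in D^{(\top)}_{y_i^{a_1,\dots,a_i}},$$
one obtains $S\useq^{\{R\}}D^{(\top,0)}_u$, using $\mathcal S_R^m\useq\mathcal W_R^m$ on the principal atom and the trivial $\useq$ on each unary atom. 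Moreover, 1-consistency of $D^{(\top)}$ on the constraint $\mathcal S_R^m(y_0,\dots,u)$ immediately gives $D^{(\top)}_u\subseteq S$: the defining formula of $S$ is the $u$-projection of $\mathcal S_R^m$ with the self-restriction on $u$ dropped, and intersecting that projection with $D^{(\top)}_u$ must recover $D^{(\top)}_u$.

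The second and main step is to match $D^{(\top)}_u$ with $S$ from above, using inclusion-maximality. Consider the reduction $D'_u := \bigcap_{C\ni u}\proj_u\bigl(C\wedge\bigwedge_{v\in\Var(C)\setminus\{u\}}v\in D^{(\top)}_v\bigr)$ and $D'_v := D^{(\top)}_v$ for $v\neq u$. A direct check --- on the $u$-side, the intersection form makes $\proj_u(C^{(D')})=D'_u$ for every $C\ni u$; on each $v$-side with $v\neq u$ in $C\ni u$, the unchanged restriction $v\in D^{(\top)}_v$ caps the new projection while old 1-consistency bounds it below, yielding $\proj_v(C^{(D')})=D^{(\top)}_v$ --- shows $D'$ is a nonempty 1-consistent reduction. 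By inclusion-maximality, $D'=D^{(\top)}$, so $D^{(\top)}_u$ equals the full intersection $\bigcap_{C\ni u}\proj_u(C^{(\top)}_{\setminus u})$, where $C^{(\top)}_{\setminus u}$ denotes $C$ restricted to $D^{(\top)}$ on every variable other than $u$. Combining this identity with the $\useq$-relationships produced by Lemma \ref{LEMSIsUniversalInW} at every layer $m'\geq m$ (each $\mathcal S_R^{m'}\useq\mathcal W_R^{m'}$) and invoking Lemma \ref{LEMUniversalSubuniverseImplies} on the resulting pp-conjunction yields the required $D^{(\top)}_u\useq^{\{R\}}D^{(\top,0)}_u$.

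The main technical obstacle is bridging the single-constraint $S$ of the first step with the multi-constraint intersection identity of the second. Concretely, the goal is to construct a single witness relation $R^\ast(u,x,\vec x)$ that satisfies $\forall\vec x\;R^\ast(u,x,\vec x)=D^{(\top)}_u$ --- encoding the intersection across all $\mathcal S_R^{m'}\ni u$ on the $\forall\vec x$-side --- while simultaneously $\forall x\;R^\ast(u,x,\dots,x)=D^{(\top,0)}_u$ --- collapsing, on the diagonal $\forall x$-side, to the single $\mathcal W_R^m$-projection. I anticipate constructing $R^\ast$ as the conjunction, over $m'\in\{m,\dots,n\}$, of the Lemma \ref{LEMSIsUniversalInW} witness $Q$ for the pair $\mathcal S_R^{m'}\useq\mathcal W_R^{m'}$, each accompanied by the appropriate unary restrictions $y_i\in D^{(\top)}_{y_i}$; the subtle point is ensuring that the diagonal collapse discards the spurious extra factors without leaving artifacts, which is where the layer structure of the $\mathcal S_R^{m'}$ is essential. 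The witness arity grows polynomially with $n$ but remains q-definable from $\{R\}$, which is exactly what the definition of $\useq^{\{R\}}$ requires.
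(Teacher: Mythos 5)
Your proposal aims to show the stronger statement $D^{(\top)}_u\useq^{\{R\}}D^{(\top,0)}_u$ in one step, rather than the chain $\uuus$ the definition of a universal reduction actually asks for. Your first two steps are fine: $S\useq D^{(\top,0)}_u$ by Lemmas~\ref{LEMSIsUniversalInW} and~\ref{LEMUniversalSubuniverseImplies}, and inclusion-maximality of $D^{(\top)}$ (via the auxiliary reduction $D'$) indeed gives the identity $D^{(\top)}_u=\bigcap_{C\ni u}\proj_u(C^{(\top)}_{\setminus u})$. The gap is in the step you yourself flagged as ``the main technical obstacle''. Your proposed witness is the conjunction, over all constraints $C$ of $\mathcal I_R$ containing $u$, of the Lemma~\ref{LEMSIsUniversalInW} witnesses for $\mathcal S_R^{m'}\useq\mathcal W_R^{m'}$ with the appropriate unary restrictions. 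Its $\forall\vec x$ side correctly collapses to the intersection of all $\mathcal S$-projections, which by your identity is $D^{(\top)}_u$. But its $\forall x$ diagonal collapses to the intersection of \emph{all} $\mathcal W$-projections $\bigcap_{C\ni u}\proj_u(C_{W,\top,\setminus u})$, whereas $D^{(\top,0)}_u$ is by definition the projection of the \emph{single} level-$m$ constraint $C_{W,\top}^{a_1,\dots,a_m}$ alone. For $\useq$ you would need these to coincide, i.e.\ that every constraint $C$ containing $u$ at a non-lowest position satisfies $\proj_u(C_{W,\top,\setminus u})\supseteq D^{(\top,0)}_u$. Nothing forces this: the extra $y_{m+1},\dots,y_{m'}$ variables in a level-$m'>m$ constraint, together with their $D^{(\top)}$ restrictions, can rule out values of $u$ that are admissible at level $m$. 1-consistency only guarantees each such projection contains $D^{(\top)}_u$, not $D^{(\top,0)}_u$, and these may differ precisely because $D^{(\top)}_u$ is typically a proper subset of $D^{(\top,0)}_u$.

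The paper avoids exactly this obstruction by proving the chain version $D^{(\top)}_u\uuus D^{(\top,0)}_u$, not the one-step version. It extracts (Lemma~\ref{LEMMaximalReductionSimple}) a single tree-covering $\Upsilon_0$ with $\Upsilon_0(u)=D^{(\top)}_u$, then applies the transformations (w), (s), (j) one at a time to produce $\Upsilon_0,\Upsilon_1,\dots,\Upsilon_t$ with $\Upsilon_t$ a single constraint on $u$, recording at each step whether the move was a strengthening ($\supseteq$) or a one-step universal weakening ($\useq$). Passing to the truncated intersections $E_j=C_j\cap C_{j+1}\cap\dots\cap C_t\cap D^{(\top,0)}_u$ turns every step into a $\useq$, and the endpoints fall into place by 1-consistency, giving the required chain. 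The tree-covering machinery is essential here: it allows a single constraint of $\mathcal I_R$ to appear multiple times and lets the various $\mathcal W$-projections be introduced one at a time, each cancelled against the remaining conjunction before the next is brought in, rather than all at once. Your flat instance-level intersection lacks this layering, so the chain cannot be compressed to a single step. To repair the argument you would need to replace the one-shot witness with a genuine inductive chain tracking each transformation, which in effect reconstructs the paper's proof.
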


Then we consider the case when there exists a 1-consistent
universal reduction for $\mathcal I_{R}$. We will show that 
if the instance has no solutions, then we can find a smaller 1-consistent reduction.
We do this in two steps. First, in Section \ref{SUBSECTIONUniversalSubuniverseExists} 
we prove the following theorem that states that we can find 
a universal subset on some $D_{u}^{(\top)}$.

\begin{thm}\label{THMFindUniversalSubuniverse}
Suppose $R\subseteq A^{2n+1}$, $\mathcal I_{R}$ has no solutions for some $\zv$, 
$D^{(\top)}$ is a $\zv$-parameterized
universal 1-consistent reduction for 
$\mathcal I_{R}$.
Then one of the following conditions holds:
\begin{enumerate}
    \item 
    there exists a variable $u$ of $\mathcal I_{R}$ and
    a $\zv$-parameterized unary relation $B$ such that 
    $B\us D_{u}^{(\top)}$;
   \item  there exists a mighty tuple V q-definable from $R$.
\end{enumerate}
\end{thm}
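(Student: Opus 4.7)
The plan is to assume case 1 fails and construct a mighty tuple V q-definable from $R$. The strategy exploits the tension between the universality of $D^{(\top)}$ (which says the local information has been propagated as far as possible in the $\useq$-order) and the global no-solution of $\mathcal I_{R}$ for some $\zv_0$. Informally, if one cannot further shrink $D^{(\top)}_u$ to a proper universal subset anywhere, then the obstruction to extending the 1-consistent reduction to a genuine solution must live in a binary relation between two existential variables whose $\forall\forall$-interpretation becomes irreflexive.

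First I would examine, for any two variables $u,v\in\Var(\mathcal I_{R})$, the $\zv$-parameterized binary relation $P_{u,v}(a,b)$ given by the set of pairs $(a,b)$ realizable as $(u,v)$-values in the restricted instance $\mathcal I_{R}\wedge\bigwedge_{w}(w\in D^{(\top)}_w)$. This relation is q-definable from $R$, and its projections onto each coordinate equal $D^{(\top)}_u$ and $D^{(\top)}_v$ by 1-consistency. Using the universality $D^{(\top)}_u\uuus D^{(\top,0)}_u$ and Lemma \ref{LEMUniversalSubuniverseImplies}, each $P_{u,v}$ has a parallel ``universalized'' version $\widetilde P_{u,v}$ in which the universal existential variables of $\mathcal I_{R}$ along the path from $u$ to $v$ have been turned into free universal parameters $\alpha$. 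If case 1 fails then no further shrinking is possible anywhere, which forces every such $\widetilde P_{u,v}^{\,\alpha}$, when projected, to return exactly $D^{(\top)}_u$ and $D^{(\top)}_v$.

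The candidate mighty tuple V is then obtained by identifying $u=v$ as two copies of the same variable, or more precisely by choosing a variable $u$ and a path that loops around through $\mathcal I_{R}$ to arrive back at a position where the same $y$-value is forced. Set $D:=D^{(\top)}_u$ and $Q:=\widetilde P_{u,v}$ for this construction. Condition~1 of mighty tuple V (reflexivity of $\prescript{\zv}{}Q^{\forall}$ on $\prescript{\zv}{}D$) follows from the defining property of $\mathcal W_R^m$: when every universal parameter is set equal to the same $x$, the instance reduces to a conjunction of $\mathcal W$-constraints whose joint solvability inside $D^{(\top)}$ is exactly the universal-reduction hypothesis. Condition~2 (the two projections of $\prescript{\zv}{}Q^{\forall\forall}$ equal $\prescript{\zv}{}D$) is the consequence of 1-consistency combined with the assumed impossibility of further shrinking. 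Condition~3 (irreflexivity of $\prescript{\zv_0}{}Q^{\forall\forall}$) uses the hypothesis that $\mathcal I_{R}$ has no solution for $\zv_0$: if the diagonal were entirely contained in $\prescript{\zv_0}{}Q^{\forall\forall}$, one could propagate common diagonal values along the variables of $\mathcal I_{R}$ to assemble a full solution, contradicting the no-solution assumption.

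The main obstacle is reconciling the ``for every $\zv$'' demands of conditions 1 and 2 with the ``for some $\zv$'' condition 3 inside a single uniformly defined $Q$. The failure of case 1 is essential for this: if for some $\zv$ one could q-define a proper universal subset, the irreflexivity witness would collapse into a new reduction rather than into a mighty tuple~V. Thus the combinatorial core is to localize the global obstruction to a single pair of variables in $\mathcal I_{R}$, choose the path encoding $\alpha$ so that the q-definition of $Q$ over $R$ is faithful and the required projection equalities survive the parameterization, and handle the bookkeeping of the $\zv$-parameter so that $\prescript{\zv}{}Q^{\forall}$ remains reflexive for every $\zv$ while only $\zv_0$ witnesses the irreflexivity. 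Carrying this out should complete the dichotomy between the two alternatives of the theorem.
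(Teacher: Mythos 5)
Your high-level intuition does line up with the paper in one important respect: the mighty tuple V does ultimately come from splitting a single variable into two copies $u$ and $u'$ and taking $Q^{\forall\forall}$ to be the binary relation between the two copies. But the argument for condition~3 (irreflexivity) as you have stated it is wrong in a way that matters. You write that if the diagonal were \emph{entirely} contained in $\prescript{\zv_0}{}Q^{\forall\forall}$ one could assemble a full solution of $\mathcal I_R$; that, if it worked, would only show that the diagonal is not \emph{all} of $Q^{\forall\forall}$. Condition~3 demands something much stronger: $\prescript{\zv_0}{}Q^{\forall\forall}\cap\{(d,d)\mid d\in A\}=\varnothing$, i.e.\ \emph{no} diagonal element survives. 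The way the paper obtains this is by carefully choosing the split so that $Q^{\forall\forall}$ equals $\mathcal J_s(u,u')$ for a covering $\mathcal J_s$ obtained by splitting a variable in an instance $\mathcal J_{s-1}$ that has no solutions for the critical $\zv$. Then a \emph{single} diagonal element $(d,d)\in\mathcal J_s(u,u')$ would already give a solution of $\mathcal J_{s-1}$ (merge $u$ and $u'$ back), which is the contradiction. Your global propagation argument does not localize the obstruction to the split variable, and as written it does not yield irreflexivity.

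The deeper reason you cannot get away with the vaguer statement is that your proposal never identifies \emph{where} to split. The paper achieves this through a two-stage reduction: first it fixes the maximal $m$ such that some $\mathcal I_R^{c_1,\dots,c_m}$ restricted to $D^{(\top)}$ is still unsatisfiable, then applies a chain of transformations (w), (e), (z), (s) to find the last moment $\mathcal I_t$ before the ``subdirectness'' property kicks in, and then, if that $\mathcal I_t$ is still unsatisfiable for some $\zv$, splits children of the variables $y_i^{c_1,\dots,c_i}$ one by one until the instance becomes satisfiable. The index $s-1$ is precisely the last unsatisfiable stage, and the relation $Q$ comes from the split at that stage. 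The maximality of $m$ and the choice of $t$ are also what guarantees condition~2 of mighty tuple V (subdirectness of $Q^{\forall\forall}$). Your proposal waves at ``1-consistency combined with the assumed impossibility of further shrinking'' for condition~2, but without the staged reduction you have not produced the concrete unary relation $B$ whose existence would otherwise contradict the failure of case~1. So the decomposition into subcase 2A (shrinking possible, giving case 1) versus subcase 2B (shrinking not possible, giving mighty tuple V) is a real piece of the proof that your outline leaves unaddressed.
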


Then we show (in Section \ref{SUBSECTIONFindingSmallerReduction}) that this universal subset can be extended to a smaller 
1-consistent reduction. 

\begin{thm}\label{THMFindSmallerReduction}
Suppose $R\subseteq A^{2n+1}$, 
$D^{(\top)}$ is a $\zv$-parameterized
universal 1-consistent reduction for 
$\mathcal I_{R}$,
$u\in \Var(\mathcal I_{R})$,
$B\us D_{u}^{(\top)}$ is 
    a $\zv$-parameterized nonempty unary relation. Then one of the following conditions holds:
\begin{enumerate}
    \item there exists a $\zv$-parameterized
universal 1-consistent reduction $D^{(\bot)}$ for 
$\mathcal I_{R}$ that is smaller than $D^{(\top)}$;
    \item  there exists a mighty tuple IV q-definable from $R$.
\end{enumerate}
\end{thm}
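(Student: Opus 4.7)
The plan is to construct $D^{(\bot)}$ by 1-consistency propagation starting from the restriction $D_u^{(\bot)} := B$, and to argue that either the resulting fixed point is the desired smaller universal 1-consistent reduction, or else the propagation trace can be converted into a mighty tuple IV. Concretely, initialize $D_u^{(\bot)} := B$ and $D_v^{(\bot)} := D_v^{(\top)}$ for $v\neq u$, then iteratively tighten: for every constraint $C = \mathcal S_R^m(u_0,\dots,u_m)$ in $\mathcal I_R$ and every variable $v$ appearing in $C$, replace $D_v^{(\bot)}$ by $\proj_v(C^{(\bot)})\cap D_v^{(\bot)}$, and repeat until a fixed point is reached. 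The fixed point $D^{(\bot)}$ is 1-consistent by construction, and since $D_u^{(\bot)}\subseteq B\subsetneq D_u^{(\top)}$, it is strictly smaller than $D^{(\top)}$.

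I would maintain the invariant $D_v^{(\bot)} \useq^{\{R\}} D_v^{(\top)}$ throughout propagation. The base case follows from $B\us D_u^{(\top)}$ and equality at the remaining variables. For the inductive step, Lemma \ref{LEMUniversalSubuniverseImplies} applies: because each $D_{u_i}^{(\bot)}\useq D_{u_i}^{(\top)}$ on the variables of $C$, and because 1-consistency of $D^{(\top)}$ yields $\proj_v(C^{(\top)}) = D_v^{(\top)}$, the new projection $\proj_v(C^{(\bot)})$ is a universal subset of $D_v^{(\top)}$. Intersecting with the previous $D_v^{(\bot)}$ and applying the lemma a second time (writing the intersection as a conjunctive existential formula) preserves the invariant.

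To verify that the fixed point is a \emph{universal} reduction, I invoke Lemma \ref{LEMSIsUniversalInW} for $\mathcal S_R^m \useq^{\{R\}} \mathcal W_R^m$. Restricting each earlier variable to its $D^{(\bot)}$-domain and applying Lemma \ref{LEMUniversalSubuniverseImplies} once more gives $D_v^{(\bot)} \useq D_v^{(\bot,0)}$ for every $v$, which in particular yields $D_v^{(\bot)} \uuus D_v^{(\bot,0)}$. Thus, provided the fixed point is nonempty on every variable for every $\zv$, conclusion~1 of the theorem holds.

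The main obstacle is the case in which propagation produces $\prescript{\zv_0}{}D_{v_0}^{(\bot)} = \varnothing$ for some variable $v_0$ and some $\zv_0$. In that case I would isolate the first propagation step at which emptiness is created and read off a mighty tuple IV from it. Take $D := D_{v_0}^{(\top)}$, let $B$ be the propagated domain immediately before the fatal intersection, and let $C$ be a nonempty subset of $D\setminus B$ (for at least one $\zv$, witnessing $\prescript{\zv_0}{}B\cap\prescript{\zv_0}{}C=\varnothing$). The binary relation $Q$ is constructed from the universal-subset witness of the failing step: the $\alpha$-parameter encodes the universally quantified coordinates in the $\mathcal S_R^m$/$\mathcal W_R^m$ construction, so $Q^{\forall\forall}$ reflects the ``strong'' propagation (keeping us inside $B$ and inside $D$) while $Q^{\forall}$ reflects the ``weak'' propagation (generating $D$ from $B$). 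The closure identities $B + Q^{\forall\forall} = B$ and $D + Q^{\forall\forall} = D$ then come from universality of $D^{(\top)}$ and stability of $B$ under strong propagation, while $B + Q^{\forall} = D$ encodes the recovery of $D$ by the weaker propagation. The technical core is to perform this extraction cleanly while disentangling the parameters $\zv$, $\delta$, and $\alpha$ and verifying each mighty tuple IV axiom from the failing step.
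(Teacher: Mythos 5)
Your approach for the "good" case — seed $D_u^{(\bot)}:=B$, run $1$-consistency to a fixed point, and argue by Lemmas~\ref{LEMSIsUniversalInW} and~\ref{LEMUniversalSubuniverseImplies} that the fixed point is a universal reduction — is essentially the paper's Lemma~\ref{LEMIntersectEachImpliesSmallerReduction} (Case~2), and it is sound modulo some care: the invariant you want is $D_v^{(\bot)}\useq D_v^{(\top)}$ obtained via tree-coverings (Lemma~\ref{LEMMaximalReductionGeneral}), after which the paper derives $D_v^{(\bot)} = D_v^{(\bot)}\cap D_v^{(\bot,0)}\useq D_v^{(\top)}\cap D_v^{(\bot,0)}\uuus D_v^{(\top,0)}\cap D_v^{(\bot,0)}=D_v^{(\bot,0)}$.

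The gap is in the "bad" case, and it is the core of the theorem, not a technicality. When propagation kills some $\prescript{\zv_0}{}D_{v_0}^{(\bot)}$, reading a mighty tuple IV off the failing step does not work. The mighty tuple IV axioms are \emph{for-all-$\zv$} closure identities — $\prescript{\zv}{}B+\prescript{\zv}{}Q^{\forall\forall}=\prescript{\zv}{}B$, $\prescript{\zv}{}B+\prescript{\zv}{}Q^{\forall}=\prescript{\zv}{}D$, $\prescript{\zv}{}D+\prescript{\zv}{}Q^{\forall\forall}=\prescript{\zv}{}D$, together with nonemptyness of $\prescript{\zv}{}B$ and $\prescript{\zv}{}C$ for \emph{every} $\zv$. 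A single propagation failure at $\zv_0$ gives only inclusions and only at $\zv_0$; in particular $B+Q^{\forall}=D$ (exact equality, all $\zv$) does not emerge from the definition of a universal weakening of one constraint, and there is no reason the $B$ produced by your propagation is closed under the strong relation. The paper spends the bulk of Section~\ref{SUBSECTIONFindingSmallerReduction} manufacturing a $B$ with exactly these robustness properties: it restricts attention to the \emph{highest} such variable $u$ (so that universal subsets cannot exist above $u$, which kills certain constraints along paths), passes to strongly connected components $\mathcal B_{\mathcal P}\subseteq\mathcal B_{\mathcal T}$ of the "path/tree action" graphs, and then proves a chain of supervised-zig-zag lemmas (\ref{LEMSupervisedRelaxationImpliesPSpace} through~\ref{LEMNonemptyIntersectionInBs}) culminating in the "strong tuple" property, which is precisely the package of equalities needed by Lemma~\ref{LEMSmallerReductionOrPSpaceHardness} to assemble a mighty tuple IV. Your sketch openly defers this ("the technical core is to perform this extraction cleanly"), but that deferred part is the theorem. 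A fix would have to reconstruct something equivalent to the $\mathcal B_{\mathcal P}$/zig-zag machinery; a direct read-off from the propagation trace will not produce the for-all-$\zv$ equalities.

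One smaller issue: you also need to pick $u$ to be a \emph{highest} variable admitting a proper universal subset; this is used implicitly in Lemma~\ref{LEMZigzagFromBoneToBtwo} to control which constraints can appear on paths, and your setup does not make that choice.
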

In both cases, there is an option that 
it cannot be done, but 
this implies that the QCSP is PSpace-hard.
Combining above theorems we obtain the fundamental fact that 
it is sufficient to run arc-consistency algorithm on $\mathcal I_{R}$ to be sure that it has a solution.

\begin{thm}\label{THMOneConsistentReductionImpliesASolution}
Suppose $R\subseteq A^{2n+1}$, 
there exists a $\zv$-parameterized
nonempty 
1-consistent reduction for 
$\mathcal I_{R}$. Then one of the following conditions holds:
\begin{enumerate}
    \item $\mathcal I_{R}$ has a solution for every $\zv\in A^{|A|}$;
    \item  there exists a mighty tuple  IV or V q-definable from $R$.
\end{enumerate}
\end{thm}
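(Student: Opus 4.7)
The plan is to argue by well-founded descent over universal $1$-consistent reductions. Assume that neither a mighty tuple IV nor a mighty tuple V is q-definable from $R$; the aim is to show that $\mathcal I_R$ has a solution for every $\zv\in A^{|A|}$. Suppose for contradiction that $\mathcal I_R$ has no solution for some fixed $\zv$. Since, by hypothesis, there exists a nonempty $1$-consistent reduction for $\mathcal I_R$, we can enlarge it coordinate-wise to an inclusion-maximal nonempty $1$-consistent reduction $D^{(\top)}$ (this is possible because $\Var(\mathcal I_R)$ is finite and each $D_u^{(\top)}$ is bounded by $A$). By Theorem~\ref{THMNonemptyReductionIsZigzag}, $D^{(\top)}$ is a universal reduction.

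Now apply Theorem~\ref{THMFindUniversalSubuniverse} to $D^{(\top)}$: since no mighty tuple V is q-definable from $R$, we obtain some variable $u\in\Var(\mathcal I_R)$ and a nonempty $\zv$-parameterized unary relation $B$ with $B\us D_u^{(\top)}$. Feed this $u$ and $B$ into Theorem~\ref{THMFindSmallerReduction}: since no mighty tuple IV is q-definable from $R$, we obtain a universal $1$-consistent reduction $D^{(\bot)}$ strictly smaller than $D^{(\top)}$. Replace $D^{(\top)}$ by $D^{(\bot)}$ and iterate. Shrinking a reduction can only remove solutions, so the no-solution hypothesis at the fixed $\zv$ is preserved throughout.

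At each iteration the nonnegative integer $\sum_{v,\zv'}|\prescript{\zv'}{}D_v^{(\top)}|$ strictly decreases, so the iteration must terminate after finitely many steps. But as long as the iteration can be carried out, the two invocations force us to produce either a mighty tuple IV or a mighty tuple V q-definable from $R$, contradicting our assumption. This contradiction gives a solution for every $\zv$, finishing the contrapositive.

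The main obstacle, and the point the full proof must verify, is to keep the iteration well-defined: after each invocation of Theorem~\ref{THMFindSmallerReduction} we must re-enter Theorem~\ref{THMFindUniversalSubuniverse} with a reduction that is still universal, $1$-consistent, and, crucially, nonempty, so that Theorem~\ref{THMFindUniversalSubuniverse} in turn produces a \emph{nonempty} $B$ eligible for the next application of Theorem~\ref{THMFindSmallerReduction}. Universality and $1$-consistency are delivered directly by Theorem~\ref{THMFindSmallerReduction}; nonemptyness of $D^{(\bot)}$ at the fixed $\zv$ must follow from the construction underlying that theorem (propagating the nonempty $B$ at $u$ through $\mathcal I_R$ while preserving $1$-consistency). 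Once this technical point is checked, the descent closes the argument.
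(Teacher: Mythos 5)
Your proof is correct and essentially identical to the paper's: assume for contradiction that $\mathcal I_R$ has no solution for some $\zv$, pass to an inclusion-maximal nonempty $1$-consistent reduction (universal by Theorem~\ref{THMNonemptyReductionIsZigzag}), and then alternately invoke Theorems~\ref{THMFindUniversalSubuniverse} and~\ref{THMFindSmallerReduction} to obtain an infinite strictly descending chain of reductions, which is impossible. Your flagged technical point about re-entering the loop is handled in the paper by the fact that Theorem~\ref{THMFindSmallerReduction} (via Lemma~\ref{LEMIntersectEachImpliesSmallerReduction}) produces a nonempty universal $1$-consistent reduction, and Theorem~\ref{THMFindUniversalSubuniverse} explicitly returns a nonempty $B$, so the iteration is well-defined exactly as you surmised.
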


\begin{proof}
Assume the converse. Suppose 
there does not exist a mighty tuple IV or V q-definable from 
$R$ and $\mathcal I_{R}$ has no solutions for some $\zv$.
By Theorem \ref{THMNonemptyReductionIsZigzag} 
there exists a $\zv$-parameterized
universal 1-consistent reduction $D^{(1)}$ for 
$\mathcal I_{R}$. 
By Theorem \ref{THMFindUniversalSubuniverse} 
there exists a variable $u$ of $\mathcal I_{R}$ and
    a $\zv$-parameterized unary relation $B$ such that 
    $B\us D_{u}^{(1)}$.
By Theorem \ref{THMFindSmallerReduction} 
there exists a smaller $\zv$-parameterized 1-consistent universal 
reduction $D^{(2)}$. 
Then iteratively applying Theorems \ref{THMFindUniversalSubuniverse} and \ref{THMFindSmallerReduction} we 
build reductions 
$D^{(1)}, D^{(2)},\dots,D^{(s)}$.
Since we never stop and we cannot reduce domains forever, we 
get a contradiction.
\end{proof}

Summarizing above theorems we obtain the following theorem.

\begin{thm}\label{THMSmallWinningSetOrMightyTuple}
Suppose $\Gamma$ is a constraint language on a finite set $A$.
Then one of the following conditions holds:
\begin{enumerate}
\item for any No-instance $\exists y_0 \forall x_1\exists y_1\dots
\forall x_n\exists y_n\;\Psi$ of $\QCSP(\Gamma)$
there exists 
    $S\subseteq A^{n}$ with $|S|\le|A|^{2}\cdot (n\cdot |A|)^{2^{{2|A|}^{|A|+1}}}$ such that  
    $\exists y_0 \forall x_1\exists y_1\dots
\forall x_n\exists y_n(
(x_1,\dots,x_n)\in S\rightarrow \Psi)$ does not hold;
\item there exists a mighty tuple III, IV, or V q-definable over $\Gamma$.
\end{enumerate} 
\end{thm}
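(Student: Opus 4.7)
The plan is to assume the second alternative of the theorem fails---i.e., no mighty tuple III, IV, or V is q-definable over $\Gamma$---and derive the first. Let $\Psi = \exists y_0 \forall x_1 \exists y_1 \dots \forall x_n \exists y_n\, \Phi$ be a No-instance of $\QCSP(\Gamma)$, put $R(y_0,\dots,y_n,x_1,\dots,x_n) := \Phi$, and observe via Lemma~\ref{LEMEquivalentCSPInstances} that $\mathcal{I}_R$ has no solution for some $\zv\in A^{|A|}$. Now apply Theorem~\ref{THMFindSmallTree} to $R$ and case-split on its three conclusions. Case~3 is immediately excluded by the assumption. In Case~1 a $\zv$-parameterized nonempty $1$-consistent reduction exists, so Theorem~\ref{THMOneConsistentReductionImpliesASolution} forces either $\mathcal{I}_R$ to have a solution for every $\zv$ (which, by Lemma~\ref{LEMEquivalentCSPInstances}, contradicts the No-instance hypothesis) or a mighty tuple IV or V q-definable from $R$ (contradicting the assumption). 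So we must be in Case~2, where a subinstance $\mathcal{J} \subseteq \mathcal{I}_R$ with at most $(n\cdot|A|)^{2^{2|A|^{|A|+1}}}$ variables has no solution for some $\zv$.

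The remaining work is to convert $\mathcal{J}$ into the desired $S\subseteq A^n$. The natural choice is: for every constraint $\mathcal{S}_R^m(y_0, y_1^{a_1},\dots,y_m^{a_1,\dots,a_m})$ occurring in $\mathcal{J}$ and every pair $(a,a')\in A^2$, place the full play $(a_1,\dots,a_m,a,a',a',\dots,a')$ into $S$ (with the constant tail of $a'$ of length $n-m-1$; the cases $m\in\{n-1,n\}$ being handled by truncating this recipe). Since the number of constraints in $\mathcal{J}$ is bounded by $|\Var(\mathcal{J})|$, we get $|S| \le |A|^2\cdot(n\cdot|A|)^{2^{2|A|^{|A|+1}}}$, exactly the bound claimed. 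To verify that UP wins on $S$, I argue by contradiction: suppose EP has Skolem functions $f_0,\dots,f_n$ giving a winning strategy for the $S$-restricted instance, so $R(f_0(),f_1(a_1),\dots,f_n(a_1,\dots,a_n),z_{a_1},\dots,z_{a_n})$ holds for every $(a_1,\dots,a_n)\in S$ at the bad $\zv$. Define an assignment to $\mathcal{J}$ by $y_m^{a_1,\dots,a_m} := f_m(a_1,\dots,a_m)$. Unfolding the definition of $\mathcal{S}_R^m$, for each $a\in A$ pick $y_{m+1}^* := f_{m+1}(a_1,\dots,a_m,a)$ and, for each $a'\in A$, set $y_{m+j}^* := f_{m+j}(a_1,\dots,a_m,a,a',\dots,a')$ for $j\ge 2$; since the corresponding UP-play lies in $S$ by construction, $R$ holds on these values, witnessing satisfaction of the constraint. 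Thus $\mathcal{J}$ is solvable at the bad $\zv$---contradiction.

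The main obstacle is the precise correspondence between the $\forall x\,\exists y_{m+1}\,\forall x'\,\exists y_{m+2}\dots\exists y_n$ scaffolding inside each $\mathcal{S}_R^m$ and the constant-tail UP plays $(a_1,\dots,a_m,a,a',\dots,a')$ that populate $S$; getting this alignment right is precisely what forces the $|A|^2$-factor in the size bound and is the one place where an assumed EP winning strategy on the $S$-restricted game must be shown to yield witnesses for every nested quantifier inside $\mathcal{S}_R^m$. Apart from this bookkeeping (and the minor boundary cases $m\in\{n-1,n\}$), the proof is a clean synthesis of Theorems~\ref{THMFindSmallTree} and \ref{THMOneConsistentReductionImpliesASolution} together with Lemma~\ref{LEMEquivalentCSPInstances}, and the No-instance hypothesis enters exactly once: to discard the ``$\mathcal{I}_R$ has a solution for every $\zv$'' branch of Theorem~\ref{THMOneConsistentReductionImpliesASolution}.
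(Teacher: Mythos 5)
Your high-level structure is exactly right and matches the paper: assume no mighty tuple III/IV/V is q-definable, invoke Lemma~\ref{LEMEquivalentCSPInstances} and Theorems~\ref{THMFindSmallTree} and~\ref{THMOneConsistentReductionImpliesASolution} to force Case~2 of Theorem~\ref{THMFindSmallTree}, then turn the small unsatisfiable subinstance $\mathcal J$ into the set $S$. The ordering of the two theorems (the paper applies Theorem~\ref{THMOneConsistentReductionImpliesASolution} first, you apply Theorem~\ref{THMFindSmallTree} first) is an immaterial difference, and the bound ``number of constraints in $\mathcal J$ $\le |\Var(\mathcal J)|$'' is sound since each constraint of $\mathcal I_R$ is uniquely determined by its lowest variable.

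There is, however, a concrete gap in the conversion from $\mathcal J$ to $S$, and it is precisely in the place you flagged as ``bookkeeping.'' You place the raw label tuples $(a_1,\dots,a_m,a,a',\dots,a')$ into $S$ and then assign $y_m^{a_1,\dots,a_m} := f_m(a_1,\dots,a_m)$. But the constraint you need to discharge in $\mathcal J$ at the bad parameter $\zv_0=(b_1,\dots,b_{|A|})$ is
$\mathcal S_R^m\bigl(y_0,\dots,y_m^{a_1,\dots,a_m}, b_{a_1},\dots,b_{a_m}\bigr)$: the $x$-arguments are the values $b_{a_i}$, not the indices $a_i$. Unfolding $\mathcal S_R^m$ therefore demands $R(\dots, b_{a_1},\dots,b_{a_m},a,a',\dots,a')$, whereas the $S$-restricted strategy only guarantees $R(\dots, a_1,\dots,a_m,a,a',\dots,a')$. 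These coincide only when $\zv_0=\kappa$, which you cannot assume: the whole $\zv$-parameterization exists exactly because $\Gamma$ need not contain constants. The paper resolves this by defining $E(a_1,\dots,a_i)=\{(b_{a_1},\dots,b_{a_i},c,d,\dots,d)\mid c,d\in A\}$ — i.e.\ $S$ is populated with the $\zv_0$-\emph{substituted} plays — and then the assignment $y_i^{a_1,\dots,a_i}:=f_i(b_{a_1},\dots,b_{a_i})$ makes the verification go through. With that one change your proof closes; without it, the verification step fails for any non-identity $\zv_0$.
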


\begin{proof}
Below we assume that the second condition 
does not hold, that is,
there does not exist a mighty tuple III, IV, or V q-definable over 
$\Gamma$.

Let us prove condition 1.
Put $R(y_0,y_1,\dots,y_n,x_1,\dots,x_n) =\Psi$.
By Lemma \ref{LEMEquivalentCSPInstances},
$\mathcal I_{R}$ has no solutions for some $\zv\in A^{|A|}$.
Then by Theorem \ref{THMOneConsistentReductionImpliesASolution}
there does not exist 
a $\zv$-parameterized
nonempty 1-consistent reduction for 
$\mathcal I_{R}$. 
Applying Theorem \ref{THMFindSmallTree} to $\mathcal I_{R}$ we 
obtain that only case 2 is possible.




Thus, there exists  
a subinstance $\mathcal J\subseteq \mathcal I_{R}$ with at most 
$(n\cdot |A|)^{2^{{2|A|}^{|A|+1}}}$ variables not having a solution for some
$\zv=(b_1,\dots,b_{|A|})$. 
Let us define an appropriate $S\subseteq A^{n}$.
For a tuple $(a_1,\dots,a_i)$ and $i\in\{0,1,\dots,n-2\}$
by $E(a_1,\dots,a_i)$ we denote the set 
$$\{(b_{a_1},b_{a_2},\dots,b_{a_i},\underbrace{c,d,d,\dots,d}_{n-i})\mid 
c,d\in A\}.$$
Put
$E(a_1,\dots,a_n) = \{(b_{a_1},\dots,b_{a_n})\}$
and 
$E(a_1,\dots,a_{n-1}) = \{(b_{a_1},\dots,b_{a_{n-1}},c)\mid c\in A\}$.
Then 
put 
$S = \bigcup\limits_{y_{i}^{a_1,\dots,a_i} \in \Var(\mathcal J)} E(a_1,\dots,a_i)$.
We have 
$$|S|\le |A|^{2}\cdot|\Var(\mathcal J)|\le 
|A|^{2}\cdot (n\cdot |A|)^{2^{{2|A|}^{|A|+1}}}.$$
Observe that tuples 
from $S$ cover all the constraints of $\mathcal J$ for 
$\zv = (b_1,\dots,b_{|A|})$.
Therefore 
the CSP instance
$\bigwedge\limits_{(a_1,\dots,a_{n})\in S}
R(y_{0},y_{1}^{a_1},y_{2}^{a_{1},a_2},\dots,y_{n}^{a_1,\dots,a_{n}},a_1,\dots,a_{n})$ cannot be satisfied.
Notice that  
the existence of a winning strategy for the EP in
$$\exists y_0 \forall x_1\exists y_1\dots
\forall x_n\exists y_n(
(x_1,\dots,x_n)\in S\rightarrow \Psi)$$
is equivalent to the satisfiability 
of $\bigwedge\limits_{(a_1,\dots,a_{n})\in S}
R(y_{0},y_{1}^{a_1},y_{2}^{a_{1},a_2},\dots,y_{n}^{a_1,\dots,a_{n}},a_1,\dots,a_{n})$.
Hence, the sentence 
$\exists y_0 \forall x_1\exists y_1\dots
\forall x_n\exists y_n(
(x_1,\dots,x_n)\in S\rightarrow \Psi)$ does not hold, which completes the proof.
\end{proof}

Combining the above theorem with 
Theorem \ref{THMMightyTupleIPSpaceHardness} 
and Lemmas \ref{LEMMightyTupleTwoImplies}  and 
\ref{LEMMightyTupleFiveImplies},
we derive Theorem \ref{THMMainUPRestriction}.
Assuming that 
$\mathrm{PSpace}\neq\Pi_{2}^{P}$, we obtain the following classification of PSpace-complete languages that is 
a bit stronger than the classification in Theorem \ref{THMNonIdempotantClassification}:

\begin{thm}\label{THMPSpaceMightyClassification}
Suppose $\Gamma$ is a constraint language on a finite set $A$.
Then the following conditions are equivalent:
\begin{enumerate}
\item $\QCSP(\Gamma)$ is PSpace-complete;
\item $\Gamma$ q-defines a mighty tuple I;
\item $\Gamma$ q-defines a mighty tuple II or V.
\end{enumerate} 
\end{thm}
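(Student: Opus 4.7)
The plan is to derive the three-way equivalence by assembling the lemmas already in hand into a cycle $(2)\Rightarrow(1)\Rightarrow(3)\Rightarrow(2)$, and then to handle the subtle step, namely $(1)\Rightarrow(3)$, by reading Theorem~\ref{THMSmallWinningSetOrMightyTuple} contrapositively under the hypothesis $\mathrm{PSpace}\neq\Pi_2^P$.

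For the easy directions, I would first note that $(2)\Rightarrow(1)$ is almost immediate: Theorem~\ref{THMMightyTupleIPSpaceHardness} provides PSpace-hardness from a mighty tuple I, and since $\QCSP(\Gamma)$ is always in PSpace for finite $\Gamma$, PSpace-completeness follows. For $(3)\Rightarrow(2)$ I would simply invoke Lemma~\ref{LEMMightyTupleTwoImplies} (handling a mighty tuple II) together with Lemma~\ref{LEMMightyTupleFiveImplies} (handling a mighty tuple V); in both cases $\Gamma$ q-defines a mighty tuple~I, which is exactly condition~(2). Note the cleaner formulation here compared to Theorem~\ref{THMNonIdempotantClassification}: we do not need the full list III/IV because Lemma~\ref{LEMMightyTupleTwoThreeFourEquivalence} already collapses II, III, IV into a single condition.

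The substantive step is $(1)\Rightarrow(3)$, which is where the main paper result kicks in. I would argue contrapositively: assume $\Gamma$ q-defines neither a mighty tuple II nor a mighty tuple V. By Lemma~\ref{LEMMightyTupleTwoThreeFourEquivalence}, the absence of a q-definable mighty tuple II means the absence of q-definable mighty tuples III and IV as well. Hence no mighty tuple III, IV, or V is q-definable over $\Gamma$, so Theorem~\ref{THMSmallWinningSetOrMightyTuple} forces the first alternative: for every No-instance $\exists y_0 \forall x_1\exists y_1\dots\forall x_n\exists y_n\,\Psi$ there is a witness set $S\subseteq A^n$ of size at most $|A|^2\cdot(n|A|)^{2^{2|A|^{|A|+1}}}$ on which the UP already wins. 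As discussed after Theorem~\ref{THMMainUPRestriction} in Section~\ref{SECTIONMainResults}, this bound places $\QCSP(\Gamma)$ in $\Pi_2^P$: a $\Pi_2^P$ machine guesses the UP set $S$ universally and then the EP's restricted Skolem data existentially, verifying in polynomial time. Under the assumption $\mathrm{PSpace}\neq\Pi_2^P$ this contradicts PSpace-completeness, closing the cycle.

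The main obstacle is really conceptual rather than technical: the chain of implications reuses Theorem~\ref{THMSmallWinningSetOrMightyTuple} and the lemmas about mighty tuples as black boxes, so no new combinatorics is required here. The only points that deserve explicit care in the write-up are (a) checking that the $\Pi_2^P$ membership argument from the discussion of Theorem~\ref{THMMainUPRestriction} applies verbatim once the polynomial bound on $|S|$ is available (the domain $A$ is fixed, so $(n|A|)^{2^{2|A|^{|A|+1}}}$ is genuinely polynomial in $n$), and (b) flagging the use of the assumption $\mathrm{PSpace}\neq\Pi_2^P$ at exactly one place, namely in converting ``in $\Pi_2^P$'' into ``not PSpace-complete.'' Everything else is bookkeeping over already-proven statements.
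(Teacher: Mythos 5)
Your proposal is correct and matches the paper's proof: both establish the cycle $(2)\Rightarrow(1)\Rightarrow(3)\Rightarrow(2)$ using Theorem~\ref{THMMightyTupleIPSpaceHardness}, Lemmas~\ref{LEMMightyTupleTwoImplies} and~\ref{LEMMightyTupleFiveImplies}, and, for $(1)\Rightarrow(3)$, Theorem~\ref{THMSmallWinningSetOrMightyTuple} together with Lemma~\ref{LEMMightyTupleTwoThreeFourEquivalence} and the hypothesis $\mathrm{PSpace}\neq\Pi_2^P$. The only cosmetic difference is that you phrase $(1)\Rightarrow(3)$ contrapositively while the paper argues directly, which changes nothing substantive.
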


\begin{proof}
2 implies 1 by Theorem \ref{THMMightyTupleIPSpaceHardness}.
3 implies 2 by Lemmas \ref{LEMMightyTupleTwoImplies}  and 
\ref{LEMMightyTupleFiveImplies}.
Let us prove that 1 implies 3.
As we assumed that $\mathrm{PSpace}\neq\Pi_{2}^{P}$,
and the first case of 
Theorem \ref{THMSmallWinningSetOrMightyTuple}
implies $\Pi_{2}^{P}$-membership, 
we derive the second case of Theorem \ref{THMSmallWinningSetOrMightyTuple}, 
that is, there exists a mighty tuple III, IV, or V q-definable from $\Gamma$. 
Using lemma \ref{LEMMightyTupleTwoThreeFourEquivalence} 
there exists a mighty tuple II or V q-definable from $\Gamma$, which compltes the proof.
\end{proof}

For constraint languages containing all constant relations an easier classification of 
PSpace-complete languages is proved in Section \ref{SUBSECTIONClassificationIdempotentProof}. 

\begin{lem}\label{LEMTHMIdempotantClassification}
Suppose $\Gamma\supseteq\{x=a\mid a\in A\}$ is a set of relations on $A$.
Then the following conditions are equivalent:
\begin{enumerate}
    \item $\Gamma$ q-defines a mighty tuple I;
    \item $\Gamma$ q-defines a mighty tuple II;
\item  there exist an equivalence relation $\sigma$ on $D\subseteq A$ and $B,C\subsetneq A$
such that 
$B\cup C = A$ and 
    $\Gamma$
    q-defines the relations 
    $(y_{1},y_{2}\in D)\wedge(\sigma(y_1,y_2)\vee (x\in B))$ and 
    $(y_{1},y_{2}\in D)\wedge(\sigma(y_1,y_2)\vee (x\in C))$.
\end{enumerate}
\end{lem}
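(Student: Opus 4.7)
The plan is to cycle through the three conditions. The implication $2 \Rightarrow 1$ is already established by Lemma~\ref{LEMMightyTupleTwoImplies}, so it remains to prove $1 \Rightarrow 2$, $3 \Rightarrow 2$, and $2 \Rightarrow 3$. The first two are relatively routine; the third is the main technical work.

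For $1 \Rightarrow 2$ I would use the constant relations in $\Gamma$ to eliminate both parameters $\zv$ and $\delta$ simultaneously. By condition~6 of mighty tuple~I, pick $\zv_0 \in A^{|A|}$ such that $\prescript{\zv_0}{\delta}B \neq \prescript{\zv_0}{\delta}C$ for every $\delta \in \prescript{\zv_0}{}\Delta$, and by condition~1 pick some $\delta_0 \in \prescript{\zv_0}{}\Delta$; then q-define the unparameterized tuple $(\tilde{Q}, \tilde{D}, \tilde{B}, \tilde{C})$ obtained by fixing these parameters via constant relations. Conditions~3 and~4 of mighty tuple~I translate directly to conditions~2 and~4 of mighty tuple~II (the latter because $\tilde{Q}^\forall = \tilde{D} \times \tilde{D}$ forces $\tilde{B} + \tilde{Q}^\forall = \tilde{D}$ from any nonempty $\tilde{B} \subseteq \tilde{D}$), while condition~5 of mighty tuple~I gives both closure of $\tilde{B}, \tilde{C}$ under $\tilde{Q}^{\forall\forall}$ and their disjointness.

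For $3 \Rightarrow 2$ I would define the $\alpha$-parameterized binary relation
\[\tilde{Q}(y_1, y_2, x_1, x_2) := \exists y \; R_B(y_1, y, x_1) \wedge R_C(y, y_2, x_2),\]
where $R_B, R_C$ are the two q-definable relations provided by condition~3. A direct case analysis gives $\tilde{Q}^{(a_1, a_2)} = D \times D$ whenever $a_1 \in B$ or $a_2 \in C$ (so at least one of the two $R$'s imposes no constraint on $D \times D$), and $\tilde{Q}^{(a_1, a_2)} = \sigma$ otherwise. Since $B \cup C = A$, every constant tuple $(a, a)$ falls in the loose case, so $\tilde{Q}^\forall = D \times D$; since $B, C \subsetneq A$, there exist $a_1 \notin B$ and $a_2 \notin C$, so $\tilde{Q}^{\forall\forall} = \sigma$. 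Picking $\tilde{B}, \tilde{C}$ as two distinct $\sigma$-classes in $D$ (which exist since otherwise the relations of condition~3 are trivial) yields a mighty tuple~II whose conditions are then verified routinely.

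The direction $2 \Rightarrow 3$ is the main technical obstacle. The plan is to first apply Lemma~\ref{LEMMightyTupleTwoImplies} and then the $1 \Rightarrow 2$ step above to reduce to an unparameterized mighty tuple~I $(Q, D, B, C)$ with $Q^{(a,\ldots,a)} = D \times D$ for every $a \in A$ and with $B, C$ distinct classes of $\sigma := Q^{\forall\forall}$. The task is then to q-define from $Q$ two ternary relations whose $x$-slices take only the values $D \times D$ (``loose'') and $\sigma$ (``tight''), with two different tight sets whose complements cover $A$. I would build such relations by universally quantifying the auxiliary $\alpha$-slots of $Q$ and substituting strategically chosen constants at the other positions, schematically
\[R'(y_1, y_2, x) := \forall v_1 \ldots \forall v_{m-1} \; Q(y_1, y_2, g_1(x, v_1, \ldots, v_{m-1}), \ldots, g_m(x, v_1, \ldots, v_{m-1})),\]
so that for $x$ equal to a chosen element the $\alpha$-argument becomes constant (forcing the slice to $D \times D$), whereas for every other $x$ the intersection over all $v_i$ collapses the slice to $\sigma$ by exploiting $\sigma = \bigcap_\alpha Q^\alpha$. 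The main obstacle is that for an intermediate non-constant $\alpha$ the equivalence $Q^\alpha$ can lie strictly between $\sigma$ and $D \times D$; making the collapse exact requires enough universal slots that the resulting intersection really reaches $\sigma$ rather than some proper overequivalence. Once two such relations are produced using two different choices of distinguished elements $b \in B$ and $c \in C$, setting $B' := A \setminus \{c\}$ and $C' := A \setminus \{b\}$ gives $B' \cup C' = A$ with $B', C' \subsetneq A$ and yields condition~3.
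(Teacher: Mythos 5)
Your cycle of implications is sound, and three of its legs essentially coincide with the paper: $2 \Rightarrow 1$ via Lemma~\ref{LEMMightyTupleTwoImplies}; $1 \Rightarrow 2$ by fixing $\zv_0,\delta_0$ with constants (after which condition~4 of mighty tuple~I turns $\tilde{Q}^\forall = \tilde{D}\times\tilde{D}$ into condition~4 of mighty tuple~II as you say); and your $3 \Rightarrow 2$ construction $\tilde Q(y_1,y_2,x_1,x_2) = \exists y\, R_B(y_1,y,x_1)\wedge R_C(y,y_2,x_2)$ is precisely the paper's, with $\tilde B, \tilde C$ taken as the $\sigma$-classes of two inequivalent $b,c\in D$.

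The gap is in the implication you call $2 \Rightarrow 3$ (which, as you set it up, is really $1 \Rightarrow 3$). The paper's argument, after fixing parameters, proceeds in three further steps that your sketch does not reproduce. First, it iteratively \emph{shrinks the domain}: whenever some $Q^{\alpha}$ is strictly between $\sigma$ and $D\times D$, it replaces $D$ by an equivalence class $\{b\}+Q^\alpha$ of $Q^\alpha$ that is neither a $\sigma$-class nor all of $D$, and restricts $Q$; after finitely many steps every $Q^\alpha$ is exactly $\sigma$ or $D\times D$. Second, it builds $L,R$ with $x$-part in $A^{|A|^2}$ by \emph{composition} (the $U_n$ of Lemma~\ref{LEMMightyTupleImplies}): after the domain reduction, a composition $Q^{\alpha_1}+\cdots+Q^{\alpha_n}$ is $\sigma$ iff every factor is $\sigma$ and is $D\times D$ otherwise, which is exactly the two-valued behaviour needed. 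Third, it \emph{lowers the arity} of the $x$-part from $|A|^2$ to $1$ by a minimal-counterexample argument that at each step either universally quantifies one $x$-coordinate or fixes it to a constant not covered by the quantified relations. Your proposed intersection $\forall v_1\cdots\forall v_{m-1}\,Q(y_1,y_2,g_1,\ldots,g_m)$ runs into exactly the problem you name: without the domain reduction, an intersection over a proper subset $S_x\subsetneq A^m$ need not reach $\sigma$, and you have no mechanism to force $S_x = A^m$ for the tight $x$-values while keeping $S_x$ a singleton constant for the loose one. Moreover, the expectation that the tight set can be forced to be a singleton directly (giving $B'=A\setminus\{c\}$, $C'=A\setminus\{b\}$) is not justified — the paper never aims for that, and instead obtains relations with a high-arity $x$-part and then reduces arity. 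Without the domain reduction and the composition/arity-reduction scheme, the construction does not close.
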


The above lemma together with 
Theorem \ref{THMMightyTupleIPSpaceHardness} 
implies Theorem \ref{THMIdempotantClassification}.


As we prove in Lemmas \ref{LEMMightyTupleTwoThreeFourEquivalence}
and \ref{LEMMightyTupleTwoImplies},
mighty tuples II, III, and IV are equivalent, 
and each of them implies a mighty tuple I.
Nevertheless, it is still not clear whether 
a mighty tuple I implies a mighty tuple II.
Moreover, we do not know how to q-define  
a mighty tuple II or V
from a mighty tuple I even though we know it is possible 
by Theorem \ref{THMPSpaceMightyClassification}.
Considering constraint languages with all constant relations 
from Lemma \ref{LEMTHMIdempotantClassification},
we can observe that a mighty tuple V cannot be derived from 
the relations 
   $\sigma(y_1,y_2)\vee (x\in B)$, 
    $\sigma(y_1,y_2)\vee (x\in C)$ and constant relations.
    Hence, it is not always true that 
a mighty tuple I implies a mighty tuple V, 
 but we can formulate the following conjecture.
    
\begin{conj}
$\QCSP(\Gamma)$ is PSpace-complete if and only if 
$\Gamma$ q-defines a mighty tuple II.
\end{conj}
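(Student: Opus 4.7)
The plan is to reduce the conjecture to showing that whenever $\Gamma$ q-defines a mighty tuple V, it also q-defines a mighty tuple II. Combined with Theorem \ref{THMPSpaceMightyClassification} (PSpace-completeness is equivalent to q-defining a mighty tuple II or V) and Lemma \ref{LEMMightyTupleTwoImplies} (II already implies PSpace-hardness), this would establish both directions of the conjecture. By Lemma \ref{LEMMightyTupleTwoThreeFourEquivalence} it is enough, and structurally simpler, to produce a mighty tuple IV from a mighty tuple V.

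Given a mighty tuple V $(Q, D)$, condition (4) of a mighty tuple IV, namely $\prescript{\zv}{}D + \prescript{\zv}{}Q^{\forall\forall} = \prescript{\zv}{}D$, follows immediately from the full-projection condition (2) of V. What remains is to q-define two unary relations $B, C \subseteq D$ satisfying $B + Q^{\forall\forall} = B$, $B + Q^{\forall} = D$, and $B \cap C = \varnothing$ for some $\zv$. My first attempt is to define $B$ as an iterated $Q^{\forall\forall}$-orbit: set $B^{(0)}(y) = \prescript{\zv}{}D(y)$ and $B^{(k+1)}(y) = \exists z\,(B^{(k)}(z) \wedge \prescript{\zv}{}Q^{\forall\forall}(z, y))$. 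The sequence is q-definable and stabilises after at most $|A|$ steps to a $\zv$-parameterized relation that is forward closed under $Q^{\forall\forall}$; a dual backward iterate would play the role of $C$. Nonemptiness for every $\zv$ should follow from the full-projection property of $Q^{\forall\forall}$, and $B + Q^{\forall} = D$ should follow from reflexivity of $Q^{\forall}$ on $D$ plus the fact that $B$ intersects every $Q^{\forall}$-class of $D$.

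The hard part is achieving $B \cap C = \varnothing$ for some $\zv$ while keeping both $B$ and $C$ nonempty everywhere. At the witness $\zv^{*}$ for the irreflexivity of $Q^{\forall\forall}$, the loopless digraph on $\prescript{\zv^{*}}{}D$ may still be a single strongly connected component, and then both the forward and the backward orbit collapse to all of $\prescript{\zv^{*}}{}D$, giving $B = C = \prescript{\zv^{*}}{}D$ and defeating condition (5). The real question is therefore whether a q-definable construction can always split $\prescript{\zv^{*}}{}D$ into two proper forward-closed pieces, using the non-universal fibres of $Q$ and not merely $Q^{\forall\forall}$ — perhaps by parameterising the iteration by an $\alpha \in A^{m}$ that at $\zv^{*}$ destroys the symmetry of $Q^{\forall\forall}$ and exposes a bias between the two sides. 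If no such uniform splitting is possible — for example if the mighty tuple V only records the existence of a loopless cycle in $Q^{\forall\forall}$ without letting $\Gamma$ distinguish its endpoints — then the conjecture should fail, and the plan would pivot to searching for an explicit counterexample: a constraint language on a small (3- or 4-element) domain that q-defines a mighty tuple V but for which the arc-consistency analysis of Section \ref{SECTIONFindingSolution} rules out any mighty tuple II.
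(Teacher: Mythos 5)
The statement you are addressing is explicitly labelled a \emph{conjecture} in the paper, and the paper offers no proof of it; indeed, just before stating it the author writes that it is ``still not clear whether a mighty tuple I implies a mighty tuple II'' and that they ``do not know how to q-define a mighty tuple II or V from a mighty tuple I even though we know it is possible'' (the latter only via the non-constructive, complexity-conditional Theorem~\ref{THMPSpaceMightyClassification}). So there is no paper proof to compare against, and your proposal should not be read as a finished argument — which, to your credit, you acknowledge by ending with a pivot to a possible counterexample search.

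That said, your reduction of the conjecture to ``a mighty tuple V q-defines a mighty tuple IV'' is exactly right: Theorem~\ref{THMPSpaceMightyClassification} (under the standing assumption $\mathrm{PSpace}\neq\Pi_2^P$) gives that PSpace-completeness is equivalent to q-defining a mighty tuple II or V, Lemma~\ref{LEMMightyTupleTwoThreeFourEquivalence} collapses II, III, IV to one target, and the hardness direction is already unconditional. Your computation that condition (4) of IV, $\prescript{\zv}{}D + \prescript{\zv}{}Q^{\forall\forall}=\prescript{\zv}{}D$, is forced by the full-projection condition (2) of V is also correct. And you have put your finger on precisely the obstruction the paper runs into: a nonempty $B$ with $B+Q^{\forall\forall}=B$ is forward-closed, so at the witness $\zv^*$ for irreflexivity a strongly connected loopless digraph $\prescript{\zv^*}{}Q^{\forall\forall}$ forces $B=\prescript{\zv^*}{}D$, leaving no room for a disjoint nonempty $C$. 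This is the very reason the paper's Lemma~\ref{LEMMightyTupleFiveImplies} produces only a mighty tuple I, with the extra $\Delta$ parameter $\delta$ ranging over an edge $(u,v)\in\prescript{\zv}{}R^{\kappa}$; the nontrivial classes $B_1,C_1$ there depend on $(u,v)$, and it is precisely the elimination of that dependence that the conjecture asks for. Your idea of breaking the symmetry using the $\alpha$-fibres of $Q$ rather than a secondary parameter is the natural thing to try, but neither you nor the paper show that it suffices. So: the gap in your proposal is real, but it coincides with the gap in the paper, which is exactly why this is stated as a conjecture and not a theorem.
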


\section{Finding a solution of  $\mathcal I_{R}$}\label{SECTIONFindingSolution}

\subsection{Definitions}

 We say that $u_{0}-C_{1}-u_{1}-\dots - C_{\ell}-u_{\ell}$ is
\emph{a path} in a CSP instance $\mathcal I$ if $C_{i}$ is a constraint of $\mathcal I$ and $u_{i},u_{i+1}\in\Var(C_{i})$
for every $i$.
The number $\ell$ is called \emph{the length} of the path.
We say that an instance is \emph{connected} if 
any two variables are connected by a path.
We say that an instance is \emph{a tree-instance} if it is connected and there is no path
$u_{0}-C_{1}-u_{1}-\dots -u_{\ell-1}-C_{\ell}-u_{\ell}$
such that $\ell\ge 2$, $u_{0} = u_{l}$, and all the constraints $C_{1},\ldots,C_{\ell}$ are different.

An instance $\mathcal I'$ is called \emph{a universal weakening} of 
$\mathcal I$ if $\mathcal I$ can be obtained from $\mathcal I'$
by replacing some constraint relations by 
their universal subsets over $\{R\}$.
We also denote it by $\mathcal I\useq \mathcal I'$.

A ($\zv$-parameterized) CSP instance $\mathcal I'$ is called \emph{a covering} of a 
($\zv$-parameterized) CSP instance $\mathcal I$ 
if there exists a mapping $\phi:\Var(\mathcal I')\to\Var(\mathcal I)$
such that for every constraint $S(u_1,\dots,u_t)$
of $\mathcal I'$ 
the constraint $S(\phi(u_1),\dots,\phi(u_t))$ is in 
$\mathcal I$.
We say that $\phi(u)$ is \emph{the parent of $u$}
and $u$ is \emph{a child} of $\phi(u)$.
The same child/parent terminology will also be applied to constraints.
An instance is called 
\emph{a tree-covering} if it is a covering and also a tree-instance.
Notice that reductions for an instance 
can be naturally extended to their 
coverings.

For a ($\zv$-parameterized) instance $\mathcal I$ and a ($\zv$-parameterized) reduction
$D^{(\top)}$
by $\mathcal I^{(\top)}$ we denote the instance 
whose constraints are reduced to $D^{(\top)}$.

All the variables of $\mathcal I_{R}$ can be drawn as a tree with 
a root $y_{0}$ and 
leaves $y_{n}^{a_1,\dots,a_n}$.
We assume that the root is at the top
and the leaves are at the bottom. Thus, whenever
we refer to \emph{a lowest}/\emph{highest} variable we mean
$y_{i}^{a_1,\dots,a_i}$ with the maximal/minimal $i$.
Also we say that a variable 
$y_{i}^{a_1,\dots,a_i}$ is from \emph{the $i$-th level}.

For a ($\zv$-parameterized) instance $\mathcal I$ and some variables 
$u_1,\dots,u_m\in\Var(\mathcal I)$ by 
$\mathcal I(u_1,\dots,u_{m})$ we denote the 
set of all tuples $(a_1,\dots,a_m)$ such that 
$\mathcal I$ has a solution with 
$u_{i} = a_i$ for every $i$.
Thus, $\mathcal I(u_{1},\dots,u_{m})$
defines an $m$-ary ($\zv$-parameterized) relation.
Notice that 
$\mathcal I(u_{1},\dots,u_{m})$ is q-definable over the relations 
appearing in $\mathcal I$.



Suppose $D^{(\top)}$ is a $\zv$-parameterized universal reduction 
of $\mathcal I_{R}$, that is, 
for every variable $u$
we have 
$D_{u}^{(\top)}\uuus D_{u}^{(\top,0)}$.
As we can repeat elements in the sequence $\uuus$ and 
any sequence longer than $|A|$ has repetitions,
we have $B\uuus_{|A|} C\Leftrightarrow B\uuus C$ for any $B,C\subseteq A$.
The sequence witnessing that 
$D_{u}^{(\top)} \uuus_{|A|} D_{u}^{(\top,0)}$
we denote by 
$$D_{u}^{(\top,|A|)}, D_{u}^{(\top,|A|-1)},\dots, D_{u}^{(\top,1)},D_{u}^{(\top,0)},$$
where $D_{u}^{(\top,|A|)}= D_{u}^{(\top)}$.
Also, notice that the reduction
$D^{(\top,i)}$ is defined independently on different variables, that is why we should not expect 
it to be 1-consistent. 

To simplify the explanation 
we give names to some constraints 
we will need later.
\begin{align*}
C_{S,\top}^{a_1,\dots,a_m} := \;\;\;\;\;&\left(
\mathcal S_{R}^{m}(y_0,y_1^{a_1},\dots,y_{m}^{a_1,\dots,a_{m}},
z_{a_1},\dots,z_{a_m}) 
\wedge \bigwedge\limits_{i=0}^{m} y_{i} \in D_{y_{i}^{a_1,\dots,a_i}}^{(\top)} \right)\\
C_{W,\top}^{a_1,\dots,a_m} := \;\;\;\;\;&\left(
\mathcal W_{R}^{m}(y_0,y_1^{a_1},\dots,y_{m}^{a_1,\dots,a_{m}},
z_{a_1},\dots,z_{a_m}) 
\wedge \bigwedge\limits_{i=0}^{m} y_{i} \in D_{y_{i}^{a_1,\dots,a_i}}^{(\top)} \right)\\
C_{W,\top,j}^{a_1,\dots,a_m} := \;\;\;\;\;&\left(
\exists y_{m}^{a_1,\dots,a_{m}} \; \mathcal W_{R}^{m}(y_0,y_1^{a_1},\dots,y_{m}^{a_1,\dots,a_{m}},
z_{a_1},\dots,z_{a_m}) 
\wedge \right.\\
\quad\quad\quad\quad\quad\quad&\quad\quad\quad\quad\quad\quad
\quad\quad\quad\quad\quad\quad\quad
\left.\bigwedge\limits_{i=0}^{m-1} y_{i} \in D_{y_{i}^{a_1,\dots,a_i}}^{(\top)} \wedge 
y_{m}^{a_1,\dots,a_{m}}\in D_{y_{m}^{a_1,\dots,a_m}}^{(\top,j)}\right)
\end{align*}

Notice that 
$C_{S,\top}^{a_1,\dots,a_m}$ and 
$C_{W,\top}^{a_1,\dots,a_m}$ have $m+1$ $y$-variables, but 
$C_{W,\top,j}^{a_1,\dots,a_m}$ has only $m$ $y$-variables.
Then $\mathcal I_{R}^{(\top)}$ is the 
instance 
consisting of the constraints 
$C_{S,\top}^{a_1,\dots,a_m}$, where 
$m\in\{0,1,\dots,n\}$ and 
$a_1,\dots,a_m\in A$.


By Lemma \ref{LEMUniversalSubuniverseImplies},
$C_{S,\top}^{a_1,\dots,a_m}\useq 
C_{W,\top}^{a_1,\dots,a_m}$
and 
$C_{W,\top,j+1}^{a_1,\dots,a_m}\useq
C_{W,\top,j}^{a_1,\dots,a_m}$.

\subsection{Auxiliary statements}\label{SUBSECTIONAuxiliaryStatements}

\begin{LEMUniversalSubuniverseImpliesLEM}
Suppose 
\begin{align*}
W(y_1,\dots,y_t) &= \exists u_1 \exists u_{2} \dots\exists u_{\ell} 
\;\;\;W_{1}(z_{1,1},\dots,z_{1,n_1})\wedge \dots\wedge W_{m}(z_{m,1},\dots,z_{m,n_m}),\\
    S(y_1,\dots,y_t) &= \exists u_1 \exists u_{2} \dots\exists u_{\ell} 
\;\;\;S_{1}(z_{1,1},\dots,z_{1,n_1})\wedge \dots\wedge S_{m}(z_{m,1},\dots,z_{m,n_m}),
\end{align*} 
where each $z_{i,j}\in\{y_1,\dots,y_t,u_1,\dots,u_{\ell}\}$, 
and $S_{i}\useq^\Sigma W_{i}$ for every $i$.
Then $S\useq^\Sigma W$.
\end{LEMUniversalSubuniverseImpliesLEM}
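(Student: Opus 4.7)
The plan is to construct, from the witness relations $R_i$ of $S_i \useq^\Sigma W_i$, a single q-definable relation $R$ that witnesses $S \useq^\Sigma W$. For each $i \in \{1, \ldots, m\}$, fix a q-definable $R_i \subseteq A^{n_i + s_i}$ with
$S_i(z_{i,1}, \ldots, z_{i,n_i}) = \forall x_{i,1} \cdots \forall x_{i,s_i}\, R_i(z_{i,1}, \ldots, z_{i,n_i}, x_{i,1}, \ldots, x_{i,s_i})$
and
$W_i(z_{i,1}, \ldots, z_{i,n_i}) = \forall x\, R_i(z_{i,1}, \ldots, z_{i,n_i}, x, \ldots, x)$.
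Setting $s := s_1 + \cdots + s_m$, the natural candidate is
\begin{align*}
R(y_1, \ldots, y_t, x_{1,1}, \ldots, x_{m,s_m}) := \exists u_1 \cdots \exists u_\ell\; \bigwedge_{i=1}^m R_i(z_{i,1}, \ldots, z_{i,n_i}, x_{i,1}, \ldots, x_{i,s_i}),
\end{align*}
obtained by splicing the $R_i$ into the common existential-conjunctive template shared by $S$ and $W$. Since q-definability is closed under conjunction and existential quantification, $R$ is q-definable from $\Sigma$.

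It then remains to verify the two identities $\forall x_{1,1} \cdots \forall x_{m,s_m}\, R = S$ and $\forall x\, R(y_1, \ldots, y_t, x, \ldots, x) = W$. The $\supseteq$ inclusions are immediate from the defining properties of the $R_i$: any $(u_1, \ldots, u_\ell)$ realising $\bigwedge_i S_i(z_{i,1}, \ldots, z_{i,n_i})$ makes every conjunct $R_i(z_{i,1}, \ldots, z_{i,n_i}, x_{i,1}, \ldots, x_{i,s_i})$ hold for all choices of $(x_{i,j})$, so the same tuple witnesses $R$ everywhere, and the analogous reasoning yields the diagonal inclusion for $W$. The $\subseteq$ inclusions constitute the substance of the proof: they require converting a statement of shape $\forall (x_{i,j}) \exists (u_k)\, \varphi$ into $\exists (u_k) \forall (x_{i,j})\, \varphi$, exploiting the precise shape of the $R_i$.

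This quantifier interchange is the principal obstacle, since in general $\forall \exists$ is strictly weaker than $\exists \forall$. I expect it to be handled either by induction on $m$, reducing to base cases ($m = 1$ or $\ell = 0$) where one of the quantifier blocks is trivial and the swap is direct, or by enriching $R$ with further q-definable coherence clauses that force the existential witness $(u_1, \ldots, u_\ell)$ to be chosen uniformly in the outer $(x_{i,j})$. Since each $S_i$ is itself q-definable from $R_i$ through a universal quantifier (which is permitted by the paper's notion of q-definability), such additional clauses are available without leaving the class of q-definable relations, and one obtains the required equalities by a careful accounting of which auxiliary constraints preserve both the full-intersection identity (giving $S$) and the diagonal identity (giving $W$).
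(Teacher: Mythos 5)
Your analysis correctly diagnoses the obstruction — the naive splicing requires a quantifier interchange that is false in general — but you stop at the diagnosis and only speculate about how it might be repaired; neither of your two suggested routes is the one that works, and as written the proposal has a gap. In fact your candidate $R$ is simply wrong even for the diagonal identity: take $A=\{0,1\}$, $t=0$, $m=\ell=1$, $R_1(u,x)=(u=x)$, so $W_1=S_1=\varnothing$ and hence $W=S=\mathrm{false}$, yet your $R(x)=\exists u\,(u=x)$ is identically true, so $\forall x\,R(x)=\mathrm{true}\neq W$. No amount of induction on $m$ rescues this, since it already fails at $m=1$.

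The paper's construction circumvents the quantifier swap entirely by two changes to the template you propose. First, it uses only $k=|A|$ fresh universal variables $x_1,\dots,x_k$ in total, and for each conjunct $R_i$ (of arity $n_i+k_i$) it takes the conjunction over \emph{all} maps $\phi\colon[k_i]\to[k]$, inserting $x_{\phi(1)},\dots,x_{\phi(k_i)}$ as the trailing coordinates. The point is that the single instantiation $\bar x=(1,\dots,|A|)$ then makes the tuples $(\phi(1),\dots,\phi(k_i))$ range over all of $A^{k_i}$, so one fixed assignment of $\bar x$ — and hence one fixed witness $\bar u$ — already forces every $S_i(z_i)$ to hold; no interchange of $\forall\bar x$ and $\exists\bar u$ is ever needed, because only one $\bar x$ is used. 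Second, it adds $\bigwedge_i W_i(z_i)$ as explicit conjuncts inside the existential block (these are q-definable from the $R_i$, so this stays within q-definability); this is what makes $\forall x\,R(\bar y,x,\dots,x)$ collapse to $W$ rather than to something strictly larger, and it kills the counterexample above. Both ingredients are essential and both are absent from your proposal; the ``further coherence clauses'' you gesture at are precisely these, but the $\phi$-conjunction trick over a bounded universal block is the idea you would not easily guess without having it pointed out.
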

\begin{proof}
Let 
$S_{i}\useq^{\Sigma} W_{i}$ be  witnessed by 
a relation $R_{i}\subseteq A^{n_i+k_i}$.
Let $k = |A|$.
Define the relation $R$ witnessing that $S\useq^{\Sigma} W$ by 
\begin{align*}
R(y_1,\dots,y_t,x_{1},\dots,&x_{k}) =\\ 
\exists u_1 \exists u_{2} \dots\exists u_{\ell} 
&\left(\bigwedge\limits_{i=1}^{m}\;
\bigwedge\limits_{\phi\colon[k_{i}]\to[k]}
R_{i}(z_{i,1},\dots,z_{i,n_i},x_{\phi(1)},\dots,x_{\phi(k_{i})})
\wedge
\bigwedge\limits_{i=1}^{m}
W_{i}(z_{i,1},\dots,z_{i,n_i})\right)
\end{align*}
Notice that $W_{i}$ is q-definable from $R_{i}$, hence 
$R$ is q-definable over $\Sigma$.
\end{proof}

\begin{lem}[\cite{zhuk2021strong}, Lemma 5.6]\label{LEMExistenceOfTreeCoverings}
Suppose $D^{(\top)}$ is a reduction for an instance $\mathcal I$, 
$D^{(\bot)}$ is an inclusion maximal 1-consistent reduction 
for $\mathcal I$
such that $D_{u}^{(\bot)}\subseteq D_{u}^{(\top)}$ for every $u$. Then for every variable 
$y\in\Var(\mathcal I)$ there exists a 
tree-covering $\Upsilon_{y}$ of $\mathcal I$ 
such that 
$\Upsilon_{y}^{(\top)}(y)$ defines $D_{y}^{(\bot)}$.
\end{lem}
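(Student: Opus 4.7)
The plan is to identify $D^{(\bot)}$ with the fixed point of the arc-consistency procedure starting at $D^{(\top)}$, and to unwind that procedure into a tree-covering by induction on its steps. A preliminary observation is that the coordinate-wise union of two 1-consistent reductions below $D^{(\top)}$ is again 1-consistent, so the inclusion-maximal such reduction $D^{(\bot)}$ is unique and equals the arc-consistency fixed point. Enumerate the arc-consistency steps as $D^{(\top)} = D^{(0)} \supsetneq D^{(1)} \supsetneq \dots \supsetneq D^{(N)} = D^{(\bot)}$, where at step $i$ some constraint $C_i = R_i(u_1^i, \dots, u_{s_i}^i) \in \mathcal I$ and some index $j_i$ witness $\proj_{u_{j_i}^i}(C_i^{(i)}) \subsetneq D^{(i)}_{u_{j_i}^i}$, and we set $D^{(i+1)}_{u_{j_i}^i} = \proj_{u_{j_i}^i}(C_i^{(i)})$ while leaving every other coordinate unchanged.

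The main induction builds, for every variable $u \in \Var(\mathcal I)$ and every stage $i$, a tree-covering $\Upsilon_{u,i}$ of $\mathcal I$ with a distinguished root mapped to $u$, satisfying $\Upsilon_{u,i}^{(\top)}(u) \subseteq D^{(i)}_u$. At stage $0$ take $\Upsilon_{u,0}$ to be the trivial one-vertex tree-covering at $u$ (or any single constraint touching $u$), so the bound is immediate. For the inductive step, when step $i$ shrinks the domain at $u_{j_i}^i$ via $C_i$, build $\Upsilon_{u_{j_i}^i, i+1}$ by introducing fresh variables $v_1, \dots, v_{s_i}$ mapped to $u_1^i, \dots, u_{s_i}^i$, placing $R_i(v_1, \dots, v_{s_i})$ as the root constraint, and grafting a fresh disjoint copy of $\Upsilon_{u_\ell^i, i}$ at each $v_\ell$ (identifying only the copy's root with $v_\ell$). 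For $u \neq u_{j_i}^i$, keep $\Upsilon_{u, i+1} = \Upsilon_{u, i}$. Unfolding the $(\top)$-projection at $v_{j_i}$ and invoking the inductive bounds on the attached sub-coverings gives $\Upsilon_{u_{j_i}^i, i+1}^{(\top)}(v_{j_i}) \subseteq \proj_{u_{j_i}^i}(C_i^{(i)}) = D^{(i+1)}_{u_{j_i}^i}$, as required.

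The opposite inclusion $\Upsilon_{y,N}^{(\top)}(y) \supseteq D^{(\bot)}_y$ holds for an arbitrary tree-covering: by induction on the tree size, any value $a \in D^{(\bot)}_{\phi(v)}$ at a vertex $v$ extends through each incident constraint by 1-consistency of $D^{(\bot)}$, and since the underlying graph is a tree, independent extensions on the disjoint subtrees combine without clashes into a global solution of $\Upsilon^{(\top)}$. Setting $\Upsilon_y := \Upsilon_{y, N}$ and combining the two inclusions yields the required equality $\Upsilon_y^{(\top)}(y) = D^{(\bot)}_y$. The delicate point, and the main source of care in the argument, is preserving the tree-covering property under grafting: each sub-covering must be a fresh disjoint copy attached to the root constraint only through its root vertex, so that the union of trees remains acyclic and only uses relations from $\mathcal I$; if variables from different sub-coverings were identified beyond the attachment points, cycles could appear and both the projection computation and the easy-direction extension argument would break.
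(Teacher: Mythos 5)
Your proof is correct. The paper does not actually prove this lemma---it cites it from \cite{zhuk2021strong}---so there is no in-paper argument to compare against, but your route is the standard one: the coordinate-wise union of two 1-consistent reductions below $D^{(\top)}$ is again 1-consistent, so the inclusion-maximal $D^{(\bot)}$ is unique and equals the arc-consistency fixed point; unwinding the arc-consistency trace and grafting fresh disjoint copies at each root constraint yields a tree-covering satisfying $\Upsilon_{y,N}^{(\top)}(y)\subseteq D_y^{(\bot)}$; and the reverse inclusion holds for any tree-covering since 1-consistency of $D^{(\bot)}$ lets any value in $D^{(\bot)}_{\phi(v)}$ be extended subtree-by-subtree without conflict. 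The one point that repays the explicit care you gave it is disjointness of the grafted copies: it is what keeps the construction acyclic and keeps the projections factoring coordinate-wise across subtrees.
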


The above lemma can be generalized for 
$\zv$-parameterized reductions as follows:

\begin{lem}\label{LEMMaximalReductionGeneral}
Suppose $D^{(\top)}$ is a $\zv$-parameterized reduction for a $\zv$-parameterized instance $\mathcal I$, 
$D^{(\bot)}$ is an inclusion maximal $\zv$-parameterized 1-consistent reduction 
for $\mathcal I$
such that $D^{(\bot)}_{u}\subseteq D^{(\top)}_{u}$ for every $u$. Then for every variable 
$y\in\Var(\mathcal I)$ there exists a 
tree-covering $\Upsilon_{y}$ of $\mathcal I$ 
such that 
$\prescript{\zv}{}\Upsilon_{y}^{(\top)}(y)$ defines $\prescript{\zv}{}D_{y}^{(\bot)}$
for every $\zv$.
\end{lem}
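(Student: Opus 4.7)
The plan is to mimic the proof of Lemma~\ref{LEMExistenceOfTreeCoverings} and carry the parameter $\zv$ through the construction as a silent parameter. The key observation is that the arc-consistency propagation rule respects specialization at $\zv$: if $R$ is $\zv$-parameterized and we restrict $\prescript{\zv}{}D_{u_i}$ by $\proj_{u_i}(\prescript{\zv}{}R(u_1,\dots,u_s) \wedge \bigwedge_j u_j\in \prescript{\zv}{}D_{u_j})$, the resulting restriction depends on $\zv$ only through the relations themselves, not through any extra choices. Hence, the single arc-consistency trace that computes $D^{(\bot)}$ from $D^{(\top)}$ works uniformly for every $\zv\in A^{|A|}$.

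I would build the tree-coverings by induction on the stage of arc-consistency. Let $D^{(0)} = D^{(\top)}$, and for each $k\ge 0$ define $D^{(k+1)}$ by fixing some constraint $C_k = R_k(u_1,\dots,u_s)$ of $\mathcal I$ and some index $i_k$, and setting
\[
\prescript{\zv}{}D^{(k+1)}_{u_{i_k}} \;=\; \proj_{u_{i_k}}\bigl(\prescript{\zv}{}R_k(u_1,\dots,u_s) \wedge \textstyle\bigwedge_{j=1}^{s} u_j \in \prescript{\zv}{}D^{(k)}_{u_j}\bigr),
\]
leaving all other domains unchanged. A standard argument shows that we can schedule the steps so that $D^{(k)}$ stabilizes (in at most $|A|\cdot|\Var(\mathcal I)|$ steps) at the inclusion-maximal 1-consistent reduction below $D^{(\top)}$, which by hypothesis is $D^{(\bot)}$. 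Simultaneously, I maintain, for every variable $y\in\Var(\mathcal I)$ and every stage $k$, a tree-covering $\Upsilon_y^{(k)}$ of $\mathcal I$ with a distinguished root (also called $y$) such that $\prescript{\zv}{}\Upsilon_y^{(k),(\top)}(y) = \prescript{\zv}{}D_y^{(k)}$ for every $\zv$.

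The base case $k=0$ is the trivial tree-covering consisting of the single vertex $y$ (with parent $y$ in $\mathcal I$); its only constraint is inherited from the restriction to $D^{(\top)}_y$. For the induction step, if stage $k+1$ shrinks $D_{u_{i_k}}$ via the constraint $C_k = R_k(u_1,\dots,u_s)$, I form $\Upsilon_{u_{i_k}}^{(k+1)}$ by taking a fresh copy of $C_k$ with its root identified with the $u_{i_k}$-variable, and gluing, at each of the remaining leaves $u_j$ ($j\ne i_k$), a disjoint fresh copy of $\Upsilon_{u_j}^{(k)}$ (via its root). All other tree-coverings at stage $k+1$ are inherited from stage $k$, except that whenever $\Upsilon_y^{(k)}$ contains a leaf originally labelled $u_{i_k}$, the associated tree is replaced by the newly extended $\Upsilon_{u_{i_k}}^{(k+1)}$. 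Freshness of copies keeps the result a tree-covering; the projection equation then follows directly from the definition of $D^{(k+1)}_{u_{i_k}}$ as a projection of the conjunction over $C_k$ with the previously computed domain restrictions.

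The only real obstacle is bookkeeping: one must make sure that the tree structure is preserved by always using disjoint fresh copies at each gluing, and that the update is applied coherently across all $\Upsilon_y^{(k)}$ so that the equation $\prescript{\zv}{}\Upsilon_y^{(k),(\top)}(y)=\prescript{\zv}{}D_y^{(k)}$ is maintained uniformly in $\zv$. Because the propagation rule is phrased entirely in terms of $\zv$-parameterized relations, the parameter never enters the combinatorics of the construction, so the passage from Lemma~\ref{LEMExistenceOfTreeCoverings} to the present lemma is essentially automatic. Setting $\Upsilon_y := \Upsilon_y^{(K)}$ for the final stage $K$ gives $\prescript{\zv}{}\Upsilon_y^{(\top)}(y) = \prescript{\zv}{}D_y^{(\bot)}$ for every $\zv$, as required.
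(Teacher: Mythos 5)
Your proposal takes a genuinely different route from the paper. The paper treats the cited Lemma~\ref{LEMExistenceOfTreeCoverings} as a black box: for each fixed $\zv$ it applies that lemma to obtain a tree-covering $\Upsilon_{v,\zv}$ whose $\top$-restriction projects at $v$ to $\prescript{\zv}{}D_v^{(\bot)}$, and then simply glues all $|A|^{|A|}$ of these trees together at the single common vertex $v$ to form $\Upsilon_v$. For the summands with the ``wrong'' parameter $\zv'\neq\zv$, the projection is still $\supseteq \prescript{\zv}{}D_v^{(\bot)}$ because $\prescript{\zv}{}D^{(\bot)}$ is 1-consistent and the summand is a tree, so the intersection is exactly $\prescript{\zv}{}D_v^{(\bot)}$. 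This is a four-line argument that never re-enters the internals of arc-consistency. You instead re-derive the tree-covering construction from scratch, running one arc-consistency trace uniformly in $\zv$. That could also work (and is more self-contained), but it is considerably more involved and, as written, has a genuine gap.

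The gap: when you shrink $D_{u_{i_k}}$ at stage $k+1$ via $C_k = R_k(u_1,\dots,u_s)$, your update
\[
D^{(k+1)}_{u_{i_k}} = \proj_{u_{i_k}}\bigl(R_k(u_1,\dots,u_s) \wedge \textstyle\bigwedge_{j=1}^{s} u_j \in D^{(k)}_{u_j}\bigr)
\]
includes the self-constraint $u_{i_k}\in D^{(k)}_{u_{i_k}}$. But the tree $\Upsilon_{u_{i_k}}^{(k+1)}$ you build is a fresh copy of $C_k$ with old trees $\Upsilon_{u_j}^{(k)}$ glued only at the non-root leaves $u_j$, $j\ne i_k$; nothing is glued at the root $u_{i_k}$. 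Therefore $\Upsilon_{u_{i_k}}^{(k+1),(\top)}(u_{i_k})$ enforces only $u_{i_k}\in D^{(\top)}_{u_{i_k}}$, not $u_{i_k}\in D^{(k)}_{u_{i_k}}$, and as soon as some variable is shrunk a second time the invariant $\prescript{\zv}{}\Upsilon_y^{(k),(\top)}(y)=\prescript{\zv}{}D_y^{(k)}$ can fail: the tree projects to a strictly larger set. The fix is to additionally glue a fresh copy of $\Upsilon_{u_{i_k}}^{(k)}$ at the root $u_{i_k}$ (gluing two trees at one common vertex still yields a tree), which restores the invariant. A minor further point: your stabilization bound $|A|\cdot|\Var(\mathcal I)|$ should carry a factor on the order of $|A|^{|A|}$ to guarantee that the shared schedule is fair for every $\zv$ simultaneously, though this does not affect correctness. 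With these repairs your argument goes through, but the paper's per-$\zv$ application of the cited lemma followed by gluing at a single vertex is substantially shorter and avoids re-proving the black-box lemma.
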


\begin{proof}
By Lemma \ref{LEMExistenceOfTreeCoverings}
for every $\zv\in A^{|A|}$ and $v$ there exists a tree-covering $\Upsilon_{v,\zv}$ such that 
$\prescript{\zv}{}{\Upsilon_{v,\zv}^{(\top)}}(v)$ defines 
$\prescript{\zv}{}{D_{v}^{(\bot)}}$.
Let 
$\Upsilon_{v}$ be $\bigwedge_{\zv\in A^{|A|}}\Upsilon_{v,\zv}$, 
where we assume that the only common variable of 
$\Upsilon_{v,\zv_1}$ and $\Upsilon_{v,\zv_2}$,
if $\zv_1\neq \zv_2$, is $v$.
Then $\Upsilon_{v}$ is a tree-covering of $\mathcal I$ 
and $\prescript{\zv}{}{\Upsilon_{v}^{(\top)}}(v)$ defines 
$\prescript{\zv}{}{D_{v}^{(\bot)}}$ for every $\zv$.
\end{proof}

We can always take a trivial reduction $\prescript{\zv}{}D_{u}^{(\top)} = A$ for every 
$\zv$ and $u$, and derive the following lemma.

\begin{lem}\label{LEMMaximalReductionSimple}
Suppose 
$D^{(\bot)}$ is an inclusion maximal $\zv$-parameterized 1-consistent reduction 
for a $\zv$-parameterized instance $\mathcal I$. Then for every variable 
$u\in\Var(\mathcal I)$ there exists a 
tree-covering $\Upsilon_{u}$ of $\mathcal I$ 
such that 
$\prescript{\zv}{}\Upsilon_{u}(u)$ defines $\prescript{\zv}{}D_{u}^{(\bot)}$
for every $\zv$.
\end{lem}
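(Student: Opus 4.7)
The plan is to derive this lemma as an immediate specialization of Lemma~\ref{LEMMaximalReductionGeneral}. The paragraph introducing the statement essentially tells us what to do: apply the more general result with the trivial top reduction. So the entire strategy is to identify a choice of $D^{(\top)}$ for which Lemma~\ref{LEMMaximalReductionGeneral} recovers exactly the conclusion we want, and then verify that the two hypotheses transfer without friction.

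Concretely, I would set $\prescript{\zv}{}{D_{u}^{(\top)}} = A$ for every variable $u \in \Var(\mathcal{I})$ and every parameter tuple $\zv \in A^{|A|}$. With this choice, the containment $\prescript{\zv}{}{D_{u}^{(\bot)}} \subseteq \prescript{\zv}{}{D_{u}^{(\top)}}$ required by Lemma~\ref{LEMMaximalReductionGeneral} holds automatically, since $A$ is the full domain. Moreover, restricting any constraint relation $R(u_1,\dots,u_s)$ to $D^{(\top)}$ leaves it unchanged, so for any covering $\Upsilon$ of $\mathcal{I}$ we have $\Upsilon^{(\top)} = \Upsilon$ as $\zv$-parameterized instances; in particular $\prescript{\zv}{}{\Upsilon^{(\top)}}(u) = \prescript{\zv}{}{\Upsilon}(u)$ for every $\zv$ and every variable $u$.

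Next, I would check that $D^{(\bot)}$, being inclusion-maximal among $\zv$-parameterized 1-consistent reductions of $\mathcal{I}$, is automatically inclusion-maximal among those 1-consistent reductions sitting below the trivial $D^{(\top)}$, because the latter class is the same class. Applying Lemma~\ref{LEMMaximalReductionGeneral} then yields, for each $u \in \Var(\mathcal{I})$, a tree-covering $\Upsilon_u$ of $\mathcal{I}$ such that $\prescript{\zv}{}{\Upsilon_u^{(\top)}}(u)$ defines $\prescript{\zv}{}{D_u^{(\bot)}}$ for all $\zv$. Combining this with the identification $\Upsilon_u^{(\top)} = \Upsilon_u$ from the previous step gives exactly $\prescript{\zv}{}{\Upsilon_u}(u) = \prescript{\zv}{}{D_u^{(\bot)}}$, which is the required conclusion.

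There is no real obstacle: all the work was already done in Lemma~\ref{LEMMaximalReductionGeneral} (which itself was reduced to the unparameterized Lemma~\ref{LEMExistenceOfTreeCoverings} by the standard trick of gluing together one tree-covering per choice of $\zv$ at the variable $u$). The only thing to be a bit careful about is the bookkeeping verification that ``inclusion-maximal 1-consistent'' matches in both statements, which is immediate once $D^{(\top)}$ is taken to be the full-domain reduction.
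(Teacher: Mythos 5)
Your proof is correct and is exactly what the paper does: the text immediately preceding the lemma says to take the trivial reduction $\prescript{\zv}{}D_{u}^{(\top)} = A$ for every $\zv$ and $u$ and specialize Lemma~\ref{LEMMaximalReductionGeneral}. The bookkeeping you carried out (that the maximality hypothesis and the conclusion transfer unchanged under the trivial top reduction) is the whole content.
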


\subsection{Finding a 1-consistent reduction}\label{SUBSECTIONFindingConsistentReduction}
In this section we prove that either there exists 
a 1-consistent reduction for $\mathcal I_{R}$, 
or there exists a polynomial size subinstance of $\mathcal I_{R}$ without a solution, or
we can build a mighty tuple III that guarantees PSpace-hardness. 
To be able to simplify our instance to 
a polynomial size we will need even stronger relations.
Put
\begin{align*}
    \mathcal {\widetilde W}_R^{m}
(y_0,\dots,y_m,x_1,\dots,x_m)
&=\bigwedge\limits_{i=0}^{m}
\mathcal {W}_R^{i}
(y_0,\dots,y_i,x_1,\dots,x_i),
\\
\mathcal {\widetilde S}_R^{m}
(y_0,\dots,y_m,x_1,\dots,x_m)
&=
\mathcal {S}_R^{m}
(y_0,\dots,y_m,x_1,\dots,x_m)
\wedge \bigwedge\limits_{i=0}^{m-1}
\mathcal {W}_R^{i}
(y_0,\dots,y_i,x_1,\dots,x_i).
\end{align*}

The following lemma follows immediately from the definition and 
Lemmas \ref{LEMSIsUniversalInW} and \ref{LEMUniversalSubuniverseImplies}.

\begin{lem}\label{LEMSTildeIsUniversalInWTilde}
Suppose $R\subseteq A^{2n+1}$, 
then  
$\mathcal {\widetilde{S}}_R^{m}\useq \mathcal {{\widetilde W}}_R^{m}$.
\end{lem}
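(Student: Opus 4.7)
The plan is to derive the statement directly from Lemma \ref{LEMSIsUniversalInW} and Lemma \ref{LEMUniversalSubuniverseImplies}, treating $\mathcal{\widetilde{W}}_R^m$ and $\mathcal{\widetilde{S}}_R^m$ as conjunctive formulas in which only the ``top'' conjunct differs. Unpacking the definitions, both relations are conjunctions over $i\in\{0,1,\dots,m\}$, with the same variable tuples $(y_0,\dots,y_i,x_1,\dots,x_i)$ used as arguments in the $i$th conjunct. For $i<m$ the $i$th conjunct is $\mathcal{W}_R^i$ in both formulas, while in position $i=m$ the conjunct is $\mathcal{W}_R^m$ on the $\widetilde{W}$ side and $\mathcal{S}_R^m$ on the $\widetilde{S}$ side.

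Next I would verify that each conjunct on the $\widetilde{S}$ side is a universal subset (with respect to $\{R\}$) of the corresponding conjunct on the $\widetilde{W}$ side. For indices $i<m$ this is trivial because the two conjuncts are literally the same relation and the relation $\useq^{\{R\}}$ contains the diagonal (take the witnessing relation to be the conjunct itself, ignoring the added dummy coordinates). For $i=m$ the non-trivial containment $\mathcal{S}_R^m\useq^{\{R\}}\mathcal{W}_R^m$ is exactly the content of Lemma \ref{LEMSIsUniversalInW}.

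Finally I would invoke Lemma \ref{LEMUniversalSubuniverseImplies} with the tuple of free variables $(y_0,\dots,y_m,x_1,\dots,x_m)$ and no existentially quantified auxiliary variables, matching the $i$th conjunct on the left with the $i$th conjunct on the right. The lemma then delivers $\mathcal{\widetilde{S}}_R^m\useq^{\{R\}} \mathcal{\widetilde{W}}_R^m$, which is the claim.

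I do not anticipate any real obstacle here; the only mildly subtle point is making sure the two formulas are written in the precise form required by Lemma \ref{LEMUniversalSubuniverseImplies} (same list of free variables, same list of existential variables, and matching variable tuples in each conjunct), which is immediate from the definitions since both $\mathcal{\widetilde{W}}_R^m$ and $\mathcal{\widetilde{S}}_R^m$ are quantifier-free conjunctions on the same free variables $(y_0,\dots,y_m,x_1,\dots,x_m)$.
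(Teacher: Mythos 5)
Your proof is correct and takes precisely the route the paper indicates: the paper states the lemma "follows immediately from the definition and Lemmas~\ref{LEMSIsUniversalInW} and~\ref{LEMUniversalSubuniverseImplies}," and you have simply spelled out the conjunct-by-conjunct matching ($\mathcal{W}_R^i\useq\mathcal{W}_R^i$ for $i<m$ via a dummy-coordinate witness, $\mathcal{S}_R^m\useq\mathcal{W}_R^m$ from Lemma~\ref{LEMSIsUniversalInW}) before applying Lemma~\ref{LEMUniversalSubuniverseImplies}. No gaps.
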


Denote 
\begin{align*}
    \mathcal {\widetilde I}_{R} =\bigwedge\limits_{\substack{ m\in\{0,1,\dots,n\}\\a_1,\dots,a_{m}\in A}}
&\mathcal {\widetilde S}_{R}^{m}(y_{0},y_{1}^{a_1},y_{2}^{a_{1},a_2},\dots,y_{m}^{a_1,\dots,a_{m}},
z_{a_1},\dots,z_{a_{m}})\wedge \\
&\bigwedge\limits_{\substack{ m\in\{0,1,\dots,n\}\\a_1,\dots,a_{m}\in A}}
\mathcal {\widetilde W}_{R}^{m}(y_{0},y_{1}^{a_1},y_{2}^{a_{1},a_2},\dots,y_{m}^{a_1,\dots,a_{m}},
z_{a_1},\dots,z_{a_{m}}).
\end{align*}
Notice that $\mathcal {\widetilde I}_{R}$ is obtained 
from $\mathcal I_{R}$
by adding constraints that 
are satisfied by any solution of $\mathcal I_{R}$.
Hence $\mathcal {\widetilde I}_{R}$ is satisfiable if and only 
if $\mathcal I_{R}$ is satisfiable.

First, we prove some technical lemmas 
showing the connection of the length of a path and 
the size of a tree-covering.

\begin{lem}\label{LEMIndexOfAVariable}
Suppose 
$\mathcal T$ is a $\zv$-parameterized tree-instance such that 
$\prescript{\zv}{}{\mathcal T}$ has no solutions for 
some $\zv$, but
if we remove any constraint from $\mathcal T$ we get an instance with a solution for every $\zv$. 
Then each variable appears in $\mathcal T$ at most $|A|$ times.
\end{lem}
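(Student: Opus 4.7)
The plan is to argue by contradiction via a pigeonhole on the value of a repeated variable combined with a gluing of partial solutions along the tree structure.

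Fix a $\zv$ such that $\prescript{\zv}{}{\mathcal T}$ has no solution, and suppose for contradiction that some variable $u$ occurs in $k>|A|$ distinct constraints $C_1,\dots,C_k$ of $\mathcal T$. The first step is to exploit the tree-instance hypothesis: because $\mathcal T$ has no cycle of distinct constraints, deleting the vertex $u$ from the variable-constraint incidence graph splits the remainder into exactly $k$ pairwise variable-disjoint pieces $T_1,\dots,T_k$, where $C_i\in T_i$. I would verify this directly from the forbidden-cycle definition: if two of the $T_i$'s shared a variable $v\ne u$, we would get a path $u-C_i-\dots-v-\dots-C_j-u$ through distinct constraints, contradicting the tree property.

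The second step uses minimality. For each $i\in\{1,\dots,k\}$, the instance $\mathcal T\setminus C_i$ is satisfiable for every $\zv$, in particular for our fixed $\zv$; pick such a solution $s_i$, and let $a_i=s_i(u)\in A$. Since $k>|A|$, pigeonhole yields indices $i\ne j$ with $a_i=a_j$. Now I define a candidate solution $s$ to $\mathcal T$ by setting $s(u)=a_i$, using $s_j$ on every variable of $T_i$, and using $s_i$ on every variable of any $T_\ell$ with $\ell\ne i$. This is well-defined by the variable-disjointness of the pieces established in step one.

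The third step is to verify that $s$ satisfies every constraint of $\mathcal T$. Any constraint $C_\ell$ with $\ell\ne i$ lies in $T_\ell$, its variables other than $u$ are assigned by $s_i$, and $s(u)=a_i=s_i(u)$, so $s$ agrees with $s_i$ on $\Var(C_\ell)$; since $s_i$ satisfies $\mathcal T\setminus C_i$ and $C_\ell\ne C_i$, we are done. For $C_i$ the same reasoning works with the roles of $s_i$ and $s_j$ swapped, using $s(u)=a_j=s_j(u)$. Constraints lying strictly inside a piece $T_\ell$ involve no variable outside that piece (by step one), so they are evaluated entirely by either $s_i$ or $s_j$, both of which satisfy all constraints in $\mathcal T\setminus\{C_i,C_j\}$. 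Hence $s$ is a solution of $\prescript{\zv}{}{\mathcal T}$, contradicting the choice of $\zv$. The only delicate point is the tree-structural claim in step one, but it follows cleanly from unpacking the definition of a tree-instance; the rest is routine bookkeeping.
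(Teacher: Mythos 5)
Your proof is correct, and it shares the paper's central move: split $\mathcal T$ at a repeated variable $u$ into pieces that pairwise meet only at $u$, then exploit the minimality hypothesis that each $\mathcal T\setminus C_i$ is satisfiable. Where you diverge is the final counting step. The paper works abstractly with the sets $B_i = \Phi_i(u)$ of $u$-values admitted by the $i$-th piece: at the bad $\zv$ the full intersection is empty while every $(s-1)$-wise intersection is nonempty, and $s\le|A|$ follows from that ``critical cover'' pattern (for each $j$ pick $x_j\in\bigcap_{i\ne j}B_i$; these witnesses are pairwise distinct). You instead pigeonhole directly on the values $a_i=s_i(u)$ and, once two coincide, exhibit an explicit solution of $\prescript{\zv}{}{\mathcal T}$ by gluing $s_i$ and $s_j$ along the tree decomposition. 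The two arguments are nearly dual — your $a_i$ is exactly one of the paper's witnesses — but yours replaces the abstract distinctness claim with a concrete construction, which makes the contradiction more tangible. Your verification of the decomposition (that two pieces sharing a variable $v\ne u$ would produce a forbidden cycle through $u$) and of the gluing (each constraint's variables land entirely in the domain of one of $s_i$, $s_j$) are both sound.
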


\begin{proof}
Let $u$ be some variable of $\mathcal T$.
Since $\mathcal T$ is a tree-instance, we can split it into tree-subinstances in $u$, that is, for any constraint $C$ containing $u$ 
we take the (maximal) tree-subinstance containing $C$ but not the other constraints containing $u$.
Let $\Phi_1,\dots,\Phi_{s}$ be 
the subinstances we obtain if we split the instance $\mathcal T$ in $u$. 
By 
$B_{i}$ we denote the $\zv$-parameterized unary relation defined by 
$\Phi_{i}(u)$.
Then there exists $\zv$ such that 
$\bigcap\limits_{i\in[s]} \prescript{\zv}{}B_{i}=\varnothing$.
Since removing any constraint from 
$\mathcal T$ gives an instance with a solution, 
$\bigcap\limits_{i\in[s]\setminus\{j\}} \prescript{\zv}{}B_{i}\neq\varnothing$
for every $j\in[s]$.
Therefore $s\le |A|$, which completes the proof.
\end{proof}

\begin{lem}\label{LEMMinPathInATree}
Suppose $\mathcal T$ is a tree-instance having $N\ge 2$ variables, the arity of every constraint of $\mathcal T$ is at most $n$, 
and every variable appears  at most $|A|$ times.
Then there exists a path in $\mathcal T$ of length at least 
$\lceil{\log_{k} (N\cdot (k-1)+1)}\rceil$, 
where $k = (n-1)\cdot |A|$.
\end{lem}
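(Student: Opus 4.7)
My plan is to root the bipartite incidence tree of $\mathcal T$ at an arbitrary variable $u_0$ and count the variables by their constraint-distance from $u_0$; the maximum such distance is the length of a path in $\mathcal T$ starting at $u_0$, so any lower bound on it translates immediately into the required path.

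For $i\ge 0$ let $V_i$ denote the set of variables at constraint-distance exactly $i$ from $u_0$, and set $\ell=\max\{i:V_i\ne\varnothing\}$. Because $\mathcal T$ is a tree-instance, each non-root variable has a unique parent constraint in the rooted tree and so at most $|A|-1$ remaining (child) constraints, while each non-root constraint has at most $n-1$ child variables; the root $u_0$ itself has up to $|A|$ child constraints. A short induction then gives $|V_0|=1$, $|V_1|\le |A|(n-1)=k$, and $|V_i|\le (|A|-1)(n-1)\cdot|V_{i-1}|\le k\cdot|V_{i-1}|$ for $i\ge 2$, so in particular $|V_i|\le k^{i}$ for every $i\ge 0$.

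Summing the resulting geometric series,
\[
N \;=\; \sum_{i=0}^{\ell}|V_i| \;\le\; 1+k+k^{2}+\cdots+k^{\ell} \;=\;\frac{k^{\ell+1}-1}{k-1},
\]
which rearranges to $k^{\ell+1}\ge N(k-1)+1$ and hence $\ell+1 \ge \log_{k}\!\bigl(N(k-1)+1\bigr)$, giving a bound of the form claimed up to the precise constant in the ceiling. To recover exactly the stated inequality $\ell\ge\lceil\log_{k}(N(k-1)+1)\rceil$, I would additionally choose $u_0$ to be an endpoint of a longest path in $\mathcal T$: by the standard argument, if $u_0$ lay in any second constraint of $\mathcal T$ then that constraint would furnish a strictly longer path, so $u_0$ appears in only one constraint. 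This improves $|V_1|\le k$ to $|V_1|\le n-1$ and, propagated through the same geometric sum with the tighter bound $|V_i|\le (n-1)\cdot[(|A|-1)(n-1)]^{i-1}$, shaves the extra factor of $k$ and yields $k^{\ell}\ge N(k-1)+1$.

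I expect this last refinement to be the main technical obstacle. The naive estimate $|V_i|\le k^{i}$ is loose by a factor of $k/(|A|-1)(n-1)$ per level, and closing the resulting off-by-one gap hinges on the careful choice of root as a leaf variable and on verifying that the tighter sum really does fit under $k^{\ell}/(k-1)$ in the required form; the rest of the argument is a direct geometric-series calculation.
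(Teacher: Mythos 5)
Your level-counting argument is a genuinely different route from the paper's. The paper proves the \emph{stronger} claim that a path of the stated length exists \emph{starting at any prescribed variable $u$}, by strong induction on $N$: it splits off the disjoint subtrees $\Phi_v$ hanging on the $\le k$ neighbours $v$ of $u$, descends into the largest one (which has at least $(N-1)/k$ variables), and prepends the hop $u - C_i - v$. That strengthened hypothesis is exactly what makes the ceiling identity close with no slack, since each inductive step both subtracts one variable \emph{and} divides the running count by $k$, giving $1+\bigl\lceil\log_k\bigl(\tfrac{N-1}{k}(k-1)+1\bigr)\bigr\rceil=\lceil\log_k(N(k-1)+1)\rceil$. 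Your uniform level bound $|V_i|\le k^{i}$ only yields $\ell\ge\lceil\log_k(N(k-1)+1)\rceil-1$, and you correctly flag that recovering the remaining unit is the crux.

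The leaf-rooting refinement you sketch does not actually close that gap. With $|V_1|\le n-1$ and $|V_i|\le(n-1)q^{i-1}$ for $q=(|A|-1)(n-1)$, the resulting upper bound $N\le 1+(n-1)\sum_{i=0}^{\ell-1}q^{i}$ need not lie below $\tfrac{k^{\ell}-1}{k-1}$: for $n=3$, $|A|=2$ (so $k=4$, $q=2$) and $\ell=2$ the refined sum allows $N\le 1+2+4=7$, whereas $\tfrac{k^{\ell}-1}{k-1}=5$, so the deduced inequality $k^{\ell}\ge N(k-1)+1$ fails. Trading a factor of $|A|$ for a factor of $n-1$ at level one does not shave off a full factor of $k$ uniformly over all $\ell$; the trailing geometry $q^{i}$ falls behind $k^{i}$ only multiplicatively, which is not the kind of gain that shifts the exponent by one. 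To land exactly on the stated constant you should abandon the static level sum and mimic the paper's greedy descent with the "starting at any $u$'' hypothesis, which is genuinely sharper than bounding level widths. (For what it is worth, the paper's own write-up is also silent about the base case $N'=1$, where a singleton $\Phi_v$ must carry a constraint for the claimed path of length $\lceil\log_k k\rceil=1$ to exist, but since the lemma is only invoked with $N$ doubly exponential in $|A|$, an additive $\pm 1$ is harmless in its application.)
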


\begin{proof}
We prove even a stronger claim by induction on $N$. 
We prove that there exists a path of length
$\lceil{\log_{k} (N\cdot (k-1)+1)}\rceil$ starting with any variable $u$.

Suppose 
$u$ appears in constraints 
$C_{1},C_{2},\dots,C_{s}$, where $s\le|A|$.
Let $V$ be the set of all variables 
appearing in $C_{1},\dots,C_{s}$ except for $u$.
Notice that every variable 
$v\in V$ appears in exactly one constraint $C_{i}$.
By $\Phi_{v}$ we denote the (maximal) tree-subinstance of $\mathcal T$ 
containing all the constraints with $v$ but 
constraints from $\{C_{1},\dots,C_{s}\}$.
Then $\mathcal T = 
\bigwedge_{v\in V} \Phi_{v}\wedge \bigwedge_{i\in[s]} C_{i}$, and 
$\Phi_{v_1}$ and $\Phi_{v_2}$ do not share any variables
if $v_{1}\neq v_{2}$.
Since $|V|\le s\cdot (n-1)\le k$, 
there exists $v\in V$ such that
$\Phi_{v}$ contains at least 
$(N-1)/k$ variables.
By the inductive assumption $\Phi_{v}$ 
contains a path starting with $v$ 
of length at least
$\lceil\log_{k} (((N-1)/k)\cdot (k-1)+1)\rceil$.
Then $\mathcal T$ has a path of length at least
$$
1+ \lceil\log_{k} (((N-1)/k)\cdot (k-1)+1)\rceil
=\lceil\log_{k} ((N-1)\cdot (k-1)+k)\rceil
=\lceil\log_{k} (N\cdot (k-1)+1)\rceil.
$$
\end{proof}

Suppose $\mathcal T$ is a tree-covering of $\mathcal{\widetilde I}_{R}$. 
We define several transformations of $\mathcal T$, which we will apply to make it easier.

\begin{enumerate}
    \item[(w)]
    replace a constraint ${\mathcal {\widetilde S}}_{R}^{i}(u_0,u_1,\dots,u_{i})$ by 
    ${\mathcal {\widetilde W}}_{R}^{i}(u_0,u_1,\dots,u_{i})$; 
    \item[(s)] if $u_{i}$ appears only once in $\mathcal T$ 
    in a constraint ${\mathcal {\widetilde W}}_{R}^{i}(u_0,u_1,\dots,u_{i-1}, u_i)$ then replace this constraint by 
    ${\mathcal {\widetilde S}}_{R}^{i}(u_0,u_1,\dots,u_{i-1})$;
    \item[(r)] remove some constraint;
    \item[(j)] suppose $u_{i}\in\Var(\mathcal T)$ appears in constraints
    $Q(u_0,u_1,\dots,u_i,\dots,u_j)$ and
    ${\mathcal {\widetilde W}}_{R}^{i}(v_{0},v_{1},\dots,v_{i-1},u_i)$,
    where $Q\in\{{\mathcal {\widetilde W}}_{R}^{j},{\mathcal {\widetilde S}}_{R}^{j}\}$ and 
    $j\ge i$. 
    Then we identify the variables 
    $v_{k}=u_{k}$ for every $k\in\{0,1,\dots,i-1\}$
    and remove 
    the constraint 
    ${\mathcal {\widetilde W}}_{R}^{i}(v_{0},v_{1},\dots,v_{i-1},u_i)$.
    
\end{enumerate}

Notice that transformations 
(w) and (r) make the instance weaker (more solutions) 
but (s) and (j) make the instance stronger (less solutions).
The next lemma shows that if 
a tree-covering without a solution cannot be simplified using the transformations and it is 
large enough then 
we can cut it into three pieces satisfying nice properties.

\begin{lem}\label{LEMSplitPathIntoThreeParts}
Suppose  
\begin{enumerate}
    \item $R\subseteq A^{2n+1}$, where $n>0$;
    \item $\mathcal T$ is a tree-covering
of $\mathcal{\widetilde I}_{R}$ having  at least $(n\cdot |A|)^{2^{{2|A|}^{|A|+1}}}$ variables;
    \item  $\prescript{\zv_{0}}{}{\mathcal T}$ has no solutions for some $\zv_{0}\in A^{|A|}$;
    \item applying transformations (w) and (r) to $\mathcal T$ gives 
    an instance with a solution for every $\zv\in A^{|A|}$;
    \item transformations (s) and (j) are not applicable.
\end{enumerate}
Then $\mathcal T$ can be divided into 3 nonempty connected parts
$\mathcal I_1$, $\mathcal I_2$, and $\mathcal I_3$ 
such that 
\begin{itemize}
    \item[(l1)] the only common variable of $\mathcal I_1$ and 
$\mathcal I_2$ is a variable $u$;
    \item[(r1)] the only common variable of 
$\mathcal I_2$ and $\mathcal I_3$ is a variable $v$;
    \item[(l2)] 
    $\prescript{\zv}{}{\mathcal I_1(u)} = (\prescript{\zv}{}{\mathcal I_1}\wedge \prescript{\zv}{}{\mathcal I_2})(v)$
    for every $\zv\in A^{|A|}$;
    \item[(r2)] 
    $\prescript{\zv}{}{\mathcal I_3(v)} = (\prescript{\zv}{}{\mathcal I_2}\wedge \prescript{\zv}{}{\mathcal I_3})(u)$
    for every $\zv\in A^{|A|}$;
    \item[(m)] $\mathcal I_2$ contains a constraint 
    $\mathcal {\widetilde S}_{R}^{i}(v_0,\dots,v_{i})$ with $i<n-1$.
\end{itemize} 
\end{lem}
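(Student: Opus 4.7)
The plan is to locate a sufficiently long path in the tree $\mathcal{T}$ and then apply a pigeonhole argument on the limited number of possible ``profiles'' at each variable along that path.

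First I would invoke Lemma \ref{LEMIndexOfAVariable}. Assumption 3 provides a specific $\zv_{0}$ such that $\prescript{\zv_{0}}{}{\mathcal{T}}$ has no solution, while assumption 4 applied to the transformation (r) alone (i.e., removing any single constraint) yields a satisfiable instance for every $\zv\in A^{|A|}$. Consequently, every variable of $\mathcal{T}$ occurs in at most $|A|$ constraints. Each constraint of $\mathcal{\widetilde I}_{R}$ has arity at most $n+1$ in $\Var(\mathcal{I}_{R})$ (the external $z$-parameters being excluded), so Lemma \ref{LEMMinPathInATree} with $k = n\cdot|A|$ produces a path in $\mathcal{T}$ of length at least $\lceil \log_{n|A|}(N \cdot (n|A|-1) + 1) \rceil \ge 2^{2|A|^{|A|+1}}$, where $N \ge (n|A|)^{2^{2|A|^{|A|+1}}}$ is the number of variables of $\mathcal{T}$.

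For each variable $w$ on this path, cutting $\mathcal{T}$ at $w$ yields two subtrees sharing only $w$; let $L_{w}$ be the $\zv$-parameterized unary relation defined by projecting onto $w$ the subtree that contains the first endpoint of the path, and similarly $R_{w}$ the projection of the opposite subtree. The total number of $\zv$-parameterized subsets of $A$ is $(2^{|A|})^{|A|^{|A|}} = 2^{|A|^{|A|+1}}$, so the number of possible profiles $(L_{w},R_{w})$ is $2^{2|A|^{|A|+1}}$. Since the path carries at least $2^{2|A|^{|A|+1}}+1$ variables, pigeonhole yields two path variables $u$ strictly preceding $v$ with $(L_{u},R_{u}) = (L_{v},R_{v})$. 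I would then set $\mathcal{I}_{1}$ to be the subtree at the cut at $u$ not containing $v$, $\mathcal{I}_{3}$ the subtree at the cut at $v$ not containing $u$, and $\mathcal{I}_{2}$ the remaining part (the path segment from $u$ to $v$ together with all subtrees branching off it). Conditions (l1) and (r1) are immediate. For (l2) observe that $\mathcal{I}_{1} \wedge \mathcal{I}_{2}$ is exactly the subtree at the cut at $v$ complementary to $\mathcal{I}_{3}$, so $\prescript{\zv}{}{(\mathcal{I}_{1}\wedge\mathcal{I}_{2})}(v) = \prescript{\zv}{}{L_{v}} = \prescript{\zv}{}{L_{u}} = \prescript{\zv}{}{\mathcal{I}_{1}}(u)$; condition (r2) follows symmetrically from $R_{u}=R_{v}$.

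The main obstacle will be verifying condition (m): that $\mathcal{I}_{2}$ actually contains a constraint $\mathcal{\widetilde S}_{R}^{i}(v_{0},\dots,v_{i})$ with $i<n-1$. My plan here is to refine the pigeonhole so that it is applied only to variables $w$ on the path whose ``left'' and ``right'' subtrees are separated by at least one such ``small-$\mathcal{\widetilde S}$'' constraint. The non-applicability of (s) and (j), together with the bound on variable multiplicity and the minimality condition 4, should be used to show that along any sufficiently long segment of the path such small-$\mathcal{\widetilde S}$ constraints must appear with positive density: otherwise either (s) or (j) could be applied, or removing some constraint would fail to restore satisfiability for every $\zv$, contradicting assumption 4. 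Making this density statement precise, and confirming that the refined pigeonhole still selects cut points $u,v$ whose middle piece contains a small-$\mathcal{\widetilde S}$ constraint, is the main technical point of the argument.
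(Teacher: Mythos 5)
Your first half matches the paper's approach closely: Lemma~\ref{LEMIndexOfAVariable}, Lemma~\ref{LEMMinPathInATree}, a profile-based pigeonhole on the path, and the verification of (l1), (r1), (l2), (r2) all go through essentially as in the paper (the paper counts \emph{nonempty} $\zv$-parameterized unary relations, $(2^{|A|}-1)^{|A|^{|A|}}$ rather than $2^{|A|^{|A|+1}}$, which is the version that fits the numerics when you remember that only the $s-1$ interior path variables are usable cut points; your count is off by a couple but that is trivially repaired by the nonemptiness observation).

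The genuine gap is condition (m), and the ``positive density of small-$\widetilde{\mathcal S}$ constraints along the path'' plan is the wrong idea: there is no reason such constraints should appear on the path at all, let alone with positive density, and you cannot control how far apart the pigeonhole collision indices are. What actually works is a \emph{local} case analysis on the single constraint $C_{i+1}$ incident to the cut point $u=u_i$, exploiting the fact that all leaf-constraints of the irreducible tree are $\widetilde{\mathcal S}^j_R$ with $j<n-1$ (a $\widetilde{\mathcal W}$-leaf-constraint would allow (s) or could be removed keeping unsatisfiability, contradicting assumption 4, and $j\ge n-1$ would make (w) a no-op, again contradicting assumption 4). Then: if $C_{i+1}$ is itself an $\widetilde{\mathcal S}^j_R$, it is the required constraint; if $C_{i+1}=\widetilde{\mathcal W}^j_R(v_0,\dots,v_j)$ with $v_j\notin\{u_i,u_{i+1}\}$, then $v_j$ cannot be a leaf (else (s) applies), so a subtree hangs off $v_j$ inside $\mathcal I_2$, and any of its leaf-constraints is the required $\widetilde{\mathcal S}^j_R$; and if $v_j\in\{u_i,u_{i+1}\}$ then transformation (j) applies to $C_{i+1}$ together with $C_i$ (resp.\ $C_{i+2}$), contradicting assumption 5. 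So there is no need to refine the pigeonhole at all; condition (m) falls out of the transformations you already listed, applied at the single constraint adjacent to the collision.
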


\begin{proof}
Let $N = |\Var(\mathcal T)|$, $k = n\cdot|A|$.
Notice that the arity of constraints in $\mathcal {\widetilde I}_{R}$ is 
at most $n+1$ if we ignore $\zv$-variables.
Since we cannot remove any constraints (property 4), Lemmas
\ref{LEMIndexOfAVariable} and \ref{LEMMinPathInATree} imply that there exists 
a path 
$u_0 - C_{1} -u_1 -C_2-\dots-C_{s}-u_{s}$ 
with 
$$s\ge \lceil{\log_{k} (N\cdot (k-1)+1)}\rceil\ge \log_{k} N
\ge 2^{{2|A|}^{|A|+1}}={(2^{|A|})}^{2|A|^{|A|}}> 1+{(2^{|A|}-1)}^{2|A|^{|A|}}.$$
For every $i\in[s-1]$ 
we split 
$\mathcal T$ into two tree-coverings 
$\Phi_{i}$ and $\Psi_{i}$ as follows.
If we split $\mathcal T$ in the variable $u_{i}$, then 
the part containing $C_{i+1}$ goes to $\Psi_{i}$ 
and all the remaining parts go to $\Phi_{i}$.
Thus, the only common variable 
of $\Phi_{i}$ and $\Psi_{i}$ is $u_{i}$.
Let $B_{i}$ and $C_{i}$ be the $\zv$-parameterized unary relations 
defined by $\Phi_{i}(u_{i})$ and $\Psi_{i}(u_{i})$,
respectively.

Since the transformation (r) cannot be applied, 
$\prescript{\zv}{}B_{i}$ and $\prescript{\zv}{}C_{i}$ are nonempty for every $i\in[s-1]$ and $\zv\in A^{|A|}$.
Since $\prescript{\zv_{0}}{}{\mathcal T}$ has no solutions, 
$\prescript{\zv_{0}}{}B_{i}\cap \prescript{\zv_{0}}{}C_{i}=\varnothing$ for every $i\in[s-1]$. 

There are exactly ${(2^{|A|}-1)}^{|A|^{|A|}}$ distinct nonempty 
$\zv$-parameterized unary relations.
Since $s> 1+{(2^{|A|}-1)}^{2|A|^{|A|}}$, there should be 
$1\le i<i'\le s-1$ such that 
$\prescript{\zv}{}B_{i} = \prescript{\zv}{}B_{i'}$ and 
$\prescript{\zv}{}C_{i} = \prescript{\zv}{}C_{i'}$
for every $\zv\in A^{|A|}$.
By $\mathcal I_{1}$ we denote $\Phi_{i}$, 
by $\mathcal I_{3}$ we denote $\Psi_{i'}$,
and by $\mathcal I_{2}$ we denote the intersection of $\Psi_{i}$ and 
$\Phi_{i'}$.
Then conditions 
(l1), (l2), (r1), and (r2) are satisfied
for $u = u_{i}$ and $v= u_{i'}$.
It remains to prove property (m).

We say that a variable is \emph{a leaf} if it occurs only in one constraint.
We say that a constraint is \emph{a leaf-constraint} if
only one of its variables is not a leaf.
Suppose a leaf-constraint is 
 ${\mathcal {\widetilde W}}_{R}^{j}(u_0,u_1,\dots,u_{j})$ 
 for some $j$. 
 If its nonleaf variable is $u_{j'}$, where $j'<j$,
 then 
 we can use the transformation (s).
If its nonleaf variable is $u_{j}$,
then from the definition 
of ${\mathcal {\widetilde S}}_{R}^{i}$ and ${\mathcal {\widetilde W}}_{R}^{i}$
we derive that this leaf-constraint can be removed from the instance not changing the property that it has no solutions for some $\zv$.
Both situations contradict our assumptions that 
(r) and (s) are not applicable.
Hence, any leaf-constraint of $\mathcal T$
is of the form 
${\mathcal {\widetilde S}}_{R}^{j}(v_0,v_1,\dots,v_{j})$.
Notice that $j$ must be smaller than $n-1$
because 
otherwise ${\mathcal {\widetilde S}}_{R}^{j}={\mathcal {\widetilde W}}_{R}^{j}$ and the transformation (w) does not really change the instance
and can always be applied.
Consider several cases:


Case 1. The constraint $C_{i+1}$ is 
${\mathcal {\widetilde S}}_{R}^{j}(v_0,v_1,\dots,v_{j})$ for some $j$.
Then $C_{i+1}$ is the required constraint to satisfy (m).

Case 2. The constraint $C_{i+1}$ is 
${\mathcal {\widetilde W}}_{R}^{j}(v_0,v_1,\dots,v_{j})$
where $v_{j}\neq u_{i}$ and $v_{j}\neq u_{i+1}$.
If $v_{j}$ is a leaf, we can apply the transformation (s).
Otherwise, consider a part of $\mathcal I_{2}$ containing $v_{j}$. This part must contain a leaf-constraint, which implies property (m).

Case 3. The constraint $C_{i+1}$ is 
${\mathcal {\widetilde W}}_{R}^{j}(v_0,v_1,\dots,v_{j-1},u_{i})$.
Then we apply transformation (j) to $C_{i}$ and $C_{i+1}$, which contradicts our assumptions.

Case 4. The constraint $C_{i+1}$ is 
${\mathcal {\widetilde W}}_{R}^{j}(v_0,v_1,\dots,v_{j-1},u_{i+1})$.
Then we apply transformation (j) to $C_{i+1}$ and $C_{i+2}$, which again contradicts our assumptions.
\end{proof}

\begin{cor}\label{CORBigTreeImpliesMightyTupleI}
Suppose  
\begin{enumerate}
    \item $R\subseteq A^{2n+1}$, where $n>0$;
    \item $\mathcal T$ is a tree-covering
of $\mathcal{\widetilde I}_{R}$ with the minimal number of variables 
such that 
$\prescript{\zv}{}{\mathcal T}$ has no solutions for some 
$\zv$;
\item $|\Var(\mathcal T)|\ge (n\cdot |A|)^{2^{{2|A|}^{|A|+1}}}$.
\end{enumerate}
Then $R$ q-defines a mighty tuple III.
\end{cor}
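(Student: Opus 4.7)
The plan is to apply Lemma \ref{LEMSplitPathIntoThreeParts} to a mildly refined version of $\mathcal T$ and extract a mighty tuple III from the resulting three-part decomposition.

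First I would verify the five hypotheses of Lemma \ref{LEMSplitPathIntoThreeParts}. Hypotheses 1--3 are exactly the standing assumptions on $\mathcal T$. Hypothesis 5 (that the transformations (s) and (j) are not applicable) follows from minimality of $|\Var(\mathcal T)|$, since both strictly reduce the variable count while preserving the property that $\prescript{\zv}{}{\mathcal T}$ has no solution for some $\zv$. For hypothesis 4, I would refine $\mathcal T$ by iteratively applying (w) to every $\widetilde{S}$-constraint for which (w) preserves the no-solution property; this terminates, does not change the variable set, and still yields a minimum-variable tree-covering. After the refinement, applying (w) to any remaining $\widetilde{S}$-constraint gives solutions for every $\zv$ (by construction), and applying (r) does likewise, because otherwise some connected component of the result would be a strictly smaller tree-covering without a solution for some $\zv$, contradicting minimality. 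Property (m) of the lemma then guarantees a distinguished constraint $\mathcal{\widetilde S}_R^i(v_0,\dots,v_i)$ with $i<n-1$ inside $\mathcal I_2$.

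Next I would build the mighty tuple. Set $\prescript{\zv}{}B := \prescript{\zv}{}\mathcal I_1(u)$ and $\prescript{\zv}{}C := \prescript{\zv}{}\mathcal I_3(v)$. By Lemma \ref{LEMSTildeIsUniversalInWTilde}, $\mathcal{\widetilde S}_R^i\useq\mathcal{\widetilde W}_R^i$, so there exist $s$ and a relation $P$, q-definable from $R$, with $\mathcal{\widetilde S}_R^i(\cdots) = \forall x_1\dots\forall x_s\, P(\cdots,x_1,\dots,x_s)$ and $\mathcal{\widetilde W}_R^i(\cdots) = \forall x\, P(\cdots,x,\dots,x)$. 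Let $\mathcal I_2^{P}$ be obtained from $\mathcal I_2$ by replacing the distinguished constraint $\mathcal{\widetilde S}_R^i(v_0,\dots,v_i)$ with $P(v_0,\dots,v_i,x_1,\dots,x_s)$, and treat $\alpha=(x_1,\dots,x_s)$ as a new parameter. Define $\prescript{\zv}{}Q^{\alpha}(u,v) := \prescript{\zv}{}{\mathcal I_2^{P}}(u,v)$. Then $\prescript{\zv}{}Q^{\forall\forall}$ equals $\prescript{\zv}{}{\mathcal I_2}(u,v)$ (the original $\mathcal I_2$ with $\mathcal{\widetilde S}_R^i$) and $\prescript{\zv}{}Q^{\forall}$ equals $\mathcal I_2$ with $\mathcal{\widetilde S}_R^i$ replaced by $\mathcal{\widetilde W}_R^i$. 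Since $P$, $\mathcal{\widetilde S}_R^j$, and $\mathcal{\widetilde W}_R^j$ are all q-definable from $R$, so are $Q$, $B$, and $C$.

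Finally I would verify the five conditions of mighty tuple III. Condition 1 follows from the proof of Lemma \ref{LEMSplitPathIntoThreeParts}: the split-point unary relations are nonempty for every $\zv$ because (r) cannot be applied. Conditions 2 and 3 unfold as $\prescript{\zv}{}B+\prescript{\zv}{}Q^{\forall\forall}=(\mathcal I_1\wedge \mathcal I_2)(v)$ and $\prescript{\zv}{}Q^{\forall\forall}+\prescript{\zv}{}C=(\mathcal I_2\wedge \mathcal I_3)(u)$, which equal $B$ and $C$ respectively by properties (l2) and (r2). Condition 4 uses the maximally $\widetilde{W}$-ified choice: applying (w) to the distinguished $\mathcal{\widetilde S}_R^i$-constraint yields an instance with a solution for every $\zv$, and that solution projects to a pair in $\prescript{\zv}{}Q^{\forall}\cap(\prescript{\zv}{}B\times\prescript{\zv}{}C)$. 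Condition 5 is immediate: using (r2), $\prescript{\zv_0}{}{\mathcal T}(u) = \prescript{\zv_0}{}B\cap(\mathcal I_2\wedge \mathcal I_3)(u) = \prescript{\zv_0}{}B\cap \prescript{\zv_0}{}C$, which is empty by hypothesis. The main obstacle I anticipate is setting up the refinement so that all five hypotheses of Lemma \ref{LEMSplitPathIntoThreeParts} hold simultaneously, as the variable-minimality argument (ruling out (s), (j), and meaningful (r)) must coexist with the maximal-$\widetilde{W}$ choice needed for the (w) part of hypothesis 4; once that is in place, the rest is bookkeeping through the $\forall/\forall\forall$ semantics.
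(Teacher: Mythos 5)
Your high-level plan is the paper's: refine $\mathcal{T}$ so that Lemma~\ref{LEMSplitPathIntoThreeParts} applies, split into $\mathcal{I}_1$, $\mathcal{I}_2$, $\mathcal{I}_3$, set $B = \mathcal{I}_1(u)$ and $C = \mathcal{I}_3(v)$, and extract a parameterized relation $Q$ from $S = \mathcal{I}_2(u,v)$ and $W = \mathcal{I}_2'(u,v)$. Your verification that the five hypotheses of that lemma can be arranged, and your checks of conditions 1, 4, and 5 of a mighty tuple III, are fine.

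The gap is in the construction of $Q$. You replace the single distinguished $\mathcal{\widetilde S}_R^i$-constraint inside $\mathcal{I}_2$ by its local witness $P(\cdots,\alpha)$, set $Q^\alpha(u,v) := \mathcal{I}_2^P(u,v)$, and assert $Q^{\forall\forall} = S$ and $Q^{\forall} = W$; this is false in general. Writing $w_1,\dots,w_\ell$ for the internal existentially quantified variables of $\mathcal{I}_2(u,v)$, one has $S = \exists w_1\cdots\exists w_\ell\,\bigl[(\forall\alpha\, P(\cdots,\alpha))\wedge\text{rest}\bigr]$ while $Q^{\forall\forall} = \forall\alpha\,\exists w_1\cdots\exists w_\ell\,\bigl[P(\cdots,\alpha)\wedge\text{rest}\bigr]$; the quantifier swap gives only $S\subseteq Q^{\forall\forall}$, with strict containment whenever the internal witnesses must depend on $\alpha$, and likewise only $W\subseteq Q^{\forall}$. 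With a possibly larger $Q^{\forall\forall}$ the closure conditions 2 and 3 of a mighty tuple III are unverified, since your argument for them passes through the false identity $Q^{\forall\forall}=S$ before appealing to (l2) and (r2). The correct step is to invoke Lemma~\ref{LEMUniversalSubuniverseImplies}: since $\mathcal{I}_2\useq\mathcal{I}_2'$ as instances, that lemma gives $S\useq W$, and its proof supplies a witness that does not naively lift the local $\forall$-block --- it conjoins $R_i(z_{i,1},\dots,z_{i,n_i},x_{\phi(1)},\dots,x_{\phi(k_i)})$ over all index maps $\phi:[k_i]\to[|A|]$, so that evaluating the $|A|$ outer parameters at the enumerating tuple $(1,\dots,|A|)$ forces one choice of internal witnesses to realize every $S_i$ at once. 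Use that $Q$ (for which $Q^{\forall\forall}=S$ and $Q^{\forall}=W$ genuinely hold) and the rest of your verification goes through.
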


\begin{proof}
We apply transformations 
(w) and (r) to  $\mathcal T$ while we can maintain the condition 
that $\prescript{\zv}{}{\mathcal T}$ has no solutions for some 
$\zv$.
Also, we apply transformations (s) and (j) when applicable.
Notice that we cannot apply these transformations forever, and we never increase the number of variables.
Thus, the obtained 
tree-instance $\mathcal T'$ still contains the minimal number of variables 
and satisfies the same conditions.

By Lemma \ref{LEMSplitPathIntoThreeParts} we can split 
$\mathcal T'$ into 3 parts $\mathcal I_1$,
$\mathcal I_2$, and
$\mathcal I_3$ satisfying conditions 
(l1), (r1), (l2), (r2), and (m).
Let 
$\mathcal I_{2}'$ be obtained from $\mathcal I_2$ 
by replacing the constraint 
$\mathcal {\widetilde S}_{R}^{i}(v_0,\dots,v_{i})$ coming from 
condition (m)
by 
$\mathcal {\widetilde W}_{R}^{i}(v_0,\dots,v_{i})$.
Notice that 
$\mathcal I_{1}\wedge \mathcal I_{2}'\wedge \mathcal I_{3}$ 
has a solution for every $\zv$.
Let
$\mathcal I_{1}(u)$ define $B$, 
$\mathcal I_{3}(v)$ define $C$,
$\mathcal I_{2}(u,v)$ define $S$, and 
$\mathcal I_{2}'(u,v)$ define $W$.
By Lemma \ref{LEMSTildeIsUniversalInWTilde},
$\mathcal I_{2}\useq \mathcal I_2'$, hence by Lemma \ref{LEMUniversalSubuniverseImplies}
we have 
$S\useq W$.
Let $Q$ be a $(\zv,\alpha)$-parameterized relation q-definable from $R$ such that 
$Q^{\forall}  = W$ and 
$Q^{\forall\forall} =S$.
Let us show that 
$(Q,B,C)$ is a mighty tuple III.
Condition 1 follows from the fact that 
the transformation (r) gives an instance with a solution 
for every $\zv$.
Condition 2 follows from (l2), 
condition 3 follows from (r2), 
condition 4 follows from the existence of 
a solution of $\mathcal I_{1}\wedge \mathcal I_{2}'\wedge \mathcal I_{3}$ for every $\zv$.
Condition 5 follows from the fact that 
$\prescript{\zv_{0}}{}{\mathcal T'} = 
\prescript{\zv_{0}}{}{\mathcal I_{1}}\wedge \prescript{\zv_{0}}{}{\mathcal I_{2}}\wedge \prescript{\zv_{0}}{}{\mathcal I_{3}}$
has no solutions for some $\zv_{0}$.
\end{proof}

Now we are ready to prove two theorems from Section \ref{SECTIONMainProof}.

\begin{THMFindSmallTreeTHM}
Suppose $R\subseteq A^{2n+1}$.
Then one of the following conditions holds:
\begin{enumerate}
    \item there exists a $\zv$-parameterized
nonempty 1-consistent reduction for 
$\mathcal I_{R}$;
    \item  
there exists a subinstance  
$\mathcal J\subseteq \mathcal I_{R}$ with at most 
$(n\cdot |A|)^{2^{{2|A|}^{|A|+1}}}$ variables not having a solution for some $\zv\in A^{|A|}$;
\item there exists
    a mighty tuple III q-definable from $R$.
\end{enumerate}
\end{THMFindSmallTreeTHM}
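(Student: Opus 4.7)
The plan is to analyse the maximal $\zv$-parameterized $1$-consistent reduction of $\mathcal{\widetilde I}_R$ and, when it fails to be nonempty, extract a small witness via Lemma~\ref{LEMMaximalReductionSimple}. Observe first that every constraint of $\mathcal I_R$ also appears in $\mathcal{\widetilde I}_R$, so any nonempty $\zv$-parameterized $1$-consistent reduction of $\mathcal{\widetilde I}_R$ is automatically one for $\mathcal I_R$. Hence, if case~$1$ fails, then $\mathcal{\widetilde I}_R$ also has no nonempty $1$-consistent reduction.

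Assume case~$1$ fails and let $D^{(\bot)}$ be the inclusion-maximal $\zv$-parameterized $1$-consistent reduction of $\mathcal{\widetilde I}_R$. By the previous observation $\prescript{\zv_{0}}{}{D^{(\bot)}_u} = \varnothing$ for some variable $u$ and some $\zv_{0} \in A^{|A|}$. Applying Lemma~\ref{LEMMaximalReductionSimple} to $\mathcal{\widetilde I}_R$ yields a tree-covering $\Upsilon_u$ of $\mathcal{\widetilde I}_R$ with $\prescript{\zv}{}\Upsilon_u(u) = \prescript{\zv}{}{D^{(\bot)}_u}$, so $\prescript{\zv_{0}}{}\Upsilon_u$ is unsatisfiable. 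Among all tree-coverings of $\mathcal{\widetilde I}_R$ that are unsatisfiable for some $\zv$, choose one, $\mathcal T$, with the minimum number of variables.

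Now split on the size of $\mathcal T$. If $|\Var(\mathcal T)| \ge (n\cdot |A|)^{2^{{2|A|}^{|A|+1}}}$, then the hypotheses of Corollary~\ref{CORBigTreeImpliesMightyTupleI} are met and we obtain a mighty tuple~III q-definable from $R$, giving case~$3$. Otherwise, let $\phi \colon \Var(\mathcal T) \to \Var(\mathcal{\widetilde I}_R)$ be the parent map of the covering and let $\mathcal T_\phi$ be its image: $\mathcal T_\phi$ is a subinstance of $\mathcal{\widetilde I}_R$ with at most $|\Var(\mathcal T)|$ variables, and pulling back any hypothetical solution of $\prescript{\zv_{0}}{}{\mathcal T_\phi}$ along $\phi$ would yield one of $\prescript{\zv_{0}}{}\mathcal T$; hence $\prescript{\zv_{0}}{}{\mathcal T_\phi}$ is unsatisfiable as well.

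It remains to translate $\mathcal T_\phi$ from a subinstance of $\mathcal{\widetilde I}_R$ into one of $\mathcal I_R$ on the same variable set. Recall that $\mathcal{\widetilde W}_R^m = \bigwedge_{i\le m} \mathcal W_R^i$ and $\mathcal{\widetilde S}_R^m = \mathcal S_R^m \wedge \bigwedge_{i<m} \mathcal W_R^i$, each $\mathcal W_R^i$ acting on the appropriate prefix of the variable tuple, while Lemma~\ref{LEMSIsUniversalInW} yields the inclusion $\mathcal S_R^i \subseteq \mathcal W_R^i$ after projecting out the universal quantifier. Replacing every $\mathcal{\widetilde W}_R^m$- or $\mathcal{\widetilde S}_R^m$-constraint of $\mathcal T_\phi$ by the conjunction of the corresponding $\mathcal S_R^i$-constraints on its prefixes produces an instance $\mathcal J$ using only constraints of $\mathcal I_R$, introducing no new variables, and strengthening $\mathcal T_\phi$; thus $\prescript{\zv_{0}}{}\mathcal J$ is still unsatisfiable, witnessing case~$2$. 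The main subtlety lies precisely in this bridge between the two instances: the combinatorial path-splitting machinery (Lemma~\ref{LEMSplitPathIntoThreeParts} and Corollary~\ref{CORBigTreeImpliesMightyTupleI}) requires the extra $\mathcal{\widetilde W}$-constraints to produce a mighty tuple, whereas the conclusion demands a subinstance of the sparser $\mathcal I_R$, and the inclusion $\mathcal S_R^i \subseteq \mathcal W_R^i$ is what makes the translation possible without any variable blow-up.
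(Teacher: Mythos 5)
Your proof is correct and follows essentially the same route as the paper: take the inclusion-maximal $1$-consistent reduction for $\mathcal{\widetilde I}_R$, use Lemma~\ref{LEMMaximalReductionSimple} to get a tree-covering with no solution when the reduction is empty, pick a minimum-variable such tree-covering, invoke Corollary~\ref{CORBigTreeImpliesMightyTupleI} in the large case, and extract a small subinstance of $\mathcal I_R$ in the small case. The only cosmetic difference is in the small case: the paper directly defines $\mathcal J$ as the set of $\mathcal I_R$-constraints all of whose variables have a child in $\mathcal T$ and pulls a hypothetical solution back through the parent map, whereas you pass first through the image $\mathcal T_\phi$ of $\mathcal T$ and then strengthen it using $\mathcal S_R^i\subseteq\mathcal W_R^i$; the two presentations are logically equivalent and use the same ingredients.
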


\begin{proof}
Let us consider a maximal $\zv$-parameterized 
1-consistent reduction for $\mathcal {\widetilde I_{R}}$.
By Lemma \ref{LEMMaximalReductionSimple} 
either this reduction is nonempty, 
or 
there exists a $\zv$-parameterized tree-covering $\mathcal T$ of 
$\mathcal {\widetilde I_{R}}$ such that 
the instance 
$\prescript{\zv}{}{\mathcal T}$ has no solutions for some $\zv$.
In the first case the same reduction is also a
nonempty 1-consistent reduction for $\mathcal I_{R}$,
and we satisfied 
condition 1. 

In the second case we consider a tree-covering $\mathcal T$ with the minimal number of variables. 
If $|\Var(\mathcal T)|\ge (n\cdot |A|)^{2^{{2|A|}^{|A|+1}}}$,
then
Corollary \ref{CORBigTreeImpliesMightyTupleI}
implies that a mighty tuple III is q-definable from $R$.
If $|\Var(\mathcal T)|< (n\cdot |A|)^{2^{{2|A|}^{|A|+1}}}$,
then let 
$\mathcal J$ be the subinstance of $\mathcal I_{R}$ 
containing all the constraints $C$ of $\mathcal I_{R}$ 
such that a child of each variable of $C$ appears in $\mathcal T$.
Notice that if $\prescript{\zv}{}{\mathcal J}$ has a solution then 
$\prescript{\zv}{}{\mathcal T}$ has a solution.
Thus, $\prescript{\zv}{}{\mathcal J}$ has no solutions 
for some $\zv$.
Hence $|\Var(\mathcal J)|\le |\Var(\mathcal T)|< (n\cdot |A|)^{2^{{2|A|}^{|A|+1}}}$.
\end{proof}

\begin{THMNonemptyReductionIsZigzagTHM}
Suppose $R\subseteq A^{2n+1}$, 
$D^{(\top)}$ is an inclusion-maximal $\zv$-parameterized 
1-consistent nonempty reduction 
for $\mathcal I_{R}$.
Then 
$D^{(\top)}$ is a universal reduction.
\end{THMNonemptyReductionIsZigzagTHM}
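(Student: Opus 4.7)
My plan is, for each variable $u=y_m^{a_1,\dots,a_m}\in\Var(\mathcal I_R)$, to exhibit a chain of universal subsets from $D^{(\top)}_u$ down to $D^{(\top,0)}_u$. The three ingredients I will combine are the universal-subset relation $\mathcal S_R^m\useq\mathcal W_R^m$ from Lemma~\ref{LEMSIsUniversalInW}, the transfer of $\useq$ through primitive-positive definitions (Lemma~\ref{LEMUniversalSubuniverseImplies}), and the maximality of $D^{(\top)}$.

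First I introduce the intermediate $\zv$-parameterized unary relation $\widetilde D^{(\top)}_u$, defined as the set of $b\in A$ for which there exist $y_0,y_1^{a_1},\dots,y_{m-1}^{a_1,\dots,a_{m-1}}$ in the respective $D^{(\top)}$-sets satisfying $\mathcal S_R^m(y_0,\dots,y_{m-1}^{a_1,\dots,a_{m-1}},b,z_{a_1},\dots,z_{a_m})$. Applying Lemma~\ref{LEMUniversalSubuniverseImplies} with $\mathcal S_R^m\useq\mathcal W_R^m$ at the $\mathcal S_R^m$-atom and the trivial reflexive $\useq$ at each unary atom, while existentially quantifying $y_0,\dots,y_{m-1}^{a_1,\dots,a_{m-1}}$, I obtain $\widetilde D^{(\top)}_u\useq D^{(\top,0)}_u$. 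Moreover, the 1-consistency of $D^{(\top)}$ applied to the constraint $C_{S,\top}^{a_1,\dots,a_m}$ yields $D^{(\top)}_u=D^{(\top)}_u\cap\widetilde D^{(\top)}_u$, i.e.\ $D^{(\top)}_u\subseteq\widetilde D^{(\top)}_u$. Hence it suffices to prove the equality $D^{(\top)}_u=\widetilde D^{(\top)}_u$ for every variable $u$; that already yields the whole chain in a single $\useq$-step.

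For this equality I exploit the maximality of $D^{(\top)}$. Define a $\zv$-parameterized reduction $E$ by $E_v:=\widetilde D^{(\top)}_v$ for every $v\in\Var(\mathcal I_R)$. Entry-wise $E\supseteq D^{(\top)}$, and each $E_v$ is nonempty since $D^{(\top)}_v$ is. I plan to prove $E$ is 1-consistent; together with the maximality hypothesis this forces $E=D^{(\top)}$, which is exactly the desired equality. Verifying 1-consistency at a variable $v=y_j^{a_1,\dots,a_j}$ on a constraint $\mathcal S_R^{m'}(y_0,\dots,y_{m'}^{a_1,\dots,a_{m'}},z_{a_1},\dots,z_{a_{m'}})$ with $j\le m'$ amounts to extending a Skolem witness of $b\in E_v$, which comes from $\mathcal S_R^j$, to a full satisfying assignment of the constraint with all other $y_i$-values ranging in their respective $E$-sets.

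The main obstacle is precisely this extension argument. The Skolem-like data witnessing $b\in E_v$ via $\mathcal S_R^j$ indexes all of $y_{j+1},\dots,y_n$ by the two universal variables $x,x'$ of $\mathcal S_R^j$, whereas $\mathcal S_R^{m'}$ for $m'>j$ expects the intermediate $y_{j+1},\dots,y_{m'}$ to be provided as free parameters and only $y_{m'+1},\dots,y_n$ to be controlled by its two universals. Bridging this mismatch is the technical heart of the proof. My plan is to invoke the auxiliary relation $Q$ from the proof of Lemma~\ref{LEMSIsUniversalInW}, which already performs a comparable quantifier rearrangement, and to combine it with the tree-covering characterization from Lemma~\ref{LEMMaximalReductionSimple} so that the required intermediate $y_i$-values can be chosen in their $E$-sets. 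Should this direct compose-and-extend argument fall short of the fully general claim, the fallback is to build a longer $\uuus$-chain through relations with partially unfolded quantifier prefixes interpolating between $\mathcal S_R^m$ and $\mathcal W_R^m$, and to propagate 1-consistency along each step.
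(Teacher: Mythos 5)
Your opening moves are sound: defining $\widetilde D^{(\top)}_u$ by existentially quantifying $y_0,\dots,y_{m-1}^{a_1,\dots,a_{m-1}}$ over their $D^{(\top)}$-restrictions inside $\mathcal S_R^m$, deducing $\widetilde D^{(\top)}_u\useq D^{(\top,0)}_u$ from Lemma~\ref{LEMSIsUniversalInW} via Lemma~\ref{LEMUniversalSubuniverseImplies}, and reading off $D^{(\top)}_u\subseteq\widetilde D^{(\top)}_u$ from 1-consistency — all correct. But the plan then hinges entirely on the equality $D^{(\top)}_u=\widetilde D^{(\top)}_u$, i.e.\ that the reduction $E$ with $E_v:=\widetilde D^{(\top)}_v$ is 1-consistent, so that the $\uuus$-chain collapses to a single $\useq$-step. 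This is where the argument breaks. The set $\widetilde D^{(\top)}_u$ is the projection of the single ``upward'' constraint $\mathcal S_R^m$ after restricting only the ancestors of $u$; it completely ignores the constraints $\mathcal S_R^{m'}$, $m'>m$, in which $u$ appears as an intermediate variable and whose descendants are restricted to their $D^{(\top)}$-sets. Those downward constraints genuinely shrink $D^{(\top)}_u$ beyond what $\mathcal S_R^m$ alone imposes: the Skolemized $y_{m+1},\dots,y_n$ inside $\mathcal S_R^m$ are free, whereas the explicit $y_{m+1}^{a_1,\dots,a_{m+1}},\dots$ in $\mathcal S_R^{m'}$ are pinned to their $D^{(\top)}$-domains. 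So in general $\widetilde D^{(\top)}_u\supsetneq D^{(\top)}_u$, and the 1-consistency check of $E$ at a middle variable $y_j$ of a constraint $\mathcal S_R^{m'}$ with $j<m'$ fails — precisely the step you label ``the technical heart.'' Your proposed fixes (reusing the relation $Q$ from Lemma~\ref{LEMSIsUniversalInW}, appealing to Lemma~\ref{LEMMaximalReductionSimple}) do not resolve this, because the witness for $b\in\widetilde D^{(\top)}_{y_j}$ coming from $\mathcal S_R^j$ only yields $\mathcal W_R^{j+1}$-type information about the induced $y_{j+1}$, not the $\mathcal S_R^{j+1}$-type information one would need to propagate.

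The paper does not try to prove a single $\useq$-step; the conclusion $D^{(\top)}_u\uuus D^{(\top,0)}_u$ genuinely requires a chain. Its proof starts from a tree-covering $\Upsilon_0$ with $\Upsilon_0(u)=D^{(\top)}_u$ (Lemma~\ref{LEMMaximalReductionSimple}) and applies the transformations (w), (s), (j) to produce tree-coverings $\Upsilon_0,\dots,\Upsilon_t$, where each step either strengthens the instance (shrinking $\Upsilon_j(u)$) or replaces it by a universal weakening. Intersecting the resulting unary relations with $D^{(\top,0)}_u$ produces the chain $E_0=D^{(\top)}_u\useq E_1\useq\cdots\useq E_t=D^{(\top,0)}_u$. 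Your ``fallback'' of building ``a longer $\uuus$-chain through relations with partially unfolded quantifier prefixes'' gestures in this direction, but as stated it is an unfulfilled promise rather than a proof: you would need to specify the interpolating relations, verify each successive $\useq$-link via Lemma~\ref{LEMUniversalSubuniverseImplies}, and ensure the chain terminates at $D^{(\top,0)}_u$. As it stands, the proposal proves $D^{(\top)}_u\subseteq\widetilde D^{(\top)}_u\useq D^{(\top,0)}_u$ — true but insufficient, since $\subseteq$ is not $\useq$ — and the route to closing that gap is left open.
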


\begin{proof}
Choose some variable 
$u\in\Var(\mathcal I_{R})$ and 
prove that $D^{(\top)}_{u}\uuus D^{(\top,0)}_{u}$.

By Lemma \ref{LEMMaximalReductionSimple}, 
there exists a tree-covering $\Upsilon_{0}$ of $\mathcal I_{R}$ 
such that 
$\prescript{\zv}{}\Upsilon_{0}(u)$ defines $\prescript{\zv}{}D_{u}^{(\top)}$
for every $\zv$.
We apply the following transformations 
to $\Upsilon_{0}$ similar to the transformations we used before:
\begin{enumerate}
    \item[(w)]
    replace a constraint ${\mathcal {S}}_{R}^{i}(u_0,u_1,\dots,u_{i})$ by 
    ${\mathcal {W}}_{R}^{i}(u_0,u_1,\dots,u_{i})$; 
    \item[(s)] if $u_{i}$ appears only once in the instance 
    in a constraint ${\mathcal W}_{R}^{i}(u_0,u_1,\dots,u_{i-1}, u_i)$, and $u_{i}\neq u$; then replace the constraint by 
    ${\mathcal S}_{R}^{i}(u_0,u_1,\dots,u_{i-1})$;
    \item[(j)] suppose $u_{i}\in\Var(\mathcal T)$ appears in constraints    
    ${\mathcal W}_{R}^{i}(u_0,u_1,\dots,u_{i-1},u_i)$ and
    ${\mathcal W}_{R}^{i}(v_{0},v_{1},\dots,v_{i-1},u_i)$,
    and 
    $u\notin\{v_1,\dots,v_{i-1}\}$; 
    then we identify the variables 
    $v_{k}=u_{k}$ for every $k\in\{0,1,\dots,i-1\}$
    and remove 
    the constraint 
    ${\mathcal W}_{R}^{i}(v_{0},v_{1},\dots,v_{i-1},u_i)$.
\end{enumerate}

Notice that transformation
(w) makes the instance weaker (more solutions) 
but (s) and (j) make the instance stronger (less solutions).

We apply transformations 
(w), (s), and (j) in any order and 
obtain a sequence 
$\Upsilon_{0},\Upsilon_{1},\dots,\Upsilon_{t}$
of 
tree-coverings of $\mathcal I_{R}$.
Notice that at least one transformation is applicable 
unless the lowest variable of $\Upsilon_{t}$ is $u$
and $u$ appears exactly once.
Let the constraint containing $u$ be 
${\mathcal W}_{R}^{i}(u_0,u_1,\dots,u_{i-1},u)$.
Since $D^{(\top)}$ is a 1-consistent reduction
and $\Upsilon_{t}$ is a tree-covering, 
$D_{u_{k}}^{(\top)}\subseteq \Upsilon_{t}(u_{k})$
for every $k\in\{0,1,\dots,i-1\}$.
Hence, 
$\Upsilon_{t}(u)\supseteq D_{u}^{(\top,0)}$.
Let 
the $\zv$-parameterized unary relation 
$C_{j}$ be defined by 
$\Upsilon_{j}(u)$
for $j=0,1,2,\dots,t$.
By the construction and Lemma \ref{LEMUniversalSubuniverseImplies}
we have $C_{j}\supseteq C_{j+1}$ or
$C_{j}\useq C_{j+1}$ for every $j$.
Put $E_{j} = C_{j}\cap C_{j+1}\cap \dots\cap C_{t}\cap D_{u}^{(\top,0)}$ for  
$j=0,1,2,\dots,t$.
Since the reduction $D^{(\top)}$ is 1-consistent and 
each $\Upsilon_{j}$ is a tree-covering, 
we have $C_{j}\supseteq D_{u}^{(\top)}$.
Hence, $E_{0}=C_{0} = \Upsilon_{0}(u) = D_{u}^{(\top)}$
and $E_{t} = D_{u}^{(\top,0)}$.
By Lemma \ref{LEMUniversalSubuniverseImplies}
$E_{j}\useq E_{j+1}$ for every $j\in\{0,1,\dots,t\}$.
Thus, $D_{u}^{(\top)}\uuus D_{u}^{(\top,0)}$
and $D^{(\top)}$ is a universal reduction.
\end{proof}



\subsection{The existence of a universal subset}\label{SUBSECTIONUniversalSubuniverseExists}

In this section we prove that 
for any 1-consistent $\zv$-parameterized universal reduction 
$D^{(\top)}$ 
of $\mathcal I_{R}$ 
there exists a $\zv$-parameterized unary relation $B$ and 
a variable $y_{i}^{a_1,\dots,a_{i}}$ such that
$B\us D_{y_{i}^{a_1,\dots,a_{i}}}^{(\top)}$.

For a sequence $a_1,\dots,a_m$, where 
$m\in\{0,1,\dots,n\}$ and it can be empty,
we put
$${\mathcal I^{a_1,\dots,a_m}_{R}}^{(\top)}
=\bigwedge\limits_{i=m}^{n}\;\;\bigwedge\limits_{a_{m+1},\dots,a_i\in A}
C_{S,\top}^{a_1,\dots,a_i}
.$$
Thus, ${\mathcal I^{a_1,\dots,a_m}_{R}}^{(\top)}$ is the part of 
$\mathcal I_{R}^{(\top)}$ containing the variable 
$y_{m}^{a_1,\dots,a_{m}}$.


\begin{THMFindUniversalSubuniverseTHM}
Suppose $R\subseteq A^{2n+1}$, $\mathcal I_{R}$ has no solutions for some $\zv$, 
$D^{(\top)}$ is a $\zv$-parameterized
universal 1-consistent reduction for 
$\mathcal I_{R}$.
Then one of the following conditions holds:
\begin{enumerate}
    \item 
    there exists a variable $u$ of $\mathcal I_{R}$ and
    a $\zv$-parameterized nonempty unary relation $B$ such that 
    $B\us D_{u}^{(\top)}$;
   \item  there exists a mighty tuple V q-definable from $R$.
\end{enumerate}
\end{THMFindUniversalSubuniverseTHM}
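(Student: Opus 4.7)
The plan is to argue by contrapositive: suppose case~(1) fails, so for every variable $u$ of $\mathcal I_R$ every $\zv$-parameterized nonempty $B$ with $B\useq D_u^{(\top)}$ satisfies $B = D_u^{(\top)}$, and build a mighty tuple V q-definable from $R$. The candidate binary relation $Q$ comes from a ``doubled'' version of $\mathcal I_R^{(\top)}$: take two copies sharing the $\zv$-parameters $z_1,\dots,z_{|A|}$ but with disjoint internal variables, glue them at a single carefully chosen variable $u$, and let the $z$-parameters of the second copy (a fresh vector of $|A|$ variables) play the role of $\alpha$. Then $Q$ is the $(\zv,\alpha)$-parameterized binary relation obtained by existentially quantifying out everything except the two outer copies of $u$; it is q-definable from $R$. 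Set $D = D_u^{(\top)}$, which is nonempty everywhere because $D^{(\top)}$ is nonempty and 1-consistent.

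The mighty-tuple-V axioms split into an easy part and the crux. When all $\alpha$-coordinates are equal (the universalization defining $Q^{\forall}$), the two copies collapse and 1-consistency of $D^{(\top)}$ gives a joint evaluation with both copies of $u$ equal to any prescribed $d\in\prescript{\zv}{}D$, so $\prescript{\zv}{}Q^{\forall}$ is reflexive on $\prescript{\zv}{}D$. The projection axioms $\proj_1(\prescript{\zv}{}Q^{\forall\forall})=\proj_2(\prescript{\zv}{}Q^{\forall\forall})=\prescript{\zv}{}D$ follow analogously: for any $d\in D_u^{(\top)}$ one copy accommodates $u=d$ while 1-consistency handles the other copy at any $\alpha$. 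For the crucial irreflexivity, consider the $\zv$-parameterized unary relation $B(y) := Q^{\forall\forall}(y,y)$. By construction the defining formulas of $\prescript{\zv}{}Q^{\forall}$ (restricted to the diagonal) and of $\prescript{\zv}{}Q^{\forall\forall}$ on the diagonal fit the definition of $\useq^{\{R\}}$, so $B\useq D_u^{(\top)}$. If $B$ were nonempty at every $\zv$, the failure of case~(1) would force $B = D_u^{(\top)}$ everywhere, and then at the $\zv$ for which $\mathcal I_R$ has no solution the doubled instance would be satisfiable with both copies of $u$ equal to some $d\in D_u^{(\top)}$; projecting onto one copy yields a solution of $\mathcal I_R^{(\top)}$ at that $\zv$, contradicting the hypothesis of the lemma. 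Hence $B$ is empty at some $\zv$, which is exactly the required irreflexivity of $Q^{\forall\forall}$.

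The main obstacle is selecting $u$ and designing the gluing so that (a) 1-consistency of $D^{(\top)}$ really does propagate across the glue, delivering the reflexivity and projection axioms, and (b) a diagonal solution of the doubled instance genuinely projects to a solution of $\mathcal I_R^{(\top)}$. In particular one may need to double not just at a single vertex but along the whole path from the root $y_0$ to $u$, choosing $u$ so that $D_u^{(\top)}$ is ``maximally refined.'' A further technical point is aligning the $\alpha$-label in the definition of $Q$ with the exact form required by $\useq^{\{R\}}$, so that reflexivity of $Q^{\forall}$ corresponds precisely to the ``$x_1=\dots=x_s=x$'' collapse in that definition.
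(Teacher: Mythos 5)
The proposal takes a genuinely different route from the paper, but contains gaps that would block it.

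The paper does \emph{not} double the whole instance. It picks the highest level $m$ at which ${\mathcal I_R^{c_1,\dots,c_m}}^{(\top)}$ is unsatisfiable for some $c_1,\dots,c_m$ and $\zv$, runs a chain of local weakening/strengthening transformations on that subinstance, tracks a ``subdirectness'' property along the chain, and then either reads off a strict universal subset $B\us D_u^{(\top)}$ directly (Case~1, Subcase~2A), or — after further splitting one variable $u$ of a covering into two copies $u,u'$ — obtains the mighty tuple V from the $\zv$-parameterized binary relation defined by that covering on $(u,u')$ (Subcase~2B). The maximality of $m$ is what supplies satisfiability of the deeper subinstances, which is in turn what delivers the reflexivity and projection axioms of the mighty tuple V.

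Your doubling construction has two concrete problems. First, you place the second copy's $z$-parameters in the role of $\alpha$ and assert that under $Q^\forall$ (i.e.\ $\alpha=(x,\dots,x)$) ``the two copies collapse.'' They do not: forcing $\alpha$ to be a \emph{constant} tuple $(x,\dots,x)$ sets the second copy's parameters to a fixed constant, which is not the same as identifying them with the first copy's $\zv$. So the reflexivity of $Q^\forall$ does not follow; in fact, 1-consistency of $D^{(\top)}$ by itself is only a local property and gives no guarantee that the full doubled instance (at any particular $\alpha$) admits a solution with both copies of $u$ set to a given $d\in\prescript{\zv}{}{D_u^{(\top)}}$. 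Second, the claimed $\useq$-relationship $B(y):=Q^{\forall\forall}(y,y)\useq D_u^{(\top)}$ requires that the diagonal of $Q^\forall$ equal $D_u^{(\top)}$ and that a witness relation of the specific form required by the definition of $\useq^{\{R\}}$ exists; neither is established, and the former seems to require exactly the reflexivity that is missing. You rightly sense these difficulties in your last paragraph (``double not just at a single vertex but along the whole path''), which is essentially a step toward the paper's use of tree-coverings and splitting a single variable — but the proposal as written does not close those gaps, and the paper's way of choosing the maximal failing level $m$ and exploiting the subdirectness property is the key mechanism that makes the argument go through.
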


\begin{proof}
Since $\mathcal I_{R}$ has no solutions
for some $\zv$, 
${\mathcal I_{R}}^{(\top)}$ also does not have 
solutions for some $\zv$.
Consider maximal $m$ such that 
${\mathcal I_{R}^{c_1,\dots,c_{m}}}^{(\top)}$ has no solutions for some 
$c_1,\dots,c_m\in A$ and $\zv$.
We fix $c_1,\dots,c_m$
and denote $\mathcal I_0= {\mathcal I_{R}^{c_1,\dots,c_{m}}}^{(\top)}$.
Then we apply the following transformations to the instance $\mathcal I_{0}$ 
while possible
to obtain a sequence of instances
$\mathcal I_{0},\mathcal I_1,\dots,\mathcal I_{T}$.

\begin{enumerate}
    \item[(w)] replace the constraint $C_{S,\top}^{a_1,\dots,a_i}$
    by $C_{W,\top}^{a_1,\dots,a_i}$;
    \item[(e)] if 
    a variable $y_{i}^{a_1,\dots,a_i}$, where $i>m$, appears only once in the instance in a constraint 
    $C_{W,\top}^{a_1,\dots,a_i}$,
    then replace $C_{W,\top}^{a_1,\dots,a_i}$ by $C_{W,\top,|A|}^{a_1,\dots,a_i}$;
    \item[(z)] replace $C_{W,\top,j}^{a_1,\dots,a_i}$ by 
    $C_{W,\top,j-1}^{a_1,\dots,a_i}$;
    \item[(s)] replace $C_{W,\top,0}^{a_1,\dots,a_i}$ by 
    $C_{S,\top}^{a_1,\dots,a_{i-1}}$.
\end{enumerate}

Notice that 
(w) replaces an instance by its universal weakening, 
(s) makes the instance stronger, 
(e) just existentially quantifies a variable that appears only once,
(z) replaces a constraint by its universal weakening. 

It follows from the definition that we cannot apply these transformations forever,
and we can never remove 
the constraint $C_{W,\top}^{c_{1},\dots,c_{m}}$.
Let us show that 
the final instance 
$\mathcal I_{T}$ consists of just one constraint 
$C_{W,\top}^{c_{1},\dots,c_{m}}$.
If $\mathcal I_{T}$ contains some
$C_{S,\top}^{a_{1},\dots,a_{i}}$
or $C_{W,\top,j}^{a_1,\dots,a_i}$, 
then we can apply the transformations (w) , (z), or (s), 
which contradicts the assumption that $\mathcal I_{T}$ is final.
Otherwise,
let $y_{i}^{a_1,\dots,a_i}$ be the lowest variable of $\mathcal I_{T}$.
If $i>m$, then
$y_{i}^{a_1,\dots,a_i}$ appears only in the constraint
$C_{W,\top}^{a_{1},\dots,a_{i}}$ and we can apply transformation (e), which again contradicts the assumption.
Thus, $i=m$ and the only constraint 
in $\mathcal I_{T}$  is 
$C_{W,\top}^{c_{1},\dots,c_{m}}$.

Consider the last instance $\mathcal I_{t}$ in the sequence
$\mathcal I_{0},\mathcal I_{1},\dots,\mathcal I_{T}$
not satisfying the following property: 
$\mathcal I_{t}$ has a solution 
with $y_{i}^{c_{1},\dots,c_{i}}=d$ for 
any $\zv\in A^{|A|}$, any $i\in\{0,\dots,m\}$, and any $d\in \prescript{\zv}{}D_{y_{i}^{c_{1},\dots,c_{i}}}^{(\top)}$.
We refer to this property 
as \emph{the subdirectness property}.
Since $\mathcal I_{0}$ does not satisfy the subdirectness property,
such $t$ always exists.
Notice that 
$\mathcal I_{T}$ and even $\mathcal I_{T-1}$ 
satisfy the subdirectness property, hence $t<T-1$.
By the definition of $t$, the instance 
$\mathcal I_{t+1}$ cannot be stronger than $\mathcal I_{t}$.
Hence, $\mathcal I_{t}\useq \mathcal I_{t+1}$.
Consider two cases:

Case 1. $\mathcal I_{t}$ has a solution for every $\zv\in A^{|A|}$.
Then consider some variable 
$y_{i}^{c_{1},\dots,c_{i}}$ wintessing that 
$\mathcal I_{t}$ does not have the subdirectness property.
Since $\mathcal I_{t+1}$ has 
the subdirectness property,
$\mathcal I_{t+1}(y_{i}^{c_{1},\dots,c_{i}})$ 
defines $D_{y_{i}^{c_{1},\dots,c_{i}}}^{(\top)}$.
Let $\mathcal I_{t}(y_{i}^{c_{1},\dots,c_{i}})$ define 
a $\zv$-parameterized unary relation $B$.
Since $\mathcal I_{t}\useq \mathcal I_{t+1}$, 
Lemma \ref{LEMUniversalSubuniverseImplies}
implies 
$B\us D_{y_{i}^{c_1,\dots,c_{i}}}^{(\top)}$,
which satisfies condition 1 and completes this case.

Case 2. $\mathcal I_{t}$ does not have a solution for some $\zv\in A^{|A|}$.
Put $\mathcal J_{0} = \mathcal I_{t}$.
Then we apply another transformation 
to $\mathcal I_{t}$
and obtain a sequence of 
instances $\mathcal J_{0},\mathcal J_{1},\dots,\mathcal J_{s}$, 
where $\mathcal J_{0} = \mathcal I_{t}$.
If a variable $u$ is a child of $y_{i}^{c_{1},\dots,c_{i}}$, where $i\in\{0,1,\dots,m\}$, 
and $u$ appears several times 
in $\mathcal J_{k}$, then we rename some (but not all) of the variables $u$ into $u'$ and obtain 
a covering $\mathcal J_{k+1}$ of $\mathcal J_{k}$.
If $\mathcal J_{k+1}$ has a solution 
for every $\zv$, we finish the sequence.
Notice that if we split all the children of  each $y_{i}^{c_{1},\dots,c_{i}}$ so that 
each of them appears exactly once, 
then the obtained instance has a solution for every $\zv$ 
by the maximality of $m$.
Thus, we get a sequence $\mathcal J_{0},\mathcal J_{1},\dots,\mathcal J_{s}$
 of coverings of $\mathcal I_{t}$
such that $\mathcal J_{s}$ has a solution for every $\zv$.

Since $\mathcal I_{t+1}$ is a universal weakening of 
$\mathcal I_{t}$,
this universal weakening can be transferred to $\mathcal J_{s}$,
where we replace the child of every constraint of $\mathcal I_{t}$
by the corresponding weakened version.
As a result we get 
a universal weakening $\mathcal J_{s}'$
of $\mathcal J_{s}$.
Notice that $\mathcal J_{s}'$ is a covering of $\mathcal I_{t+1}$, 
which implies that 
$\mathcal J_{s}'$ satisfies the modification of the subdirectness property 
for coverings. 
That is, 
$\mathcal J_{s}'$ has a solution 
with $v=d$, 
if $v$ is a child of $y_{i}^{c_{1},\dots,c_{i}}$,
for any $\zv\in A^{|A|}$, any $i\in\{0,\dots,m\}$, and any $d\in \prescript{\zv}{}D_{y_{i}^{c_{1},\dots,c_{i}}}^{(\top)}$.
Let $u$ be the variable we split while defining 
$\mathcal J_{s}$ and $u'$ be the new variable we added.
Consider two subcases:

Subcase 2A. 
There exist $\zv$ and  
$d\in \prescript{\zv}{}D_{u}^{(\top)}$
such that $\prescript{\zv}{}{\mathcal J_{s}}$ has no solution 
with $u = d$ or has no solution with $u' = d$.
Without loss of generality let it be $u$.
By the subdirectness property for $\mathcal J_{s}'$, 
the formula
$\mathcal J_{s}(u)$ defines the $\zv$-parameterized unary relation 
$D_{u}^{(\top)}$.
Suppose 
$\mathcal J_{s}(u)$ defines
a $\zv$-parameterized unary relation $B$.
Since $\mathcal J_{s}\useq \mathcal J_{s}'$, 
Lemma \ref{LEMUniversalSubuniverseImplies} 
implies that $B\us D_{u}^{(\top)}$ 
and $B$ satisfies condition 1, which  completes this case.

Subcase 2B. 
For every $\zv$ and every 
$d\in D_{u}^{(\top)}$
the instance $\mathcal J_{s}$ has a solution 
with $u = d$ and a solution with $u' = d$.
Let $S$ be the binary $\zv$-parameterized 
relation
defined by $\mathcal J_{s}(u,u')$ and 
$W$ be the binary $\zv$-parameterized 
relation
defined by $\mathcal J_{s}'(u,u')$.
Since $\mathcal J_{s}\useq \mathcal J_{s}'$, 
Lemma \ref{LEMUniversalSubuniverseImplies} 
implies that $S\useq W$. 
Let $Q$ be a $(\zv,\alpha)$-parameterized relation q-definable from $R$ witnessing $S\useq W$, 
that is, 
$Q^{\forall}  = W$ and 
$Q^{\forall\forall} =S$.
Let us show that 
$(Q,D)$ forms 
a mighty tuple V.
Property 1 follows from the subdirectness of 
$\mathcal J_{s}'$.
Property 2 follows from the definition of subcase 2B.
Since $\mathcal J_{s-1}$ has no solutions for some $\zv$, 
$\prescript{\zv}{}{Q}^{\forall\forall}$ is irreflexive for this $\zv$, and we get property 3.
\end{proof}

\subsection{Finding a smaller reduction}\label{SUBSECTIONFindingSmallerReduction}

In this section we will show that 
for any 1-consistent universal reduction 
$D^{(\top)}$ for 
$\mathcal I_{R}$ and a unary $\zv$-parameterized 
relation $B\us D_{y_{i}^{a_1,\dots,a_{i}}}^{(\top)}$
we can build a
smaller 1-consistent universal reduction 
$D^{(\bot)}\subsetneq D^{(\top)}$.

By $\mathcal I_{R}'$ we denote 
the instance $\mathcal I_{R}^{(\top)}$
with additional constraints 
$C_{W,\top}^{a_1,\dots,a_{i}}$, 
$C_{W,\top,j}^{a_1,\dots,a_{i}}$,
for all $i\in\{0,1,\dots,n\}$, 
$a_1,\dots,a_{i}\in A$, 
and $j\in \{0,1,\dots,|A|\}$. 
Notice that all the constraints we added to $\mathcal I_{R}$ to get $\mathcal I_{R}'$ are 
weaker than the constraints that are already there.
Hence, $\mathcal I_{R}'$ has a solution if and only if 
$\mathcal I_{R}$ has a solution, 
and a reduction is 1-consistent for 
$\mathcal I_{R}'$ if and only if it is 1-consistent for $\mathcal I_{R}$.

To simplify presentation we 
fix a highest variable 
$y_{m}^{c_1,\dots,c_{m}}$ such that 
there exists a $\zv$-parameterized unary relation $B$
satisfying 
$B\us D_{y_{m}^{c_1,\dots,c_{m}}}^{(\top)}$.
Denote this variable by $u=y_{m}^{c_1,\dots,c_{m}}$.
By $\mathcal B$ we denote the 
set of all  $\zv$-parameterized nonempty unary relations $B$
satisfying 
$B\us D_{u}^{(\top)}$.


By $\mathcal T$ we denote the set of all tree-coverings 
of $\mathcal I_{R}'$ 
such that some of the children of $u$ are marked as 
leaves and exactly one child of $u$ is marked as
the root. 
Elements of $\mathcal T$ are called \emph{trees}. Notice that a vertex can be simultaneously 
a leaf and the root.
Any path in an instance $\mathcal I_{R}'$ can be viewed as a 
tree-covering. Marking the first element of the path as 
a leaf and the last element as the root we can 
make a tree from any path.
By $\mathcal P$ we denote the set of all paths in $\mathcal T$. 
Notice that by choosing one leaf in a tree we can always make a path from a tree.

Suppose $t\in \mathcal T$ 
with leaves $u_1,\dots,u_{s}$ and the root $u_0$. 
For a $\zv$-parameterized unary relation $B$  
by $B+t$ we denote the $\zv$-parameterized unary relation 
defined by 
$(t\wedge u_{1}\in B\wedge \dots\wedge u_{s}\in B)(u_{0})$. 
Informally, $B+t$ is the restriction we 
get on the root if restrict we all the leaves to $B$.
Notice that, since the reduction $D^{(\top)}$ is 
1-consistent, we have 
$D_{u}^{(\top)}+t = D_{u}^{(\top)}$ 
for any $t\in \mathcal T$.

To prove the existence of a smaller reduction, 
it is sufficient to satisfy the following Lemma.

\begin{lem}\label{LEMIntersectEachImpliesSmallerReduction}
Suppose $B\in\mathcal B$ and $\prescript{\zv}{}{B}+\prescript{\zv}{}t\neq \varnothing$ for every $t\in\mathcal T$ and every $\zv\in A^{|A|}$.
Then there exists 
a $\zv$-parameterized 1-consistent universal reduction 
$D^{(\bot)}$ for $\mathcal I_{R}$ such that
$D^{(\bot)}\subsetneq D^{(\top)}$.
\end{lem}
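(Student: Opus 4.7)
The plan is to construct $D^{(\bot)}$ as the maximal $\zv$-parameterized 1-consistent reduction contained in $D^{(\top)}$ with $D^{(\bot)}_u \subseteq B$, obtained by arc-consistency starting from the reduction $D^{(\top')}$ that agrees with $D^{(\top)}$ everywhere except that $D^{(\top')}_u := B$. Strict inclusion is immediate: $B \sus D^{(\top)}_u$ forces $B \subsetneq D^{(\top)}_u$, so $D^{(\bot)}_u \subseteq B \subsetneq D^{(\top)}_u$, giving $D^{(\bot)} \subsetneq D^{(\top)}$. It then remains to verify nonemptiness and universality.

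For nonemptiness, suppose for contradiction that $\prescript{\zv_0}{}D^{(\bot)}_v = \varnothing$ for some $v \in \Var(\mathcal I_R)$ and some $\zv_0 \in A^{|A|}$. By Lemma \ref{LEMMaximalReductionGeneral} applied to $D^{(\top')}$, there is a tree-covering $\Upsilon_v$ of $\mathcal I_R$ with $\prescript{\zv_0}{}\Upsilon_v^{(\top')}(v) = \varnothing$. I extend it to a tree $t$ of $\mathcal I_R'$ by gluing to $\Upsilon_v$ the two constraints $\mathcal S_R^i(y_0, y_1^{a_1}, \dots, y_i^{a_1,\dots,a_i})$ (containing $v = y_i^{a_1,\dots,a_i}$) and $\mathcal S_R^m(y_0, y_1^{c_1}, \dots, y_m^{c_1,\dots,c_m})$ (containing a new copy $u^* := y_m^{c_1,\dots,c_m}$) at a shared $y_0$-vertex, identifying the $v$-endpoint of the first with the root of $\Upsilon_v$; declaring $u^*$ the root of $t$ and marking every tree-leaf of $t$ mapping to $u$ as a marked leaf places $t \in \mathcal T$. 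Any solution of $t$ with marked leaves in $B$ restricts to a solution of $\Upsilon_v^{(\top')}$, so $\prescript{\zv_0}{}B + \prescript{\zv_0}{}t = \varnothing$, contradicting the hypothesis.

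For universality, I adapt the argument of Theorem \ref{THMNonemptyReductionIsZigzag}: using Lemma \ref{LEMMaximalReductionGeneral} write $\prescript{\zv}{}D^{(\bot)}_v = \prescript{\zv}{}\Upsilon_0^{(\top')}(v)$, apply the transformations (w), (s), (j) to obtain a sequence $\Upsilon_0, \dots, \Upsilon_t$ whose final element realizes $D^{(\bot,0)}_v$, and combine Lemma \ref{LEMSIsUniversalInW} with Lemma \ref{LEMUniversalSubuniverseImplies} to assemble a $\uuus$ chain from $D^{(\bot)}_v$ to $D^{(\bot,0)}_v$.

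The main obstacle is the nonemptiness step, specifically ensuring that every vertex of $\Upsilon_v$ mapping to $u$ is a tree-leaf of $t$, so that it can legitimately be declared a marked leaf of $\mathcal T$. An internal $u$-copy cannot be marked, and a naive splitting into independent copies loses the semantic constraint that shared vertices take equal values, which could make the extended tree satisfiable despite $\Upsilon_v^{(\top')}$ being empty. This is handled by exploiting that the restriction $B$ at $u$ enters as an initial axiom in the arc-consistency failure trace, so the construction underlying Lemma \ref{LEMMaximalReductionGeneral} can be arranged so that every copy of $u$ appears only at tree-leaves, using that each constraint of $\mathcal I_R$ contains $u$ at most once. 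Once nonemptiness is verified for $v \neq u$, the 1-consistency of $D^{(\bot)}$ together with the connectedness of $\mathcal I_R$ forces $D^{(\bot)}_u \neq \varnothing$ as well.
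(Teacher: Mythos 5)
Your overall plan — take $D^{(\bot)}$ to be the maximal $\zv$-parameterized $1$-consistent reduction inside the reduction $D^{(\triangle)}$ that agrees with $D^{(\top)}$ everywhere except $D^{(\triangle)}_{u}:=B$ — is the paper's construction, and the strict-inclusion observation is correct. However, both remaining steps have gaps.

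For nonemptiness, the ``obstacle'' you isolate does not exist. In the definition of $\mathcal T$, a \emph{leaf} is merely a designated vertex among the children of $u$ (and so is the root); there is no requirement that such a vertex be a degree-one vertex of the underlying tree. The paper therefore simply applies Lemma~\ref{LEMMaximalReductionGeneral} (to $\mathcal I_{R}$ with the reduction $D^{(\triangle)}$) to obtain a tree-covering $\Upsilon$ with $\prescript{\zv_0}{}{\Upsilon^{(\triangle)}}$ unsatisfiable, observes that $\Upsilon$ must contain at least one child of $u$ (otherwise $\Upsilon^{(\triangle)}=\Upsilon^{(\top)}$, which is satisfiable by $1$-consistency of $D^{(\top)}$ on a tree), marks \emph{all} children of $u$ as leaves and one of them as the root, and is done: $B+t$ reproduces exactly the $\Upsilon^{(\triangle)}$ restriction, hence $\prescript{\zv_0}{}{B}+\prescript{\zv_0}{}{t}=\varnothing$. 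Your extra gluing of two $\mathcal S_R$-constraints at a fresh $y_0$ is unnecessary, and the sentence asserting that ``the construction underlying Lemma~\ref{LEMMaximalReductionGeneral} can be arranged so that every copy of $u$ appears only at tree-leaves'' is both unneeded and unproven — you would have to establish that claim about the output of the lemma, which it does not guarantee.

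For universality, your plan to redo the transformations $(w),(s),(j)$ from Theorem~\ref{THMNonemptyReductionIsZigzag} does not go through as stated. Those transformations act on the unrestricted tree-coverings and do not see the nontrivial restriction $D^{(\triangle)}$; if the final tree is a single constraint $\mathcal W_R^{m}(u_0,\dots,u_{m-1},u_m)$, then restricting the upper variables to $D^{(\top')}$ produces (at best) $D^{(\top,0)}_v$, not $D^{(\bot,0)}_v$, because $D^{(\bot)}$ is the \emph{maximal $1$-consistent reduction inside} $D^{(\triangle)}$ and can be strictly smaller than $D^{(\triangle)}$. The paper avoids this entirely: from $\Upsilon_v^{(\triangle)}(v)=D^{(\bot)}_v$, $\Upsilon_v^{(\top)}(v)=D^{(\top)}_v$, and $B\us D^{(\top)}_u$, Lemma~\ref{LEMUniversalSubuniverseImplies} immediately gives $D^{(\bot)}_v\useq D^{(\top)}_v$; then, using that $D^{(\top)}$ is already a universal reduction, one intersects the chain $D^{(\top)}_v\uuus D^{(\top,0)}_v$ with $D^{(\bot,0)}_v$ to obtain
$D^{(\bot)}_v = D^{(\bot)}_v\cap D^{(\bot,0)}_v \useq D^{(\top)}_v\cap D^{(\bot,0)}_v \uuus D^{(\top,0)}_v\cap D^{(\bot,0)}_v = D^{(\bot,0)}_v$,
which is the required universality. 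You should replace the ``adapt Theorem~\ref{THMNonemptyReductionIsZigzag}'' step with this argument; as written, it is a genuine gap.
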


\begin{proof}
Let $D^{(\triangle)}$ be the $\zv$-parameterized reduction
for $\mathcal I_{R}$ such that 
$D_{u}^{(\triangle)}=B$
and
$D_{v}^{(\triangle)}=D_{v}^{(\top)}$
for every $v\neq u$.
For every $\zv\in A^{|A|}$ let 
$\prescript{\zv}{}D^{(\bot)}$ be the inclusion maximal 1-consistent reduction 
such that $\prescript{\zv}{}D^{(\bot)}\subseteq \prescript{\zv}{}D^{(\triangle)}$.
Consider two cases:

Case 1. Assume that $\prescript{\zv_{0}}{}{D_{v}^{(\bot)}}$ is empty for
some $\zv_{0}\in A^{|A|}$ and some variable $v$.
Then by Lemma \ref{LEMExistenceOfTreeCoverings}  
there exists a tree-covering $\Upsilon$ of $\mathcal I_{R}'$ such that 
$\prescript{\zv_{0}}{}\Upsilon$ has no solutions if all the children of $u$ are restricted to $\prescript{\zv_{0}}{}B$.
Let $t\in\mathcal T$ be the tree obtained from $\Upsilon^{(\top)}$
by marking all the children of $u$ as leaves and marking one of the children 
as the root.
Then $\prescript{\zv_{0}}{}B+\prescript{\zv_{0}}{}t=\varnothing$, which contradicts our assumptions.

Case 2. $\prescript{\zv}{}{D_{v}^{(\bot)}}$ is not empty for every $\zv\in A^{|A|}$ and every $v$.
By Lemma \ref{LEMMaximalReductionGeneral}
for every $v$ there exists a tree-covering $\Upsilon_{v}$ of $\mathcal I_{R}$ 
such that  $\prescript{\zv}{}{\Upsilon_{v}^{(\triangle)}}(v)$ defines 
$\prescript{\zv}{}{D_{v}^{(\bot)}}$ for every $\zv$.
Since $D^{(\top)}$ is 1-consistent and $\Upsilon_{v}$ is a tree-formula, 
$\prescript{\zv}{}{\Upsilon_{v}^{(\top)}}(v)$ defines
$\prescript{\zv}{}{D_{v}^{(\top)}}$.
Since $B\us D^{(\top)}_{u}$,  Lemma \ref{LEMUniversalSubuniverseImplies}
implies that $D_{v}^{(\bot)}\useq D_{v}^{(\top)}$ for every $v$.
Again, by Lemma \ref{LEMUniversalSubuniverseImplies}
$$D_{v}^{(\bot)} = D_{v}^{(\bot)}\cap D_{v}^{(\bot,0)}\useq D_{v}^{(\top)}\cap D_{v}^{(\bot,0)}
\uuus D_{v}^{(\top,0)}\cap D_{v}^{(\bot,0)}=D_{v}^{(\bot,0)}.$$
Hence, $D^{(\bot)}$ is a $\zv$-parameterized 1-consistent
universal reduction for $\mathcal I_{R}$
that is smaller than $D^{(\top)}$.
\end{proof}

We define two directed graphs $G_{\mathcal P}$ and $G_{\mathcal T}$ whose vertices 
are elements of $\mathcal B$.
There is an edge $B_{1}\to B_{2}$ in 
$G_{\mathcal P}$
if there exists 
$p\in \mathcal P$ such that 
$B_{1}+p = B_{2}$.
Similarly, the edge $B_{1}\to B_{2}$ is in 
$G_{\mathcal T}$ 
if there exists 
$t\in \mathcal T$ such that 
$B_{1}+t = B_{2}$.
Since we consider trivial paths/trees, 
both graphs are reflexive (have all the loops).
Since we can compose paths and trees, 
$B_1\to B_2$ and $B_2\to B_3$ implies 
$B_1\to B_3$, that is both graphs are transitive.
Let $\mathcal B_{\mathcal T}$ be a strongly connected component 
of $G_{\mathcal T}$
not having edges going outside of the component.
Let $\mathcal B_{\mathcal P}$ be a strongly connected component 
of $G_{\mathcal P}$ inside $\mathcal B_{\mathcal T}$ 
not having edges going outside of the component.
Thus, 
we have 
$\mathcal B_{\mathcal P}\subseteq \mathcal B_{\mathcal T}\subseteq \mathcal B$.
Put $\mathcal B_{\mathcal P}^{*} = 
B_{\mathcal P}\cup\{D_{u}^{(\top)}\}$.
Then for every $B\in\mathcal B_{\mathcal P}^{*}$,
and $p\in\mathcal P$
we have 
$B+p\in \mathcal B_{\mathcal P}^{*}$.

In this section we prove that there exists a smaller 1-consistent reduction
for $\mathcal I_{R}$ or 
$R$ q-defines a mighty tuple IV.
As we show in Lemma \ref{LEMSmallerReductionOrPSpaceHardness}, to prove this, it is sufficient to satisfy the following property:
there exist $B\in \mathcal B_{\mathcal P}$ and 
$\zv$-parameterized binary relations 
$S$ and $W$ q-definable from $R$ such that
\begin{enumerate}
    \item $S\triangleleft W$; 
    \item $\prescript{\zv}{}{B}+\prescript{\zv}{}{S}=\prescript{\zv}{}{B}$ for every $\zv\in A^{|A|}$;
    \item 
    $\prescript{\zv}{}{B}+\prescript{\zv}{}{W}=\prescript{\zv}{}{D_{u}^{(\top)}}$ for every $\zv\in A^{|A|}$;
    \item 
     $\prescript{\zv}{}{D_{u}^{(\top)}}+\prescript{\zv}{}{S}=\prescript{\zv}{}{D_{u}^{(\top)}}$ for every $\zv\in A^{|A|}$.
\end{enumerate}
In this case we say that the tuple 
$(R,D^{(\top)},u,\mathcal B_{\mathcal P})$ is \emph{strong}.

\begin{lem}\label{LEMSmallerReductionOrPSpaceHardness}
Suppose $(R,D^{(\top)},u,\mathcal B_{\mathcal P})$ is a strong tuple.
Then one of the following conditions holds:
\begin{enumerate}
    \item there exists 
a $\zv$-parameterized 1-consistent universal reduction 
$D^{(\bot)}$ for $\mathcal I_{R}$ such that
$D^{(\bot)}\subsetneq D^{(\top)}$;
    \item there exists 
    a mighty tuple IV q-definable from $R$.
\end{enumerate}
\end{lem}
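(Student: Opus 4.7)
The plan is to attempt a direct application of Lemma~\ref{LEMIntersectEachImpliesSmallerReduction} to the element $B$ supplied by the strong-tuple hypothesis, and to fall back on a construction of a mighty tuple IV exactly when that application is obstructed. Since $B \in \mathcal{B}_{\mathcal{P}} \subseteq \mathcal{B}$, the relation $B$ is $\zv$-parameterized nonempty with $B \us D_u^{(\top)}$, so the lemma is at least formally applicable. The only way it can fail to yield a strictly smaller 1-consistent universal reduction is that its hypothesis---$\prescript{\zv}{}B + \prescript{\zv}{}t \neq \varnothing$ for every $t \in \mathcal{T}$ and every $\zv$---is violated. I therefore split the argument on whether such an obstructing tree exists.

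In the unobstructed case, Lemma~\ref{LEMIntersectEachImpliesSmallerReduction} immediately produces a $\zv$-parameterized 1-consistent universal reduction $D^{(\bot)} \subsetneq D^{(\top)}$ for $\mathcal{I}_{R}$, giving option~1. The interesting case is when there exist $t_0 \in \mathcal{T}$ and $\zv_0 \in A^{|A|}$ such that $\prescript{\zv_0}{}B + \prescript{\zv_0}{}{t_0} = \varnothing$. Here I plan to assemble a mighty tuple IV by taking $D := D_u^{(\top)}$, letting $Q$ be a $(\zv,\alpha)$-parameterized binary relation q-definable from $R$ that witnesses $S \us W$ (so $Q^{\forall\forall} = S$ and $Q^{\forall} = W$), and keeping the strong-tuple $B$. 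The four strong-tuple identities translate verbatim into conditions~(2), (3), and~(4) of the mighty tuple IV definition, and $B \us D_u^{(\top)}$ gives the $B$-half of condition~(1). Only the existence of $C$ (the $C$-half of~(1) together with condition~(5)) remains.

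To produce $C$, the natural idea is to define it as the projection of $t_0$ onto a distinguished leaf when the remaining leaves are constrained to lie in $B$ and all interior variables of $t_0$ are restricted to $D^{(\top)}$. By construction $C \subseteq D$, and at $\zv_0$ any element of $\prescript{\zv_0}{}C \cap \prescript{\zv_0}{}B$ would yield a satisfying assignment of $t_0$ with all leaves in $B$, contradicting $\prescript{\zv_0}{}B + \prescript{\zv_0}{}{t_0} = \varnothing$; hence $\prescript{\zv_0}{}C \cap \prescript{\zv_0}{}B = \varnothing$. The relation $C$ is q-definable from $R$ because $B$, $D^{(\top)}$, and $t_0$ all are.

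The main obstacle is arranging that $C$ is nonempty at \emph{every} $\zv$, not merely at $\zv_0$, since for a generic $t_0$ the projection could collapse at many other parameter vectors. The intended fix is to work with a $t$ minimal for the property ``$B + t$ is empty for some $\zv$'': then every proper sub-tree of $t$ gives a nonempty restriction at every $\zv$, and the leaf-projection extracted from it should inherit nonemptiness by 1-consistency of $D^{(\top)}$ and the freedom to choose the leaf value in the minimal case. Simultaneously, the fact that $\mathcal{B}_{\mathcal{P}}$ is a strongly connected component of $G_{\mathcal{P}}$ inside the terminal component $\mathcal{B}_{\mathcal{T}}$ of $G_{\mathcal{T}}$---which ensures $B + t \in \mathcal{B}_{\mathcal{T}}$ whenever $B + t \in \mathcal{B}$---constrains where obstructions can arise and, if necessary, lets us replace $B$ by another element of $\mathcal{B}_{\mathcal{P}}$ reachable via a path, all without disturbing the strong-tuple identities. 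Making this minimality-plus-connectivity combination airtight is the heart of the argument.
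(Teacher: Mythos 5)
Your proposal follows the same route as the paper: dichotomize on whether the hypothesis of Lemma~\ref{LEMIntersectEachImpliesSmallerReduction} holds, and in the obstructed case, manufacture $C$ from the obstructing tree while reusing the strong-tuple data for $Q$, $D$, and $B$. Two points in your last paragraph, however, are left loose enough to matter. First, the correct minimality criterion is the \emph{number of leaves}: take $t$ with the fewest leaves such that $\prescript{\zv}{}B + \prescript{\zv}{}t = \varnothing$ for some $\zv$. Second, the operation that produces $C$ is not passing to a proper sub-tree but \emph{re-marking} $t$: move the root designation to one of the leaves and strip that vertex's leaf mark, obtaining a tree $t'$ on the \emph{same} underlying covering but with one fewer leaf; set $C = B + t'$. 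Then $\prescript{\zv}{}C \neq \varnothing$ for every $\zv$ follows immediately from minimality (otherwise $t'$ would witness the obstruction with fewer leaves), $C \subseteq D_u^{(\top)}$ since the new root is still a child of $u$, and $\prescript{\zv_0}{}B \cap \prescript{\zv_0}{}C = \varnothing$ because any common element would re-assemble a solution of $t$ with all its leaves in $B$. Finally, the discussion of $\mathcal B_{\mathcal P}$ as a terminal strongly connected component inside $\mathcal B_{\mathcal T}$ is not needed for this lemma; the strong-tuple hypotheses plus the leaf-count minimality already carry the whole argument, and no replacement of $B$ by another element of $\mathcal B_{\mathcal P}$ is ever required.
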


\begin{proof}
By the definition, there exist $B\in\mathcal B_{\mathcal P}$ and 
$\zv$-parameterized binary relations 
$S$ and $W$ q-definable from $R$ satisfying the required 4 conditions.
    If $\prescript{\zv}{}B+\prescript{\zv}{}t\neq \varnothing$ for any $t\in \mathcal T$ and $\zv$
then 
Lemma \ref{LEMIntersectEachImpliesSmallerReduction} implies that condition 1 holds.
Otherwise, let $t$ be the tree with the minimal number of leaves 
such that $\prescript{\zv}{}B+\prescript{\zv}{}t=\varnothing$. 
Define a new tree $t'$ by moving the root of $t$
to one of the leaves and removing its leaf mark.
By the minimality of the number of leaves
$\prescript{\zv}{}B+\prescript{\zv}{}t'\neq\varnothing$
for any $\zv$.
Denote $C  = B+t'$.
Since $S\us W$, there exists a q-definable relation $Q$
such that $Q^{\forall\forall} = S$ and 
$Q^{\forall} = W$.
Let us check that 
$(Q,D_{u}^{(\top)},B,C)$ forms 
a mighty tuple IV.
All the conditions but 5 follow from the definition
of a strong tuple.
Condition 5 follows from the 
fact that $\prescript{\zv}{}B+\prescript{\zv}{}t =\varnothing$ for some $\zv$ and therefore 
$\prescript{\zv}{}B\cap \prescript{\zv}{}C = \varnothing$ 
for this $\zv$.
\end{proof}

In the first case of the above lemma we obtain
a smaller reduction, and in the second case we can build a mighty tuple and therefore prove PSpace-hardness. 
Thus, whenever the tuple $(R,D^{(\top)},u,\mathcal B_{\mathcal P})$ is strong,  
we can achieve the required result. 
That is why,  
in many further lemmas we have an assumption
that it is not strong.

We say that $B_{1}$ is \emph{a supervised 
universal subset of $B_{2}$} 
if there exist
$B_{0}\in\mathcal B_{\mathcal P}$ and 
$p_{1},p_{2}\in\mathcal P$
such that 
$p_1\useq p_{2}$, 
$B_{0}+p_{1} = B_{1}$, 
and 
$B_{0}+p_{2} = B_{2}$. 
We write it as $B_{1}\suseq B_{2}$.

The following lemma follows immediately from the definition 
and the fact that we can compose paths.

\begin{lem}\label{LEMContinueSupervisedRelaxation}
Suppose $B_{1}\suseq B_{2}$ and
$p\in\mathcal P$.
Then $B_{1}+p\suseq B_{2}+p$.
\end{lem}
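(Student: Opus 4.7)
The plan is to produce the supervised universal subset for $B_1+p$ and $B_2+p$ by using the \emph{same} base $B_0\in\mathcal B_{\mathcal P}$ that witnesses $B_1\suseq B_2$, and by gluing the path $p$ onto the end of each of the witnessing paths $p_1,p_2$. So first I would unpack the hypothesis: there exist $B_0\in\mathcal B_{\mathcal P}$ and paths $p_1,p_2\in\mathcal P$ with $p_1\useq p_2$, $B_0+p_1=B_1$, and $B_0+p_2=B_2$.

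Next I would define the concatenations $p_1\ast p$ and $p_2\ast p$ as follows: regard $p_i$ as a path in a tree-covering of $\mathcal I_R'$ whose leaf and root are children of $u$, and regard $p$ similarly. Since every leaf/root in any element of $\mathcal P$ is a child of $u$, we may identify the root of $p_i$ with the leaf of $p$ to obtain a covering which is again a path in $\mathcal I_R'$, whose leaf is the leaf of $p_i$ and whose root is the root of $p$. By construction $p_1\ast p,\, p_2\ast p\in\mathcal P$, and evaluating the defining existential formulas coordinate-wise gives
\[
B_0+(p_i\ast p)=(B_0+p_i)+p=B_i+p\qquad(i=1,2).
\]

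It remains to verify that $p_1\ast p\useq p_2\ast p$. This is exactly the situation covered by Lemma~\ref{LEMUniversalSubuniverseImplies}: the formula defining $p_i\ast p$ is a conjunction (with shared variables along the identified child of $u$) of the formula defining $p_i$ and the formula defining $p$, and $p_1\useq p_2$, $p\useq p$ both hold, so the conjunctions satisfy $p_1\ast p\useq p_2\ast p$. With $B_0\in\mathcal B_{\mathcal P}$ and the two composed paths in $\mathcal P$, the triple $(B_0,\,p_1\ast p,\,p_2\ast p)$ witnesses $B_1+p\suseq B_2+p$, which completes the proof.

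The only subtle point — and the place where one must be careful — is the legitimacy of the concatenation step: one needs that in a tree-covering we are always free to identify two variables that are children of the same variable of $\mathcal I_R'$, so that the glued object is again a valid path in $\mathcal P$. Once that bookkeeping is granted, the rest is a routine application of Lemma~\ref{LEMUniversalSubuniverseImplies}.
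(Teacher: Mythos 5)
Your proposal is correct and is exactly the argument the paper intends: the paper's proof of Lemma~\ref{LEMContinueSupervisedRelaxation} is simply the one-line remark that the statement follows from the definition of $\suseq$ and the fact that paths compose, and you have just spelled out that composition (reusing the same $B_0$, appending $p$ to both witnessing paths, and invoking Lemma~\ref{LEMUniversalSubuniverseImplies} together with $p\useq p$ to preserve the $\useq$ relation).
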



The following lemma is the crucial fact in the whole proof. We show that we are done whenever 
$B\suseq D_{u}^{(\top)}$, and in the next lemmas 
we just try to achieve 
the condition
$B\suseq D_{u}^{(\top)}$.

\begin{lem}\label{LEMSupervisedRelaxationImpliesPSpace}
Suppose 
$B\suseq D_{u}^{(\top)}$.
Then $(R,D^{(\top)},u,\mathcal B_{\mathcal P})$ is a strong tuple.
\end{lem}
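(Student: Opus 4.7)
The plan is to construct a witness $B^* \in \mathcal B_{\mathcal P}$ together with two $\zv$-parameterized binary relations $S \us W$ q-definable from $R$ satisfying the four conditions in the definition of a strong tuple. From the hypothesis $B \suseq D_u^{(\top)}$ I extract the pivot $B_0 \in \mathcal B_{\mathcal P}$ and paths $p_1, p_2 \in \mathcal P$ with $p_1 \useq p_2$, $B_0 + p_1 = B$ and $B_0 + p_2 = D_u^{(\top)}$. Each path $p_i$ (a conjunctive formula with a distinguished leaf and root) gives a $\zv$-parameterized binary relation $P_i$ on $A$ by projection onto leaf and root; since $p_1 \useq p_2$ as instances, Lemma~\ref{LEMUniversalSubuniverseImplies} applied to each atom produces $P_1 \useq P_2$.

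The natural first attempt is to ``fold'' each path at its leaf, pinning the pivot inside $B_0$:
\[
S_0(x, y) = \exists z\,(z \in B_0 \wedge P_1(z, x) \wedge P_1(z, y)), \qquad W_0(x, y) = \exists z\,(z \in B_0 \wedge P_2(z, x) \wedge P_2(z, y)).
\]
Both are q-definable from $R$ (because $B_0 \in \mathcal B_{\mathcal P} \subseteq \mathcal B$ is q-definable as a universal subset of $D_u^{(\top)}$), and Lemma~\ref{LEMUniversalSubuniverseImplies} atom-wise yields $S_0 \useq W_0$; strictness follows from $B \neq D_u^{(\top)}$, which forces $P_1 \neq P_2$ and hence $S_0 \neq W_0$. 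With $B^* = B$, properties~2 and~3 of strongness are now immediate: one direction of $B + S_0 = B$ and $B + W_0 = D_u^{(\top)}$ is forced by the existential $z \in B_0$ together with $B_0 + P_1 = B$ and $B_0 + P_2 = D_u^{(\top)}$, while the reverse inclusions come from setting $x = y$ in the appropriate fiber.

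The principal obstacle is property~4: with $S_0$ as above, $D_u^{(\top)} + S_0 \subseteq B_0 + P_1 = B$, which is strictly smaller than $D_u^{(\top)}$. To extend the equivalence-like behaviour of the ``$S$'' relation to all of $D_u^{(\top)}$, I plan to enlarge the construction using the strongly-connected structure of $\mathcal B_{\mathcal P}$ in $G_{\mathcal P}$: there is a self-loop $p^\circ \in \mathcal P$ with $B_0 + p^\circ = B_0$, and (by strong connectivity together with the closure $\mathcal B_{\mathcal P}^*$) a path returning from $D_u^{(\top)}$ to a set in $\mathcal B_{\mathcal P}$. Prepending $p^\circ$ to $p_1$ and $p_2$ produces composed paths $\widetilde p_1 \useq \widetilde p_2$ (universal weakening is preserved by composition via Lemma~\ref{LEMUniversalSubuniverseImplies}) whose associated binary relations $\widetilde P_1, \widetilde P_2$ still satisfy $B_0 + \widetilde P_1 = B$ and $B_0 + \widetilde P_2 = D_u^{(\top)}$ but whose pivot now enjoys the additional property that every $z \in B_0$ admits $\widetilde P_i$-successors covering the relevant part of $D_u^{(\top)}$. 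Plugging $\widetilde P_1, \widetilde P_2$ into the folded formulas defining $S, W$, property~4 is now obtained by the same $x = y$ witness applied to each element of $D_u^{(\top)}$, Lemma~\ref{LEMContinueSupervisedRelaxation} ensures the supervised relation survives the augmentation, and the four strongness conditions are in place. The delicate step is this simultaneous preservation of the fiber-covering equalities and of the universal subset relation $\useq$ through the plumbing; this is precisely where the strong connectivity of $\mathcal B_{\mathcal P}$ and the full force of the hypothesis $B \suseq D_u^{(\top)}$ are used.
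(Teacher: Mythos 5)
Your folded construction has a fatal problem with property~4 of the strong-tuple definition, and the proposed repair does not address it. Writing $S_0(x,y)=\exists z\,(z\in B_0\wedge P_1(z,x)\wedge P_1(z,y))$ pins the existential pivot inside $B_0$, so the image of $S_0$ (in either coordinate) is contained in $B_0+P_1=B$; in particular $D_u^{(\top)}+S_0\subseteq B\subsetneq D_u^{(\top)}$, and every $d\in D_u^{(\top)}\setminus B$ has an empty $S_0$-fiber, so setting $x=y=d$ gives no witness. Prepending a self-loop $p^\circ$ with $B_0+p^\circ=B_0$ changes nothing: by construction $B_0+\widetilde P_1$ is still $B$, so the folded $\widetilde S_0$ still cannot reach outside $B$. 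Lemma~\ref{LEMContinueSupervisedRelaxation}, which you invoke at the end, is about the relation $\suseq$ between \emph{unary} relations and has no bearing on the range of the binary relation you are constructing, so it cannot close this gap.

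The paper's argument is shorter precisely because it does not fold. Using strong connectivity of $\mathcal B_{\mathcal P}$ it takes $p_0\in\mathcal P$ with $B+p_0=B_0$ and concatenates, $p_1'=p_0+p_1$ and $p_2'=p_0+p_2$, so that $B+p_1'=B$, $B+p_2'=D_u^{(\top)}$, and $p_1'\useq p_2'$. Then $S$ and $W$ are simply the (unfolded) binary relations $p_1'(u_1,u_2)$ and $p_2'(u_1,u_2)$ between the two end-variables of the paths, and $B$ itself is the witness from $\mathcal B_{\mathcal P}$. The point you are missing is that property~4, $D_u^{(\top)}+S=D_u^{(\top)}$, is not something you have to engineer into the shape of $S$: since $D^{(\top)}$ is $1$-consistent, $D_u^{(\top)}+t=D_u^{(\top)}$ holds for \emph{every} $t\in\mathcal T$ (as noted right after the definition of $B+t$), and $p_1'$ is such a $t$. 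Once you keep $S$ as the binary relation of a genuine path from $u$ to $u$, property~4 is automatic; folding destroys exactly this.
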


\begin{proof}
Consider $B_0\in \mathcal B_{\mathcal P}$ and two paths $p_1,p_2\in \mathcal P$ 
such that 
$B_{0}+p_1 = B$, $B_{0}+p_2=D_{u}^{(\top)}$,
and $p_{1}\useq p_{2}$.
Consider a path $p_0\in \mathcal P$ such that 
$B+p_0= B_0$, and
define two new paths by
$p_1' = p_0+p_1$ and
$p_2' = p_0+p_2$.
Then 
$B + p_1' = B$, $B+p_{2}'=D_{u}^{(\top)}$, 
and $p_{1}'\useq p_{2}'$.
Let $u_{1}$ and $u_{2}$ be the two ends of the paths
$p_{1}'$ and $p_{2}'$.
Let $p_{1}'(u_1,u_2)$ define $S$,
$p_{2}'(u_1,u_2)$ define $W$.
Then $W$, $S$, and $B$
witness that $(R,D^{(\top)},u,\mathcal B_{\mathcal P})$ is a strong tuple.
\end{proof}


\begin{lem}\label{ultimatePropOne}
Suppose $B_{1},B_{2}\in\mathcal B_{\mathcal P}$.
Then for every $\zv$ either 
both $\prescript{\zv}{}B_{1}$ and $\prescript{\zv}{}B_{2}$ 
are different from $\prescript{\zv}{}D_{u}^{(\top)}$, 
or both are equal to $\prescript{\zv}{}D_{u}^{(\top)}$.
\end{lem}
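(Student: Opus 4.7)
The plan is to exploit the strong connectivity of $\mathcal B_{\mathcal P}$ inside $G_{\mathcal P}$ together with the 1-consistency of $D^{(\top)}$. Since $B_1,B_2\in\mathcal B_{\mathcal P}$ lie in the same strongly connected component of $G_{\mathcal P}$, there exist paths $p_{12},p_{21}\in\mathcal P$ such that $B_1+p_{12}=B_2$ and $B_2+p_{21}=B_1$. I want to argue that the value of $\prescript{\zv}{}{B_{i}}$ at a fixed $\zv$ is determined by the propagation along these paths, and that $\prescript{\zv}{}D_{u}^{(\top)}$ is a fixed point of that propagation.

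First I would recall the key 1-consistency observation already stated right after the definition of $B+t$: because $D^{(\top)}$ is 1-consistent, $D_{u}^{(\top)}+t=D_{u}^{(\top)}$ for every tree $t\in\mathcal T$, and in particular for every path $p\in\mathcal P$. Note that this identity is parameter-wise: for each fixed $\zv$, we have $\prescript{\zv}{}D_{u}^{(\top)}+\prescript{\zv}{}p=\prescript{\zv}{}D_{u}^{(\top)}$, simply by evaluating the definition of $B+t$ at $\zv$.

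Now fix $\zv\in A^{|A|}$ and suppose $\prescript{\zv}{}{B_1}=\prescript{\zv}{}{D_{u}^{(\top)}}$. Then, since the operation $B+p$ on $\zv$-parameterized unary relations coincides fiber-by-fiber with its value at $\zv$, we can compute
\[
\prescript{\zv}{}{B_2}=\prescript{\zv}{}{(B_1+p_{12})}=\prescript{\zv}{}{B_1}+\prescript{\zv}{}{p_{12}}=\prescript{\zv}{}{D_{u}^{(\top)}}+\prescript{\zv}{}{p_{12}}=\prescript{\zv}{}{D_{u}^{(\top)}}.
\]
The symmetric computation with $p_{21}$ gives that $\prescript{\zv}{}{B_2}=\prescript{\zv}{}{D_{u}^{(\top)}}$ implies $\prescript{\zv}{}{B_1}=\prescript{\zv}{}{D_{u}^{(\top)}}$. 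Hence at each $\zv$ the two options $\prescript{\zv}{}{B_i}=\prescript{\zv}{}{D_{u}^{(\top)}}$ and $\prescript{\zv}{}{B_i}\neq\prescript{\zv}{}{D_{u}^{(\top)}}$ hold simultaneously for $i=1$ and $i=2$, which is exactly the claim.

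The hard part, if any, is conceptual rather than computational: one has to notice that although $\mathcal B$ consists of $\zv$-parameterized unary relations with $B\us D_{u}^{(\top)}$ (i.e., strict universal subsets globally), this strictness is only required somewhere, and at a particular $\zv$ the fiber $\prescript{\zv}{}B$ may very well coincide with $\prescript{\zv}{}D_{u}^{(\top)}$. The lemma says that within a single strongly connected component of $G_{\mathcal P}$ the set of $\zv$'s on which this fiber-wise coincidence happens is an invariant of the component, which is what the fixed-point identity $D_{u}^{(\top)}+p=D_{u}^{(\top)}$ forces.
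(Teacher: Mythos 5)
Your proof is correct and takes exactly the same route as the paper: obtain $p_1,p_2\in\mathcal P$ from strong connectivity of $\mathcal B_{\mathcal P}$ in $G_{\mathcal P}$, then invoke the fixed-point identity $\prescript{\zv}{}D_{u}^{(\top)}+\prescript{\zv}{}p=\prescript{\zv}{}D_{u}^{(\top)}$ forced by 1-consistency. The paper's proof is the same argument stated very tersely; your write-up just makes the fiber-wise computation explicit.
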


\begin{proof}
Since $B_{1},B_{2}\in\mathcal B_{\mathcal P}$, 
there exist paths $p_1,p_2\in\mathcal P$ 
such that $B_{1}+p_1 = B_{2}$ and 
$B_{2} + p_2 = B_1$.
Since the reduction 
$D^{(\top)}$ is 1-consistent, 
$\prescript{\zv}{}D_{u}^{(\top)}+p = 
\prescript{\zv}{}D_{u}^{(\top)}$ for any $p\in\mathcal P$ and 
$\zv\in A^{|A|}$. This implies the required property.
\end{proof}

\emph{A supervised zig-zag} from $B_{1}$ to $B_{2}$
is a sequence 
$C_{0},C_1,\dots,C_{k}\in \mathcal B_{\mathcal P}^{*}$
such that 
\begin{itemize}
\item $C_{0} = B_{1}$, $C_{k} = B_{2}$;
\item $C_{i-1}\supseteq C_{i}$ or 
$C_{i-1}\suseq C_{i}$ for every $i\in[k]$.
\end{itemize}
If there exists a supervised zig-zag of length $k$ from 
$B_{1}$ to $B_{2}$, then we write 
$B_{1}\szigzag_{k} B_{2}$
or just 
$B_{1}\szigzag B_{2}$ if we do not want to specify the length.

\begin{lem}\label{LEMZigzagFromBoneToBtwo}
Suppose $B_{1},B_{2}\in \mathcal B_{\mathcal P}$.
Then $B_{1}\szigzag B_{2}$.
\end{lem}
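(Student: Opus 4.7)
The plan is to combine the strong connectivity of $\mathcal B_{\mathcal P}$ inside $G_{\mathcal P}$ with universal weakening of paths. First I would use the SCC condition on $\mathcal B_{\mathcal P}$ to pick paths $p,q\in\mathcal P$ with $B_{1}+p=B_{2}$ and $B_{2}+q=B_{1}$. For any path $r\in\mathcal P$, let $r^{(w)}$ be its universal weakening obtained by replacing each $\mathcal S_{R}^{i}$-constraint by the corresponding $\mathcal W_{R}^{i}$-constraint and each $C_{W,\top,j+1}^{\vec a}$ by $C_{W,\top,j}^{\vec a}$; applying Lemma~\ref{LEMSTildeIsUniversalInWTilde} constraint-by-constraint together with Lemma~\ref{LEMUniversalSubuniverseImplies} then yields $r\useq r^{(w)}$.

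The fundamental zig-zag edge comes directly from the definition of $\suseq$: for any $B\in\mathcal B_{\mathcal P}$ and $r\in\mathcal P$, taking $B_{0}=B$ with the pair $(r,r^{(w)})$ witnesses $B+r\suseq B+r^{(w)}$, while $r\useq r^{(w)}$ gives the companion containment $B+r^{(w)}\supseteq B+r$. Applied with $B=B_{1}$ and $r=p$ this produces the fragment $B_{2}\suseq (B_{1}+p^{(w)})\supseteq B_{2}$, and applied with $r=pq$ it produces $B_{1}\suseq B_{1}+p^{(w)}q^{(w)}$ since $(pq)^{(w)}=p^{(w)}q^{(w)}$. The plan is then to stitch such fragments together, using Lemma~\ref{LEMContinueSupervisedRelaxation} to transport any $\suseq$-edge forward under further applications of $p$ or $q$ and using the closure property $\mathcal B_{\mathcal P}^{*}+\mathcal P\subseteq\mathcal B_{\mathcal P}^{*}$ to keep all intermediate witnesses inside $\mathcal B_{\mathcal P}^{*}$. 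Since $D_{u}^{(\top)}$ is a fixed point of path application by 1-consistency of $D^{(\top)}$, and since Lemma~\ref{ultimatePropOne} controls its $\zv$-wise behaviour uniformly across $\mathcal B_{\mathcal P}$, the accumulated chain should reach $D_{u}^{(\top)}$, after which the single closing $\supseteq$-edge $D_{u}^{(\top)}\supseteq B_{2}$ finishes the zig-zag $B_{1}\suseq\cdots\suseq D_{u}^{(\top)}\supseteq B_{2}$.

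The main obstacle is the termination of this iterative process: an $\suseq$-edge only gives $\subseteq$-growth and not necessarily strict growth, so a priori the chain $B_{1}\suseq B_{1}^{(1)}\suseq B_{1}^{(2)}\suseq\cdots$ could stabilize at some proper $B_{\infty}\in\mathcal B_{\mathcal P}$ before reaching $D_{u}^{(\top)}$. I expect to rule this out by extremality: if such a $B_{\infty}$ existed, the set of relations reachable from $B_{\infty}$ by supervised path weakenings together with path application would form a strictly smaller sink-like structure inside $\mathcal B_{\mathcal T}$, contradicting the choice of $\mathcal B_{\mathcal P}$ as a strongly connected component of $G_{\mathcal P}$ inside $\mathcal B_{\mathcal T}$ without outgoing edges. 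The careful bookkeeping required to translate "supervised weakening stabilization'' into a violation of this sink property is where I expect the bulk of the technical work to lie.
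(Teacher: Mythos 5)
Your proposal takes a genuinely different route from the paper, and it has a gap that you yourself flag but do not actually close. The paper's proof starts from a \emph{single} path $p_0\in\mathcal P$ with $B_2+p_0=B_1$, sets $C_i:=B_2+p_i$, and applies to the path a carefully chosen mix of weakening transformations (w), (z) and \emph{strengthening/structural} transformations (e), (j), (s). Each weakening step yields $C_i\suseq C_{i+1}$ with witness $B_0=B_2$, each strengthening step yields $C_i\supseteq C_{i+1}$, and the process provably terminates because every transformation strictly reduces a combinatorial measure of the path (its size / constraint arities). Crucially, the terminal path consists of a single child of $C^{c_1,\dots,c_m}_{W,\top}$, for which $B_2+p_k=B_2$, so the zig-zag ends \emph{directly at $B_2$} and never passes through $D_u^{(\top)}$. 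Your plan instead tries to grow $B_1$ toward $D_u^{(\top)}$ using only weakening $\suseq$-edges and then close with a single $D_u^{(\top)}\supseteq B_2$ step.

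The gap is exactly the one you identify, and the extremality argument you sketch does not repair it. Stabilization of the $\suseq$-chain at some $B_\infty\in\mathcal B_{\mathcal P}$ with $B_\infty\ne D_u^{(\top)}$ is entirely compatible with $\mathcal B_{\mathcal P}$ being a sink SCC of $G_{\mathcal P}$: the sink property forbids $G_{\mathcal P}$-edges leaving $\mathcal B_{\mathcal P}$, but it says nothing about supervised-weakening edges being able to escape to $D_u^{(\top)}$, and the stabilization point $B_\infty$ remaining inside $\mathcal B_{\mathcal P}$ produces no contradiction whatsoever. Moreover, even ignoring termination, there is no reason the chain would ever reach $D_u^{(\top)}$: the maximally weakened constraints $C^{\vec a}_{W,\top,0}$ are still nontrivial, so $B_1+p^{(w)}$ need not be $D_u^{(\top)}$ — the fragment $B_2\suseq(B_1+p^{(w)})\supseteq B_2$ you compute in fact makes no progress at all. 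The missing ingredient is precisely the paper's strengthening operations (s), (j), (e), which collapse the path's structure, together with a decreasing measure on the \emph{path} (not on the accumulated relation) that forces termination at the trivial path evaluating to $B_2$.
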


\begin{proof}
Consider a path $p_{0}\in\mathcal P$ such that
$B_{2}+p_0=B_{1}$.
We will build a sequence of 
paths
$p_0,p_1,\dots,p_{k}$ 
such that 
$C_{i} = B_{2}+p_{i}$.
We have $C_{0} = B_{2}+p_0 = B_1$.

By the choice of the variable $u$ 
there is no 
$\zv$-parameterized unary relation $B\triangleleft D_{v}^{(\top)}$
for any variable $v$ that is above $u$ in $\mathcal I_{R}$.
Therefore, by Lemma \ref{LEMUniversalSubuniverseImplies}, 
these variables cannot appear in the path from the 
leaf to the root but can appear in some constraints. 

We apply the following transformations to  
$p_{0}$ and define a sequence 
$p_{0},p_1,\dots,p_{k}\in\mathcal P$.

\begin{enumerate}
    \item[(w)] replace a child of the constraint $C_{S,\top}^{a_1,\dots,a_i}$
    by the corresponding child of $C_{W,\top}^{a_1,\dots,a_i}$;
    \item[(e)] if the lowest variable of a child 
    of 
    $C_{W,\top}^{a_1,\dots,a_i}$ appears only once,
    then replace it by the corresponding child of $C_{W,\top,|A|}^{a_1,\dots,a_i}$;
    \item[(j)] if a variable from the $i$-th level
    appears in 
    two children of $C_{W,\top}^{a_1,\dots,a_i}$,
    then we replace these children 
    by one child of $C_{W,\top}^{a_1,\dots,a_i}$ identifying 
    the corresponding variables of the children;
    \item[(z)] replace a child of $C_{W,\top,j}^{a_1,\dots,a_i}$ by
    the corresponding child of 
    $C_{W,\top,j-1}^{a_1,\dots,a_i}$;
    \item[(s)] replace a child of $C_{W,\top,0}^{a_1,\dots,a_i}$ by 
    the corresponding child of $C_{S,\top}^{a_1,\dots,a_{i-1}}$.
\end{enumerate}

Notice that 
(w) replaces an instance by its universal weakening, 
(s) makes the instance stronger, 
(e) just existentially quantifies a variable that appears only once,
(j) joins several constraints together and makes the instance stronger,
(z) replaces the instance by its universal weakening. 
Notice that we do not apply (e) 
if the lowest variable is a child of $u$ because this would mean 
removing a root. 

Let us show that we can apply these transformation 
till the moment when we have only one 
constraint and this constraint is a child of 
$C_{W}^{c_1,\dots,c_m}$.
Suppose we already have 
$p_{0},\dots,p_{\ell}$. 
If $p_{\ell}$ 
has a child of $C_{W,j}^{a_1,\dots,a_i}$
or a child of $C_{S}^{a_1,\dots,a_i}$
then we can apply (w), (z), or (s).
Otherwise, let $v$ be the lowest variable of $p_{\ell}$.
Notice that $v$ has to be from a level below $u$, since otherwise 
$p_{\ell}$ already consists of just one constraint.
If $v$ appears only once, then we can apply (e).
Otherwise, we can apply (j).

Since we always reduce the tree and reduce the arity of a constraint, we cannot apply transformations 
forever.
Thus, we have 
the sequence 
$p_{0},p_{1},\dots,p_{k}\in \mathcal P$ 
such that for every $i$
either 
$p_{i+1}$ is stronger than $p_{i}$, or 
$p_{i}\suseq p_{i+1}$.
Since the last path in the sequence consists of
a child of $C_{W}^{c_1,\dots,c_m}$, 
we have 
$B_{2}+p_{k} = B_{2}$
and the sequence 
$C_{0},C_{1},C_{2},\dots,C_{k}$
witnesses that 
$B_{1}\szigzag B_{2}$.
\end{proof}

\begin{lem}\label{LEMCannotKillJustOne}
Suppose $B_{1},B_{2}\in \mathcal B_{\mathcal P}$,
$p\in\mathcal P$,
$B_1+p\neq D_{u}^{(\top)}$.
Then $B_2+p\neq D_{u}^{(\top)}$ 
or the tuple
$(R,D^{(\top)},u,\mathcal B_{\mathcal P})$ is strong.
\end{lem}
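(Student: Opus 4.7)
The plan is to prove the contrapositive: assume $B_2+p=D_{u}^{(\top)}$ and deduce that the tuple $(R,D^{(\top)},u,\mathcal B_{\mathcal P})$ is strong. The main engine is Lemma \ref{LEMZigzagFromBoneToBtwo}, which, since $B_1,B_2\in\mathcal B_{\mathcal P}$, hands us a supervised zigzag $C_0=B_1,C_1,\dots,C_k=B_2$ inside $\mathcal B_{\mathcal P}^{*}$ whose consecutive pairs satisfy either $C_{i-1}\supseteq C_i$ or $C_{i-1}\suseq C_i$.

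The next step is to push $+p$ through the zigzag, producing $C_0+p, C_1+p,\dots,C_k+p$. By the closure property $B\in\mathcal B_{\mathcal P}^{*}\Rightarrow B+p\in\mathcal B_{\mathcal P}^{*}$ recorded just after the definition of $\mathcal B_{\mathcal P}^{*}$, the whole transformed sequence stays in $\mathcal B_{\mathcal P}^{*}$. Pointwise inclusion is obviously preserved (restricting the single leaf of $p$ to a larger parameterized set gives more possible values at the root), and Lemma \ref{LEMContinueSupervisedRelaxation} preserves the supervised relation. Thus the transformed sequence is itself a supervised zigzag, starting at $B_1+p\neq D_u^{(\top)}$ and ending at $B_2+p=D_u^{(\top)}$.

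The finishing move is to take the smallest index $i$ with $C_i+p=D_u^{(\top)}$; such an $i$ exists, and by minimality $C_{i-1}+p\neq D_u^{(\top)}$, forcing $C_{i-1}+p\in\mathcal B_{\mathcal P}$ (since $\mathcal B_{\mathcal P}^{*}=\mathcal B_{\mathcal P}\cup\{D_u^{(\top)}\}$ and no member of $\mathcal B$ is $D_u^{(\top)}$ itself, as $\sus$ is strict). The case $C_{i-1}\supseteq C_i$ is impossible: it would give $D_u^{(\top)}\subseteq C_{i-1}+p\subseteq D_u^{(\top)}$ and hence $C_{i-1}+p=D_u^{(\top)}$, contradicting the choice of $i$. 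So $C_{i-1}\suseq C_i$, and Lemma \ref{LEMContinueSupervisedRelaxation} yields $C_{i-1}+p\suseq D_u^{(\top)}$. Feeding this into Lemma \ref{LEMSupervisedRelaxationImpliesPSpace} concludes that the tuple is strong.

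The main obstacle is essentially bookkeeping rather than any new idea: one must verify that applying $+p$ really preserves both the $\supseteq$ and $\suseq$ edges of the zigzag, and that the dichotomy ``$C_{i-1}+p\in\mathcal B_{\mathcal P}$ or $C_{i-1}+p=D_u^{(\top)}$'' is clean. The first is monotonicity combined with Lemma \ref{LEMContinueSupervisedRelaxation}; the second is handled by the disjoint decomposition $\mathcal B_{\mathcal P}^{*}=\mathcal B_{\mathcal P}\sqcup\{D_u^{(\top)}\}$ already noted above, so that ``not equal to $D_u^{(\top)}$'' pins the element unambiguously inside $\mathcal B_{\mathcal P}$.
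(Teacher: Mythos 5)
Your proof is correct and follows essentially the same route as the paper: invoke Lemma~\ref{LEMZigzagFromBoneToBtwo} to get a supervised zig-zag from $B_1$ to $B_2$, apply $+p$ along the whole sequence, pick the adjacent pair where the values cross over to $D_u^{(\top)}$, and hand the resulting $\suseq D_u^{(\top)}$ relation to Lemma~\ref{LEMSupervisedRelaxationImpliesPSpace}. In fact you are slightly more careful than the paper in one spot: the paper simply asserts $C_i+p\suseq C_{i+1}+p$ via Lemma~\ref{LEMContinueSupervisedRelaxation}, leaving implicit that the alternative $C_i\supseteq C_{i+1}$ cannot occur at the chosen index; you spell out that this would force $C_{i-1}+p=D_u^{(\top)}$, which contradicts the choice, so only the $\suseq$ edge survives.
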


\begin{proof}
Assume that $B_2+p=D_{u}^{(\top)}$.
By Lemma \ref{LEMZigzagFromBoneToBtwo} there is 
a supervised zig-zag 
$C_0,C_1,\dots,C_{k}$
from $B_{1}$ to $B_{2}$.
Consider the last element in the sequence 
$C_0+p,C_1+p,\dots,C_{k}+p$ that 
is different from $D_{u}^{(\top)}$.
Let it be $C_{i}+p$.
Then by Lemma \ref{LEMContinueSupervisedRelaxation}, 
$C_{i}+p\suseq C_{i+1}+p=D_{u}^{(\top)}$, which by 
Lemma \ref{LEMSupervisedRelaxationImpliesPSpace} implies 
that 
the tuple
$(R,D^{(\top)},u,\mathcal B_{\mathcal P})$ is strong.
\end{proof}

\begin{lem}\label{LEMreduceZigZag}
Suppose 
$B_{1},B_{2}\in\mathcal B_{\mathcal P}$, $B_{1}\szigzag_{k} B_{2}$, and 
$\prescript{\zv}{}{B_{1}}\not\supseteq \prescript{\zv}{}{B_{2}}$
for some $\zv\in A^{|A|}$.
Then there exists $p\in\mathcal P$ such that
$B_{1}+p\szigzag_{k-1} B_{2}+p\neq D_{u}^{(\top)}$
or
the tuple
$(R,D^{(\top)},u,\mathcal B_{\mathcal P})$ is strong.
\end{lem}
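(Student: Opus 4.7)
The plan is to shorten the zig-zag by applying a carefully chosen common path $p\in\mathcal{P}$ to both endpoints, forcing two consecutive elements of the transformed sequence to coincide. Fix a supervised zig-zag $C_0=B_1,C_1,\dots,C_k=B_2$ witnessing $B_1\szigzag_k B_2$. Recall that a step $C_{i-1}\suseq C_i$ is in particular an inclusion $C_{i-1}\subseteq C_i$ (since $\useq$ implies $\subseteq$), while $C_{i-1}\supseteq C_i$ is the opposite inclusion. The hypothesis $\prescript{\zv}{}B_1\not\supseteq\prescript{\zv}{}B_2$ combined with Lemma~\ref{ultimatePropOne} forces $B_1,B_2\in\mathcal{B}_{\mathcal{P}}$, and guarantees that the zig-zag contains at least one $\suseq$ step accounting for the missing inclusion at $\zv$.

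For any $p\in\mathcal{P}$, applying $+p$ pointwise to the whole sequence yields a supervised zig-zag of length $k$ from $B_1+p$ to $B_2+p$: the $\supseteq$ steps survive trivially and the $\suseq$ steps survive by Lemma~\ref{LEMContinueSupervisedRelaxation}. Thus it suffices to find $p$ such that $C_i+p=C_{i+1}+p$ for some $i$, since deleting the repeated element then produces a supervised zig-zag of length $k-1$ from $B_1+p$ to $B_2+p$.

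To construct such $p$, I would focus on the first step $C_0\,?\,C_1$. If it is $C_0\supseteq C_1$, choose $p\in\mathcal{P}$ with $C_0+p=C_1$, which exists by strong connectivity of $\mathcal{B}_{\mathcal{P}}$ in $G_{\mathcal{P}}$ (the corner case $C_1=D_u^{(\top)}$ is excluded, since $C_0=B_1\in\mathcal{B}_{\mathcal{P}}$ and $B_1\supseteq D_u^{(\top)}\cap B_1=D_u^{(\top)}$ would force $B_1=D_u^{(\top)}$). Then $C_1+p\subseteq C_0+p=C_1$, and iterating $p$ produces a descending chain inside the finite collection of $\zv$-parameterized relations that stabilizes, so composing sufficiently many copies of $p$ yields a path $p'$ with $C_0+p'=C_1+p'$. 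If the first step is $C_0\suseq C_1$, I would use the witness $B_0\in\mathcal{B}_{\mathcal{P}}$ and paths $q_1\useq q_2$ from the definition of $\suseq$, together with a connecting path from $B_1$ to $B_0$ supplied by strong connectivity, to manufacture an analogous collapse. The main obstacle is precisely this step: producing a single $p$ that collapses some edge of the zig-zag without simultaneously breaking the supervised structure at other edges; the delicate point is that iterating $p$ may disturb far-away $\suseq$ witnesses, and one may need to pick the collapse site with care or to invoke Lemma~\ref{LEMSupervisedRelaxationImpliesPSpace} whenever a supervised relaxation $B\suseq D_u^{(\top)}$ surfaces as a by-product, directly yielding strength.

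Finally, to ensure $B_2+p\neq D_u^{(\top)}$, observe that by construction $B_1+p$ equals $C_1$ (or one of its $+p$-iterates), all of which live in $\mathcal{B}_{\mathcal{P}}$ and so differ from $D_u^{(\top)}$. Then Lemma~\ref{LEMCannotKillJustOne} applied to $B_1,B_2,p$ yields either the desired $B_2+p\neq D_u^{(\top)}$, completing the proof, or strength of the tuple $(R,D^{(\top)},u,\mathcal{B}_{\mathcal{P}})$, which is the alternative conclusion of the lemma.
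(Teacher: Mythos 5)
There is a genuine gap. You identify the right strategy — apply a common path $p$ to the whole sequence, preserve the supervised steps by Lemma~\ref{LEMContinueSupervisedRelaxation}, and collapse one edge — and you correctly locate the difficulty at the collapse step. But your plan to collapse the \emph{first} edge by iterating a path works only when the first step is $\supseteq$, and even then it does not guarantee that the collapse site has the property you need; for a $\suseq$ step you concede that you ``would use the witness $B_0$\ldots to manufacture an analogous collapse,'' which is not an argument. Moreover, you misdiagnose the real obstacle: iterating $p$ cannot ``disturb far-away $\suseq$ witnesses,'' since Lemma~\ref{LEMContinueSupervisedRelaxation} says $C_{j}\suseq C_{j+1}$ is preserved under any $+p$; that worry is a red herring. (You also misquote Lemma~\ref{ultimatePropOne}: it does not ``force $B_1,B_2\in\mathcal B_{\mathcal P}$'' — that is already hypothesized — it compares the relations against $D_u^{(\top)}$ fibre-wise, and it plays no role here.)

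The paper's argument avoids iteration altogether via a maximality trick you did not find. Since $\prescript{\zv}{}{B_1}\not\supseteq \prescript{\zv}{}{B_2}$, some step $C_i\sus C_{i+1}$ must be a \emph{strict} supervised universal subset, hence $C_i\subsetneq C_{i+1}$. Now pick an inclusion-maximal $B\in\mathcal B_{\mathcal P}$ and a path $p$ with $C_i+p=B$. Applying Lemma~\ref{LEMCannotKillJustOne} (pairwise against $C_i$) shows that $C_j+p\neq D_u^{(\top)}$ for every $j$, unless the tuple is strong. In particular $C_{i+1}+p\in\mathcal B_{\mathcal P}$, and since $C_i+p\subseteq C_{i+1}+p$ while $C_i+p=B$ is inclusion-maximal, equality $C_i+p=C_{i+1}+p$ holds in one shot. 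Deleting the repeated element and invoking Lemma~\ref{LEMContinueSupervisedRelaxation} for the remaining steps gives the shorter zig-zag. Your closing use of Lemma~\ref{LEMCannotKillJustOne} to ensure $B_2+p\neq D_u^{(\top)}$ is in the same spirit as the paper's, but you would need the intermediate facts about $C_i+p$ first, and that requires the inclusion-maximality argument that is missing from your proposal.
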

\begin{proof}
Let $C_{0},\dots,C_k$ be a supervised zig-zag 
from $B_1$ to $B_2$.
Since $\prescript{\zv}{}{B_{1}}\not\supseteq \prescript{\zv}{}{B_{2}}$, 
$C_{i}\sus C_{i+1}$ for some $i$.
Choose an inclusion maximal $B\in\mathcal B_{\mathcal P}$
and a path $p\in\mathcal P$ such that 
$C_{i}+p = B$.
Unless the tuple
$(R,D^{(\top)},u,\mathcal B_{\mathcal P})$ is strong,
Lemma \ref{LEMCannotKillJustOne} implies that
$C_{j}+p\neq D_{u}^{(\top)}$ for every $j$.
Since $C_{i}\subseteq C_{i+1}$ and $B$ is inclusion maximal, we
have $C_{i}+p = C_{i+1}+p$.
Then by 
Lemma \ref{LEMContinueSupervisedRelaxation},
$C_{0}+p, C_1+p,\dots,C_{i}+p,C_{i+2}+p,\dots,C_k+p$
is a supervised zig-zag from $B_{1}+p$ to $B_{2}+p$ of length $k-1$.
\end{proof}

\begin{lem}\label{LEMmakeInclusion}
Suppose  
$B_{1},B_{2}\in\mathcal B_{\mathcal P}$.
Then there exists $p\in\mathcal P$ such that 
$ B_{1}+p\subseteq B_2+p\neq D_{u}^{(\top)}$
or
the tuple
$(R,D^{(\top)},u,\mathcal B_{\mathcal P})$ is strong.
\end{lem}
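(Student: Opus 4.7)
The plan is to use Lemma \ref{LEMZigzagFromBoneToBtwo} to obtain a supervised zig-zag from $B_2$ to $B_1$, and then apply Lemma \ref{LEMreduceZigZag} iteratively, at each step both extending the current path $p$ and shrinking the length of the zig-zag, until either inclusion of the unary relations is forced or the zig-zag has collapsed to a single element (which is equality, and therefore inclusion).

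More concretely, I will construct a sequence of paths $p_0, p_1, p_2, \ldots \in \mathcal P$ and integers $k_0 > k_1 > \cdots \geq 0$ maintaining the invariant
\[
B_2 + p_i \szigzag_{k_i} B_1 + p_i, \qquad B_1 + p_i,\ B_2 + p_i \in \mathcal B_{\mathcal P}.
\]
For $p_0$ I take the trivial path and take $k_0$ from Lemma \ref{LEMZigzagFromBoneToBtwo} applied in the direction $B_2 \szigzag B_1$; the invariant holds at $i=0$ by hypothesis on $B_1, B_2$. At step $i$, I first test whether $\prescript{\zv}{}{(B_1 + p_i)} \subseteq \prescript{\zv}{}{(B_2 + p_i)}$ holds for every $\zv \in A^{|A|}$. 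If so, I output $p = p_i$ and am done, since the invariant gives $B_2 + p_i \neq D_u^{(\top)}$. Otherwise some $\zv$ witnesses $\prescript{\zv}{}{(B_2 + p_i)} \not\supseteq \prescript{\zv}{}{(B_1 + p_i)}$, which is exactly the hypothesis of Lemma \ref{LEMreduceZigZag} applied to the zig-zag $B_2 + p_i \szigzag_{k_i} B_1 + p_i$. That lemma then either declares the tuple $(R, D^{(\top)}, u, \mathcal B_{\mathcal P})$ strong (and we are done), or provides $p' \in \mathcal P$ so that setting $p_{i+1} = p_i + p'$ and $k_{i+1} = k_i - 1$ preserves the invariant: $B_1 + p_{i+1} \neq D_u^{(\top)}$ is given directly in the conclusion of the lemma, and Lemma \ref{LEMCannotKillJustOne} (or Lemma \ref{ultimatePropOne}) then forces $B_2 + p_{i+1} \neq D_u^{(\top)}$ as well, so that both endpoints remain in $\mathcal B_{\mathcal P}$.

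Since $k_i$ strictly decreases, the loop terminates in at most $k_0$ iterations. If the inclusion case never triggers along the way, we eventually reach $k_i = 0$, in which case the zig-zag is a single element and $B_2 + p_i = B_1 + p_i$, so the required inclusion holds trivially; the invariant again supplies $B_2 + p_i \neq D_u^{(\top)}$. The delicate point, and the reason Lemma \ref{LEMreduceZigZag} is stated with the explicit ``$\neq D_u^{(\top)}$'' clause rather than as a bare zig-zag-shortening statement, is to prevent the iteration from accidentally driving one of $B_1 + p_i$ or $B_2 + p_i$ up to $D_u^{(\top)}$ and out of $\mathcal B_{\mathcal P}$, which would destroy the hypothesis needed to reapply the reduction lemma at the next step. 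Once that bookkeeping is in place, all of the substantive work has already been done in Lemmas \ref{LEMZigzagFromBoneToBtwo}, \ref{LEMreduceZigZag}, and \ref{LEMCannotKillJustOne}.
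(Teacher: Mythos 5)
Your proof is correct and takes the same approach as the paper: obtain $B_2\szigzag_k B_1$ from Lemma \ref{LEMZigzagFromBoneToBtwo} and iteratively shorten the zig-zag with Lemma \ref{LEMreduceZigZag} until inclusion holds (at the latest when $k=0$). Your explicit invocation of Lemma \ref{LEMCannotKillJustOne} to keep \emph{both} endpoints $B_1+p_i$ and $B_2+p_i$ in $\mathcal B_{\mathcal P}$ (rather than merely the one covered by the ``$\neq D_u^{(\top)}$'' clause in the statement of Lemma \ref{LEMreduceZigZag}) is a sound and slightly more careful handling of a bookkeeping point that the paper's terse proof leaves implicit.
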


\begin{proof}
By Lemma \ref{LEMZigzagFromBoneToBtwo},  
$B_{2}\szigzag_{k}B_1$ for some $k$.
Applying 
Lemma \ref{LEMreduceZigZag} we obtain 
$B_2+p_1\szigzag_{k-1}B_1+p_1$.
Applying 
Lemma \ref{LEMreduceZigZag} again
we obtain  
$B_2+p_1+p_2\szigzag_{k-2}B_1+p_1+p_2$.
We can do this till the moment when
$B_2+p_1+p_2+\dots+p_s \supseteq B_1+p_1+p_2+\dots+p_s$.
It remains to put 
$p = p_1+p_2+\dots+p_s$.
\end{proof}

\begin{cor}\label{CORMakeTwoEqual}
Suppose  
$B_{1},B_{2}\in\mathcal B_{\mathcal P}$.
Then there exists $p\in\mathcal P$ such that 
$ B_{1}+p= B_2+p\neq D_{u}^{(\top)}$
or
the tuple
$(R,D^{(\top)},u,\mathcal B_{\mathcal P})$ is strong.
\end{cor}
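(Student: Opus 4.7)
The plan is to apply Lemma~\ref{LEMmakeInclusion} twice in order to obtain a two-sided inclusion, which forces equality.

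First, I apply Lemma~\ref{LEMmakeInclusion} directly to the pair $B_1, B_2$. Either the tuple $(R,D^{(\top)},u,\mathcal B_{\mathcal P})$ is already strong (in which case we are done), or there exists $p_1 \in \mathcal P$ with
\[
B_1 + p_1 \;\subseteq\; B_2 + p_1 \;\neq\; D_u^{(\top)}.
\]
Set $B_i' := B_i + p_1$ for $i=1,2$. Since adding a path keeps us in the component $\mathcal B_{\mathcal P}^{*}$, and by Lemma~\ref{ultimatePropOne} either both $B_1'$ and $B_2'$ equal $D_u^{(\top)}$ or neither does, the fact that $B_2' \neq D_u^{(\top)}$ implies $B_1',B_2' \in \mathcal B_{\mathcal P}$.

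Next, I apply Lemma~\ref{LEMmakeInclusion} once more, but this time swapping the roles: to the pair $B_2', B_1'$. Again, either the tuple is strong (done), or there exists $p_2 \in \mathcal P$ with
\[
B_2' + p_2 \;\subseteq\; B_1' + p_2 \;\neq\; D_u^{(\top)}.
\]
Now I use monotonicity of the operation $B \mapsto B + p_2$, which is immediate from its definition as existential projection: if $B_1' \subseteq B_2'$ then $B_1' + p_2 \subseteq B_2' + p_2$. Combining this with the previous inclusion gives the chain
\[
B_1' + p_2 \;\subseteq\; B_2' + p_2 \;\subseteq\; B_1' + p_2,
\]
so equality holds throughout. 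Setting $p := p_1 + p_2 \in \mathcal P$ (paths compose), we obtain $B_1 + p = B_2 + p$, and this common value is not $D_u^{(\top)}$ by the second application.

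There is no real obstacle here beyond the bookkeeping: the argument is a standard ``Cantor--Bernstein''-style two-sided squeeze, made available by the fact that Lemma~\ref{LEMmakeInclusion} is symmetric in its input pair and that $+\,p$ is monotone. The only point that needs verification is that the intermediate objects $B_1', B_2'$ still lie in $\mathcal B_{\mathcal P}$ so that Lemma~\ref{LEMmakeInclusion} can be re-applied, which follows from Lemma~\ref{ultimatePropOne} together with the fact that $\mathcal B_{\mathcal P}$ has no outgoing edges in $G_{\mathcal P}$.
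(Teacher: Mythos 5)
Your proof is correct and follows essentially the same two-step squeeze as the paper: apply Lemma~\ref{LEMmakeInclusion}, apply it again with the roles swapped, then combine the two inclusions using monotonicity of $B \mapsto B + p$. One small slip: to justify that $B_1', B_2' \in \mathcal B_{\mathcal P}$ you invoke Lemma~\ref{ultimatePropOne}, but that lemma's hypothesis already requires both arguments to lie in $\mathcal B_{\mathcal P}$, so the appeal is circular (it is also a per-$\zv$ statement, not the global dichotomy you quote). The direct argument is simpler: closure of $\mathcal B_{\mathcal P}^{*}$ under paths gives $B_1', B_2' \in \mathcal B_{\mathcal P}^{*}$, Lemma~\ref{LEMmakeInclusion} gives $B_2' \neq D_{u}^{(\top)}$, and since $B_1' \subseteq B_2' \subseteq D_{u}^{(\top)}$ (the latter because $D_{u}^{(\top)} + p_1 = D_{u}^{(\top)}$ by $1$-consistency), also $B_1' \neq D_{u}^{(\top)}$; the paper leaves this membership check implicit as well.
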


\begin{proof}
By Lemma \ref{LEMmakeInclusion}
there exists $p$ such that 
$B_1+p \supseteq B_2+p$.
Again by 
Lemma \ref{LEMmakeInclusion}
there exists $p'$ such that 
$B_1+p+p'\subseteq B_2+p+p'$.
Combining this with $B_1+p \supseteq B_2+p$,
we obtain $B_1+p+p' = B_2+p+p'$.
\end{proof}

\begin{lem}\label{LEMUniversalTree}
Suppose $B\in\mathcal B_{\mathcal P}$.
Then there exists $p\in \mathcal P$ such that 
$B'+p=B$ for every $B'\in\mathcal B_{\mathcal P}$
or
the tuple
$(R,D^{(\top)},u,\mathcal B_{\mathcal P})$ is strong.
\end{lem}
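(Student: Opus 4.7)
The plan is to apply Corollary~\ref{CORMakeTwoEqual} iteratively in order to collapse all of $\mathcal{B}_{\mathcal{P}}$ to a single common value under one path of $\mathcal{P}$, and then to compose with an additional path that steers that value to the prescribed target $B$. Since the domain $A$ is finite, so is the collection of $\zv$-parameterized unary relations, and in particular $\mathcal{B}_{\mathcal{P}}$ is finite; enumerate $\mathcal{B}_{\mathcal{P}} = \{B_{1}, B_{2}, \ldots, B_{N}\}$. I will construct inductively paths $q_{1}, q_{2}, \ldots, q_{N} \in \mathcal{P}$ together with elements $C_{k} \in \mathcal{B}_{\mathcal{P}}$ satisfying $B_{i} + q_{k} = C_{k}$ for every $i \le k$.

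For the base case take $q_{1}$ to be a trivial one-constraint path containing $u$, so that $C_{1} = B_{1}$. For the inductive step, both $C_{k}$ and $B_{k+1} + q_{k}$ lie in $\mathcal{B}_{\mathcal{P}}$: the latter because $B_{k+1} \in \mathcal{B}_{\mathcal{P}}$, because $\mathcal{B}_{\mathcal{P}}$ is a strongly connected component of $G_{\mathcal{P}}$ without outgoing edges, and because Lemma~\ref{LEMUniversalSubuniverseImplies} guarantees that the image remains strictly below $D_{u}^{(\top)}$. Apply Corollary~\ref{CORMakeTwoEqual} to the pair $(C_{k}, B_{k+1} + q_{k})$: either the tuple $(R, D^{(\top)}, u, \mathcal{B}_{\mathcal{P}})$ is strong and we are done, or I obtain $p_{k} \in \mathcal{P}$ with $C_{k} + p_{k} = (B_{k+1} + q_{k}) + p_{k} \neq D_{u}^{(\top)}$. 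Setting $q_{k+1} := q_{k} + p_{k}$ and $C_{k+1} := C_{k} + p_{k}$, composition yields $B_{i} + q_{k+1} = C_{k+1}$ for all $i \le k+1$, closing the induction.

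After $N-1$ steps the path $q := q_{N}$ sends every $B' \in \mathcal{B}_{\mathcal{P}}$ to the single element $C_{N} \in \mathcal{B}_{\mathcal{P}}$, and since $\mathcal{B}_{\mathcal{P}}$ is strongly connected in $G_{\mathcal{P}}$ there is a path $r \in \mathcal{P}$ with $C_{N} + r = B$; taking $p := q + r$ yields $B' + p = B$ for every $B' \in \mathcal{B}_{\mathcal{P}}$, proving the lemma. The main obstacle is verifying at each step that the intermediate common value $C_{k}$ really lies in $\mathcal{B}_{\mathcal{P}}$ so that the next application of Corollary~\ref{CORMakeTwoEqual} is legitimate; this combines the sink property of $\mathcal{B}_{\mathcal{P}}$ inside $G_{\mathcal{P}}$, which prevents paths from leaving the component once the image lands in $\mathcal{B}$, with the explicit guarantee $C_{k} \neq D_{u}^{(\top)}$ provided by Corollary~\ref{CORMakeTwoEqual}, ruling out the only escape route into $\mathcal{B}_{\mathcal{P}}^{*} \setminus \mathcal{B}_{\mathcal{P}}$.
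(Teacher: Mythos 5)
Your overall architecture matches the paper's proof closely: iterate Corollary~\ref{CORMakeTwoEqual} to collapse all of $\mathcal B_{\mathcal P}$ to a single value under one path, then post-compose with a path steering that value to $B$ using strong connectedness of $\mathcal B_{\mathcal P}$. But there is a genuine gap in the inductive step.

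To apply Corollary~\ref{CORMakeTwoEqual} to the pair $(C_k, B_{k+1}+q_k)$ you must have \emph{both} elements in $\mathcal B_{\mathcal P}$. For $C_k$ you handle this correctly: the previous application of Corollary~\ref{CORMakeTwoEqual} gives $C_k\neq D_u^{(\top)}$, and combined with the ambient fact that $\mathcal B_{\mathcal P}^{*}$ is closed under $+p$ and the sink property of $\mathcal B_{\mathcal P}$, this puts $C_k\in\mathcal B_{\mathcal P}$. But for $B_{k+1}+q_k$, you claim ``Lemma~\ref{LEMUniversalSubuniverseImplies} guarantees that the image remains strictly below $D_u^{(\top)}$.'' This is false: Lemma~\ref{LEMUniversalSubuniverseImplies} preserves $\useq$ (i.e.\ gives $B_{k+1}+q_k\useq D_u^{(\top)}$), and says nothing that rules out equality. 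If $B_{k+1}+q_k = D_u^{(\top)}$ then it is not a vertex of $G_{\mathcal P}$ at all, the sink property of $\mathcal B_{\mathcal P}$ is vacuous, and Corollary~\ref{CORMakeTwoEqual} cannot be invoked.

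The missing ingredient is exactly Lemma~\ref{LEMCannotKillJustOne}: since $B_1,B_{k+1}\in\mathcal B_{\mathcal P}$, $q_k\in\mathcal P$, and $B_1+q_k=C_k\neq D_u^{(\top)}$, that lemma yields $B_{k+1}+q_k\neq D_u^{(\top)}$ (or the tuple is strong, in which case you are done anyway). This is precisely how the paper handles the same point --- it notes explicitly that $D_u^{(\top)}\notin\mathcal B_i$ by Lemma~\ref{LEMCannotKillJustOne}. Add that invocation to your inductive step and the proof goes through.
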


\begin{proof}
First, let us show that there exists a path sending all 
$B'\in\mathcal B_{\mathcal P}$ to the same element of $\mathcal B_{\mathcal P}$.
Put $\mathcal B_{0} = \mathcal B_{\mathcal P}$.
If $|\mathcal B_{0}|=1$ then we can take a trivial path $p$.
Otherwise, consider different $B_{1},B_{2}\in\mathcal B_0$.
By Corollary \ref{CORMakeTwoEqual}, there exists a
path $p_1$ such that 
$B_1+p_1= B_2+p_1\neq D_{u}^{(\top)}$.
Put 
$\mathcal B_1 = \{B'+p_1\mid B'\in\mathcal B_0\}$.
If $|\mathcal B_{1}|=1$ then we finish.
Otherwise, choose different $B_{1},B_{2}\in\mathcal B_{1}$
and make them equal using a path $p_2$.
Then define 
$\mathcal B_2 = \{B'+p_2\mid B'\in\mathcal B_1\}$.
Proceeding this way 
we get paths $p_{1},\dots,p_{s}$ and 
sets $\mathcal B_{1},\dots,\mathcal B_{s}$ 
such that 
$\mathcal B_{i} = \{B'+p_i\mid B'\in\mathcal B_{i-1}\}$
and $|\mathcal B_{i}|<|\mathcal B_{i-1}|$
for $i=1,2,\dots,s$.
Notice that by Lemma \ref{LEMCannotKillJustOne} 
$D_{u}^{(\top)}\notin \mathcal B_{i}$ for any $i$.
We finish when $|\mathcal B_{s}| = 1$, 
so let $\mathcal B_{s} = \{B_{0}\}$.

Thus, for any $B'\in \mathcal B_{\mathcal P}$ 
we have 
$B'+p_1+p_2+\dots+p_{s} = B_{0}\in\mathcal B_{\mathcal P}$.
By the definition of $\mathcal B_{\mathcal P}$
there exists a path $p_{s+1}\in\mathcal P$ such that 
$B_{0} + p_{s+1} = B$. 
It remains to put 
$p= p_1+p_2+\dots+p_s+p_{s+1}$.
\end{proof}

\begin{lem}\label{LEMFindLargerInB0}
Suppose $B\in \mathcal B_{\mathcal T}$.
Then there exists 
$B'\in \mathcal B_{\mathcal P}$ such that 
$B'\supseteq B$, 
or
the tuple
$(R,D^{(\top)},u,\mathcal B_{\mathcal P})$ is strong.
\end{lem}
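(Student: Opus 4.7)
The plan is to handle the lemma by a path-vs-tree comparison at a pre-image of $B$ in $\mathcal B_{\mathcal P}$, falling back on the strong-tuple alternative only in a specific ``killing'' case. If $B\in\mathcal B_{\mathcal P}$, I take $B':=B$. Otherwise I fix any $B_0\in\mathcal B_{\mathcal P}$ and, using the strong connectivity of $\mathcal B_{\mathcal T}$ in $G_{\mathcal T}$, I choose a tree $t_0\in\mathcal T$ with $B_0+t_0=B$. Since $B\in\mathcal B_{\mathcal T}\subseteq\mathcal B$ forces $B\ne D_u^{(\top)}$ but $B\notin\mathcal B_{\mathcal P}$, the tree $t_0$ cannot be a single path (otherwise $B_0\to B$ would be a $G_{\mathcal P}$-edge and the no-outgoing-edge property of $\mathcal B_{\mathcal P}$ would force $B\in\mathcal B_{\mathcal P}$), so $t_0$ has root $v_0$ and at least two leaves.

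For each leaf $v_i$ of $t_0$ let $p_i\in\mathcal P$ be the unique sub-path from $v_i$ up to $v_0$. The inclusion $B_0+p_i\supseteq B$ will be a standard ``fewer constraints, fewer restrictions'' argument: every solution of $t_0$ with all leaves in $B_0$ restricts, by keeping only the variables appearing in $p_i$, to a solution of $p_i$ with $v_i\in B_0$ and the same value at $v_0$. By Lemma~\ref{LEMUniversalSubuniverseImplies} combined with the 1-consistency identity $D_u^{(\top)}+p_i=D_u^{(\top)}$, I obtain $B_0+p_i\useq D_u^{(\top)}$, hence $B_0+p_i\in\mathcal B_{\mathcal P}\cup\{D_u^{(\top)}\}$. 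If some $i$ satisfies $B_0+p_i\ne D_u^{(\top)}$, then I set $B':=B_0+p_i\in\mathcal B_{\mathcal P}$ and the first alternative of the lemma holds.

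The main obstacle is the ``killing'' case, where $B_0+p_i=D_u^{(\top)}$ for every leaf $v_i$ of $t_0$. In this situation I plan to extract the strong-tuple witness directly from $t_0$: take the $\zv$-parameterized binary relations $W$ and $S$ on $(v_i,v_0)$ defined, respectively, by the path $p_i$ and by $t_0$ with all leaves except $v_i$ held in $B_0$. The Lemma~\ref{LEMSTildeIsUniversalInWTilde}/Lemma~\ref{LEMUniversalSubuniverseImplies} machinery will give $S\us W$; the killing hypothesis will deliver $B_0+W=D_u^{(\top)}$; the identity $B_0+t_0=B$ will give $B_0+S=B$; and 1-consistency will give $D_u^{(\top)}+S=D_u^{(\top)}$. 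To convert this into the exact fixed-point form $B^{*}+S=B^{*}$ demanded by strongness of $(R,D^{(\top)},u,\mathcal B_{\mathcal P})$, I plan to iterate $B\mapsto B+S$ inside $\mathcal B_{\mathcal T}$ and then compose with the universal path of Lemma~\ref{LEMUniversalTree} to land at some $B^{*}\in\mathcal B_{\mathcal P}$ fixed by $S$, which together with $S\us W$ and the two remaining identities exhibits the required strong tuple.
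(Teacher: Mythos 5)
Your handling of the favorable subcase is sound: if some leaf-path $p_i$ of $t_0$ satisfies $B_0+p_i\neq D_u^{(\top)}$, then $B_0+p_i\in\mathcal B_{\mathcal P}$ and $B_0+p_i\supseteq B_0+t_0=B$ by the "fewer constraints" inclusion, exactly as needed. But the ``killing case'' ($B_0+p_i=D_u^{(\top)}$ for every leaf $v_i$) is the real content of the lemma, and your plan for it has several concrete failures. First, $S\us W$ does not follow from Lemma~\ref{LEMUniversalSubuniverseImplies}: you set $W$ to be the \emph{path} $p_i$ and $S$ to be the full \emph{tree} $t_0$ with other leaves in $B_0$, but the machinery of that lemma requires $S$ and $W$ to have the same conjunctive structure with constraint-by-constraint universal containment; a path and a tree do not match. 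If you instead take $W$ to be the full tree $t_0$ with the other leaves unrestricted (so that $S\useq W$ genuinely holds), then the killing hypothesis no longer gives $B_0+W=D_u^{(\top)}$, since the extra tree branches can shrink the root set strictly below $D_u^{(\top)}$. Second, $B_0+S=B\neq B_0$, so condition~2 of strongness fails for the witness $B_0$; your proposed fix of iterating $B\mapsto B+S$ cannot work because $S$ is tree-defined, and only path-defined maps preserve $\mathcal B_{\mathcal P}^{*}$ — the iterates land in $\mathcal B_{\mathcal T}$, not $\mathcal B_{\mathcal P}$, and no reason is given that they converge to a fixed point. Third, $D_u^{(\top)}+S=D_u^{(\top)}$ (condition~4) is simply asserted ``by 1-consistency,'' but 1-consistency only gives $D_u^{(\top)}+p_i=D_u^{(\top)}$ for the path; applying the tree $t_0$ with the \emph{other} leaves pinned to $B_0$ can strictly shrink the root set.

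The paper avoids the killing case entirely by a wrap-around construction you do not use. It fixes an inclusion-maximal $M\in\mathcal B_{\mathcal T}$ and builds the composite tree $t=t_0+t_1+t_2$ (where $t_1$ plays your $t_0$'s role) with $M+t=M$; it then takes a minimal set $L$ of leaves of $t$ whose restriction to $M$ already produces something $\neq D_u^{(\top)}$, which by maximality of $M$ must be $M$ itself. This gives a clean dichotomy: if $|L|=1$ the single leaf-path $p=p_0+p_1+p_2$ sits inside the cycle $M+p=M$, so $M+p_0+p_1\neq D_u^{(\top)}$ and therefore $B_0+p_1\subseteq M+p_0+p_1$ is already in $\mathcal B_{\mathcal P}$ and contains $B$; if $|L|>1$, the two relations $W=t(u_1,u_0)$ and $S=(t\wedge u_2\in M\wedge\dots\wedge u_\ell\in M)(u_1,u_0)$ are defined on the \emph{same} tree $t$ with different leaf constraints, so $S\us W$ is immediate, and minimality of $L$ delivers all three identities ($M+S=M$, $M+W=D_u^{(\top)}$, $D_u^{(\top)}+S=D_u^{(\top)}$). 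Your proof lacks the maximal element $M$, lacks the cycle $t$, and lacks the minimal-$L$ bookkeeping — these are precisely what make the argument close, so the proposal as written has a genuine gap.
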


\begin{proof}
By the definition of $\mathcal B_{\mathcal T}$
there exist $B_{0}\in\mathcal B_{\mathcal P}$
and a tree $t_{1}\in\mathcal T$
such that $B_0+t_1 = B$.
Let $M\in\mathcal B_{\mathcal T}$ be chosen inclusion maximal.
Choose $t_{0}, t_2\in\mathcal T$ such that 
$M+t_0 = B_0$ and $B+t_2 = M$.
Put $t = t_0+t_1+t_2$. Then $M+t = M$.

Consider the minimal set $L$ of leaves of $t$ we need to restrict to $M$ to obtain a $\zv$-parameterized unary relation in the root
that is different from $D_{u}^{(\top)}$.
Since $M$ is inclusion maximal, 
this unary relation must be $M$.
Let $L = \{u_1,\dots,u_{\ell}\}$ and $u_{0}$ be the root of $t$.
We consider two cases:

Case 1. $\ell>1$. Let
$(t\wedge u_{2}\in C\wedge\dots \wedge u_{\ell}\in C)(u_1,u_0)$
define a $\zv$-parameterized binary relation $S$
and
$t(u_1,u_0)$ define a $\zv$-parameterized binary relation $W$.
By the minimality of $\ell$ 
we have $\prescript{\zv}{}{D_{u}^{(\top)}} + 
\prescript{\zv}{}{S}=\prescript{\zv}{}{D_{u}^{(\top)}}$ for any $\zv$.
Then $W$, $S$, and $B$ witness that 
$(R,D^{(\top)},u,\mathcal B_{\mathcal P})$ is a strong tuple.

Case 2. $\ell=1$.
Let $p$ be the path in $t$ from the leaf $u_1$ to the root $u_{0}$.
Then 
$p = p_0+p_1+p_2$, 
where $p_0$, $p_1$, and $p_2$ are 
parts of $p$ coming from $t_0$, $t_1$, and $t_2$, respectively.
Notice that 
$M+p=M$, $M+p_0\supseteq B_0$,
$M+p_0+p_1\supseteq B$, and $M+p_0+p_1\neq D_{u}^{(\top)}$.
Hence, $B_0+p_1\subseteq M+p_0+p_1\neq D_{u}^{(\top)}$
and $B_0+p_1\supseteq B$.
It remains to put $B' = B_0+p_1$.
\end{proof}

\begin{lem}\label{LEMNonemptyIntersectionInBs}
Suppose 
$B_1\in\mathcal B_{\mathcal P}$ and 
$B_2\in\mathcal B_{\mathcal T}$.
Then $B_{1}\cap B_{2}\neq \varnothing$,
or
the tuple
$(R,D^{(\top)},u,\mathcal B_{\mathcal P})$ is strong.
\end{lem}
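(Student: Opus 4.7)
Proof sketch.

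The plan is to derive a contradiction from the joint assumption that the tuple $(R,D^{(\top)},u,\mathcal{B}_{\mathcal{P}})$ is not strong and that $\prescript{\zv_{0}}{}{B_{1}}\cap\prescript{\zv_{0}}{}{B_{2}}=\varnothing$ for some $\zv_{0}\in A^{|A|}$. Under the non-strong hypothesis I first harvest the main conclusions of three earlier lemmas: Lemma \ref{LEMFindLargerInB0} applied to $B_{2}$ produces $B'_{2}\in\mathcal{B}_{\mathcal{P}}$ with $B'_{2}\supseteq B_{2}$; the strong connectivity of $\mathcal{B}_{\mathcal{T}}$ in $G_{\mathcal{T}}$, containing both $B_{2}$ and $B'_{2}$, yields a tree $s\in\mathcal{T}$ with $B'_{2}+s=B_{2}$; and Lemma \ref{LEMUniversalTree} applied at $B'_{2}$ supplies a path $q\in\mathcal{P}$ with $B+q=B'_{2}$ for every $B\in\mathcal{B}_{\mathcal{P}}$.

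The next step is to manufacture a single tree $\tilde{s}'\in\mathcal{T}$ realizing the operation $B\mapsto B\cap B_{2}$ on $\mathcal{B}_{\mathcal{P}}$. Form $\tilde{s}$ by attaching a fresh copy of $q$ to every leaf of $s$, so the former leaves of $s$ become internal variables and the leaves of $\tilde{s}$ are the free ends of the attached $q$-copies; then for every $B\in\mathcal{B}_{\mathcal{P}}$ we have $B+\tilde{s}=(B+q)+s=B'_{2}+s=B_{2}$. Form $\tilde{s}'$ by additionally marking the root $r^{s}$ of $\tilde{s}$ as a leaf, so that $B+\tilde{s}'=B\cap(B+\tilde{s})=B\cap B_{2}$ for every $B\in\mathcal{B}_{\mathcal{P}}$. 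Under the contradiction hypothesis this gives $B_{1}+\tilde{s}'=\varnothing$ at $\zv_{0}$, whereas $B'_{2}+\tilde{s}'=B_{2}\neq\varnothing$ for every $\zv$.

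To close I would convert this tree-level obstruction into the path-level witness demanded by Lemma \ref{LEMSupervisedRelaxationImpliesPSpace}. Apply to $\tilde{s}'$ the same menu of transformations used in the proof of Lemma \ref{LEMZigzagFromBoneToBtwo} (weakening of $C_{S,\top}$- and $C_{W,\top,j}$-constraints, existential elimination of singly-occurring lowest variables, and joining of constraints sharing a variable), producing a sequence of tree-coverings $\tilde{s}'=t_{0},t_{1},\dots,t_{T}$ terminating in a single highest-level path-constraint; each step is either a strengthening $t_{i-1}\supseteq t_{i}$, a weakening $t_{i-1}\useq t_{i}$, or an existential/joining step. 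Tracking $C_{i}:=B_{1}+t_{i}$, we have $C_{0}=\varnothing$ at $\zv_{0}$ while $C_{T}=B_{1}\neq\varnothing$ because the terminal single-constraint covering acts trivially on $B_{1}$; hence the sequence crosses the empty/nonempty boundary at some weakening step $t_{i-1}\useq t_{i}$. Specializing to a root-to-leaf spine path of the coverings across the crossing extracts paths $p_{1}\useq p_{2}$ in $\mathcal{P}$ together with $B_{0}\in\mathcal{B}_{\mathcal{P}}$ satisfying $B_{0}+p_{1}\suseq D_{u}^{(\top)}$, and Lemma \ref{LEMSupervisedRelaxationImpliesPSpace} then forces the tuple to be strong, contradicting the hypothesis.

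The main obstacle is the last step: tracking the tree-to-path reduction while the value $B_{1}+(\cdot)$ flips from $\varnothing$ at $\zv_{0}$ to nonempty, and extracting at the precise transition step the supervised pair $p_{1}\useq p_{2}$ in $\mathcal{P}$ together with a $B_{0}\in\mathcal{B}_{\mathcal{P}}$ for which $B_{0}+p_{2}=D_{u}^{(\top)}$, as required by the definition of supervised $\suseq$-relaxation to $D_{u}^{(\top)}$.
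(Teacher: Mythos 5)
Your proposal takes a genuinely different route from the paper's, and the ``main obstacle'' you flag at the end is a real gap, not a routine verification. The paper's proof bypasses the tree-to-path reduction you attempt: it applies Lemma~\ref{LEMUniversalTree} to obtain a single path $p\in\mathcal P$ with $B'+p=B_1$ for every $B'\in\mathcal B_{\mathcal P}$, sets $B_0=B_2-p$, notes $B_0\in\mathcal B_{\mathcal T}$ (or $B_0=D_u^{(\top)}$, the easy case), applies Lemma~\ref{LEMFindLargerInB0} to obtain $B_0'\in\mathcal B_{\mathcal P}$ with $B_0'\supseteq B_0$, and concludes by the one-line composition $B_2\subseteq B_2-p+p=B_0+p\subseteq B_0'+p=B_1$. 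No zig-zag, no boundary crossing.

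Your construction of $\tilde s$ and $\tilde s'$ realizing $B\mapsto B\cap B_2$ on $\mathcal B_{\mathcal P}$ is sound, but the final conversion step has two independent problems that are not bridged by anything in the paper. First, the relation $\suseq$, and hence Lemma~\ref{LEMSupervisedRelaxationImpliesPSpace}, is defined exclusively via paths in $\mathcal P$: it needs $B_0\in\mathcal B_{\mathcal P}$ and $p_1\useq p_2$ in $\mathcal P$ with $B_0+p_1=B$ and $B_0+p_2=D_u^{(\top)}$. The transformations of Lemma~\ref{LEMZigzagFromBoneToBtwo}, applied to your tree $\tilde s'$, produce a sequence of trees, not paths; ``specializing to a spine path'' drops the off-spine branches and thereby changes the relation a covering defines, so a crossing witness on trees does not descend to a crossing witness on paths. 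Second, even granting a tree-to-path conversion, the boundary crossing only gives a weakening $t_{i-1}\useq t_i$ where $C_{i-1}=B_1+t_{i-1}$ is empty at $\zv_0$ and $C_i=B_1+t_i$ is nonempty at $\zv_0$; nothing forces $C_i$ (or any path-level relation extracted from it) to equal $D_u^{(\top)}$, which is exactly what the definition of $\suseq$-to-$D_u^{(\top)}$ requires. Moreover, $C_{i-1}$ has an empty component and so does not even belong to $\mathcal B_{\mathcal P}^{*}$, the set on which $\suseq$ is defined. As written the argument does not complete the proof, whereas the paper's composition argument finishes immediately.
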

\begin{proof}
By Lemma \ref{LEMUniversalTree}
there exists a path 
$p\in \mathcal P$ such that 
$B+p = B_1$ for every $B\in\mathcal B_{\mathcal P}$.
Let $B_{0}= B_2+p'$, where $p'$ is obtained from $p$ 
by switching ends (we could also write $B_{0} = B_2-p$).
If $B_{0}=D_{u}^{(\top)}$ 
then $B_1+p=B_1$ implies 
$B_{1}\cap B_2\neq \varnothing$.
Otherwise, $B_{0}\in\mathcal B_{\mathcal T}$ 
and by Lemma \ref{LEMFindLargerInB0} there exists 
$B_{0}'\in\mathcal B_{\mathcal P}$ such that 
$B_{0}'\supseteq B_0$.
By the definition 
of $p$ we must have 
$B_{0}'+p = B_1$.
Therefore, 
$B_{2}\subseteq B_2-p+p
= B_{0}+p\subseteq B_{0}'+p = B_{1}$,
which completes the proof.
\end{proof}

We are ready to prove the main theorem of this subsection.

\begin{THMFindSmallerReductionTHM}
Suppose $R\subseteq A^{2n+1}$, 
$D^{(\top)}$ is a $\zv$-parameterized
universal 1-consistent reduction for 
$\mathcal I_{R}$, 
$u\in \Var(\mathcal I_{R})$,
$B\us D_{u}^{(\top)}$ is 
    a $\zv$-parameterized nonempty unary relation. Then one of the following conditions holds:
\begin{enumerate}
    \item there exists a $\zv$-parameterized
universal 1-consistent reduction $D^{(\bot)}$ for 
$\mathcal I_{R}$ that is smaller than $D^{(\top)}$;
    \item  there exists a mighty tuple IV q-definable from $R$.
\end{enumerate}
\end{THMFindSmallerReductionTHM}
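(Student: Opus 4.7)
The plan is to reduce Theorem~\ref{THMFindSmallerReduction} to the machinery already developed in this subsection. First I would observe that one may assume without loss of generality that $u$ is the \emph{highest} variable of $\mathcal{I}_R$ for which some $\zv$-parameterized unary relation $B'$ satisfies $B' \us D^{(\top)}_{u}$: if the given $u$ is not highest, pick a higher variable $u'$ with some $B' \us D^{(\top)}_{u'}$ and run the argument for that pair. Any reduction that shrinks $D^{(\top)}$ at $u'$ is still a smaller reduction in the sense of the theorem, and any mighty tuple IV obtained in the process q-defined from $R$ serves as the desired output. Once $u$ is fixed as highest, the sets $\mathcal{B}$, $\mathcal{B}_{\mathcal{T}}$, $\mathcal{B}_{\mathcal{P}}$ and $\mathcal{B}^{*}_{\mathcal{P}}$ are as set up before Lemma~\ref{LEMIntersectEachImpliesSmallerReduction}, and all the auxiliary lemmas of this subsection apply verbatim. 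Note that $\mathcal{B}$ is nonempty (it contains the given $B$), so the sink SCCs $\mathcal{B}_{\mathcal{T}}$ and then $\mathcal{B}_{\mathcal{P}}$ are both nonempty.

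Next I would split into two cases according to whether the tuple $(R, D^{(\top)}, u, \mathcal{B}_{\mathcal{P}})$ is strong. If it is strong, Lemma~\ref{LEMSmallerReductionOrPSpaceHardness} immediately yields either condition~1 (a smaller universal 1-consistent reduction) or condition~2 (a mighty tuple IV q-definable from $R$), and the theorem is proved. So the remaining task is to handle the case when the tuple is \emph{not} strong. My goal there is to exhibit some $B^{*} \in \mathcal{B}$ fulfilling the hypothesis of Lemma~\ref{LEMIntersectEachImpliesSmallerReduction}, namely $\prescript{\zv}{}{B^{*}} + \prescript{\zv}{}{t} \neq \varnothing$ for every $t \in \mathcal{T}$ and every $\zv \in A^{|A|}$; then that lemma produces the smaller universal 1-consistent reduction required by condition~1.

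To carry this out, pick any $B^{*} \in \mathcal{B}_{\mathcal{P}}$. For any tree $t \in \mathcal{T}$ and any $\zv$, the value $\prescript{\zv}{}{B^{*}} + \prescript{\zv}{}{t}$ is a subset of $\prescript{\zv}{}{D^{(\top)}_{u}}$ and hence either equals $\prescript{\zv}{}{D^{(\top)}_{u}}$ (trivially nonempty), or is a nonempty proper subset (in which case it is a vertex of $G_{\mathcal{T}}$ reachable from $B^{*} \in \mathcal{B}_{\mathcal{T}}$ and so lies in the sink component $\mathcal{B}_{\mathcal{T}}$), or is empty. The only dangerous possibility is the last. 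Suppose for contradiction that $\prescript{\zv_0}{}{B^{*}} + \prescript{\zv_0}{}{t_{0}} = \varnothing$ for some $t_0$ and $\zv_0$. Interpolating between restricting all leaves of $t_0$ to $B^{*}$ (empty result) and restricting them all to $D^{(\top)}_{u}$ (nonempty, in fact equal to $D^{(\top)}_{u}$, by 1-consistency), one finds a minimal subset of leaves whose restriction to $B^{*}$ first produces the empty set at the root. Separating that critical leaf from the rest of $t_0$ and reading off the resulting binary relation on (critical leaf, root) for ``all other leaves in $B^{*}$'' versus ``all other leaves in $D^{(\top)}_{u}$'' produces a pair $S \us W$ of $\zv$-parameterized binary relations q-definable from $R$ that, together with $B^{*}$ and $D^{(\top)}_{u}$, witness precisely the four properties making the tuple strong—contradicting the hypothesis of the case.

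The main obstacle is the final step of the non-strong case: extracting from an empty witness tree a pair of binary relations $S \us W$ that certify strongness. The bookkeeping is delicate because one must ensure that the interpolation really produces $S \us W$ (not just $S \subseteq W$) and that the obtained $S$, $W$, and $B^{*}$ satisfy all four defining conditions of a strong tuple simultaneously; this relies on using $B^{*} \in \mathcal{B}_{\mathcal{P}}$ (so that suitable paths close things up to $B^{*}$), the 1-consistency of $D^{(\top)}$ (for the identity $D^{(\top)}_{u} + t = D^{(\top)}_{u}$), and the universal-weakening machinery of Lemmas~\ref{LEMUniversalSubuniverseImplies} and~\ref{LEMSIsUniversalInW}. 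Once that contradiction is established, $B^{*}$ satisfies the hypothesis of Lemma~\ref{LEMIntersectEachImpliesSmallerReduction} and condition~1 follows.
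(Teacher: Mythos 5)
Your high-level skeleton matches the paper's: the theorem follows from exactly two lemmas—Lemma~\ref{LEMIntersectEachImpliesSmallerReduction} (when no tree kills $B^{*}$) and Lemma~\ref{LEMSmallerReductionOrPSpaceHardness} (when the tuple is strong)—so it suffices to show that ``some tree $t_0$ kills $B^{*}$'' implies the tuple is strong. Phrasing this as ``not strong $\Rightarrow$ no tree kills $B^{*}$'' is just the contrapositive, so the dichotomy itself is fine.

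The gap is in how you propose to establish the implication. Your critical-leaf extraction from the dangerous tree $t_{0}$ cannot yield a strong-tuple witness, for a concrete reason: if $L=\{u_{1},\dots,u_{\ell}\}$ is a minimal set of leaves whose restriction to $B^{*}$ makes the root empty, and you define $S$ as the $(u_{1},u_{0})$-relation of $t_{0}$ with the remaining leaves of $L$ restricted to $B^{*}$, then $\prescript{\zv_{0}}{}{B^{*}}+\prescript{\zv_{0}}{}{S}$ is precisely the root value with \emph{all} of $L$ restricted to $B^{*}$, which is $\varnothing$—violating condition~2 ($\prescript{\zv}{}{B}+\prescript{\zv}{}{S}=\prescript{\zv}{}{B}$) of the strong-tuple definition outright, and no amount of ``closing up by paths'' repairs it because there is no nonempty set left to close up. (Compare with the superficially similar extraction inside Lemma~\ref{LEMFindLargerInB0}: there the criticality criterion is ``first makes the root $\neq D^{(\top)}_{u}$'', not ``first makes the root empty'', the reference set $M$ is taken inclusion-maximal in $\mathcal{B}_{\mathcal{T}}$, and the tree is pre-engineered so that $M+t=M$. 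All three of these are essential and none is available for your $t_{0}$.) The paper instead proves the implication indirectly: take $t$ with the minimal number of leaves such that $\prescript{\zv_{0}}{}{B}+\prescript{\zv_{0}}{}{t}=\varnothing$, move the root of $t$ to one of its leaves to obtain $t'$, observe that $C:=B+t'$ is nonempty for all $\zv$ (by leaf-minimality), lies in $\mathcal{B}_{\mathcal{T}}$, and satisfies $\prescript{\zv_{0}}{}{B}\cap\prescript{\zv_{0}}{}{C}=\varnothing$; then Lemma~\ref{LEMNonemptyIntersectionInBs} forces the tuple to be strong. That lemma is where the real work lives—it rests on Lemmas~\ref{LEMSupervisedRelaxationImpliesPSpace}, \ref{LEMZigzagFromBoneToBtwo}, \ref{LEMCannotKillJustOne}, \ref{LEMUniversalTree}, and \ref{LEMFindLargerInB0}—and none of that machinery is bypassed by a direct decomposition of $t_{0}$. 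In short: your framing is right, but what you describe as delicate bookkeeping is in fact a wrong construction, and the correct argument needs the root-moved tree $t'$ together with Lemma~\ref{LEMNonemptyIntersectionInBs}.
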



\begin{proof}
First, we repeat assumptions from the beginning of this section.
We choose the highest variable $u$ with the same property, 
then we define $\mathcal B$ and choose $\mathcal B_{\mathcal T}$ and $\mathcal B_{\mathcal P}$.

Choose some $B\in\mathcal B_{\mathcal P}$.
If $\prescript{\zv}{}B+\prescript{\zv}{}t \neq \varnothing$
for every $\zv$
and $t\in\mathcal T$, then 
by Lemma \ref{LEMIntersectEachImpliesSmallerReduction}
there exists a required smaller reduction.
Otherwise, choose a tree $t\in\mathcal T$ with the minimal number of leaves such that 
$\prescript{\zv_0}{}B+\prescript{\zv_0}{}t = \varnothing$
for some $\zv_0$.
Moving the root to one of the leaves and removing its leaf mark we get another tree $t'$
such that 
$\prescript{\zv_0}{}B+\prescript{\zv_0}{}t' \cap \prescript{\zv_0}{}B = \varnothing$.
Since $t$ has the minimal number of leaves, 
$\prescript{\zv}{}B+\prescript{\zv}{}t'\neq\varnothing$
for any $\zv$ and $\prescript{\zv}{}B+\prescript{\zv}{}t'\in \mathcal B_{\mathcal T}$.
Then Lemma \ref{LEMNonemptyIntersectionInBs} implies that 
the tuple
$(R,D^{(\top)},u,\mathcal B_{\mathcal P})$ is strong.
It remains to apply Lemma \ref{LEMSmallerReductionOrPSpaceHardness}.
\end{proof}

\section{Hardness Claims}\label{SECTIONHardnessResults}
\subsection{Definitions}

Binary relations in this section 
are often viewed as directed graphs,
and we use terminology from the graph theory such as paths and cycles.
A binary relation $R$ is called \emph{transitive} 
if $R + R = R$.
Even though the general domain of any relation $R$ is $A$, 
we often define a subset $D$ such that 
$R\subseteq D^{2}$. 
Then we call the relation $R$ \emph{reflexive} if 
$\{(d,d)\mid d\in D\}\subseteq R$.
Suppose $R$ is a reflexive relation on $D$.
Then \emph{the transitive symmetric closure} of $R$
is the minimal transitive symmetric relation $R'\supseteq R$.
Notice that 
$R' = R-R+R-R+\dots+R-R+R$, for sufficiently many pluses and minuses, 
hence $R'$ is q-definable over $R$.

For a positive integer $m$ and a binary relation $S$ 
denote $m\cdot S =\underbrace{S+S+\dots+S}_{m}$.
\begin{lem}\label{LEMFactorialRepetition} 
Suppose 
$R\subseteq A\times A$, 
$S = (|A|!\cdot |A|^{2})\cdot R$.
Then 
$S + S = S$.
\end{lem}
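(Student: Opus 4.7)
The plan is to prove the identity $2m\cdot R = m\cdot R$ where $m = |A|!\cdot|A|^2$, which by associativity of composition is equivalent to $S+S=S$. The inclusion $m\cdot R \subseteq 2m\cdot R$ is the easy direction: given a walk of length $m$ witnessing a pair $(a,b)\in m\cdot R$, some vertex is repeated within its first $|A|+1$ positions, producing a cycle of length $c\le|A|$. Since $c\mid|A|!$ and $|A|!\mid m$, inserting $m/c$ additional copies of this cycle produces a walk of length $m+c\cdot(m/c)=2m$.

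For the reverse inclusion I will exploit the structure of the cyclic sub-semigroup $\{k\cdot R : k\ge 1\}$ inside the finite semigroup of binary relations on $A$ under composition. Such a cyclic sub-semigroup is always eventually periodic; let $t$ denote its pre-period and $p$ its period. I will establish that $p$ divides $|A|!$ and that $t\le|A|^2$. Granted these bounds, $m\ge|A|^2\ge t$ places $m$ inside the periodic part, and $p\mid|A|!\mid m$ makes $m$ a multiple of $p$; hence $m\cdot R$ is an idempotent of the semigroup and $2m\cdot R=(m\cdot R)+(m\cdot R)=m\cdot R$.

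To bound the period, I will decompose the digraph $R$ into strongly connected components. Each SCC $C$ carries a period $d_C$ equal to the gcd of its cycle lengths; since every cycle in $C$ has length at most $|C|\le|A|$, we have $d_C\in\{1,\ldots,|A|\}$, so $d_C\mid|A|!$. A standard analysis then shows the semigroup period $p$ equals an lcm of such $d_C$'s and therefore divides $|A|!$. For the pre-period, I will use a Frobenius-number (Chicken McNugget) argument: for each pair $(a,b)$, the set of walk lengths from $a$ to $b$ is generated by some shortest walk length together with the cycle lengths available along the walk, all at most $|A|$; the classical Frobenius bound for integers bounded by $|A|$ then guarantees that once $k$ exceeds a threshold of order $|A|^2$, membership of $k$ in this set depends only on $k\bmod d_{a,b}$ for some $d_{a,b}$ dividing $|A|!$.

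The main technical obstacle will be the uniform pre-period bound $t\le|A|^2$: within a single SCC the per-pair Frobenius bound is routine, but walks that traverse several SCCs of the condensation DAG require careful bookkeeping to ensure the bound does not accumulate with the number of SCCs crossed, combining the bound on the longest simple path in the condensation (at most $|A|-1$) with the within-SCC stabilization estimate.
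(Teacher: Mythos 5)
Your forward inclusion $m\cdot R\subseteq 2m\cdot R$ is exactly the paper's cycle-repeating step. The gap is in the reverse inclusion: you propose to derive $2m\cdot R\subseteq m\cdot R$ from the structure theory of the cyclic sub-semigroup $\{k\cdot R : k\ge 1\}$, which requires two nontrivial facts about Boolean matrices, namely that the period divides $|A|!$ and, crucially, that the pre-period $t$ is at most $|A|^{2}$. You flag the latter yourself as the main obstacle, and rightly so: it is a Wielandt-type index bound that must also handle reducible (multi-SCC) digraphs, and the Frobenius-number bookkeeping across the condensation DAG is genuine work that the proposal sketches but does not carry out. As written, the proof is incomplete at precisely this point.

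The paper sidesteps all of this with a short monotonicity trick that is worth internalizing, because it makes both the period and pre-period bounds fall out for free. Set $S_1 = |A|!\cdot R$ and $S_n = n\cdot S_1$. The cycle argument yields the single inclusion $S_1\subseteq S_1+S_1 = S_2$, and since composition of relations is monotone in each argument, this propagates by induction to $S_n\subseteq S_{n+1}$ for every $n$. An increasing chain of subsets of $A\times A$ must stabilize within $|A|^{2}$ steps, and once $S_i=S_{i+1}$ we get $S_j=S_i$ for all $j\ge i$, hence $S_i+S_i=S_{2i}=S_i$. So $S=S_{|A|^{2}}$ is already idempotent. In your framing: the same forward inclusion you proved already makes the sub-sequence $(n\cdot|A|!\cdot R)_{n\ge 1}$ monotone, and monotone stabilization gives idempotence at the stable value with no separate pre-period estimate needed. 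You do not need to understand the eventual period of the full power sequence, only to observe that it is non-decreasing along the multiples of $|A|!$.
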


\begin{proof}
First, put $S_{1} = (|A|!)\cdot R$.
Notice that if $(a,b) \in S_1$, then there is a path from 
$a$ to $b$ in $R$ of length $|A|!$.
Since the domain is of size $|A|$, there must be a cycle of length
$m\le |A|$ in the path. Repeating this cycle 
$|A|!/m$ times we make a path of length $2|A|!$, which
implies $S_1+S_1\supseteq S_1$.
Put $S_{n} = n\cdot S_{1}$.
Since $S_1+S_1\supseteq S_1$, 
we have $S_{i}\subseteq S_{i+1}$.
Moreover, if 
$S_{i}= S_{i+1}$, then $S_{j} = S_{i}$ for 
any $j>i$. Thus, the sequence 
$S_{1},S_{2},\dots,$ stabilises at 
some $S_{i}$, where $i\le |A|^{2}$.
Hence, $S_{|A|^{2}} + S_{|A|^{2}}  = S_{|A|^{2}}$, 
which completes the proof.
\end{proof}

For two equivalence relations $R_{1}$ and $R_{2}$ on some set $D$
by $R_{1}\join R_{2}$ we denote the minimal equivalence relation 
on $D$ containing $R_{1}$ and $R_{2}$.
Thus, it is the usual join of two equivalence relations, but we 
prefer to use this symbol to distinguish it from the disjunction.
Notice that 
$R_{1}\join R_{2}$ is q-definable from $R_{1}$ and $R_{2}$ as 
we can always write a quantified formula 
defining 
$R_{1}+R_{2}+R_{1}+R_{2}+\dots+R_{1}+R_{2}$.

Recall that we agreed that 
$A = \{1,\dots,|A|\}$.
Then we put 
$\kappa = (1,\dots,|A|)$, that is, 
$\kappa$ is a concrete tuple of length $|A|$ with all the elements of 
$A$.



\subsection{PSpace-hardness for a mighty tuple I}
\label{SUBSECTIONPSPACEHARDNESS}
In this section we show 
that the QCSP over a mighty tuple I is PSpace-hard.

For technical reasons 
we will need mighty tuples with an additional property:
\begin{itemize}
    \item[($\kappa$)] $\prescript{\zv}{\delta}R^{\kappa}\subseteq
    \prescript{\zv}{\delta}R^{\alpha}$ for
    every  $\zv\in A^{|A|}$, $\delta\in \prescript{\zv}{}\Delta $, and $\alpha$. 
\end{itemize}

A mighty tuple I satisfying property ($\kappa$) is called \emph{a mighty tuple I'}.

\begin{lem}\label{LEMMightyTupleOnePrime}
Suppose 
$(Q,D,B,C,\Delta)$ is a mighty tuple I. 
Then $\{Q,D,B,C,\Delta\}$ q-defines 
a mighty tuple I'.
\end{lem}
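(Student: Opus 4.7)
The plan is to leave $D$, $B$, $C$, $\Delta$ untouched and replace $Q$ by a q-definable variant $Q'$ whose parameter arity becomes $m' = |A|$, designed so that substituting $\alpha' = \kappa = (1,\dots,|A|)$ simultaneously ``draws'' every tuple of $A^{m}$ from the new parameter. Concretely, for $\alpha' = (a_1,\dots,a_{|A|}) \in A^{|A|}$ I will set
\[
\prescript{\zv}{\delta}{Q'}^{\alpha'}(y_1,y_2) \;:=\; \bigwedge_{\phi\colon [m]\to [|A|]} \prescript{\zv}{\delta}{Q}^{(a_{\phi(1)},\dots,a_{\phi(m)})}(y_1,y_2).
\]
This is a quantifier-free conjunction of atomic formulas over $Q$ (after identifying the $\zv$, $\delta$, $y_1$, $y_2$ variables), hence q-definable from $\{Q\}$. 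The intuition is that $\alpha'$ provides a pool of $|A|$ elements, and $Q'^{\alpha'}$ enforces $Q^\alpha$ for every $\alpha$ drawable from that pool; when $\alpha'=\kappa$ the pool is all of $A$, so $\phi$ runs over all of $A^m$.

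Next, I verify properties (1)--(6) of a mighty tuple I for the new tuple $(Q',D,B,C,\Delta)$. Items (1), (2), and (6) are immediate, since $\Delta$, $B$, $C$, $D$ are unchanged. Property (3) holds because a finite intersection of equivalence relations on $\prescript{\zv}{\delta}{D}$ is again an equivalence relation on $\prescript{\zv}{\delta}{D}$. For property (4), evaluating at a constant tuple $\alpha'=(x,\dots,x)$ collapses the conjunction to a single copy of $\prescript{\zv}{\delta}{Q}^{(x,\dots,x)}$, so $\prescript{\zv}{\delta}{Q'}^{\forall} = \prescript{\zv}{\delta}{Q}^{\forall} = \prescript{\zv}{\delta}{D}\times \prescript{\zv}{\delta}{D}$.

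The heart of the argument, and the only mildly delicate point, is the identity $\prescript{\zv}{\delta}{Q'}^{\forall\forall} = \prescript{\zv}{\delta}{Q}^{\forall\forall}$, which delivers both property (5) and the new property $(\kappa)$ in one stroke. For $\supseteq$, each $\prescript{\zv}{\delta}{Q'}^{\alpha'}$ is a finite intersection of relations each containing $\prescript{\zv}{\delta}{Q}^{\forall\forall}$, so the same is true of $\prescript{\zv}{\delta}{Q'}^{\forall\forall}$. For $\subseteq$, specializing to $\alpha'=\kappa$ makes $(a_{\phi(1)},\dots,a_{\phi(m)})$ range over \emph{all} of $A^m$, so $\prescript{\zv}{\delta}{Q'}^{\kappa} = \bigcap_{\alpha\in A^m}\prescript{\zv}{\delta}{Q}^{\alpha} = \prescript{\zv}{\delta}{Q}^{\forall\forall}$, whence $\prescript{\zv}{\delta}{Q'}^{\forall\forall}\subseteq \prescript{\zv}{\delta}{Q'}^{\kappa}=\prescript{\zv}{\delta}{Q}^{\forall\forall}$. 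This same computation yields $(\kappa)$: for every $\alpha'$,
\[
\prescript{\zv}{\delta}{Q'}^{\kappa} \;=\; \prescript{\zv}{\delta}{Q}^{\forall\forall} \;\subseteq\; \prescript{\zv}{\delta}{Q'}^{\alpha'},
\]
as required. Thus $(Q',D,B,C,\Delta)$ is a mighty tuple I' q-defined from $\{Q,D,B,C,\Delta\}$.
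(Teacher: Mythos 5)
Your construction is exactly the paper's: conjoin $\prescript{\zv}{\delta}{Q}$ over all selections from the new $|A|$-length parameter tuple (your index set of maps $\phi\colon[m]\to[|A|]$ is the paper's index set of tuples $(i_1,\dots,i_k)\in\{1,\dots,|A|\}^k$). The verification is correct and fills in the details the paper leaves implicit; nothing substantive differs.
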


\begin{proof}
Define 
a mighty tuple I' as follows.
Suppose the $\alpha$-parameter is from $A^{k}$.
Put 
$$\prescript{\zv}{\delta}R^{x_1,\dots,x_{|A|}}(y_1,y_2)=
\bigwedge\limits_{i_1,\dots,i_k\in\{1,2,\dots,|A|\}.
}
\prescript{\zv}{\delta}Q^{x_{i_1},\dots,x_{i_k}}(y_1,y_2).$$
Then 
$R^{\kappa} =R^{\forall\forall} = Q^{\forall\forall}$ and 
$R^{\forall} = Q^{\forall}$.
Hence $(R,D,B,C,\Delta)$ is a mighty tuple I'.
\end{proof}

As we mentioned in Section \ref{SectionWithSimpleHardnessReduction} 
we have one reduction covering all the PSpace-hard cases of 
the QCSP. Precisely, we will show that using a mighty tuple we can build relations very similar to 
the relations from Section \ref{SectionWithSimpleHardnessReduction}. Then we use 
the same reduction from 
the Quantified-3-DNF, which is 
the complement of the Quantified-3-CNF.

 
First, for any instance of the Quantified-3-DNF 
 we define a sentence corresponding to this reduction.
Suppose we have two 
relational symbols $\Upsilon_0$ and $\Upsilon_1$ of arity $m+2$
for some $m\ge 1$.
Then 
$\Upsilon_0$ and $\Upsilon_1$ can be viewed as $\xv$-parameterized binary relations, where $\xv\in A^{m}$.
Let $\Phi$ be an instance of the Quantified-3-DNF of the form
 $$Q_{1} x_{1} Q_{2} x_2\dots Q_{n}x_{n}\;
 (x_{a_1}=a_{1}'\wedge x_{b_1}=b_{1}'\wedge x_{c_1}=c_{1}')\vee\dots\vee(x_{a_s}= a_{s}'\wedge x_{b_s}=b_{s}'\wedge x_{c_s}=c_{s}'),$$
 where 
 $Q_{1},Q_{2},\dots,Q_{n}\in\{\forall,\exists\}$,
$a_{i},b_{i},c_{i}\in[n]$ and $a_{i}',b_{i}',c_{i}'\in\{0,1\}$ for every $i\in[n]$.

We define recursively formulas 
$\Psi_{n},\Psi_{n-1},\dots,\Psi_{1},\Psi_{0}$.
Put
\begin{align*}
    \Psi_{n} = \exists y_{1}\dots\exists y_{s-1} \bigwedge_{1 \le i \le s}
   (\Upsilon_{\overline{a_{i}'}}^{\xv_{a_i}}(y_{i-1},y_i)\wedge 
    \Upsilon_{\overline{b_{i}'}}^{\xv_{b_i}}(y_{i-1},y_i) \wedge \Upsilon_{\overline {c_{i}'}}^{\xv_{c_i}}(y_{i-1},y_i)), 
\end{align*}
where $\overline a$ is a negation of $a$ for any $a\in\{0,1\}$.
Notice that for any variable $x_{i}$ of the original instance we introduce 
a variable $\xv_{i}$, which takes values from $A^{m}$.

For every $i$ by $l_{i}$ and $r_{i}$ 
we denote the minimal and the maximal indices of $y$-variables appearing in 
the formula 
$\Psi_{i}$.
Thus, we have $l_{n} = 0$ and $r_{n} = s$.
Let us show how to define 
$\Psi_{k-1}$ from $\Psi_{k}$.
If $Q_{k}=\forall$, then we put 
$\Psi_{k-1} = 
\forall \xv_{k}\; \Psi_{k}$.

If $Q_{k}=\exists$, then we put 
$$\Psi_{k-1}=
\exists y_{r_{k}} \forall \xv_{k} \exists y_{l_{k}} 
 \; \Psi_{k}
\wedge \Upsilon_{0}^{\xv_{k}}(y_{l_{k}-1},y_{l_{k}}) \wedge \Upsilon_{1}^{\xv_{k}}(y_{r_{k}+1},y_{l_{k}}).$$


Notice that in the formula $\Psi_{0}$ all the variables except for 
$y_{l_0}$  and $y_{r_0}$ are quantified.
By $\mathcal Q^{\Phi}$ 
we 
denote the formula $\Psi_{0}$  
whose variables 
$y_{l_0}$  and $y_{r_0}$ are replaced by 
$y$ and $y'$ respectively.

 For $\xv$-parameterized relations $R_{0}$, $R_{1}$,
 and an instance $\Phi$ of the Quantified-3-DNF by 
 $\mathcal Q^{\Phi}(R_{0},R_{1})$ we denote the formula 
 obtained from  
 $\mathcal Q^{\Phi}$ by substituting 
 $R_{0}$ for $\Upsilon_{0}$ 
 and $R_{1}$ for $\Upsilon_{1}$.
 By 
 $\mathcal T^{\Phi}(R_{0},R_{1})$ 
 we denote the transitive symmetric closure of $\sigma$, where
 $\sigma(y,y') = \mathcal  Q^{\Phi}(R_{0},R_{1})$.



Arguing as in Section \ref{SectionWithSimpleHardnessReduction} we can prove the following lemma.
\begin{lem}\label{LEMCanonicalPSpaceHardness}
Suppose 
$\Phi$ is an instance of the Quantified-3-DNF,
$A = \{+,-,0,1\}$,
$V_{0}^{x}(y_1,y_2) = (y_1,y_2\in\{+,-\})\wedge (x=0\rightarrow y_1=y_2)$,
$V_{1}^{x}(y_1,y_2) = (y_1,y_2\in\{+,-\})\wedge (x=1\rightarrow  y_1=y_2)$.
Then 
$\mathcal T^{\Phi}(V_{0},V_{1}) = \{(+,+),(-,-)\}$ if $\Phi$ does not hold;
$\mathcal T^{\Phi}(V_{0},V_{1}) = \{+,-\}^{2}$ if $\Phi$ holds.
\end{lem}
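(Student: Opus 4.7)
The plan is to prove the lemma by induction on $k$ descending from $n$ to $0$, maintaining an invariant that describes $\Psi_k$ as a binary relation on $(y_{l_k},y_{r_k})$ parameterized by $\xv_1,\ldots,\xv_k\in A$. I would first define an ``extended'' evaluation $D^*(\xv)$ by declaring a literal $x_i=v$ to be true under $\xv_i$ iff $\xv_i\neq\overline{v}$ (so in particular any $\xv_i\in\{+,-\}$ satisfies every literal of $x_i$), and define $W_k(\xv_1,\ldots,\xv_k)$ to be the truth of $Q_{k+1}x_{k+1}\ldots Q_n x_n\,D^*(\xv_1,\ldots,\xv_k,x_{k+1},\ldots,x_n)$, where $\forall$ ranges over $A$ and $\exists$ ranges over $\{0,1\}$. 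The invariant to prove is: for every $\xv\in A^k$, $\Psi_k(y_{l_k},y_{r_k};\xv)$ equals $\{+,-\}^2$ if $W_k(\xv)$ holds, and equals the diagonal $\{(+,+),(-,-)\}$ otherwise.

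The base case $k=n$ is direct: each edge $V_{\overline{a'}}^{\xv_a}(y_{i-1},y_i)$ in the clause-chain $\Psi_n$ is either the full relation $\{+,-\}^2$ (exactly when the literal $x_a=a'$ holds under $\xv_a$ in the extended semantics) or the diagonal on $\{+,-\}$, so the composition along the chain is full iff some clause has all three edges free, which is $D^*(\xv)$. The step $Q_k=\forall$ is immediate, since $\forall\xv_k\,\Psi_k$ is the intersection over $\xv_k\in A$ and the diagonal is absorbing. The crux is $Q_k=\exists$, where $\Psi_{k-1}=\exists y_{r_k}\forall\xv_k\exists y_{l_k}(\Psi_k\wedge V_0^{\xv_k}(y_{l_k-1},y_{l_k})\wedge V_1^{\xv_k}(y_{r_k+1},y_{l_k}))$. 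For the boundary pair $(y_{l_k-1},y_{r_k+1})=(+,-)$ I would do a case analysis on the four values of $\xv_k$: playing $y_{r_k}=+$ handles $\xv_k=0$ trivially (the forced $y_{l_k}=+$ makes $\Psi_k(+,+)$ satisfied by reflexivity) and reduces the threat $\xv_k=1$ (which forces $y_{l_k}=-$) to requiring $\Psi_k(-,+;1)=\{+,-\}^2$, i.e.\ $W_k(\xv,1)$; the symmetric choice $y_{r_k}=-$ works iff $W_k(\xv,0)$. The values $\xv_k\in\{+,-\}$ collapse every $V$-constraint to $\{+,-\}^2$ and are trivial. Hence the EP wins iff $W_k(\xv,0)\vee W_k(\xv,1)=W_{k-1}(\xv)$, and the diagonal case $y_{l_k-1}=y_{r_k+1}$ is handled by setting every $y$-variable equal to the common value.

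To identify $W_0$ with $\Phi$, note that $W_0$'s $\forall$-quantifiers range over $A$ rather than $\{0,1\}$; I would resolve this by observing that $D^*$ is monotone in the partial order on $A$ that places $\{+,-\}$ strictly above $\{0,1\}$, since replacing $0$ or $1$ by $+$ or $-$ at a position only turns more literals true. Monotonicity propagates through $\forall$ and $\exists$, so for any monotone $\phi$ we have $\forall x_k\in A\,\phi\Leftrightarrow\forall x_k\in\{0,1\}\,\phi$, and then $D^*$ agrees with $D$ on $\{0,1\}^n$, giving $W_0\Leftrightarrow\Phi$. Finally, the relation $\sigma(y,y')=\mathcal{Q}^\Phi(V_0,V_1)$ is contained in $\{+,-\}^2$ and reflexive on $\{+,-\}$ (set every internal $y$-variable equal to the common value), so by the invariant at $k=0$ combined with $W_0\Leftrightarrow\Phi$, $\sigma$ equals $\{+,-\}^2$ if $\Phi$ holds and $\{(+,+),(-,-)\}$ otherwise; both are already equivalence relations on $\{+,-\}$, so the transitive symmetric closure $\mathcal{T}^\Phi(V_0,V_1)$ coincides with $\sigma$ in each case. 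The main obstacle will be the $Q_k=\exists$ inductive step: carefully verifying that the interplay between the outer $\exists y_{r_k}$ played before seeing $\xv_k$, the two linking constraints $V_0^{\xv_k}$ and $V_1^{\xv_k}$, and the inner $\exists y_{l_k}$ faithfully implements the EP's choice of $x_k\in\{0,1\}$ in the original Quantified-3-DNF.
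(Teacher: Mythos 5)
Your proof is correct and follows essentially the paper's approach: the paper does not prove this lemma explicitly, merely noting that it can be shown ``arguing as in Section~\ref{SectionWithSimpleHardnessReduction},'' and your inductive argument is a faithful formalization of that sketch. The two devices you introduce to make it rigorous---the invariant stated via the extended semantics $D^{*}$ over the full four-element domain, and the monotonicity observation that lets the $\forall$-quantifier range over $A$ rather than $\{0,1\}$---are exactly what is needed to turn the paper's informal remark that ``it does not make sense for the UP to play values $+$ and $-$'' into a precise statement.
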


The next lemmas describe important properties of the 
operator $\mathcal T^{\Phi}$.

\begin{lem}\label{LEMNoInstanceCondition}
Suppose 
\begin{enumerate}
	\item $\Phi$ is a No-instance of the Quantified-3-DNF;
	    \item $R_{0}$ and $R_{1}$ are $\xv$-parameterized equivalence relations on $D$;
    \item $(B\times C)\cap (R_{0}^{\beta_{0}}\join R_{1}^{\beta_{1}})=\varnothing$
    for some $\beta_{0}$ and $\beta_1$.
\end{enumerate}
    Then $(B\times C)\cap \mathcal T^{\Phi}(R_0,R_1)=\varnothing$.
\end{lem}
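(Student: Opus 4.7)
The plan is to show that every pair in $\mathcal{T}^{\Phi}(R_0,R_1)$ is contained in the equivalence relation $E := R_0^{\beta_0}\join R_1^{\beta_1}$; since $E$ is transitive and symmetric and, by hypothesis~3, $(B\times C)\cap E=\varnothing$, the conclusion follows. Because $E$ is transitive, it suffices to prove the single-step inclusion $\mathcal{Q}^{\Phi}(R_0,R_1)\subseteq E$, i.e., that whenever $(y,y')\notin E$ the Universal Player (UP) has a winning strategy in the game defined by $\mathcal{Q}^{\Phi}$. Since $\Phi$ is a No-instance, there is a UP-winning strategy $\tau$ for $\Phi$, and the proof will translate $\tau$ into a winning UP-strategy in $\mathcal{Q}^{\Phi}$, while forcing the Existential Player's (EP) $y$-moves to stay inside the single $E$-class.

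The core is an induction on $k$, downwards from $n$ to $0$, of the following claim $(\star)_k$: for every partial play $\mathfrak{p}$ of $\Phi$ on $x_1,\dots,x_k$ that is UP-winning (i.e.\ $\Phi$ restricted by $\mathfrak{p}$ is again a No-instance) and under the assignment $\xv_j=\beta_{\mathfrak{p}(j)}$ for $j\le k$, whenever $(y_{l_k},y_{r_k})\notin E$ the UP wins in $\Psi_k$. Base case $k=n$: all $x_i$ are fixed and the 3-DNF is false, so each clause has a false literal; the corresponding factor of that clause is of the form $R_{\overline{a'_i}}^{\beta_{\overline{a'_i}}}$, which is either $R_0^{\beta_0}$ or $R_1^{\beta_1}$ and hence a subrelation of $E$. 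Each link $y_{i-1}\to y_i$ therefore lies in $E$, and by transitivity of $E$ any existentially chosen chain gives $(y_0,y_s)\in E$; thus EP cannot win when $(y_0,y_s)\notin E$.

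The $\forall$ inductive step is straightforward: UP plays $\xv_k=\beta_{\tau(\mathfrak{p})}$, which extends $\mathfrak{p}$ to a UP-winning play, and $(\star)_k$ applies verbatim. The delicate case is $Q_k=\exists$, where $\Psi_{k-1}=\exists y_{r_k}\forall\xv_k\exists y_{l_k}\bigl(\Psi_k\wedge R_0^{\xv_k}(y_{l_k-1},y_{l_k})\wedge R_1^{\xv_k}(y_{r_k+1},y_{l_k})\bigr)$ and $(y_{l_k-1},y_{r_k+1})\notin E$. Both extensions $(\mathfrak{p},0)$ and $(\mathfrak{p},1)$ are UP-winning because $Q_k=\exists$ in $\Phi$. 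Suppose EP plays some $y_{r_k}$. If $y_{r_k}\not\sim_E y_{l_k-1}$, UP plays $\xv_k=\beta_0$: the link $R_0^{\beta_0}(y_{l_k-1},y_{l_k})$ forces $y_{l_k}\sim_E y_{l_k-1}$, while $(\star)_k$ applied to $(\mathfrak{p},0)$ forces $y_{l_k}\sim_E y_{r_k}$ for any EP win, yielding $y_{l_k-1}\sim_E y_{r_k}$, a contradiction. Otherwise $y_{r_k}\sim_E y_{l_k-1}\not\sim_E y_{r_k+1}$; UP plays $\xv_k=\beta_1$ and the link $R_1^{\beta_1}(y_{r_k+1},y_{l_k})$ together with $(\star)_k$ forces $y_{r_k+1}\sim_E y_{l_k}\sim_E y_{r_k}$, again a contradiction. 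Either way EP cannot satisfy $\Psi_{k-1}$.

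The main obstacle is precisely this $\exists$ step: a single UP move $\xv_k$ cannot simultaneously place both links $R_0^{\xv_k}$ and $R_1^{\xv_k}$ inside $E$, and one might fear that UP must commit to an ``effective'' $x_k$ before EP. The resolution, as above, is that UP does not need a global $x_k$: it is enough to respond adaptively to EP's $y_{r_k}$, and the asymmetry of the two links is what lets a single choice of $\xv_k\in\{\beta_0,\beta_1\}$ kill whichever side EP has exposed. Applying $(\star)_0$ with the empty play (for which $\Phi$ itself, being a No-instance, is UP-winning) and the renaming $y_{l_0}\mapsto y$, $y_{r_0}\mapsto y'$ gives $\mathcal{Q}^{\Phi}(R_0,R_1)\subseteq E$, whence $\mathcal{T}^{\Phi}(R_0,R_1)\subseteq E$ by transitivity, finishing the proof.
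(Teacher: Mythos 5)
Your proof is correct, and it takes a genuinely different route from the paper. The paper's argument is a reduction: it sets $\delta=R_0^{\beta_0}\join R_1^{\beta_1}$, saturates $B$ to a union $B'$ of $\delta$-classes with complement $C'$, enlarges each $R_i$ to a relation $L_i$ that equals the two-block equivalence $B'^2\cup C'^2$ at the single tuple $\beta_i$ and is full otherwise, and then identifies $D/(B'^2\cup C'^2)$ with $\{+,-\}$ to invoke the canonical case of Lemma~\ref{LEMCanonicalPSpaceHardness}. You instead prove the stronger containment $\mathcal Q^{\Phi}(R_0,R_1)\subseteq E$ directly by a downward induction on the quantifier prefix of $\Phi$, translating a UP-winning strategy of the No-instance $\Phi$ into a UP-strategy in $\mathcal Q^{\Phi}$ that only ever plays $\xv_k\in\{\beta_0,\beta_1\}$. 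The key technical point --- handling the $\exists$-block of $\Phi$, where a single $\xv_k$ cannot constrain both auxiliary links to lie in $E$ simultaneously --- is resolved correctly by conditioning UP's choice of $\xv_k$ on the $E$-class of EP's $y_{r_k}$. Both hypotheses that $R_0,R_1$ are equivalence relations enter exactly where they should, via $R_0^{\beta_0},R_1^{\beta_1}\subseteq E$, and the hypothesis that $\Phi$ is a No-instance gives the UP-winning partial plays needed at each level. What your route buys is self-containedness: the paper only sketches the canonical Lemma~\ref{LEMCanonicalPSpaceHardness} (``arguing as in Section~\ref{SectionWithSimpleHardnessReduction}''), whereas your induction spells out exactly the game argument the sketch appeals to, and in the general setting; in exchange, the paper's quotient reduction is considerably shorter once the canonical case is granted.
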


\begin{proof}
Let $\delta=R_{0}^{\beta_{0}}\join R_{1}^{\beta_{1}}$.
Let $B'$ be the union of all classes of $\delta$ having a nonempty intersection with $B$.
Let $C' = D\setminus B'$,
$\delta' = B'^{2}\cup C'^{2}$,
$L_{0}^{\xv} = 
\begin{cases}
\delta' & \text{ if $\xv=\beta_{0}$}\\
D\times D& \text{ otherwise}
\end{cases}$, $L_{1}^{\xv} = 
\begin{cases}
\delta'
& \text{ if $\xv=\beta_{1}$}\\
D\times D& \text{ otherwise}
\end{cases}$.

Notice that $L_{0}\supseteq R_{0}$ and $L_{1}\supseteq R_{1}$,
hence replacement of $R_{0}$ by $L_{0}$ and $R_{1}$ by $L_1$ would make it even harder for 
the UP to win. 
Also, we may assume that the UP only plays $\beta_{0}$ and $\beta_{1}$.
Interpreting $\beta_{0}$ as $0$ and $\beta_{1}$ as 1, 
and interpreting the domain $D/\delta'$ as $\{+,-\}$ 
we derive that 
$\{(+,-)\} \cap\mathcal T^{\Phi}(V_0,V_{1})=\varnothing$ if and only if
$(B\times C)\cap \mathcal T^{\Phi}(L_{0},L_{1})=\varnothing$,
where $V_{0}$ and $V_{1}$ are the canonical relations from
Lemma \ref{LEMCanonicalPSpaceHardness}.
This implies 
$(B\times C)\cap \mathcal T^{\Phi}(R_{0},R_{1})=\varnothing$.
\end{proof}

\begin{lem}\label{YesInstanceCondition}
Suppose 
\begin{enumerate}
	\item $\Phi$ is a Yes-instance of the Quantified-3-DNF;
 	    \item $R_{0}$ and $R_{1}$ are $\xv$-parameterized equivalence relations on $D$;
   \item $(b,c)\in  R_{0}^{\alpha}$ or
$(b,c)\in R_{1}^{\alpha}$ for every $\alpha$.
\end{enumerate}
    Then 
    $(b,c)\in \mathcal T^{\Phi}(R_0,R_1)$.
\end{lem}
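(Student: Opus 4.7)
The plan is to mirror the two-element quotient reduction used in Lemma~\ref{LEMNoInstanceCondition} and fall back on Lemma~\ref{LEMCanonicalPSpaceHardness}, but running in the opposite direction: whereas the No-instance case \emph{enlarges} $R_{0},R_{1}$ to force the UP into a canonical winning position, here we will \emph{shrink} $\bar R_{0},\bar R_{1}$ on a two-element quotient until they sit above the canonical relations $V_{0},V_{1}$, so that the EP's winning strategy from the canonical setting lifts back up to $D$. The case $b=c$ is trivial (reflexivity of the equivalence relations $R_{i}^{\alpha}$), so from now on we assume $b\neq c$.

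First I would define $\pi\colon A^{m}\to\{0,1\}$ by $\pi(\alpha)=0$ when $(b,c)\in R_{0}^{\alpha}$ and $\pi(\alpha)=1$ otherwise, which is well-defined by hypothesis~3. Let $\bar D=\{[b],[c]\}$ be the two-element quotient and, identifying $\bar D$ with $\{+,-\}$, put $\sigma(\alpha)=1-\pi(\alpha)$ and $\bar V_{i}^{\alpha}:=V_{i}^{\sigma(\alpha)}$. Define the quotient relations $\bar R_{i}^{\alpha}$ on $\bar D$ by letting $([b],[c])\in\bar R_{i}^{\alpha}$ iff $(b,c)\in R_{i}^{\alpha}$, with the diagonal always present. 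A direct case check shows $\bar R_{i}^{\alpha}\supseteq\bar V_{i}^{\alpha}$ for all $i\in\{0,1\}$ and $\alpha\in A^{m}$: when $\bar V_{i}^{\alpha}$ is equality, this is reflexivity; when $\bar V_{i}^{\alpha}$ is $\bar D^{2}$, the choice of $\sigma$ forces $(b,c)\in R_{i}^{\alpha}$, so $\bar R_{i}^{\alpha}$ is also $\bar D^{2}$.

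Next, I would invoke the canonical lemma via monotonicity of $\mathcal T^{\Phi}$ in its relational arguments. The quantifier $\forall\xv_{k}$ over $A^{m}$ in $\mathcal Q^{\Phi}(\bar V_{0},\bar V_{1})$ produces the constraints $V_{i}^{\sigma(\xv_{k})}$; since $\sigma(\xv_{k})\in\{0,1\}\subseteq\{+,-,0,1\}$ and $V_{i}^{x}$ is already the trivial relation $\{+,-\}^{2}$ for $x\in\{+,-\}$, the UP's effective plays in our setting match exactly the UP's non-trivial plays in the canonical game. Therefore $\mathcal T^{\Phi}(\bar V_{0},\bar V_{1})=\mathcal T^{\Phi}(V_{0},V_{1})$ as relations on $\{+,-\}$; since $\Phi$ holds, Lemma~\ref{LEMCanonicalPSpaceHardness} gives $\{+,-\}^{2}$, and hence $([b],[c])\in\mathcal T^{\Phi}(\bar V_{0},\bar V_{1})\subseteq\mathcal T^{\Phi}(\bar R_{0},\bar R_{1})$. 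This third step---the only place where one has to think---is essentially a routine check that extending the UP's domain from $\{0,1\}\subseteq A$ to $A^{m}$ via $\sigma$ preserves the dynamics of the canonical game.

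Finally I would lift the chain from $\bar D$ back to $D$. A witness $[b]=[d_{0}],[d_{1}],\dots,[d_{k}]=[c]$ for $([b],[c])\in\mathcal T^{\Phi}(\bar R_{0},\bar R_{1})$ lifts by taking each $d_{j}\in\{b,c\}$. Each constraint $R_{i}^{\alpha}(p,q)$ arising in the corresponding lifted EP-strategies is satisfied: if $p=q$ by reflexivity of the equivalence $R_{i}^{\alpha}$, and if $\{p,q\}=\{b,c\}$ because the matching $\bar R_{i}^{\alpha}$-constraint connected the two classes, which by construction of $\bar R_{i}^{\alpha}$ forces $(b,c)\in R_{i}^{\alpha}$ and then $(c,b)\in R_{i}^{\alpha}$ by symmetry. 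Stringing these lifted $\mathcal Q^{\Phi}(R_{0},R_{1})$-steps together yields $(b,c)\in\mathcal T^{\Phi}(R_{0},R_{1})$, completing the proof.
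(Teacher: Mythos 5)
Your proof is correct and follows essentially the same route as the paper's: both restrict to the two-element set $\{b,c\}$ (your quotient $\bar R_i^{\alpha}$ on $\bar D=\{[b],[c]\}$ is exactly the paper's $L_i^{\xv}=R_i^{\xv}\cap\{b,c\}^2$), both invoke Lemma~\ref{LEMCanonicalPSpaceHardness} after matching the restricted relations with the canonical $V_0,V_1$, and both conclude by monotonicity of $\mathcal T^{\Phi}$. Note only that the asserted equality $\mathcal T^{\Phi}(\bar V_0,\bar V_1)=\mathcal T^{\Phi}(V_0,V_1)$ should be the one-sided containment $\supseteq$, since $\sigma$ need not hit both of $0$ and $1$ --- but a non-surjective $\sigma$ only weakens the UP, so the containment you actually use still holds.
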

 
\begin{proof}
Notice that if $b=c$, then there is an obvious winning strategy for the EP where
she always plays the element $b$.
Thus, we assume that $b\neq c$.
To make it harder for the EP to win we let her play only elements $b$ and $c$.
That is, we replace
the relation $R_{0}^{\xv}$ by   
the relation
$L_{0}^{\xv} =
 R_{0}^{\xv}\cap \{b,c\}^{2}$
and the relation $R_{1}^{\xv}$
by $L_{1}^{\xv} =
 R_{1}^{\xv}\cap \{b,c\}^{2}$.
By condition 3
for any $\xv$ one of the two relations $L_{0}^{\xv}$ and $L_{1}^{\xv}$
is equal to 
$\{b,c\}^{2}$ and another is either $\{(b,b),(c,c)\}$, or $\{b,c\}^{2}$.

Since $\Phi$ is a Yes-instance, 
$(+,-)\in \mathcal T^{\Phi}(V_{0},V_{1})$ 
for the canonical relations $V_{0}$ and $V_{1}$ in  
 Lemma \ref{LEMCanonicalPSpaceHardness}.
Interpreting $b$ and $c$ as $-$ and $+$
we can derive that 
$(b,c)\in \mathcal T^{\Phi}(L_{0},L_{1})$. 
In fact, for any choice of $\xv$ 
either $L_{0}^{\xv}$ connects $b$ and $c$, 
or $L_{1}^{\xv}$ connects $b$ and $c$, 
or both connect.
Hence, if the UP cannot win
in $\mathcal T^{\Phi}(V_{0},V_{1})$, then he cannot win in $\mathcal T^{\Phi}(L_{0},L_{1})$.
This implies that
$(b,c)\in\mathcal T^{\Phi}(R_{0},R_{1})$.
\end{proof}


We will need parameterized relations having arbitrary many parameters. 
Formally, we say that $S$ is 
\emph{a multi-parameter equivalence relation} if
it assigns an equivalence relation $S^{\alpha_1,\dots,\alpha_{n}}$  to every 
sequence
$\alpha_1,\dots,\alpha_{n}\in A^{m}$
and satisfies the following properties:
\begin{enumerate}
\item[(s)] $S^{\alpha_1,\dots,\alpha_{n_1}}=S^{\beta_1,\dots,\beta_{n_2}}$ whenever 
$\{\alpha_1,\dots,\alpha_{n_1}\}=\{\beta_1,\dots,\beta_{n_2}\}$
\item[(m)] $S^{\alpha_1,\dots,\alpha_{n}}\subseteq 
S^{\alpha_1,\dots,\alpha_{n},\alpha_{n+1}}$ for any
$\alpha_1,\dots,\alpha_{n},\alpha_{n+1}\in A^{m}$
\end{enumerate}

Since the set $A^{m}$ is finite, (s) implies 
that we may think of $S$ as a relation of an arity 
$N:= m\cdot |A|^{m}+2$ such that 
$S^{\alpha_1,\dots,\alpha_{N}}$ depends 
only on the set $\{\alpha_1,\dots,\alpha_{N}\}$.
Thus, $S$ is still a finite relation of a fixed arity, 
but it will be convenient for us to assume that 
it can have arbitrary many parameters.
We say that a multi-parameter equivalence relation $S_{1}$ \emph{is larger than}
a multi-parameter equivalence relation $S_{2}$ if 
$S_{1}^{\alpha_{1},\dots,\alpha_{n}}\supseteq S_{2}^{\alpha_{1},\dots,\alpha_{n}}$
for every 
$\alpha_{1},\dots,\alpha_{n}\in A^{m}$.
If additionally $S_1\neq S_2$, we say that 
$S_{1}$ \emph{is strictly larger than} $S_{2}$.

We extend $\mathcal T^{\Phi}$ 
to multi-parameter equivalence relations.
For a multi-parameter equivalence relation $S$ and a parameterized equivalence relation $R$ 
by $\mathcal T^{\Phi}(S,R)$ we denote the multi-parameter equivalence relation
$S_{0}$  defined as follows.
To define $S_{0}^{\uv_1,\dots,\uv_{n}}$ we  
take the formula 
 $\mathcal Q^{\Phi}$,
 replace each $\Upsilon_{0}^{\xv_{i}}$ by 
$S^{\uv_1,\dots,\uv_n,\xv_{i}}$, replace 
each
$\Upsilon_{1}^{\xv_{i}}$ by 
$S^{\uv_1,\dots,\uv_n}\join R^{\xv_{i}}$.
For fixed $\uv_1,\dots,\uv_n$ the obtained formula has only two free variables $y$ and $y'$ and defines 
a binary relation $\sigma$.
Then $S_{0}^{\uv_1,\dots,\uv_{n}}$ is the transitive symmetric closure of
$\sigma$.




\begin{lem}\label{NotReduceUniversalRelation}
Suppose 
\begin{enumerate}
    \item $S$ is a multi-parameter relation on a set $D$;
    \item $R$ is an $\xv$-parameterized equivalence relation on $D$;
    \item $\Phi$ is an instance of the Quantified-3-DNF.
\end{enumerate}
Then $\mathcal T^{\Phi}(S,R)$ is a multi-parameter equivalence relation that is larger than $S$.
\end{lem}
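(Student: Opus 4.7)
Let $S_0 := \mathcal T^\Phi(S, R)$. The claim splits into two parts: (a) that each $S_0^{\uv_1,\dots,\uv_n}$ is an equivalence relation on $D$ together with properties (s) and (m), and (b) that $S_0^{\uv_1,\dots,\uv_n} \supseteq S^{\uv_1,\dots,\uv_n}$ for every parameter sequence. The real content lies in (b), which I plan to prove by downward induction on the quantifier prefix of $\Phi$, realised through an explicit ``trivial'' EP strategy in the game $\mathcal Q^\Phi$.

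Part (a) I would dispatch by routine bookkeeping. Each $S_0^{\uv_1,\dots,\uv_n}$ is by definition the transitive symmetric closure of a binary relation $\sigma$, hence already transitive and symmetric. For reflexivity on $D$, each substituted relation -- $S^{\uv_1,\dots,\uv_n,\xv_i}$ in place of $\Upsilon_0^{\xv_i}$ and $S^{\uv_1,\dots,\uv_n}\join R^{\xv_i}$ in place of $\Upsilon_1^{\xv_i}$ -- is an equivalence on $D$, and setting every existentially quantified $y$-variable equal to the common endpoint $y$ witnesses $(y,y)\in\sigma$. Property (s) of $S_0$ is inherited from (s) of $S$ since both substitutions depend only on the underlying set of $\uv$-parameters, and (m) follows because enlarging the parameter list enlarges both substituted relations (by (m) of $S$ and monotonicity of $\join$), hence $\sigma$ and its transitive symmetric closure.

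For (b), the key observation is that every substituted $\Upsilon$ already contains $S^{\uv_1,\dots,\uv_n}$: the $\Upsilon_0$-substitution by property (m) of $S$, the $\Upsilon_1$-substitution because any join contains each of its arguments. Fix $(a,b) \in S^{\uv_1,\dots,\uv_n}$; it suffices to show $\Psi_0(a,b)$, which in turn will follow from the downward induction invariant: for every $k \in \{0,1,\dots,n\}$, the formula $\Psi_k$ holds when $y_{l_k} = a$ and $y_{r_k} = b$. In the base case $k = n$, set $y_1 = \dots = y_{s-1} = b$; every constraint then becomes either $\Upsilon(a,b)$ (in the substituted relation by the key observation) or $\Upsilon(b,b)$ (by reflexivity). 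For the inductive step from $k$ to $k-1$, the case $Q_k = \forall$ is immediate because the strategy does not depend on $\xv_k$. For $Q_k = \exists$, play $y_{r_k} = b$ and, regardless of $\xv_k$, $y_{l_k} = b$; the inductive hypothesis applied to the reflexive pair $(b,b) \in S^{\uv_1,\dots,\uv_n}$ gives $\Psi_k(b,b)$, the added constraint $\Upsilon_0^{\xv_k}(a,b)$ holds by the key observation, and $\Upsilon_1^{\xv_k}(b,b)$ holds by reflexivity. The only care required is tracking the index shifts $l_{k-1} = l_k - 1$, $r_{k-1} = r_k + 1$ and remembering to apply the inductive hypothesis at $(b,b)$ rather than at $(a,b)$; I do not anticipate any substantive obstacle.
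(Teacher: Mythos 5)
Your proposal is correct and takes essentially the same route as the paper: the key observation that every substituted $\Upsilon$-relation (whether $S^{\uv_1,\dots,\uv_n,\xv_i}$ or $S^{\uv_1,\dots,\uv_n}\join R^{\xv_i}$) already contains $S^{\uv_1,\dots,\uv_n}$, followed by the ``EP always plays the second endpoint'' strategy. The paper compresses the verification of that strategy into a single sentence; your downward induction on the quantifier prefix simply makes the same argument explicit.
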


\begin{proof}
Suppose 
$\mathcal T^{\Phi}(S,R) = S_0$.
Since $S$ satisfies properties (s) and (m), it immediately follows from the 
definition that $S_0$ also satisfies properties (s) and (m). 
Let us show that $S_{0}$ is larger than $S$.
For any 
$\uv_1,\dots,\uv_{n}$ 
we have 
$S^{\uv_1,\dots,\uv_{n}}\subseteq 
S^{\uv_1,\dots,\uv_{n},\xv_{i}}$
and
$S^{\uv_1,\dots,\uv_{n}}\subseteq 
S^{\uv_1,\dots,\uv_{n}}\join
R^{\xv_i}$.
Hence, 
the interpretations of both 
$\Upsilon_{0}^{\xv_i}$ and $\Upsilon_{1}^{\xv_i}$
contain every pair $(b,c)$
from  
$S^{\uv_1,\dots,\uv_{n}}$. 
Thus,  for any play of the UP
the EP can always play $b$ 
to confirm that 
$(b,c)\in S_{0}^{\uv_1,\dots,\uv_{n}}$.
\end{proof}

\begin{lem}\label{IncreaseUniversalRelation}
Suppose 
\begin{enumerate}
    \item[(1)] $S$ is a multi-parameter equivalence relation on a set $D$;
    \item[(2)] $R$ is an $\xv$-parameterized equivalence relation on $D$;
    \item[(3)] $(b,c)\in S^{\alpha}\join R^{\alpha}$ for every $\alpha\in A^{m}$;
    \item[(4)] $(b,c)\notin S^{\alpha}$ for some $\alpha\in A^{m}$;
    \item[(5)] $\Phi$ is a Yes-instance  of the Quantitied-3-DNF.
\end{enumerate}
Then $\mathcal T^{\Phi}(S,R)$ is strictly larger than $S$.
\end{lem}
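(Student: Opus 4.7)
The plan is to argue by contradiction. By Lemma~\ref{NotReduceUniversalRelation}, $\mathcal T^\Phi(S,R) \supseteq S$ as multi-parameter equivalence relations, so strict largeness is equivalent to producing a single parameter sequence at which the two differ. I will suppose instead that $\mathcal T^\Phi(S,R) = S$ as multi-parameter relations and aim to contradict condition~(4).

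The engine will be the following claim, which I prove by induction on $|K|$: for every $K \subseteq A^m$, $(b,c) \in S^{A^m \setminus K}$. Fix $T := A^m \setminus K$. I will verify the hypothesis of Lemma~\ref{YesInstanceCondition} for the relations $R_0^\xv := S^{T \cup \{\xv\}}$ and $R_1^\xv := S^T \join R^\xv$ obtained from the definition of $\mathcal T^\Phi(S,R)^T$. When $\xv \in T$, property~(m) gives $S^\xv \subseteq S^T$, and condition~(3) yields $(b,c) \in S^\xv \join R^\xv \subseteq R_1^\xv$. When $\xv \in K$, I rewrite $T \cup \{\xv\} = A^m \setminus (K \setminus \{\xv\})$ and invoke the inductive hypothesis on the strictly smaller set $K \setminus \{\xv\}$ to conclude $(b,c) \in R_0^\xv$. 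The base case $K = \emptyset$ is subsumed by the first sub-case applied to every $\xv \in A^m$. Lemma~\ref{YesInstanceCondition} then yields $(b,c) \in \mathcal T^\Phi(S,R)^T$, and the contradictory assumption $\mathcal T^\Phi(S,R) = S$ turns this into $(b,c) \in S^{A^m \setminus K}$, closing the induction.

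Specializing the claim to $K = A^m \setminus \{\alpha^*\}$, where $\alpha^* \in A^m$ is any element supplied by~(4), gives $(b,c) \in S^{\{\alpha^*\}}$, which directly contradicts~(4). Hence $\mathcal T^\Phi(S,R) \neq S$, and combined with Lemma~\ref{NotReduceUniversalRelation} this yields the asserted strict largeness.

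The main subtlety, and what makes the lemma go beyond a direct application of Lemma~\ref{YesInstanceCondition}, is identifying the parameter set at which $\mathcal T^\Phi(S,R)$ enlarges $S$. The naive attempt of comparing at $T = \{\alpha^*\}$ does not work because when the Universal Player plays $\xv \notin \{\alpha^*\}$, neither $R_0^\xv$ nor $R_1^\xv$ is seen to contain $(b,c)$. Starting instead from the largest parameter set $T = A^m$ (where (3) and (m) trivially supply the needed witnesses) and peeling parameters off one at a time converts the Yes-instance strategy for $\Phi$ into progressively sharper membership statements for $S$, ultimately forcing $(b,c)$ into every singleton $S^{\{\alpha\}}$, which is precisely what (4) forbids.
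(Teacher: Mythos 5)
Your proof is correct, and it routes through Lemma~\ref{YesInstanceCondition} with exactly the same two-case split (handle $\xv$ in the current parameter set via condition~(3) and $\Upsilon_1$; handle $\xv$ outside it via $\Upsilon_0$), but the surrounding architecture differs from the paper's. The paper argues directly: pick an inclusion-maximal $T\subseteq A^m$ with $(b,c)\notin S^T$ (nonempty and finite by (4)), then maximality immediately gives $(b,c)\in S^{T\cup\{\xv\}}$ for $\xv\notin T$, and (3) gives $(b,c)\in S^T\join R^\xv$ for $\xv\in T$; Lemma~\ref{YesInstanceCondition} then yields $(b,c)\in S_0^T\setminus S^T$, so $S_0\supsetneq S$ with no contradiction and no induction. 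Your version starts from $T=A^m$, assumes $\mathcal T^\Phi(S,R)=S$ for contradiction, and unrolls the maximality argument into an explicit downward induction on $|K|$, ending by forcing $(b,c)\in S^{\{\alpha^*\}}$ against (4). Both proofs are sound and rest on the same mechanism; the paper's is a bit tighter because the single maximal choice does the work your induction distributes across $2^{|A^m|}$ steps, whereas yours has the (minor) virtue of needing no appeal to maximal elements, only to the well-ordering of $\{0,\dots,|A^m|\}$.
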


\begin{proof}
Suppose 
$\mathcal T^{\Phi}(S,R) = S_0$.
Consider a maximal set of 
tuples 
$\alpha_1,\dots,\alpha_{n}$ such 
that $(b,c)\notin S^{\alpha_1,\dots,\alpha_{n}}$.
By condition (4) such a set exists.
Note that this set may contain all 
tuples.
Let us show that 
$(b,c)\in S_0^{\alpha_1,\dots,\alpha_{n}}$, 
which together with 
Lemma \ref{NotReduceUniversalRelation} would mean 
that 
$\mathcal T^{\Phi}(S,R)$ is strictly larger than $S$.

Let us consider the interpretations of
$\Upsilon_{0}^{\xv_i}$ and $\Upsilon_{1}^{\xv_i}$
in the definition of $\mathcal T^{\Phi}(S,R)$.
If $\xv_i$ is not 
from the set $\{\alpha_{1},\dots,\alpha_{n}\}$,
then by the maximality of the set 
the relation 
$\Upsilon_{0}^{\xv_i} = S^{\alpha_{1},\dots,\alpha_{n},\xv_i}$
contains $(b,c)$.
If $\xv_i$ is 
from the set $\{\alpha_{1},\dots,\alpha_{n}\}$,
then 
$\Upsilon_{1}^{\xv_i} = S^{\alpha_{1},\dots,\alpha_{n}}
\join R^{\xv_i}\supseteq S^{\xv_i}\join R^{\xv_i}$,
which contains $(b,c)$ by condition (3).
Hence, 
by Lemma \ref{YesInstanceCondition}
$(b,c)\in S_{0}^{\alpha_{1},\dots,\alpha_{n}}$.
\end{proof}

\begin{lem}\label{LEMNoInstanceImpliesNoBC}
Suppose 
\begin{enumerate}
    \item[(1)] $S$ is a multi-parameter equivalence relation on a set $D$;
    \item[(2)] $R$ is an $\xv$-parameterized equivalence relation on $D$;
    \item[(3)] $(B\times C)\cap S^{\beta}=\varnothing$ for some $\beta\in A^{m}$;
    \item[(4)] there exists $\alpha$ such that 
$R^{\alpha}\subseteq S^{\xv}$ for every $\xv$;
    \item[(5)] $\Phi$ is a No-instance  of the Quantitied-3-DNF.
\end{enumerate}
Then $(B\times C)\cap S_{0}^{\beta}=\varnothing$,
where $S_{0} = \mathcal T^{\Phi}(S,R)$.
\end{lem}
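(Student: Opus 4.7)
The plan is to reduce the claim directly to Lemma~\ref{LEMNoInstanceCondition} applied to the two $\xv$-parameterized equivalence relations that appear in the definition of $S_0^\beta = \mathcal{T}^\Phi(S,R)$ at the fixed parameter~$\beta$. Writing out that definition, we have
\[
S_0^\beta = \mathcal{T}^\Phi(R_0',R_1'), \qquad R_0'^{\,\xv} := S^{\beta,\xv}, \qquad R_1'^{\,\xv} := S^\beta \join R^{\xv}.
\]
Both $R_0'$ and $R_1'$ are $\xv$-parameterized equivalence relations on $D$ (the join of two equivalence relations is an equivalence relation), so hypothesis (2) of Lemma~\ref{LEMNoInstanceCondition} is satisfied; hypothesis (1) is our assumption (5); it therefore suffices to exhibit $\beta_0,\beta_1 \in A^m$ for which
$(B\times C)\cap (R_0'^{\,\beta_0}\join R_1'^{\,\beta_1})=\varnothing$.

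The key observation is that we can collapse both $R_0'$ and $R_1'$ to $S^\beta$ by an appropriate choice of parameters. Take $\beta_0 := \beta$ and $\beta_1 := \alpha$, where $\alpha$ is the parameter provided by hypothesis~(4). Then property~(s) of multi-parameter equivalence relations gives $R_0'^{\,\beta} = S^{\beta,\beta} = S^\beta$, while hypothesis~(4) specialised to $\xv = \beta$ yields $R^\alpha \subseteq S^\beta$, so $R_1'^{\,\alpha} = S^\beta \join R^\alpha = S^\beta$. Consequently $R_0'^{\,\beta_0}\join R_1'^{\,\beta_1} = S^\beta$, which by hypothesis~(3) has empty intersection with $B\times C$.

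Applying Lemma~\ref{LEMNoInstanceCondition} now gives $(B\times C)\cap \mathcal{T}^\Phi(R_0',R_1') = \varnothing$, which is exactly the desired conclusion $(B\times C)\cap S_0^\beta = \varnothing$. There is essentially no real obstacle here; the only thing to notice is that the multi-parameter framework allows the same tuple $\beta$ to be duplicated (property~(s)) so that $R_0'$ at $\beta$ collapses to $S^\beta$, and that hypothesis~(4) is strong enough to simultaneously absorb $R^\alpha$ into $S^\beta$. The lemma should then follow verbatim from Lemma~\ref{LEMNoInstanceCondition}.
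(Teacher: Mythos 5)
Your proposal is correct and takes essentially the same route as the paper's own proof: both proofs instantiate $\beta_0 = \beta$ and $\beta_1 = \alpha$, use property (s) to get $S^{\beta,\beta} = S^\beta$ and hypothesis (4) to absorb $R^\alpha$ into $S^\beta$, and then invoke Lemma~\ref{LEMNoInstanceCondition}.
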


\begin{proof}
Recall that $S_{0}^{\beta}$ is defined using the formula
$\mathcal Q^{\Phi}$, where we substitute  
$S^{\beta,\xv_{i}}$
for each $\Upsilon_{0}^{\xv_{i}}$ and 
$S^{\beta}\join R^{\xv_{i}}$
for each 
$\Upsilon_{1}^{\xv_{i}}$.
We derive from 
(3) and (4) that
$
(B\times C)\cap (S^{\beta,\beta}\join (S^{\beta}\join R^{\alpha}))=\varnothing$.
Then Lemma \ref{LEMNoInstanceCondition}
implies that
$(B\times C)\cap S_{0}^{\beta}=\varnothing$.
\end{proof}

\begin{lem}\label{LEMMightyTupleImplies}
Suppose $(R,D,B,C,\Delta)$ is a mighty tuple I'. Then there exist $(\zv,\delta,\xv)$-parameterized equivalence relations $R_0$ and $R_{1}$ on $D$  q-definable from 
$\{R,D,B,C,\Delta\}$ and satisfying the following conditions: 
\begin{enumerate}
\item[(1)] $\forall \zv\in A^{|A|} \exists \delta\in \prescript{\zv}{}\Delta  \; \forall \xv \;
(\prescript{\zv}{\delta}B\times \prescript{\zv}{\delta}C\subseteq \prescript{\zv}{\delta}R_{0}^{\xv}\join \prescript{\zv}{\delta}R_{1}^{\xv})$;
\item[(2)] 
$\exists \zv\in A^{|A|} \forall \delta \in \prescript{\zv}{}\Delta \;\exists \xv
 ((\prescript{\zv}{\delta}B\times \prescript{\zv}{\delta}C)\cap \prescript{\zv}{\delta}R_{0}^{\xv}=\varnothing)$;
 \item[(3)] there exists $\alpha$ such that 
$\prescript{\zv}{\delta}R_{1}^{\alpha}\subseteq
\prescript{\zv}{\delta}R_{0}^{\xv}$
for every 
$\zv\in A^{|A|}$, $\delta\in\prescript{\zv}{}\Delta $, and $\xv$.
\end{enumerate}
\end{lem}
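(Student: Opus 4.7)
The plan is to construct $R_0, R_1$ explicitly as q-definitions from $R$, exploiting property ($\kappa$). First I would observe that ($\kappa$) combined with $R^{\forall\forall}=\bigcap_{\alpha}R^{\alpha}$ yields $R^{\kappa}=R^{\forall\forall}$: indeed $R^{\forall\forall}\subseteq R^{\kappa}$ is automatic, while $R^{\kappa}\subseteq R^{\alpha}$ for every $\alpha$ by ($\kappa$) gives the converse inclusion. Hence $R^{\kappa}$ is the minimum in the family $\{R^{\alpha}\}$, and it separates $\prescript{\zv}{\delta}B$ from $\prescript{\zv}{\delta}C$ for the $\zv$ of condition~6, since these are distinct $R^{\forall\forall}$-classes. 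A second key observation is that condition~4 forces $R^{(a,\dots,a)}=D\times D$ for every $a\in A$: each such equivalence contains $R^{\forall}=D\times D$ and is a sub-equivalence of $D\times D$.

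With these in hand, I would set $R_0^{\xv}(y_1,y_2):=R^{\xv}(y_1,y_2)$ with $\xv\in A^m$ having the same arity as $\alpha$. Condition~(2) then holds immediately: for the $\zv$ from condition~6 of mighty tuple~I, any $\delta\in\prescript{\zv}{}\Delta$, and $\xv=\kappa$, we have $\prescript{\zv}{\delta}R_0^{\kappa}=\prescript{\zv}{\delta}R^{\forall\forall}$, which is disjoint from $\prescript{\zv}{\delta}B\times\prescript{\zv}{\delta}C$.

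For $R_1^{\xv}$ the idea is to produce a q-formula such that $R_1^{\alpha^{*}}\subseteq R^{\kappa}$ for one specific $\alpha^{*}\neq\kappa$ (yielding~(3) via ($\kappa$), since then $R_1^{\alpha^{*}}\subseteq R^{\kappa}\subseteq R^{\xv}=R_0^{\xv}$ for every~$\xv$), while $R_1^{\xv}$ connects $\prescript{\zv}{\delta}B$ to $\prescript{\zv}{\delta}C$ for every other $\xv$ (in particular for $\xv=\kappa$). A natural choice is to take $\alpha^{*}$ to be a constant tuple $(a,\ldots,a)$---so $R^{\alpha^{*}}=D\times D$, making the needed $B$-$C$ connection available at $\xv=\alpha^{*}$ via $R_0^{\alpha^{*}}$ itself---and to define $R_1^{\xv}$ as the transitive symmetric closure of $R^{\xv}$ joined with an auxiliary relation $E^{\xv}$ that is empty at $\xv=\alpha^{*}$ but contains a $B$-to-$C$ pair for every other $\xv$. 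One candidate is $E^{\xv}(y_1,y_2):=(y_1\in \prescript{\zv}{\delta}B)\wedge(y_2\in \prescript{\zv}{\delta}C)\wedge R^{\xv}(y_1,y_2)$, strengthened as needed by existential quantification over further parameter tuples and witness elements of $D$, exploiting the relations $B,C,D,\Delta$.

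The principal obstacle is verifying condition~(1) uniformly in $\xv$: for a suitable $\delta\in\prescript{\zv}{}\Delta$ and every $\xv\in A^m$, the equivalence $R_0^{\xv}\vee R_1^{\xv}$ must contain $\prescript{\zv}{\delta}B\times\prescript{\zv}{\delta}C$. At $\xv=\kappa$ the relation $R_0^{\kappa}=R^{\forall\forall}$ already separates $B$ from $C$, so the $B$-to-$C$ link must be forced by $R_1^{\kappa}$ alone; simultaneously, (3) forbids any enlargement of $R_1^{\alpha^{*}}$ beyond $R^{\kappa}$. Engineering an edge-relation $E^{\xv}$ that is q-definable over $\{R,D,B,C,\Delta\}$, that vanishes precisely at $\xv=\alpha^{*}$, and that supplies the required $B$-$C$ link for every other $\xv$---while keeping $R_1^{\xv}$ an equivalence relation---is the delicate heart of the proof. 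I would expect the verification of the resulting three conditions to follow the structural pattern of Lemmas~\ref{NotReduceUniversalRelation}--\ref{LEMNoInstanceImpliesNoBC}, applied to the family of parameterized equivalences built from $R$ via coordinate substitutions and existential quantification.
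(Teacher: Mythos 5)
Your observations at the start are correct: property ($\kappa$) forces $R^{\kappa}=R^{\forall\forall}$, condition~4 forces $R^{(a,\dots,a)}=D\times D$ for every $a$, and with $R_0:=R$ condition~(2) holds by choosing $\xv=\kappa$. But the construction of $R_1$ you sketch cannot work, and you miss the structural idea the paper uses.

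The concrete failure: you propose $R_1^{\xv}$ as a transitive symmetric closure of $R^{\xv}\vee E^{\xv}$. Then $R_1^{\alpha^{*}}\supseteq R^{\alpha^{*}}=D\times D$ when $\alpha^{*}$ is a constant tuple, which is incompatible with condition~(3) requiring $R_1^{\alpha^{*}}\subseteq R^{\kappa}\subsetneq D\times D$. Dropping the $R^{\xv}$ term does not rescue the candidate $E^{\xv}(y_1,y_2)=(y_1\in B)\wedge(y_2\in C)\wedge R^{\xv}(y_1,y_2)$ either: at $\xv=\alpha^{*}$ this equals $B\times C$, not the empty relation. More fundamentally, keeping $\xv$ of arity $m=|A|$ leaves no room to separate the constant tuple $\alpha^{*}$ from $\kappa$ by a q-formula over $\{R,D,B,C,\Delta\}$ while simultaneously preserving all three conditions; this is the ``delicate heart'' you correctly flag but do not resolve.

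The paper's resolution is to \emph{enlarge the parameter space}: let $\sigma_1,\dots,\sigma_N$ enumerate all injective maps $[|A|]\to[|A|^2]$, take $\xv\in A^{|A|^2}$, and define $U_n^{\xv}=R^{\xv\circ\sigma_1}\join\dots\join R^{\xv\circ\sigma_n}$. Any $\xv\in A^{|A|^2}$ has $|A|$ equal coordinates (pigeonhole), so some $\sigma_i$ picks out a constant subtuple, forcing $U_N^{\xv}=D\times D$ by condition~4. Choose $n$ maximal with $\exists\zv\,\forall\delta\,\exists\xv_0\;(B\times C)\cap U_n^{\xv_0}=\varnothing$, and set $R_0=U_n$, $R_1^{\xv}=R^{\xv\circ\sigma_{n+1}}$. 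Condition~(1) is exactly the failure of the displayed property at $n+1$ (maximality), condition~(2) is the choice of $n$, and condition~(3) is obtained by picking $\alpha$ whose $\sigma_{n+1}$-coordinates form $\kappa$, giving $R_1^{\alpha}=R^{\kappa}\subseteq R_0^{\xv}$ by ($\kappa$). Nothing like this pigeonhole-over-injections device appears in your sketch, and without it neither (1) nor (3) can be established.
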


\begin{proof}
Let 
$\sigma_1,\dots,\sigma_{N}$ 
be the set of all injective mappings from 
$\{1,2,\dots,|A|\}$ to $\{1,2,\dots,|A|^{2}\}$.
Let 
$$\prescript{\zv}{\delta}U_{n}^{x_1,\dots,x_{|A|^{2}}} =
\prescript{\zv}{\delta}R^{x_{\sigma_{1}(1)},\dots,x_{\sigma_{1}(|A|)}}
\join
\prescript{\zv}{\delta}R^{x_{\sigma_{2}(1)},\dots,x_{\sigma_{2}(|A|)}}
\join\dots\join
\prescript{\zv}{\delta}R^{x_{\sigma_{n}(1)},\dots,x_{\sigma_{n}(|A|)}}$$
Since at least $|A|$ elements in the set 
$x_1,\dots,x_{|A|^{2}}$ are equal, 
there exists $i\in\{1,2,\dots,N\}$ 
such that 
$x_{\sigma_{i}(1)}=x_{\sigma_{i}(2)}=\dots=x_{\sigma_{i}(|A|)}$.
Since
$\prescript{\zv}{\delta}R^{\forall}=\prescript{\zv}{\delta}D\times
    \prescript{\zv}{\delta}D$
    for every $\zv\in A^{|A|}$ and $\delta\in \prescript{\zv}{}\Delta $,
the relation
$\prescript{\zv}{\delta}U_{N}^{x_1,\dots,x_{|A|^{2}}}$
is equal to $\prescript{\zv}{\delta}D\times \prescript{\zv}{\delta}D$.

Consider the maximal $n$ such that the following condition holds

$$\exists \zv\in A^{|A|} \forall \delta\in \prescript{\zv}{}\Delta \; \exists \xv_0
 ((\prescript{\zv}{\delta}B\times \prescript{\zv}{\delta}C)\cap (\prescript{\zv}{\delta}U_n^{\xv_0})=\varnothing).$$
Put 
$\prescript{\zv}{\delta}R_{0}^{x_1,\dots,x_{|A|^{2}}}
=\prescript{\zv}{\delta}U_{n}^{x_1,\dots,x_{|A|^{2}}}$
and 
$\prescript{\zv}{\delta}R_{1}^{x_1,\dots,x_{|A|^{2}}}
=\prescript{\zv}{\delta}R^{x_{\sigma_{n+1}(1)},\dots,x_{\sigma_{n+1}(|A|)}}$ and show that they satisfy the required properties.

Property (1) follows from the fact that $n$ was chosen maximal and
the corresponding condition for $\prescript{\zv}{\delta}U_{n+1}^\xv= \prescript{\zv}{\delta}R_{0}^\xv\join \prescript{\zv}{\delta}R_{1}^\xv$ does not hold.
Property (2) again follows from the choice of $n$.
To prove property (3) consider a tuple 
$\alpha=(a_1,\dots,a_{|A|^{2}})$
such that $(a_{\sigma_{n+1}(1)},\dots,a_{\sigma_{n+1}(|A|)})=\kappa$.
Then $\prescript{\zv}{\delta}R_1^{\alpha}=\prescript{\zv}{\delta}R^{\kappa}\subseteq \prescript{\zv}{\delta}R_0^{\xv_0}$
for every 
$\zv\in A^{|A|}$, $\delta\in\prescript{\zv}{}\Delta $, and $\xv_0$.
\end{proof}

\begin{THMMightyTupleIPSpaceHardnessTHM}
Suppose $(Q,D,B,C,\Delta)$ is a mighty tuple I.
Then $\QCSP(\{Q,D,B,C,\Delta\})$ is PSpace-hard.
\end{THMMightyTupleIPSpaceHardnessTHM}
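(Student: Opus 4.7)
The plan is to reduce from the complement of Quantified-3-CNF, which is PSpace-complete, to $\QCSP(\{Q,D,B,C,\Delta\})$, following the template of Section~\ref{SectionWithSimpleHardnessReduction} but using the parameterized equivalence relations supplied by Lemma~\ref{LEMMightyTupleImplies}. First I would apply Lemma~\ref{LEMMightyTupleOnePrime} to pass to a mighty tuple~I', and then Lemma~\ref{LEMMightyTupleImplies} to q-define $(\zv,\delta,\xv)$-parameterized equivalence relations $R_0, R_1$ on $D$ satisfying its properties (1)--(3). A single application of $\mathcal T^{\Phi}$ to $(R_0,R_1)$ is not enough for the Yes-side, because Lemma~\ref{YesInstanceCondition} demands that each $(b,c)\in B\times C$ lie in $R_0^{\alpha}$ or $R_1^{\alpha}$ for every $\alpha$, whereas property~(1) only places $B\times C$ inside the join $R_0^{\alpha}\join R_1^{\alpha}$. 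My plan is to close this gap by iterating $\mathcal T^{\Phi}$ to a fixed point over the larger class of multi-parameter equivalence relations.

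Set $S^{(0)\,\uv_1,\dots,\uv_n}(y,y'):= R_0^{\uv_1}\join\cdots\join R_0^{\uv_n}$, which is a q-definable multi-parameter equivalence relation on $D$, and iterate $S^{(t+1)}:=\mathcal T^{\Phi}(S^{(t)},R_1)$. By Lemma~\ref{NotReduceUniversalRelation} the sequence is monotone in the finite lattice of multi-parameter equivalence relations on $D$, so it stabilizes at some $S^{(T)}$ with $T$ depending only on $|A|$; since each step is a q-definition of size polynomial in $|\Phi|$, the relation $S^{(T)}$ is q-definable from $\{Q,D,B,C,\Delta\}$ via a formula of polynomial size. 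The reduction outputs the QCSP sentence
\[
\Xi_{\Phi} \;:=\; \forall \zv \;\exists \delta \;\forall \xv^{\ast} \;\exists y \;\exists y' \;\; \Delta(\zv,\delta)\wedge B(\zv,\delta,y)\wedge C(\zv,\delta,y')\wedge S^{(T)}(\zv,\delta,\xv^{\ast},y,y').
\]

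For the Yes direction, the fixed-point property of $S^{(T)}$ together with Lemma~\ref{IncreaseUniversalRelation} yields the dichotomy: for every $(b,c)$, either $(b,c)\in S^{(T)\,\alpha}$ for every $\alpha$, or $(b,c)\notin S^{(T)\,\alpha}\join R_1^{\alpha}$ for some $\alpha$. Fixing $\zv$ and letting the EP play the $\delta$ from property~(1), the inclusion $\prescript{\zv}{\delta}B\times\prescript{\zv}{\delta}C\subseteq R_0^{\alpha}\join R_1^{\alpha}$ combined with $S^{(T)\,\alpha}\supseteq R_0^{\alpha}$ forces every pair in $\prescript{\zv}{\delta}B\times\prescript{\zv}{\delta}C$ into $S^{(T)\,\xv^{\ast}}$ for every $\xv^{\ast}$, so the EP wins by picking any such pair as $(y,y')$. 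For the No direction, the UP plays the $\zv$ from property~(2); after any $\delta\in\prescript{\zv}{}\Delta$ played by the EP, the UP plays $\xv^{\ast}:=\xv_{\zv,\delta}$ from the same property. I would establish by induction on $t$ that $(\prescript{\zv}{\delta}B\times\prescript{\zv}{\delta}C)\cap S^{(t)\,\xv_{\zv,\delta}}=\varnothing$: the base case is immediate from property~(2) since $S^{(0)\,\xv}=R_0^{\xv}$, and the inductive step is Lemma~\ref{LEMNoInstanceImpliesNoBC}, whose hypothesis~(4) is supplied by property~(3) together with $S^{(t)\,\xv}\supseteq R_0^{\xv}$.

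The main obstacle is verifying that the fixed-point iteration can indeed be carried out in a constant number $T$ of steps depending only on $|A|$, so that the nested formula $S^{(T)}$ remains of polynomial size in $|\Phi|$; this follows from the finiteness of the lattice of multi-parameter equivalence relations on $D$, whose height is bounded purely in terms of $|A|$. A secondary point to verify is that the inductive invariant used on the No-side persists simultaneously for every $\delta \in \prescript{\zv}{}\Delta$ in the face of a single $T$ that is chosen before the play, which is exactly what Lemma~\ref{LEMNoInstanceImpliesNoBC} is designed for.
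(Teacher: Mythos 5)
Your proposal is correct, and it takes a genuinely different route from the paper. The paper does NOT iterate $\mathcal T^{\Phi}$ for the input $\Phi$. Instead, it builds a fixed sequence of multi-parameter equivalence relations $S_0, S_1, \dots, S_N$, one for the whole constraint language, where each step applies $\mathcal T^{\Phi_i}$ for some carefully (non-constructively) chosen Yes-instance $\Phi_i$ maintaining the invariants (s1)--(s5); the sequence terminates because it is strictly increasing in a finite lattice. Once the process stalls (Case~2: no Yes-instance $\Phi$ keeps (s5)), the reduction applies $\mathcal T^{\Phi}$ exactly \emph{once} to the frozen $S_N$, giving a formula of size $O(|\Phi|)$ and using Lemma~\ref{LEMNoInstanceImpliesNoBC} versus the universally-quantified inclusion (\ref{CaseOneDefinition}) to separate No- from Yes-instances.

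Your variant replaces this offline precomputation by an online fixed-point iteration $S^{(t+1)} := \mathcal T^{\Phi}(S^{(t)},R_1)$, with the same $\Phi$ throughout, run for a constant number $T$ of steps that bounds the height of the lattice. This is cleaner conceptually: you never need to invoke the existence of auxiliary Yes-instances $\Phi_1,\dots,\Phi_N$, and the correctness argument is a pleasing dichotomy at the fixed point obtained directly from the contrapositive of Lemma~\ref{IncreaseUniversalRelation} (for Yes-instances) together with an induction via Lemma~\ref{LEMNoInstanceImpliesNoBC} (for No-instances), exactly as you sketch. Both sides of your verification check out: the Yes-side uses property~(1) of Lemma~\ref{LEMMightyTupleImplies} and $S^{(T)\,\alpha}\supseteq R_0^{\alpha}$ to place $B\times C$ into $S^{(T)\,\alpha}\join R_1^{\alpha}$, and the fixed-point dichotomy then forces $B\times C\subseteq S^{(T)\,\alpha}$; the No-side induction is anchored at $S^{(0)\,\xv}=R_0^{\xv}$ and propagated by Lemma~\ref{LEMNoInstanceImpliesNoBC}, whose hypothesis~(4) is met via property~(3) of Lemma~\ref{LEMMightyTupleImplies}. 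Two minor points worth tightening: the bound $T$ depends on $|A|$, the arity $k$ of $\delta$, and the arity $m=|A|^2$ of $\alpha$, not on $|A|$ alone --- this is still a constant of the language, so the reduction remains polynomial; and the output formula has size $O(|\Phi|^{T})$, since each application of $\mathcal T^{\Phi}$ multiplies the formula size by $\Theta(|\Phi|)$. The paper's precomputed-$S_N$ scheme yields the tighter $O(|\Phi|)$ reduction, at the cost of the more intricate offline case analysis. Both produce a valid polynomial-time reduction from Quantified-3-DNF, so your proof establishes the theorem.
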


\begin{proof}

By Lemma \ref{LEMMightyTupleOnePrime} 
there exists a mighty tuple I'
$(R,D,B,C,\Delta)$ 
q-definable from the set $\{Q,D,B,C,\Delta\}$. 
By Lemma \ref{LEMMightyTupleImplies}
there exist $R_0$ and $R_{1}$ satisfying the corresponding conditions (1)-(3).
For every $\zv$ and $\delta$ we define  
a multi-parameter equivalence relation $\prescript{\zv}{\delta}S_{0}$ by 
 $$\prescript{\zv}{\delta}S_{0}^{\xv_1,\dots,\xv_k} = 
 \prescript{\zv}{\delta}R_{0}^{\xv_1}\join \dots\join \prescript{\zv}{\delta}R_{0}^{\xv_k}.$$

Using the operator $\mathcal T^{\Phi}$
we will build a sequence of multi-parameter equivalence relations
$\prescript{\zv}{\delta}S_{0},\prescript{\zv}{\delta}S_1,\dots,\prescript{\zv}{\delta}S_N$.
The idea is to reduce an instance $\Phi$ of the Quantified-3-DNF to $\QCSP(\Gamma)$ by 
substituting $S_{N}$ into the formula $\mathcal Q^{\Phi}$.
If this reduction works, then we proved the PSpace-hardness.
If it does not work, we  define a new bigger multi-parameter equivalence relation 
$S_{N+1}$ and continue.
Thus, we want to build 
a sequence $S_{0}, \dots, S_{N}$ maintaining the following properties:

\begin{enumerate}
\item[(s1)] $S_{i}$ is q-definable over $\{R,D,B,C,\Delta\}$;
\item[(s2)] for every $\zv\in A^{|A|}$ and $\delta\in\prescript{\zv}{}\Delta $
the universal relation $\prescript{\zv}{\delta}S_{i+1}$ is larger than 
 $\prescript{\zv}{\delta}S_{i}$;
\item[(s3)] there exist $\zv\in A^{|A|}$ and $\delta\in\prescript{\zv}{}\Delta $ such that
$\prescript{\zv}{\delta}S_{i+1}$ is strictly larger than  
 $\prescript{\zv}{\delta}S_{i}$; 
\item[(s4)] $\forall \zv\in A^{|A|} \exists \delta \in\prescript{\zv}{}\Delta \; \forall \xv \;
(\prescript{\zv}{\delta}B\times \prescript{\zv}{\delta}C\subseteq \prescript{\zv}{\delta}S_{i}^{\xv}\join \prescript{\zv}{\delta}R_1^{\xv})$;
\item[(s5)] 
$\exists \zv\in A^{|A|} \forall \delta\in\prescript{\zv}{}\Delta  \exists \xv
 ((\prescript{\zv}{\delta}B\times \prescript{\zv}{\delta}C)\cap \prescript{\zv}{\delta}S_{i}^{\xv}=\varnothing)$.
\end{enumerate}

Let us check that $S_{0}$ satisfies 
conditions (s1), (s4), and (s5). 
Condition (s1) follows from the definition.
Condition (s4) and (s5) come from (1) and (2) in Lemma \ref{LEMMightyTupleImplies}.

Properties (s2) and (s3) guarantee that the sequence will not be infinite.
Assume that we have a sequence 
$S_{0},S_1,\dots,S_N$.
Let us build $S_{N+1}$ satisfying the above properties or prove the PSpace-hardness using $S_{N}$. 
For every instance $\Phi$ of the Quantified-3-DNF by $\prescript{\zv}{\delta}S_{N+1,\Phi}$
we denote $\mathcal T^{\Phi}(\prescript{\zv}{\delta}S_{N},\prescript{\zv}{\delta}R_{1})$.
%
Consider two cases:

Case 1. There exists a Yes-instance $\Phi$ of 
the Quantified-3-DNF such that $S_{N+1,\Phi}$
satisfies condition (s5).
Put $S_{N+1} = S_{N+1,\Phi}$ and check that each of the properties 
(s1)-(s5) holds.
Property (s1) follows from the definition.
Property (s2) follows from Lemma 
\ref{NotReduceUniversalRelation}.
To prove property (s3) consider $\zv$ from condition (s5) for $S_{N}$, 
and 
the corresponding $\delta$ from condition (s4) for $S_{N}$.
Then 
$\prescript{\zv}{\delta}B\times \prescript{\zv}{\delta}C\subseteq \prescript{\zv}{\delta}S_{N}^{\xv}\join \prescript{\zv}{\delta}R_1^{\xv}$
for every $\xv$ 
and  
$(\prescript{\zv}{\delta}B\times \prescript{\zv}{\delta}C)\cap \prescript{\zv}{\delta}S_{N}^{\xv}=\varnothing$
 for some $\xv$.
Then 
Lemma \ref{IncreaseUniversalRelation} implies 
that 
$\prescript{\zv}{\delta}S_{N+1}$ is strictly larger than 
$\prescript{\zv}{\delta} S_{N}$ which proves 
condition (s3).
Property (s4) follows from 
the fact that $S_{N+1}$ is larger that $S_{N}$ and 
$S_{N}$ satisfies (s4).
Property (s5) is just the definition of Case 1.
Thus, we defined $S_{N+1}$ satisfying the required properties (s1)-(s5).

Case 2. $S_{N+1,\Phi}$ does not satisfy property (s5) for any Yes-instance $\Phi$ of 
the Quantified-3-DNF. Thus, 
for every Yes-instance $\Phi$
we have 
\begin{align}\label{CaseOneDefinition}\forall \zv\in A^{|A|} \exists \delta\in\prescript{\zv}{}\Delta  \; \forall \xv\;
(\prescript{\zv}{\delta}B\times \prescript{\zv}{\delta}C\subseteq \prescript{\zv}{\delta}S_{N+1,\Phi}^{\xv}).\end{align}
Let us show that 
any (Yes- or No-) instance 
$\Phi$ of the Quantified-3-DNF
is equivalent to 
\begin{align}\label{QCSPRecuction}
\forall \zv\in A^{|A|} \exists \delta\in\prescript{\zv}{}\Delta  \; \forall \uv\;
(S_{N+1,\Phi}^{\uv}(y,y')
\wedge y\in \prescript{\zv}{\delta}B \wedge 
y'\in \prescript{\zv}{\delta}C).
\end{align}
Notice that the above formula can be efficiently built 
from the instance $\Phi$, which gives us a polynomial reduction
from the Quantified-3-DNF.
If $\Phi$ is a Yes-instance, then 
it follows from (\ref{CaseOneDefinition}). Suppose $\Phi$ is a No-instance.
Recall that (see condition (3) in Lemma \ref{LEMMightyTupleImplies})
there exists $\alpha$ such that 
$\prescript{\zv}{\delta}R_{1}^{\alpha}\subseteq
\prescript{\zv}{\delta}R_{0}^{\xv}\subseteq \prescript{\zv}{\delta}S_{N}^{\xv}$
for every 
$\zv\in A^{|A|}$, $\delta\in\prescript{\zv}{}\Delta $, and $\xv$.
Combining this with 
property (s5) for $S_{N}$
and using Lemma \ref{LEMNoInstanceImpliesNoBC} 
we obtain that 
$$\exists \zv\in A^{|A|} \forall \delta\in \prescript{\zv}{}\Delta  \;\exists \xv \;
((\prescript{\zv}{\delta}B\times \prescript{\zv}{\delta}C)\cap 
 \prescript{\zv}{\delta}S_{N+1}^{\xv}=\varnothing).$$
Hence, (\ref{QCSPRecuction}) does not hold
and the instance $\Phi$ is equivalent to (\ref{QCSPRecuction}).
Thus we built a reduction from the Quantified-3-DNF
and proved PSpace-hardness of
$\QCSP(\{Q,D,B,C,\Delta\})$.
\end{proof}

\subsection{Mighty tuples II, III, and IV}
\label{SUBSECTIONMightyTuplesTwoThreeFour} 

It this section we show that 
mighty tuples II, III, and IV are equivalent in the sense that 
any of them q-defines any other. 

Below, 
$R$ is always a $(\zv,\alpha)$-parameterized binary relation, 
$D,B$ and $C$ are $\zv$-parameterized unary relation, 
where $\zv\in A^{|A|}$ and $\alpha\in A^{k}$.
The tuple $(R,D,B,C)$ is called a \emph{quadruple} in this section.
We will need the following properties of a quadruple:






\begin{enumerate}
\item[($\kappa$)] $k=|A|$ and $\prescript{\zv}{}R^{\kappa}\subseteq
    \prescript{\zv}{}R^{\alpha}$ for
    every  $\zv\in A^{|A|}$ and $\alpha\in A^{k}$;
\item[($d+$)] $\prescript{\zv}{}D + \prescript{\zv}{}R^{\forall\forall} = \prescript{\zv}{}D$
    for every $\zv\in A^{|A|}$;
    
\item[(un)] $\prescript{\zv}{}B\neq \varnothing$, $\prescript{\zv}{}C\neq \varnothing$, 
    $\prescript{\zv}{}B\subseteq \prescript{\zv}{}D$, $\prescript{\zv}{}C\subseteq \prescript{\zv}{}D$
    for every $\zv\in A^{|A|}$;
\item[($bc$)] $\prescript{\zv}{}R^{\forall}\cap (\prescript{\zv}{}B\times \prescript{\zv}{}C)\neq \varnothing$ for every $\zv\in A^{|A|}$; 
\item[($\varnothing$)] $\prescript{\zv}{}B\cap \prescript{\zv}{}C=\varnothing$ 
    for some $\zv\in A^{|A|}$;

\item[($b+$)] $\prescript{\zv}{}B + \prescript{\zv}{}R^{\forall\forall} = \prescript{\zv}{}B$
    for every $\zv\in A^{|A|}$; 
\item[($+c$)] $\prescript{\zv}{}R^{\forall\forall} +\prescript{\zv}{}C = \prescript{\zv}{}C$
    for every $\zv\in A^{|A|}$;
\item[(t)] $\prescript{\zv}{}R^{\alpha} + \prescript{\zv}{}R^{\alpha} = \prescript{\zv}{}R^{\alpha}$ for every $\zv\in A^{|A|}$ and $\alpha\in A^{k}$;
\item [(sd)] $\proj_{1}(\prescript{\zv}{}R^{\alpha})=\proj_{2}(\prescript{\zv}{}R^{\alpha})=\prescript{\zv}{}D$ for 
    every  $\zv\in A^{|A|}$ and $\alpha\in A^{k}$;
\item[(r)]  $\{(d,d)\mid d\in \prescript{\zv}{} D\}\subseteq \prescript{\zv}{}R^{\alpha}$
    for every $\zv\in A^{|A|}$ and $\alpha\in A^{k}$;

\item[($bd$)] $\prescript{\zv}{}B + \prescript{\zv}{}R^{\forall} = \prescript{\zv}{}D$    
    for every $\zv\in A^{|A|}$;
\item[($cd$)] $\prescript{\zv}{}R^{\forall} +  \prescript{\zv}{}C = \prescript{\zv}{}D$      for every $\zv\in A^{|A|}$; 

\item[($c+$)] $\prescript{\zv}{}C+\prescript{\zv}{}R^{\forall\forall}  = \prescript{\zv}{}C$      for some $\zv\in A^{|A|}$;
\item [(s)]$\prescript{\zv}{}R^{\alpha}$ is 
    symmetric for 
    every  $\zv\in A^{|A|}$ and $\alpha\in A^{k}$.

\end{enumerate}

Notice that 
a mighty tuple II is just a quadruple satisfying 
all the 
above properties except for ($\kappa$),
a mighty tuple III $(R,B,C)$ forms a quadruple 
$(R,A,B,C)$, where $D = A$,
satisfying properties $\{\text{un},bc,\varnothing,b+,+c\}$,
a mighty tuple IV
is a quadruple 
satisfying properties
$\{d+,\text{un},bc,\varnothing,b+,bd\}$.
Let 
$\mathrm{II}$ be the set of all the above properties except
for ($\kappa$),
$\mathrm{III} = \{\text{un},bc,\varnothing,b+,+c\}$,
and 
$\mathrm{IV} = \{d+,\text{un},bc,\varnothing,b+,bd\}$.

Below we prove many claims that allow us to moderate the 
quadruple to satisfy more properties from the above list.
Usually the claims are of the form 
$P_{1}\vdash P_{2}$, where $P_{1}$ and $P_{2}$ are 
some sets of properties of a quadruple, and should be understood 
as follows.
Suppose a quadruple satisfies properties $P_{1}$,
then there exists a quadruple q-definable from the first one
and satisfying properties $P_{2}$.
Also sometimes we add  
``$+\text{reduce } \sum_{\zv\in A^{|A|}}|\prescript{\zv}{}D|$'' 
meaning that the sum $\sum_{\zv\in A^{|A|}}|\prescript{\zv}{}D|$
calculated for the new quadruple is 
smaller than the sum calculated for the old one.
We write \emph{increase} or \emph{keep} instead of reduce 
if the sum is increased or stays the same, respectively.
Most of the properties are from the above list but some of them are 
given by a quantified formula.

First, we want to 
be able to add the additional property ($\kappa$) to existing properties
from III or IV.

\begin{sublem}\label{CLAIMAddKappa}
Suppose $P\subseteq \mathrm{III}\cup \mathrm{IV}$.
Then 
$P\vdash P\cup\{\kappa\}$.
\end{sublem}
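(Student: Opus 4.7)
The plan is to mimic the construction from Lemma \ref{LEMMightyTupleOnePrime}, adapting it to the quadruple setting (no $\delta$-parameter, unary $D,B,C$ instead). Given the $(\zv,\alpha)$-parameterized relation $R$ with $\alpha\in A^{k}$, I would introduce a new $(\zv,\alpha')$-parameterized relation $R'$ of parameter arity $|A|$, defined as the conjunction over all ``reindexing'' tuples
$$\prescript{\zv}{}{R'}^{x_{1},\dots,x_{|A|}}(y_{1},y_{2}) \;:=\; \bigwedge_{(i_{1},\dots,i_{k})\in[|A|]^{k}} \prescript{\zv}{}{R}^{x_{i_{1}},\dots,x_{i_{k}}}(y_{1},y_{2}),$$
and keep $D,B,C$ unchanged. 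The quadruple $(R',D,B,C)$ is clearly q-definable from $(R,D,B,C)$ since the right-hand side is a finite conjunction. I would then want to verify two things: first, property ($\kappa$) holds for $R'$; second, every property from $P$ is inherited.

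For the first point, I would observe that when $(x_{1},\dots,x_{|A|})=\kappa=(1,2,\dots,|A|)$ the tuples $(x_{i_{1}},\dots,x_{i_{k}})$ exhaust all of $A^{k}$ as $(i_{1},\dots,i_{k})$ ranges over $[|A|]^{k}$, so $\prescript{\zv}{}{R'}^{\kappa}=\bigcap_{\alpha\in A^{k}}\prescript{\zv}{}{R}^{\alpha}=\prescript{\zv}{}{R}^{\forall\forall}$. For any other $\alpha'=(x_{1},\dots,x_{|A|})$, the set of conjuncts appearing in the definition of $\prescript{\zv}{}{R'}^{\alpha'}$ is a subset of those appearing for $\alpha'=\kappa$, so $\prescript{\zv}{}{R'}^{\kappa}\subseteq \prescript{\zv}{}{R'}^{\alpha'}$; this is exactly ($\kappa$). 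Two further identities I would record are $\prescript{\zv}{}{R'}^{\forall}=\prescript{\zv}{}{R}^{\forall}$ (setting every $x_{i}=x$ collapses all conjuncts to $\prescript{\zv}{}{R}^{x,\dots,x}$) and, as a consequence, $\prescript{\zv}{}{R'}^{\forall\forall}=\prescript{\zv}{}{R}^{\forall\forall}$.

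For the second point, I would simply walk through the list $\mathrm{III}\cup\mathrm{IV}=\{\mathrm{un},bc,\varnothing,b+,+c,d+,bd\}$ and check that each such property refers to the relation $R$ only via $R^{\forall}$ or $R^{\forall\forall}$ (the properties $\mathrm{un}$ and $\varnothing$ do not involve $R$ at all). Since $R'$ agrees with $R$ on both $R^{\forall}$ and $R^{\forall\forall}$ and the unary relations are unchanged, every property in $P$ transfers verbatim from $(R,D,B,C)$ to $(R',D,B,C)$. I do not anticipate a real obstacle here: the construction is essentially a copy of the one in Lemma \ref{LEMMightyTupleOnePrime}, and the whole argument reduces to the bookkeeping observation that no property in $\mathrm{III}\cup\mathrm{IV}$ mentions a single $R^{\alpha}$ for a specific $\alpha$, so replacing $R$ by its ``symmetrized'' version $R'$ costs nothing while gaining ($\kappa$).
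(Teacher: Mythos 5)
Your construction is identical to the paper's (and to Lemma \ref{LEMMightyTupleOnePrime}, which it copies), and your verification that all properties in $\mathrm{III}\cup\mathrm{IV}$ access $R$ only through $R^{\forall}$ or $R^{\forall\forall}$ is exactly the ``straightforward to check'' step the paper leaves implicit. This is correct and follows the paper's approach.
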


\begin{proof}
We change only the $(\zv,\alpha)$-parameterized relation 
$R$. The new relation $R_0$ is defined by
$$\prescript{\zv}{}R_0^{x_1,\dots,x_{|A|}}(y_1,y_2)=
\bigwedge\limits_{i_1,\dots,i_k\in\{1,2,\dots,|A|\}.
}
\prescript{\zv}{}R^{x_{i_1},\dots,x_{i_k}}(y_1,y_2).$$
Then 
$R_0^{\kappa} =R_0^{\forall\forall} = R^{\forall\forall}$ and 
$R_0^{\forall} = R^{\forall}$.
It is straightforward to check that 
the quadruple $(R_0,D,B,C)$ satisfies all 
the properties satisfied by $(R,D,B,C)$, which completes the proof.
\end{proof}

Notice that property ($\kappa$) implies 
that $R^{\kappa} =R^{\forall\forall}$ 
and in the following claims we usually write 
$R^{\kappa}$ instead of $R^{\forall\forall}$.

\begin{sublem}\label{CLAIMFirstTransitive}
$\mathrm{III}\cup\{\kappa\}\vdash \mathrm{III}\cup\{\kappa,t\}$.

\end{sublem}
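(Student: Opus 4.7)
The plan is to take $R'$ to be the $N$-fold composition of $R$ with itself (where $N = |A|!\cdot|A|^{2}$), so that transitivity of $\prescript{\zv}{}R'^{\alpha}$ is immediate from Lemma~\ref{LEMFactorialRepetition}, and then to check that every property of $\mathrm{III}\cup\{\kappa\}$ is preserved. Concretely, I would define $\prescript{\zv}{}R'^{\alpha} := N\cdot\prescript{\zv}{}R^{\alpha}$; this is q-definable from $R$ by conjoining $N$ copies of $R$ (with shared $\zv$ and $\alpha$ coordinates) and existentially quantifying the $N-1$ intermediate $y$-variables. The new quadruple will be $(R',D,B,C)$.

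Properties (un) and $(\varnothing)$ involve only $B$ and $C$ and so are inherited for free. Lemma~\ref{LEMFactorialRepetition} gives $(t)$ pointwise in $\zv$ and $\alpha$. Property $(\kappa)$ transfers because $\prescript{\zv}{}R^{\kappa}\subseteq\prescript{\zv}{}R^{\alpha}$ implies the same containment after $N$-fold composition. For $(b+)$, iterating the original equation $\prescript{\zv}{}B+\prescript{\zv}{}R^{\kappa}=\prescript{\zv}{}B$ a total of $N$ times yields $\prescript{\zv}{}B+N\cdot\prescript{\zv}{}R^{\kappa}=\prescript{\zv}{}B$, i.e.\ $\prescript{\zv}{}B+\prescript{\zv}{}R'^{\kappa}=\prescript{\zv}{}B$; the symmetric argument handles $(+c)$.

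The only slightly delicate part is $(bc)$, and this is where I would be most careful. Fix $\zv$ and pick $(b_0,c_0)\in\prescript{\zv}{}R^{\forall}\cap(\prescript{\zv}{}B\times\prescript{\zv}{}C)$ using the original $(bc)$, so $(b_0,c_0)\in\prescript{\zv}{}R^{x,x,\dots,x}$ for every $x\in A$. Iterating $(b+)$ a total of $N-1$ times produces $b'\in\prescript{\zv}{}B$ with $(b',b_0)\in(N-1)\cdot\prescript{\zv}{}R^{\kappa}$. Combining this with $\prescript{\zv}{}R^{\kappa}\subseteq\prescript{\zv}{}R^{x,x,\dots,x}$ gives $(b',b_0)\in(N-1)\cdot\prescript{\zv}{}R^{x,x,\dots,x}$ for every $x$, and concatenating with the single step $(b_0,c_0)$ yields $(b',c_0)\in N\cdot\prescript{\zv}{}R^{x,x,\dots,x}=\prescript{\zv}{}R'^{x,x,\dots,x}$. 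Hence $(b',c_0)\in\prescript{\zv}{}R'^{\forall}\cap(\prescript{\zv}{}B\times\prescript{\zv}{}C)$, establishing $(bc)$ for $R'$.

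The main obstacle in this claim is precisely getting $(bc)$ right: the new $R'$-path must have length exactly $N$, while the original witness $(b_0,c_0)$ contributes only a single $R^{\alpha}$-step. What unlocks the proof is the observation that $(b+)$ allows arbitrarily long $R^{\kappa}$-padding inside $B$, and $(\kappa)$ converts that $\kappa$-padding into $\alpha$-padding for every $\alpha$, so the length can be inflated to $N$ without leaving $B\times C$.
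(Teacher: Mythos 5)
Your proposal is correct and matches the paper's argument essentially step for step: you take the $N$-fold composition with $N = |A|!\cdot|A|^{2}$, invoke Lemma~\ref{LEMFactorialRepetition} for $(t)$, note that the other properties transfer immediately, and recover $(bc)$ by padding the original witness $(b_0,c_0)$ with $N-1$ $R^{\kappa}$-steps inside $B$, using $(\kappa)$ to convert these to $R^{x,\dots,x}$-steps for every $x$. This is precisely what the paper does.
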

\begin{proof}
Put
$\prescript{\zv}{}R_{0}^{\beta} = N\cdot \prescript{\zv}{}R^{\beta}$, 
where $N = |A|!\cdot |A|^{2}$.
Note that we have
$\prescript{\zv}{}R_{0}^{\forall}\supseteq N\cdot\prescript{\zv}{}R^{\forall}$.
We claim that $R_{0}$, $B$, $C$, and $D$ satisfy 
properties 
$\{\kappa,\text{un},bc,\varnothing,b+,+c,\text{t}\}$.
For all the properties but ($bc$) and (t) it follows immediately 
from the same properties for $R$.
To prove  ($bc$)
we choose some
$(\prescript{\zv}{}b_{1},\prescript{\zv}{}c)\in \prescript{\zv}{}R^{\forall}\cap (\prescript{\zv}{}B\times \prescript{\zv}{}C)$.
Since ($b+$), we can find a sequence 
$\prescript{\zv}{}b_{N}-\prescript{\zv}{}b_{N-1}-\dots-
\prescript{\zv}{}b_{2}-
\prescript{\zv}{}b_{1}$ 
such that 
each $\prescript{\zv}{}b_{i}$ is from $\prescript{\zv}{}B$ and 
$(\prescript{\zv}{}b_{i+1},\prescript{\zv}{}b_{i})\in \prescript{\zv}{}R^{\kappa}\subseteq \prescript{\zv}{}R^{\forall}$.
Then $(b_{N},c)\in \prescript{\zv}{}R_{0}^{\forall}$, 
which implies ($bc$).
Lemma \ref{LEMFactorialRepetition}
implies property (t).
\end{proof}

\begin{sublem}\label{CLAIMSecondTransitive}
$\mathrm{IV}\cup\{\kappa\}\vdash \mathrm{IV}\cup\{\kappa,t\}$.

\end{sublem}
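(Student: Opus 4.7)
The approach will parallel Claim~\ref{CLAIMFirstTransitive} but with an added $D$-closure constraint. The naive choice $\prescript{\zv}{}R_{0}^{\alpha}=N\cdot\prescript{\zv}{}R^{\alpha}$, with $N=|A|!\cdot|A|^{2}$, fails to preserve property~$(bd)$, because intermediate vertices of an $N$-fold $R^{\alpha}$-chain may leave $D$. I will instead define
\[\prescript{\zv}{}R_{0}^{\alpha}:=N\cdot\bigl(\prescript{\zv}{}R^{\alpha}\cap(\prescript{\zv}{}D\times\prescript{\zv}{}D)\bigr),\]
so that every witnessing chain has both endpoints and all intermediate vertices in $\prescript{\zv}{}D$. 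The relation $R_{0}$ is q-definable from $\{R,D\}$; property~$(\kappa)$ is inherited by monotonicity; property~$(t)$ follows from Lemma~\ref{LEMFactorialRepetition} applied to $\prescript{\zv}{}R^{\alpha}\cap(\prescript{\zv}{}D\times\prescript{\zv}{}D)$; and $(\mathrm{un})$, $(\varnothing)$ transfer verbatim.

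I will next verify $(b+)$ and $(d+)$ in parallel. The key observation is that $\prescript{\zv}{}R_{0}^{\forall\forall}\subseteq \prescript{\zv}{}R_{0}^{\kappa}\subseteq N\cdot\prescript{\zv}{}R^{\kappa}=N\cdot\prescript{\zv}{}R^{\forall\forall}$, using $R^{\kappa}=R^{\forall\forall}$ from property~$(\kappa)$. Hence $\prescript{\zv}{}B+\prescript{\zv}{}R_{0}^{\forall\forall}\subseteq\prescript{\zv}{}B+N\cdot\prescript{\zv}{}R^{\forall\forall}=\prescript{\zv}{}B$ by iterating $(b+)$ for $R$, and analogously for $D$, yielding the $\subseteq$ direction. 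For $\supseteq$, I will lengthen: iterating $B+R^{\forall\forall}=B$ (resp.~$D+R^{\forall\forall}=D$) $N$ times produces a chain of length $N$ in $R^{\forall\forall}$ staying inside $B\subseteq D$ (resp.~$D$), which therefore lies in $R^{\alpha}\cap(D\times D)$ for every $\alpha$ (by $R^{\forall\forall}\subseteq R^{\alpha}$) and hence witnesses membership in $R_{0}^{\forall\forall}$. For $(bc)$, I will prepend $N-1$ such $R^{\forall\forall}$-predecessors in $B$ to a given pair in $R^{\forall}\cap(B\times C)$.

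The main obstacle is property~$(bd)$: $\prescript{\zv}{}B+\prescript{\zv}{}R_{0}^{\forall}=\prescript{\zv}{}D$. The inclusion $\subseteq$ is automatic from the $D\times D$ restriction built into $R_{0}^{\alpha}$. For $\supseteq$, fix $d\in\prescript{\zv}{}D$ and pick, via $(bd)$ for $R$, some $b_{0}\in\prescript{\zv}{}B$ with $(b_{0},d)\in\prescript{\zv}{}R^{\forall}$; then iterate $(b+)$ to produce $b_{N-1},\dots,b_{1}\in\prescript{\zv}{}B$ with $(b_{i},b_{i-1})\in\prescript{\zv}{}R^{\forall\forall}$. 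The chain $b_{N-1},\dots,b_{0},d$ has length $N$ and stays in $\prescript{\zv}{}D$ because $\prescript{\zv}{}B\subseteq\prescript{\zv}{}D$, while for every constant parameter $x$ each step lies in $\prescript{\zv}{}R^{x,\dots,x}$: the internal steps via $R^{\forall\forall}\subseteq R^{x,\dots,x}$, and the final step via $R^{\forall}\subseteq R^{x,\dots,x}$. Thus $(b_{N-1},d)\in\prescript{\zv}{}R_{0}^{\forall}$ with $b_{N-1}\in\prescript{\zv}{}B$, completing $(bd)$. The subtle point is that $B$ is already closed under taking $R^{\forall\forall}$-predecessors by $(b+)$, which is exactly what keeps the lengthened chain inside $D$ and makes the $D\times D$-restricted construction compatible with $(bd)$.
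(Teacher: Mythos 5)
Your proof is correct, and it is worth noting that it does \emph{not} literally repeat the paper's construction: the paper's proof of this claim says to take $R_{0}^{\alpha}=N\cdot R^{\alpha}$ ``word for word'' as in Claim~\ref{CLAIMFirstTransitive} and asserts that $(bd)$ follows from $(bd)$ for $R$, whereas you instead set $R_{0}^{\alpha}=N\cdot\bigl(R^{\alpha}\cap(D\times D)\bigr)$. Your concern about the naive construction is well founded. With $R_{0}^{\alpha}=N\cdot R^{\alpha}$ one has $R_{0}^{\forall\forall}=N\cdot R^{\forall\forall}$, which makes $(b+)$ and $(d+)$ go through by iteration, but $R_{0}^{\forall}=\bigcap_{x}(N\cdot R^{x,\dots,x})$ has no comparably clean relation to $R^{\forall}$: the $N$-step chain witnessing $(b,e)\in N\cdot R^{x,\dots,x}$ may end outside $D$, and different $x$ may use different chains, so the inclusion $B+R_{0}^{\forall}\subseteq D$ is genuinely at risk. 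A small explicit configuration on $A=\{1,2,3\}$ with $D=\{1,3\}$, $B=\{1\}$, $C=\{3\}$, $R^{(2,2,2)}=A^{2}\setminus\{(1,2)\}$, all other constant-parameter slices equal to $A^{2}$, and $R^{\kappa}=\{(1,1),(3,3),(2,1),(2,2),(2,3)\}$ satisfies $\mathrm{IV}\cup\{\kappa\}$, yet $N\cdot R^{x,x,x}=A^{2}$ for every $x$, so $B+R_{0}^{\forall}=A\ne D$; thus the unrestricted construction really does break $(bd)$.

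Your restriction to $D\times D$ before taking the $N$-fold composition repairs this cleanly: $(bd)$-$\subseteq$ becomes automatic, and the $\supseteq$ directions of $(bd)$, $(bc)$, $(b+)$, $(d+)$ still go through because the lengthening chains you build stay in $B\subseteq D$ (resp.\ $D$), hence land in $R^{\alpha}\cap(D\times D)$ for every $\alpha$ via property~$(\kappa)$. Property~$(t)$ still follows from Lemma~\ref{LEMFactorialRepetition}, and $(\kappa)$, $(\text{un})$, $(\varnothing)$ are immediate. One small nuance in your prose: the danger is less that \emph{intermediate} vertices leave $D$ than that the \emph{final} vertex of the $N$-chain does, but your fix handles both. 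In short, your proof is correct, and the intersection with $D\times D$ that you add is not an optional stylistic choice but a necessary step that the paper's terse ``word for word'' reduction to Claim~\ref{CLAIMFirstTransitive} appears to overlook; the paper itself performs exactly this $D\times D$-restriction in the subsequent Claim~\ref{CLAIMSecondPlusC}, but by then it is too late to rescue $(bd)$ in the output of the present claim.
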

\begin{proof}
The proof repeats the proof of the previous claim word for word. 
Additional properties ($bd$) and ($d+$) follow from ($bd$) and ($d+$) for $R$.  
\end{proof}

\begin{sublem}\label{CLAIMSecondPlusC}
$\mathrm{IV}\cup\{\kappa,\text{t}\}\vdash \mathrm{IV}\cup\{\kappa,t,+c\}$.

\end{sublem}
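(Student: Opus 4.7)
The plan is to construct a new quadruple $(R',D,B,C')$ q-definable from the original $(R,D,B,C)$, where $C'$ captures the backward $R^{\forall\forall}$-image of $C$ within $D$, and $R'$ is a suitable modification of $R$ (its $\kappa$-slice restricted to $D\times D$) so that $R'^{\forall\forall}$-predecessors of $D$ automatically lie in $D$. Concretely, I would set
\[
C'(y) \;:=\; D(y) \wedge \exists z\,\bigl(C(z)\wedge R^{\forall\forall}(y,z)\bigr),
\]
and define $R'$ so that $R'^{\forall\forall}=R^{\forall\forall}\cap (D\times D)$, leaving $B$ and $D$ unchanged.

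I would then verify each property in $\mathrm{IV}\cup\{\kappa,t,+c\}$. Most are direct: (un) follows because $C'\subseteq D$ by construction and $C'\neq\varnothing$ by ($d+$) (every $c\in C\subseteq D$ has a predecessor in $D$ under $R^{\forall\forall}$, and that predecessor witnesses $C'$); ($b+$), ($bd$), ($d+$) transfer from $R$ since $B\subseteq D$; ($bc$) is derivable from ($bd$) and $C'\neq\varnothing$. Property ($\varnothing$) is the pleasant part: if $B\cap C=\varnothing$ for some $\zv_0$, then any $y\in B\cap C'$ would satisfy $R^{\forall\forall}(y,z)$ with $z\in C$, and ($b+$) would force $z\in B$, contradicting $B\cap C=\varnothing$. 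The new property ($+c$) comes from transitivity: the $\subseteq$ direction of $R'^{\forall\forall}+C'=C'$ uses transitivity of $R^{\forall\forall}$ together with the fact that $R'^{\forall\forall}$-predecessors are forced into $D$, and the $\supseteq$ direction uses (t) to decompose $R^{\forall\forall}(w,z)$ as $R^{\forall\forall}(w,y)\wedge R^{\forall\forall}(y,z)$ with $y\in D$ (via $d+$ applied to $w\in D$).

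The main technical hurdle is the verification of $(t)$ for the modified relation $R'$. Since only the $\kappa$-slice is restricted, property $(t)$ for $\alpha\neq\kappa$ is immediate. For $\alpha=\kappa$, the decomposability direction of $(t)$ requires that for $(a,b)\in R^{\forall\forall}\cap(D\times D)$ there is an intermediate $c\in D$ with $R^{\forall\forall}(a,c)\wedge R^{\forall\forall}(c,b)$; this follows from $(t)$ applied to $R^{\forall\forall}$ together with ($d+$), which forces $c\in D$ once $a\in D$ and $R^{\forall\forall}(a,c)$. The subtlest point, and the step I expect to require the most care, is the q-definability of the conditional restriction ``apply the $D\times D$ cut-off only to the slice $\alpha=\kappa$''---which is not directly expressible as an implication in a quantified conjunctive formula. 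The resolution exploits the fact that $\kappa$ is a specific fixed tuple and that ($\kappa$) already gives $R^\kappa=R^{\forall\forall}$, so the restriction can be engineered by intersecting $R$ with a q-definable relation that forces $D$-membership of both endpoints at the $\kappa$-coordinate, while leaving the other slices untouched by virtue of the inclusion $R^\kappa\subseteq R^\alpha$.
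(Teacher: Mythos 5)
Your construction of $C'$ is identical to the paper's $C_0 = (\prescript{\zv}{}R^{\kappa}+\prescript{\zv}{}C)\cap\prescript{\zv}{}D$, and most of your property verifications (the argument for (un) via $(d+)$, the argument for $(\varnothing)$ via $(b+)$, and the two-sided argument for $(+c)$ using transitivity and $(d+)$) are essentially the same as what the paper does. The genuine divergence, and the gap, is in how you modify $R$.

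You attempt to restrict \emph{only} the $\kappa$-slice of $R$ to $D\times D$, keeping $R'^\alpha=R^\alpha$ for $\alpha\neq\kappa$, precisely so that $(t)$ for $\alpha\neq\kappa$ is inherited for free. The motivation is reasonable: the $\supseteq$ half of $(t)$, $R^\alpha\subseteq R^\alpha+R^\alpha$, needs an intermediate vertex, and $(d+)$ only forces $R^\kappa$-neighbours of $D$-elements back into $D$, not $R^\alpha$-neighbours. But your ``conditional'' modification is \emph{not q-definable}: a q-definition over $\{R,D,B,C\}$ is an existentially quantified conjunction of atoms, and ``apply the $D\times D$ cut-off iff $\alpha=\kappa$'' is a case distinction on the value of the $\alpha$-coordinates, which would need disjunction or implication. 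The resolution you sketch---intersecting $R$ with a q-definable $T$ such that $T^\kappa$ acts as $D\times D$ while $T^\alpha$ is large for $\alpha\neq\kappa$, ``by virtue of $R^\kappa\subseteq R^\alpha$''---does not produce such a $T$; the containment $R^\kappa\subseteq R^\alpha$ lets larger $\alpha$-slices absorb edges, not the other way around, so there is no way to make $T^\kappa$ strictly more restrictive than every other $T^\alpha$ by a conjunctive formula. This is not a detail to be filled in: it is the step that breaks.

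The paper sidesteps this entirely by restricting \emph{every} slice:
\[
\prescript{\zv}{}R_{0}^{\alpha}(y_1,y_2)\;=\;\prescript{\zv}{}R^{\alpha}(y_1,y_2)\wedge \prescript{\zv}{}D(y_1)\wedge \prescript{\zv}{}D(y_2),
\]
which is a plain conjunction, hence trivially q-definable, and then asserts (tersely, via property (un)) that all the relevant properties of the quadruple only look at $R$ inside $D\times D$ once $B,C_0\subseteq D$. If you are worried about the $\supseteq$ half of $(t)$ for $\alpha\neq\kappa$ under this blanket restriction---a worry that is not unreasonable given the one-line justification---the right reaction is to examine whether the remaining claims actually need that half for $\alpha\neq\kappa$ (the downstream Claim~\ref{CLAIMAddSDR} rebuilds $\supseteq$ from reflexivity and only uses the transitivity direction, which \emph{is} preserved by restriction), not to change the definition of $R'$ into something that is no longer a quantified conjunctive formula.
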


\begin{proof}
    Put $\prescript{\zv}{}C_{0} = 
    (\prescript{\zv}{}R^{\kappa} +\prescript{\zv}{}C)\cap 
    \prescript{\zv}{}D$ and $\prescript{\zv}{}R_{0}^{\alpha}=
    \prescript{\zv}{}R^{\alpha}\cap (\prescript{\zv}{}D\times \prescript{\zv}{}D)$.
We claim that 
$R_0$, $B$, $C_{0}$, and $D$ satisfy the required properties.
Restriction of $R$ does not affect any properties as we have 
property (un).
Changing $C$ could affect only properties 
(un) and  ($\varnothing$).
(un) follows from ($d+$) for $R$  and 
property (un) for $C$.
By property (t) 
we have 
$$\prescript{\zv}{}C_{0}=
\prescript{\zv}{}R_{0}^{\kappa} +\prescript{\zv}{}C=
\prescript{\zv}{}R_{0}^{\kappa} +\prescript{\zv}{}R_{0}^{\kappa} +\prescript{\zv}{}C=\prescript{\zv}{}R_{0}^{\kappa} +\prescript{\zv}{}C_{0}
$$ 
hence we have ($+c$).
To prove property ($\varnothing$) 
we use this property for $C$ and 
consider $\zv\in A^{|A|}$ such that 
$\prescript{\zv}{}B\cap \prescript{\zv}{}C=\varnothing$.
Then using property ($b+$) for $R$ we derive
$$\prescript{\zv}{}B\cap \prescript{\zv}{}C=\varnothing
\Rightarrow 
(\prescript{\zv}{}B+\prescript{\zv}{} R^{\kappa})\cap \prescript{\zv}{}C=\varnothing
\Rightarrow 
\prescript{\zv}{}B\cap (\prescript{\zv}{} R^{\kappa}+ \prescript{\zv}{}C)=\varnothing
\Rightarrow 
\prescript{\zv}{}B\cap \prescript{\zv}{}C_{0}=\varnothing.
$$%
\end{proof}

Notice that 
$\mathrm{IV}\cup\{\kappa,\text{t},+c\}=
\mathrm{III}\cup\{\kappa,\text{t},d+,bd\}$.

\begin{sublem}\label{CLAIMAddSDR}
$\mathrm{III}\cup\{\kappa,\text{t}\}\vdash \mathrm{III}\cup\{\kappa,\text{t},d+,\text{r},\text{sd}\}$.
\end{sublem}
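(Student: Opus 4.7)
The plan is to restrict the quadruple to the reflexive part of $R^{\kappa}$. Specifically, I will define
\begin{align*}
\prescript{\zv}{}D_0 &= \{d\in A : (d,d)\in \prescript{\zv}{}R^{\kappa}\},\\
\prescript{\zv}{}R_0^{\alpha} &= \prescript{\zv}{}R^{\alpha}\cap(\prescript{\zv}{}D_0\times \prescript{\zv}{}D_0),\\
\prescript{\zv}{}B_0 &= \prescript{\zv}{}B\cap \prescript{\zv}{}D_0,\qquad
\prescript{\zv}{}C_0 = \prescript{\zv}{}C\cap \prescript{\zv}{}D_0.
\end{align*}
All four are q-definable from $R$ (note $R^{\kappa}=R^{\forall\forall}$ by ($\kappa$), which is a simple constraint), so it remains to verify that $(R_0,D_0,B_0,C_0)$ satisfies every property in $\mathrm{III}\cup\{\kappa,t,d+,r,\text{sd}\}$.

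The only non-trivial step is to show that $B_0$ and $C_0$ are nonempty for every $\zv$, and that property ($bc$) is preserved. For the first, I will use ($b+$) to build, for an arbitrary $b\in \prescript{\zv}{}B$, a backward chain $b=b_0,b_1,b_2,\dots\in \prescript{\zv}{}B$ with $(b_{i+1},b_i)\in \prescript{\zv}{}R^{\kappa}$. Finiteness of $B$ forces a repetition $b_i=b_j$ with $i<j$, so transitivity ($t$) gives $(b_i,b_i)\in \prescript{\zv}{}R^{\kappa}$, placing $b_i$ in $\prescript{\zv}{}B_0$; moreover $(b_i,b)\in \prescript{\zv}{}R^{\kappa}$ by the same transitivity. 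Thus for \emph{every} $b\in \prescript{\zv}{}B$ there is $b^{*}\in \prescript{\zv}{}B_0$ with $(b^{*},b)\in \prescript{\zv}{}R^{\kappa}$; symmetrically via ($+c$), every $c\in \prescript{\zv}{}C$ admits $c^{*}\in \prescript{\zv}{}C_0$ with $(c,c^{*})\in \prescript{\zv}{}R^{\kappa}$. For ($bc$), starting from any $(b,c)\in \prescript{\zv}{}R^{\forall}\cap(\prescript{\zv}{}B\times \prescript{\zv}{}C)$ and choosing such $b^{*},c^{*}$, I concatenate $b^{*}\to b\to c\to c^{*}$: since $\prescript{\zv}{}R^{\kappa}\subseteq \prescript{\zv}{}R^{\alpha}$ by ($\kappa$) and $\prescript{\zv}{}R^{\forall}\subseteq \prescript{\zv}{}R^{\alpha}$ trivially, each edge lies in $\prescript{\zv}{}R^{\alpha}$, and ($t$) collapses the composition to yield $(b^{*},c^{*})\in \prescript{\zv}{}R^{\alpha}$ for every $\alpha$. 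Hence $(b^{*},c^{*})\in \prescript{\zv}{}R_0^{\forall}\cap(\prescript{\zv}{}B_0\times \prescript{\zv}{}C_0)$.

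The remaining verifications are direct. Property ($r$) holds because $d\in \prescript{\zv}{}D_0$ means $(d,d)\in \prescript{\zv}{}R^{\kappa}\subseteq \prescript{\zv}{}R^{\alpha}$, and $(d,d)\in \prescript{\zv}{}D_0\times \prescript{\zv}{}D_0$; property ($d+$) and ($\text{sd}$) then follow since $\prescript{\zv}{}R_0^{\alpha}\subseteq \prescript{\zv}{}D_0\times \prescript{\zv}{}D_0$ and contains the diagonal on $\prescript{\zv}{}D_0$; ($\kappa$) for $R_0$ restricts the $\kappa$-property of $R$; ($t$) for $R_0$ holds because to witness $(a,c)\in \prescript{\zv}{}R_0^{\alpha}$ as a composition one can use the midpoint $a\in \prescript{\zv}{}D_0$ together with $(a,a)\in \prescript{\zv}{}R_0^{\alpha}$. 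Property ($b+$) for $B_0$ is analogous: the inclusion $\prescript{\zv}{}B_0+\prescript{\zv}{}R_0^{\kappa}\subseteq \prescript{\zv}{}B+\prescript{\zv}{}R^{\kappa}=\prescript{\zv}{}B$, combined with $\subseteq \prescript{\zv}{}D_0$, gives $\subseteq \prescript{\zv}{}B_0$; the reverse inclusion uses the loop $(b,b)\in \prescript{\zv}{}R_0^{\kappa}$ for $b\in \prescript{\zv}{}B_0$. Property ($+c$) is symmetric, and ($\varnothing$) is inherited since $\prescript{\zv}{}B_0\cap \prescript{\zv}{}C_0\subseteq \prescript{\zv}{}B\cap \prescript{\zv}{}C$.

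The conceptual obstacle is the danger that restricting to $D_0$ destroys ($bc$); the trick above — sandwiching the original $R^{\forall}$-edge between two $R^{\kappa}$-edges landing in the cyclic part $D_0$, and collapsing via transitivity — is what makes the whole construction work.
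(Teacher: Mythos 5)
Your proof is correct and takes essentially the same approach as the paper: restrict to $\prescript{\zv}{}D_0 = \{d : (d,d)\in\prescript{\zv}{}R^{\kappa}\}$, then walk via ($b+$)/($+c$) chains until a repetition forces a loop inside $D_0$, using ($t$) to collapse. The only cosmetic difference is that you first isolate the helper fact "every $b\in B$ reaches some $b^{*}\in B_0$ in $R^{\kappa}$" before applying it to a witness of ($bc$), whereas the paper walks directly from the two ends of a single $R^{\forall}$-edge; the underlying mechanism is identical.
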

\begin{proof}
Put $\prescript{\zv}{}D_{0}(x) = 
\prescript{\zv}{}R^{\kappa}(x,x)$,
$\prescript{\zv}{}B_{0} =
\prescript{\zv}{}B\cap 
\prescript{\zv}{}D_{0}$,
$\prescript{\zv}{}C_{0} =
\prescript{\zv}{}C\cap 
\prescript{\zv}{}D_{0}$,
$\prescript{\zv}{}R_{0}^{\alpha} =
\prescript{\zv}{}R_{0}^{\alpha}\cap 
(\prescript{\zv}{}D_{0}\times \prescript{\zv}{}D_{0})$.

Let us prove that $R_{0}$, $B_{0}$, $C_{0}$, and $D_{0}$ satisfy 
the required properties.
Properties ($\kappa$) 
and $(\varnothing)$ follow from the corresponding properties 
for $R,B,C$, and $D$. 
Properties (r), ($d+$), and (sd) follow immediately from the definition.

Consider a tuple $(b_{0},c_{0})\in \prescript{\zv}{}R^{\forall}$,
which exists by property ($bc$).
By property ($b+$)
we can find a 
path 
$b_{N}-b_{N-1}-\dots-b_{1}-b_0$ of any length $N$ such that each $b_{i}$ is from 
$\prescript{\zv}{}B$ and
$(b_{i+1},b_{i})\in\prescript{\zv}{}R^{\kappa}$.
Similarly, by property ($+c$) we can find
$c_0-c_1-\dots-c_{N}$ of length $N$ such that each $c_{i}$ is from 
$\prescript{\zv}{}C$ and
$(c_{i},c_{i+1})\in\prescript{\zv}{}R^{\kappa}$.
If $N$ is large enough 
then 
both sequences will have repetitive elements.
Let these elements be $b_{i}$ and $c_{j}$.
By property (t), these repetitive elements $b_{i}$ and $c_{j}$ should be from $\prescript{\zv}{}B_{0}$
and $\prescript{\zv}{}C_{0}$,
respectively, which implies (un).
Again (t) for $R$
implies that 
$(b_{i},c_{j})\in\prescript{\zv}{}R^{\forall}$
and therefore 
$(b_{i},c_{j})\in\prescript{\zv}{}R_{0}^{\forall}$,
which confirms ($bc$).
Properties ($b+$) and ($+c$) for $R_{0}$ follows from 
($b+$) and ($+c$) for $R$ and 
(r) for $R_{0}$.

By reflexivity (property (r)) of $\prescript{\zv}{}R_{0}^{\alpha}$ we have
$\prescript{\zv}{}R^{\alpha} + \prescript{\zv}{}R^{\alpha} \supseteq  \prescript{\zv}{}R^{\alpha}$ 
and by property (t) for $R$ we have
$\prescript{\zv}{}R_{0}^{\alpha} + \prescript{\zv}{}R_{0}^{\alpha} \subseteq  \prescript{\zv}{}R_{0}^{\alpha}$.
Thus, we have property (t) for $R_{0}$.
\end{proof}

Denote 
$J = 
\{\kappa,d+,\text{un},bc,\varnothing,b+,+c,\text{t},\text{r},\text{sd}\}$.
Then 
$J= 
\mathrm{III} \cup \{\kappa,d+,\text{t},\text{r},\text{sd}\}=
\mathrm{IV} \cup \{\kappa,\text{t},+c,\text{r},\text{sd}\}\setminus \{bd\}
=\mathrm{II}\cup\{\kappa\}\setminus \{bd,cd,c+,\text{s}\}$.

\begin{sublem}\label{reduceDBright}
$J\cup\{\neg bd\}\vdash 
J\cup \{bd\} +\text{reduce } \sum_{\zv\in A^{|A|}}|\prescript{\zv}{}D|$.
\end{sublem}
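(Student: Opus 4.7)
The plan is to shrink the $\zv$-parameterized domain $D$ to the set of elements that can be reached from $B$ via the $\forall$-closure, and to show that the resulting quadruple satisfies every property in $J$ together with $(bd)$. Concretely, I will define
\begin{align*}
\prescript{\zv}{}D_{0} &:= \prescript{\zv}{}B + \prescript{\zv}{}R^{\forall},\\
\prescript{\zv}{}B_{0} &:= \prescript{\zv}{}B,\\
\prescript{\zv}{}C_{0} &:= \prescript{\zv}{}C \cap \prescript{\zv}{}D_{0},\\
\prescript{\zv}{}R_{0}^{\alpha} &:= \prescript{\zv}{}R^{\alpha} \cap (\prescript{\zv}{}D_{0}\times \prescript{\zv}{}D_{0}).
\end{align*}
All four objects are q-definable from $(R,D,B,C)$ because q-definition allows both existential and universal quantification, and the $\forall$-closure is applied only to the $\alpha$-parameter.

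The first step is to show $D_{0}\subseteq D$ with strict inequality for the $\zv_{0}$ witnessing $\neg(bd)$. Since $B\subseteq D$ and $R^{\forall}\subseteq D\times D$ by (sd), we get $D_{0}\subseteq D$, and the failure of $(bd)$ yields $\prescript{\zv_{0}}{}D_{0}\subsetneq \prescript{\zv_{0}}{}D$, so $\sum_{\zv}|\prescript{\zv}{}D_{0}|<\sum_{\zv}|\prescript{\zv}{}D|$. The second step is to show $B\subseteq D_{0}$: by (r), $(b,b)\in R^{\alpha}$ for every $\alpha$ and every $b\in B$, so $(b,b)\in R^{\forall}$, and hence $b\in B+R^{\forall}=D_{0}$. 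The third step is to identify $\prescript{\zv}{}R_{0}^{\forall}$ with $\prescript{\zv}{}R^{\forall}\cap(\prescript{\zv}{}D_{0}\times \prescript{\zv}{}D_{0})$, which is immediate because the intersection with $D_{0}\times D_{0}$ commutes with the intersection over $\alpha$; the same observation gives $R_{0}^{\kappa}=R^{\kappa}\cap(D_{0}\times D_{0})$.

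The bulk of the proof is a verification that the new quadruple satisfies every property in $J$, plus $(bd)$. Property $(\kappa)$ is inherited by intersection. Reflexivity (r) on $D_{0}$ follows from (r) on $D$. Subdirectness (sd) then follows from (r) since $R_{0}^{\alpha}\subseteq D_{0}\times D_{0}$. Transitivity (t): $R_{0}^{\alpha}+R_{0}^{\alpha}\subseteq R^{\alpha}+R^{\alpha}=R^{\alpha}$ by (t) for $R$, and the result lies in $D_{0}\times D_{0}$; the reverse inclusion uses (r) on $D_{0}$. Property $(d+)$ is $D_{0}+R_{0}^{\kappa}=D_{0}$, and both inclusions use (r) and $R_{0}^{\kappa}\subseteq D_{0}\times D_{0}$. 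For $(b+)$: $B+R_{0}^{\kappa}\subseteq B+R^{\kappa}=B$ by $(b+)$ for $R$, and $B\subseteq B+R_{0}^{\kappa}$ by (r) using $B\subseteq D_{0}$. The argument for $(+c)$ is symmetric, and it also shows $C_{0}$ is $R_{0}^{\kappa}$-invariant. The conditions (un), $(bc)$, and $(\varnothing)$ are handled as follows: nonemptiness of $C_{0}$ follows from $(bc)$ applied to $R$, which provides $(b,c)\in R^{\forall}\cap(B\times C)$, whence $c\in B+R^{\forall}=D_{0}$; the same pair witnesses $(bc)$ for $R_{0}$ since $b,c\in D_{0}$ and $(b,c)\in R^{\forall}\cap(D_{0}\times D_{0})=R_{0}^{\forall}$. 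Finally, if $\zv$ witnesses $(\varnothing)$ for $(R,D,B,C)$ then $B\cap C_{0}\subseteq B\cap C=\varnothing$, so $(\varnothing)$ is preserved. Property $(bd)$ is the point of the construction: every $d\in D_{0}$ has a witness $b\in B$ with $(b,d)\in R^{\forall}$, and both lie in $D_{0}$, so $(b,d)\in R_{0}^{\forall}$, giving $D_{0}\subseteq B+R_{0}^{\forall}\subseteq D_{0}$.

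No single step is hard; the main care is making sure each global property survives the double restriction to $D_{0}$, and the key enabling ingredient each time is the reflexivity of $R^{\alpha}$ on $D$ (property (r)) together with $B\subseteq D_{0}$, which is the nontrivial consequence of (r) noted above.
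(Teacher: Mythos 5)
Your proposal is correct and uses exactly the same construction as the paper: set $D_{0}=B+R^{\forall}$, keep $B_{0}=B$, restrict $C$ and $R$ to $D_{0}$, observe $B\subseteq D_{0}$ by reflexivity, and verify the $J$-properties together with $(bd)$ using $R_{0}^{\forall}=R^{\forall}\cap(D_{0}\times D_{0})$. The paper's proof is terser but carries out the same checks, so there is nothing further to add.
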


\begin{proof}
Put $\prescript{\zv}{}D_{0} =
\prescript{\zv}{}B + 
\prescript{\zv}{}R^{\forall}$,
$\prescript{\zv}{}B_{0} =
\prescript{\zv}{}B\cap 
\prescript{\zv}{}D_{0}=\prescript{\zv}{}B$,
$\prescript{\zv}{}C_{0} =
\prescript{\zv}{}C\cap 
\prescript{\zv}{}D_{0}$,
$\prescript{\zv}{}R_{0}^{\alpha} =
\prescript{\zv}{}R^{\alpha}\cap 
(\prescript{\zv}{}D_{0}\times \prescript{\zv}{}D_{0})$.
Let us prove that 
$R_{0}$, $B_{0}$, $C_{0}$, $D_{0}$ satisfy the required properties.
Properties 
($\kappa$), ($d+$), ($\varnothing$), ($b+$), ($+c$),
and (r) follow immediately from the definition 
and the corresponding properties for $R$.
Property (t) follows from (t) and (r) for $R$.
By property (r) for $R$ we have $\prescript{\zv}{}B_{0}=\prescript{\zv}{}B$.
Hence, properties ($bd$) and ($bc$) follow from the definition.
Property (un) follows from (un) and ($bc$) for $R$.
\end{proof}

\begin{sublem}\label{reduceDCleft}
$J\cup\{\neg cd\}\vdash 
J\cup \{cd\} +\text{reduce } \sum_{\zv\in A^{|A|}}|\prescript{\zv}{}D|$.
\end{sublem}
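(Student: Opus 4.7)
The plan is to mirror the construction in Claim~\ref{reduceDBright}, working on the right (the $C$-side) instead of the left (the $B$-side). Since $(cd)$ is the exact left-right dual of $(bd)$ and all the ingredients in $J$ are either self-dual or come paired (notice that $(b+)$ and $(+c)$, as well as $(bc)$ for $R^{\forall}$, behave symmetrically with respect to reversing the direction of $R$), the new quadruple will be obtained by restricting $D$ to the set of elements that can reach some element of $C$ via $R^{\forall}$.

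Concretely, I put $\prescript{\zv}{}D_0 = \prescript{\zv}{}R^{\forall} + \prescript{\zv}{}C$, and then set $\prescript{\zv}{}C_0 = \prescript{\zv}{}C\cap \prescript{\zv}{}D_0$, $\prescript{\zv}{}B_0 = \prescript{\zv}{}B\cap \prescript{\zv}{}D_0$, and $\prescript{\zv}{}R_0^{\alpha} = \prescript{\zv}{}R^{\alpha}\cap (\prescript{\zv}{}D_0\times \prescript{\zv}{}D_0)$. Using reflexivity $(r)$ of $R^{\forall}$ on $D$, we get $\prescript{\zv}{}C\subseteq \prescript{\zv}{}D_0$, hence $\prescript{\zv}{}C_0 = \prescript{\zv}{}C$. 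The other properties in $J$ follow routinely:
\begin{itemize}
\item $(\kappa)$, $(\varnothing)$, $(+c)$, $(r)$ are immediate from their counterparts for $R$, $C$, $D$ together with $\prescript{\zv}{}C_0 = \prescript{\zv}{}C$.
\item $(d+)$ holds because $\prescript{\zv}{}D_0 + \prescript{\zv}{}R^{\kappa}\subseteq \prescript{\zv}{}R^{\forall} + \prescript{\zv}{}R^{\kappa} + \prescript{\zv}{}C \subseteq \prescript{\zv}{}R^{\forall} + \prescript{\zv}{}C = \prescript{\zv}{}D_0$ (using $(t)$ for $R$ and $R^{\kappa}\subseteq R^{\forall}$).
\item $(t)$ for $R_0$ follows from $(t)$ and $(r)$ for $R$ as in the previous claim.
\item $(cd)$ is the definition: $\prescript{\zv}{}R_0^{\forall} + \prescript{\zv}{}C_0 = \prescript{\zv}{}D_0$ (one inclusion is the definition of $D_0$; the other follows from $R_0^{\forall}\subseteq D_0\times D_0$ and $C_0\subseteq D_0$ combined with $(d+)$).
\item $(bc)$: pick $(b,c)\in \prescript{\zv}{}R^{\forall}\cap (\prescript{\zv}{}B\times \prescript{\zv}{}C)$ from $(bc)$ for $R$; then $b\in \prescript{\zv}{}R^{\forall} + \prescript{\zv}{}C = \prescript{\zv}{}D_0$, so $b\in \prescript{\zv}{}B_0$, and the same pair $(b,c)$ lies in $\prescript{\zv}{}R_0^{\forall}\cap (\prescript{\zv}{}B_0\times \prescript{\zv}{}C_0)$.
\item $(\mathrm{un})$: $\prescript{\zv}{}C_0 = \prescript{\zv}{}C\neq\varnothing$, $\prescript{\zv}{}B_0\neq\varnothing$ by the previous item, and $B_0,C_0\subseteq D_0$ by construction.
\item $(b+)$ for $R_0$: use $(b+)$ for $R$ to push $B_0$-elements backward along $R^{\kappa}$; the resulting element $b'$ still lies in $B$ and, since $b'\in \prescript{\zv}{}R^{\kappa} + \prescript{\zv}{}B_0\subseteq \prescript{\zv}{}R^{\forall} + \prescript{\zv}{}D_0 \subseteq \prescript{\zv}{}D_0$ (again by $(d+)$ with $(t)$), we have $b'\in B_0$.
\item $(\mathrm{sd})$ for $R_0$: follows from $(\mathrm{sd})$ for $R$ together with $(d+)$ for $R$ and reflexivity; every $d\in \prescript{\zv}{}D_0$ already satisfies $(d,d)\in \prescript{\zv}{}R^{\kappa}\cap(D_0\times D_0)$.
\end{itemize}

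Finally, $\prescript{\zv}{}D_0\subseteq \prescript{\zv}{}D$ by $(d+)$ for $R$, and the assumption $\neg(cd)$ supplies some $\zv$ with $\prescript{\zv}{}R^{\forall} + \prescript{\zv}{}C \subsetneq \prescript{\zv}{}D$, strictly shrinking the domain for that $\zv$. Hence $\sum_{\zv\in A^{|A|}}|\prescript{\zv}{}D_0| < \sum_{\zv\in A^{|A|}}|\prescript{\zv}{}D|$. The only step that requires a touch of care is verifying $(d+)$ and $(b+)$ for the shrunken $R_0$, since this is where the interaction between transitivity and the restriction $D_0\times D_0$ must be used, but both reduce to already-established properties of $R$.
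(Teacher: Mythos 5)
Your construction is exactly the paper's: the paper's own proof of Claim~\ref{reduceDCleft} consists of the same four definitions followed by ``repeat the proof of the previous claim switching $B$ and $C$,'' so you've simply spelled out what the paper leaves implicit, and your conclusion is correct.

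That said, two of your ``routine'' verifications misread the direction of the composition $U+S$. For $(d+)$ you write $D_0+R^{\kappa}\subseteq R^{\forall}+R^{\kappa}+C$; unfolding the definitions, the left side consists of those $y$ having an incoming $R^{\kappa}$-edge from some $x\in D_0$, whereas the right side consists of those $a$ with an outgoing path $a\xrightarrow{R^{\forall}} c\xrightarrow{R^{\kappa}} b\in C$, and the inclusion does not hold in general. (What would associate cleanly is $R^{\kappa}+D_0 = R^{\kappa}+R^{\forall}+C\subseteq R^{\forall}+C$, but that is a different statement.) Similarly, in your $(b+)$ argument you work with $R^{\kappa}+B_0$, which is ``elements that can reach $B_0$,'' not $B_0+R_0^{\kappa}$, which is ``elements reachable from $B_0$'' and is the set that actually appears in $(b+)$. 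Both properties do hold, but for simpler reasons than you give: since $R_0^{\alpha}\subseteq D_0\times D_0$, the containments $D_0+R_0^{\kappa}\subseteq D_0$ and $B_0+R_0^{\kappa}\subseteq D_0$ are automatic, the latter combines with $B_0+R_0^{\kappa}\subseteq B+R^{\kappa}=B$ to give $B_0+R_0^{\kappa}\subseteq B\cap D_0=B_0$, and the reverse inclusions are just reflexivity $(r)$. So the proof is essentially correct, but it would be cleaner to state these two steps via the restriction argument rather than the (incorrect) rearranged chains.
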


\begin{proof}
Put $\prescript{\zv}{}D_{0} = 
\prescript{\zv}{}R^{\forall} + \prescript{\zv}{}C $,
$\prescript{\zv}{}B_{0} =
\prescript{\zv}{}B\cap 
\prescript{\zv}{}D_{0}$,
$\prescript{\zv}{}C_{0} =
\prescript{\zv}{}C\cap 
\prescript{\zv}{}D_{0}=\prescript{\zv}{}C$,
$\prescript{\zv}{}R_{0}^{\alpha} =
\prescript{\zv}{}R^{\alpha}\cap 
(\prescript{\zv}{}D_{0}\times \prescript{\zv}{}D_{0})$
and repeat the proof of the previous claim switching $B$ and $C$.
\end{proof}

\begin{sublem}\label{reduceDCright}
$J\cup\{bd,cd,\forall \zv (\prescript{\zv}{}C+\prescript{\zv}{}R^{\kappa})\cap \prescript{\zv}{}B\neq \varnothing,
\exists \zv \prescript{\zv}{}C+\prescript{\zv}{}R^{\kappa}\neq \prescript{\zv}{}D\}
\vdash
J 
+\text{reduce } \sum_{\zv\in A^{|A|}}|\prescript{\zv}{}D|$.

\end{sublem}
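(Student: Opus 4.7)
The plan is to shrink $D$ by replacing it with $\prescript{\zv}{}D_{0} := \prescript{\zv}{}C + \prescript{\zv}{}R^{\kappa}$ and restricting everything else accordingly. Concretely, set
\begin{align*}
\prescript{\zv}{}D_{0} &= \prescript{\zv}{}C + \prescript{\zv}{}R^{\kappa}, &
\prescript{\zv}{}C_{0} &= \prescript{\zv}{}C, \\
\prescript{\zv}{}B_{0} &= \prescript{\zv}{}B \cap \prescript{\zv}{}D_{0}, &
\prescript{\zv}{}R_{0}^{\alpha} &= \prescript{\zv}{}R^{\alpha} \cap (\prescript{\zv}{}D_{0}\times \prescript{\zv}{}D_{0}).
\end{align*}
All four relations are q-definable from $\{R,D,B,C\}$, and by the extra hypothesis $\prescript{\zv}{}D_{0}\subsetneq \prescript{\zv}{}D$ for some $\zv$, so $\sum_{\zv}|\prescript{\zv}{}D_{0}|<\sum_{\zv}|\prescript{\zv}{}D|$. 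By $+c$ and reflexivity, $\prescript{\zv}{}C\subseteq \prescript{\zv}{}D_{0}\subseteq \prescript{\zv}{}D$, so $C_{0}=C$ really does sit inside the new domain.

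The routine properties from $J$ transfer immediately. Property $(\kappa)$ passes through the intersection with $D_{0}\times D_{0}$. Transitivity $(t)$ and reflexivity $(\mathrm{r})$ for $R_{0}^{\alpha}$ follow from the same for $R^{\alpha}$ together with the definition of $D_{0}$. From $(\mathrm{r})$ and $R_{0}^{\kappa}\subseteq R_{0}^{\alpha}$ one immediately gets $(\mathrm{sd})$ and $(d+)$. For $(b+)$: if $b\in B_{0}$ and $(b,b')\in R_{0}^{\kappa}$, then $b'\in B$ by $(b+)$ for $R$ and $b'\in D_{0}$ by construction, hence $b'\in B_{0}$; the reverse inclusion uses reflexivity. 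Property $(+c)$ is the symmetric argument using $(+c)$ for $R$ and the fact that $C\subseteq D_{0}$. Finally, $(\varnothing)$ is inherited from $B\cap C=\varnothing$ since $B_{0}\cap C_{0}\subseteq B\cap C$, and $(\mathrm{un})$ follows because $B_{0}\neq\varnothing$ is exactly the first additional hypothesis $(\prescript{\zv}{}C+\prescript{\zv}{}R^{\kappa})\cap \prescript{\zv}{}B\neq\varnothing$.

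The one nontrivial verification is $(bc)$: one must exhibit, for every $\zv$, a pair in $\prescript{\zv}{}R_{0}^{\forall}\cap (\prescript{\zv}{}B_{0}\times \prescript{\zv}{}C_{0})$. This is the main obstacle, and it is exactly where the hypothesis $(cd)$ is consumed. Pick any $b\in \prescript{\zv}{}B_{0}$, which is possible by the first extra hypothesis. Since $b\in \prescript{\zv}{}D$ and $\prescript{\zv}{}R^{\forall}+\prescript{\zv}{}C=\prescript{\zv}{}D$ by $(cd)$, there is some $c\in \prescript{\zv}{}C$ with $(b,c)\in \prescript{\zv}{}R^{\forall}$. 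Both endpoints lie in $\prescript{\zv}{}D_{0}$ (the first by choice, the second because $C\subseteq D_{0}$), so $(b,c)\in \prescript{\zv}{}R_{0}^{\forall}$ and the pair witnesses $(bc)$ for the new quadruple. Note that it is crucial we know $(cd)$ holds and not merely $(bd)$: starting from $B_{0}$ we need to travel \emph{to} $C$ through $R^{\forall}$, which is precisely what $(cd)$ guarantees. With $(bc)$ established, the new quadruple $(R_{0},D_{0},B_{0},C_{0})$ satisfies all properties of $J$ with strictly smaller $\sum_{\zv}|\prescript{\zv}{}D|$, completing the claim.
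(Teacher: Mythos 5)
Your proposal is correct and uses the same construction as the paper's: shrink $D$ to $D_{0}=C+R^{\kappa}$, intersect $B$ and $R$ with the new domain, keep $C$, and observe that the nontrivial property $(bc)$ follows from $(cd)$. The only small bonus in your write-up is that you explicitly treat $(\mathrm{sd})$, which the paper's terse proof doesn't name in its list, but both arguments are the same in substance.
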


\begin{proof}
 Put $\prescript{\zv}{}D_{0} = 
 \prescript{\zv}{}C+\prescript{\zv}{}R^{\kappa}$,
 $\prescript{\zv}{}B_{0} =
 \prescript{\zv}{}B\cap 
 \prescript{\zv}{}D_{0}$,
 $\prescript{\zv}{}C_{0} =
 \prescript{\zv}{}C\cap 
 \prescript{\zv}{}D_{0}=\prescript{\zv}{}C$,
 $\prescript{\zv}{}R_{0}^{\alpha} =
 \prescript{\zv}{}R^{\alpha}\cap 
 (\prescript{\zv}{}D_{0}\times \prescript{\zv}{}D_{0})$.
 Notice that by (r)
 we have $\prescript{\zv}{}D_{0}\supseteq \prescript{\zv}{}C$, 
 and by the property 
 $\forall \zv (\prescript{\zv}{}C+\prescript{\zv}{}R^{\kappa})\cap \prescript{\zv}{}B\neq \varnothing$
 we have $\prescript{\zv}{}B_0\neq \varnothing$.
 Then properties
($\kappa$), ($d+$), (un), ($\varnothing$), ($b+$), ($+c$), (t), and (r) 
follow from the corresponding properties for $R$.
Property ($bc$) follows from ($cd$) for $R$.
 \end{proof}

\begin{sublem}\label{increaseC}
$J\cup\{\neg c+,\exists \zv (\prescript{\zv}{}C+\prescript{\zv}{}R^{\kappa})\cap \prescript{\zv}{}B= \varnothing\}
\vdash
J +\text{keep } \sum_{\zv\in A^{|A|}}|\prescript{\zv}{}D|
+\text{increase } \sum_{\zv\in A^{|A|}}|\prescript{\zv}{}C|$.

\end{sublem}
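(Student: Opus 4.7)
The plan is to keep $R$, $D$, and $B$ unchanged and to replace $C$ by the $\zv$-parameterized unary relation
$$\prescript{\zv}{}C_{0} := \prescript{\zv}{}R^{\kappa} + (\prescript{\zv}{}C + \prescript{\zv}{}R^{\kappa}),$$
i.e.\ the preimage under $\prescript{\zv}{}R^{\kappa}$ of the forward image of $\prescript{\zv}{}C$ (thought of as the smallest downset containing the forward closure of $C$ when we view $R^{\kappa}$ as a preorder via $(t)$ and $(r)$). This is q-definable from $\{R,C\}$ since $R^{\kappa}$ is obtained from $R$ by identifying the $\alpha$-slot with the $\zv$-parameter, and the two compositions are quantified-conjunctive. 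Reflexivity $(r)$ of $\prescript{\zv}{}R^{\kappa}$ on $\prescript{\zv}{}D$ applied twice gives $\prescript{\zv}{}C \subseteq \prescript{\zv}{}C_{0}$.

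All properties of $J$ that do not mention $C$---namely $(\kappa), (d+), (b+), (t), (r), (sd)$---are preserved verbatim. Property $(un)$ for $C_{0}$ follows from $\prescript{\zv}{}C \subseteq \prescript{\zv}{}C_{0} \subseteq \prescript{\zv}{}D$, where the upper inclusion uses $(d+)$ and $(sd)$ to see that both compositions stay inside $D$. Property $(bc)$ is inherited from $(bc)$ for $C$ by monotonicity. For $(+c)$, transitivity together with reflexivity deliver the idempotence $\prescript{\zv}{}R^{\kappa} + (\prescript{\zv}{}R^{\kappa} + X) = \prescript{\zv}{}R^{\kappa} + X$ for any $X \subseteq \prescript{\zv}{}D$; applying this with $X = \prescript{\zv}{}C + \prescript{\zv}{}R^{\kappa}$ gives $\prescript{\zv}{}R^{\kappa} + \prescript{\zv}{}C_{0} = \prescript{\zv}{}C_{0}$, as required.

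The only delicate step is $(\varnothing)$, for which I take the $\zv$ granted by the second hypothesis, so that $(\prescript{\zv}{}C + \prescript{\zv}{}R^{\kappa}) \cap \prescript{\zv}{}B = \varnothing$. If some $x$ lay in $\prescript{\zv}{}C_{0} \cap \prescript{\zv}{}B$, then $x \in \prescript{\zv}{}B$ and $(x, z) \in \prescript{\zv}{}R^{\kappa}$ for some $z \in \prescript{\zv}{}C + \prescript{\zv}{}R^{\kappa}$; property $(b+)$ then forces $z \in \prescript{\zv}{}B$, contradicting the choice of $\zv$. Finally, $\sum_{\zv} |\prescript{\zv}{}D|$ is kept by construction, and the $\zv$ witnessing $\neg(c+)$ supplies an element of $(\prescript{\zv}{}C + \prescript{\zv}{}R^{\kappa}) \setminus \prescript{\zv}{}C$ that reflexivity places in $\prescript{\zv}{}C_{0} \setminus \prescript{\zv}{}C$, so $\sum_{\zv} |\prescript{\zv}{}C_{0}|$ strictly exceeds $\sum_{\zv} |\prescript{\zv}{}C|$. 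The main obstacle is recognizing that the naive candidate $C + R^{\kappa}$ (the bare forward closure) fails to preserve $(+c)$ in the absence of symmetry, and that dressing it on the left with $R^{\kappa} + (\cdot)$ repairs this while still benefiting from the forward-closedness of $B$ to preserve $(\varnothing)$.
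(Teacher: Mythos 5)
Your construction $\prescript{\zv}{}C_{0}=\prescript{\zv}{}R^{\kappa}+(\prescript{\zv}{}C+\prescript{\zv}{}R^{\kappa})$ is exactly the paper's $\prescript{\zv}{}C_{1}$ (the paper introduces $\prescript{\zv}{}C+\prescript{\zv}{}R^{\kappa}$ as an intermediate name and then pre-composes with $\prescript{\zv}{}R^{\kappa}$), and your verifications of $(\varnothing)$ via $(b+)$, of $(+c)$ via $(t)$ and $(r)$, and of the strict increase of $\sum_{\zv}|\prescript{\zv}{}C|$ via $(r)$ and $\neg(c+)$ match the paper's argument. This is the same proof, with the minor properties and the cardinality bookkeeping spelled out a bit more explicitly than in the paper.
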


\begin{proof}
We put $\prescript{\zv}{}C_{0} = 
\prescript{\zv}{}C+\prescript{\zv}{}R^{\kappa}$,
$\prescript{\zv}{}C_{1} = 
\prescript{\zv}{}R^{\kappa}+\prescript{\zv}{}C_{0}$,
and claim that 
$R$, $D$, $B$, and $C_{1}$ satisfy the required properties.
Since we only increased $C$ the only properties we need to check are ($\varnothing$) and ($+c$).
Property ($+c$) follows from property (r) and (t) for $R$. 
To prove property ($\varnothing$) choose $\zv\in A^{|A|}$ 
from the property 
$\exists \zv (\prescript{\zv}{}C+\prescript{\zv}{}R^{\kappa})\cap \prescript{\zv}{}B= \varnothing$.
Then 
$\prescript{\zv}{}C_{0}\cap \prescript{\zv}{}B= \varnothing$.
By property ($b+$) $\prescript{\zv}{}C_{1}\cap \prescript{\zv}{}B= \varnothing$, 
which gives us property ($\varnothing$).
\end{proof}

\begin{sublem}\label{makeSymmetricOne}
$J\cup\{bd,cd, 
\forall \zv \prescript{\zv}{}C+\prescript{\zv}{}R^{\kappa}= \prescript{\zv}{}D\}
\vdash
J\cup\{c+,\text{s}\}+\text{keep } \sum_{\zv\in A^{|A|}}|\prescript{\zv}{}D|$.
\end{sublem}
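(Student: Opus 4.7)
The plan is to produce the required quadruple by symmetrising $R$ coordinatewise, keeping $D$, $B$, and $C$ unchanged. Concretely, I will define
\[
\prescript{\zv}{}R_{0}^{\alpha}(y_{1},y_{2}) \;:=\; \prescript{\zv}{}R^{\alpha}(y_{1},y_{2})\;\wedge\;\prescript{\zv}{}R^{\alpha}(y_{2},y_{1}),
\]
which is clearly q-definable from $R$, and set $D_{0}=D$, $B_{0}=B$, $C_{0}=C$. Since $D$ is untouched, $\sum_{\zv}|\prescript{\zv}{}D|$ is kept. The intersection of a transitive reflexive relation with its converse is an equivalence relation on $D$, so each $\prescript{\zv}{}R_{0}^{\alpha}$ is an equivalence on $\prescript{\zv}{}D$; this immediately yields $(s)$, $(r)$, $(t)$, and $(sd)$. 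Properties $(\kappa)$, $(d+)$, $(un)$, and $(\varnothing)$ follow mechanically from the same properties for $R$ together with the reflexivity of $R_{0}^{\alpha}$.

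For the closure properties $(b+)$, $(+c)$, and $(c+)$, I will use that $\prescript{\zv}{}R_{0}^{\forall\forall}=\prescript{\zv}{}R^{\forall\forall}\cap(\prescript{\zv}{}R^{\forall\forall})^{-1}$ is the symmetric part of $\prescript{\zv}{}R^{\forall\forall}$, again an equivalence on $\prescript{\zv}{}D$. From $(b+)$ for $R$, any $R^{\forall\forall}$-successor of an element of $B$ stays in $B$; in particular any element $R_{0}^{\forall\forall}$-equivalent to some $b\in B$ lies in $B$, so $B$ is a union of $R_{0}^{\forall\forall}$-classes, giving $\prescript{\zv}{}B+\prescript{\zv}{}R_{0}^{\forall\forall}=\prescript{\zv}{}B$. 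The same argument applied to $(+c)$ shows $C$ is a union of $R_{0}^{\forall\forall}$-classes, and then symmetry of $R_{0}^{\forall\forall}$ forces both $(+c)$ and the desired $(c+)$ to hold for every $\zv$ simultaneously.

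The main obstacle is verifying $(bc)$ for $R_{0}$, since the symmetrisation could a priori destroy the witness pair in $B\times C$. The key idea is that the hypothesis $\prescript{\zv}{}C+\prescript{\zv}{}R^{\kappa}=\prescript{\zv}{}D$ forces enough symmetry on the minimal layer of $R^{\forall}$. Fix $\zv$. The relation $\prescript{\zv}{}R^{\forall}$ is a preorder on $\prescript{\zv}{}D$ (being an intersection of transitive reflexive relations), so $\prescript{\zv}{}D$ decomposes into strongly connected components forming a finite poset; let $X$ be any minimal SCC. By $(bd)$, every $d\in X$ has some $b\in \prescript{\zv}{}B$ with $(b,d)\in \prescript{\zv}{}R^{\forall}$, and minimality of $X$ forces $b\in X$, so $\prescript{\zv}{}B\cap X\neq\varnothing$. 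By the hypothesis together with $(\kappa)$ (which gives $\prescript{\zv}{}R^{\kappa}=\prescript{\zv}{}R^{\forall\forall}\subseteq\prescript{\zv}{}R^{\forall}$), every $d\in X$ also has some $c\in\prescript{\zv}{}C$ with $(c,d)\in \prescript{\zv}{}R^{\forall}$, and again minimality forces $c\in X$, so $\prescript{\zv}{}C\cap X\neq\varnothing$.

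Choosing any $b\in \prescript{\zv}{}B\cap X$ and $c\in \prescript{\zv}{}C\cap X$, the fact that $b$ and $c$ lie in the same $R^{\forall}$-SCC gives both $(b,c)$ and $(c,b)$ in $\prescript{\zv}{}R^{\forall}$, so $(b,c)\in \prescript{\zv}{}R_{0}^{\forall}\cap(\prescript{\zv}{}B\times\prescript{\zv}{}C)$, establishing $(bc)$. Combining all the verifications, $(R_{0},D,B,C)$ satisfies $J\cup\{c+,s\}$, as required.
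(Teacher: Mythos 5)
Your proof is correct and uses the same construction as the paper: symmetrise $R$ coordinatewise while keeping $D$, $B$, $C$ unchanged, observe that every property except $(bc)$ transfers immediately, and spend the real work on $(bc)$. The only presentational difference is in that last step: you locate a minimal strongly connected component of the $R^{\forall}$-preorder and note that $(bd)$ and the hypothesis $C+R^{\kappa}=D$ force both $B$ and $C$ to meet it, whereas the paper builds an explicit alternating $B$--$C$ path and finds a repetition by pigeonhole --- two phrasings of the same finiteness argument, with your version arguably a little more transparent about \emph{why} the witness pair must exist.
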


\begin{proof}
Put 
$\prescript{\zv}{}R_{0}^{\alpha}(x,y) =
\prescript{\zv}{}R^{\alpha}(x,y)\wedge 
\prescript{\zv}{}R^{\alpha}(y,x)$
and prove the required properties for
$R_{0}$, $B$, $C$, and $D$.
Properties 
($\kappa$), ($d+$), (un), ($\varnothing$),
($b+$), ($+c$), ($c+$), (t), (r), and (s) follow 
from the definition and the respective properties for $R$.
It remains to prove property ($bc$).

Notice that 
it follows from 
(t) for $R$ that 
$\prescript{\zv}{}R^{\forall}$ is transitive.
Let us
build an infinite path 
$d_{0}-d_{1}-d_{2}-d_{3}-d_{4}-d_{5}\dots$
such that 
each $d_{2i}$ is from $\prescript{\zv}{}B$,
each $d_{2i+1}$ is from $\prescript{\zv}{}C$,
each $(d_{i+1},d_{i})$ is from $\prescript{\zv}{}R^{\forall}$.
Choose some $d_{0}\in \prescript{\zv}{}B$.
By the property 
$\forall \zv \prescript{\zv}{}C+\prescript{\zv}{}R^{\kappa}= \prescript{\zv}{}D$,
there exists $d_{1}\in \prescript{\zv}{}C$ 
such that $(d_1,d_0)\in \prescript{\zv}{}R^{\kappa}\subseteq \prescript{\zv}{}R^{\forall}$.
By
property ($bd$) 
there exists 
$d_2\in\prescript{\zv}{}B$  such that 
$(d_2,d_1)\in \prescript{\zv}{}R^{\forall}$.
Proceeding this way we can make an infinite sequence.
Since $\prescript{\zv}{}B$ is finite,
we have 
$d_{2i}=d_{2j}$ for some $i< j$.
By transitivity of $\prescript{\zv}{}R^{\forall}$
we have $(d_{2i},d_{2i+1}),(d_{2i+1},d_{2i})\in\prescript{\zv}{}R^{\forall}$, 
which gives us property ($bc$).
\end{proof}


\begin{sublem}\label{makeSymmetricTwo}
$J\cup\{bd,cd,c+\}
\vdash
J\cup\{bd,cd,c+,\text{s}\}$.
\end{sublem}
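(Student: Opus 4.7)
The natural attempt is the intersection-with-converse symmetrization $\prescript{\zv}{}R_{0}^{\alpha}(y_1,y_2) := \prescript{\zv}{}R^{\alpha}(y_1,y_2)\wedge \prescript{\zv}{}R^{\alpha}(y_2,y_1)$, retaining $D$, $B$, $C$ unchanged. Symmetry (s) is then immediate, and each of the closure/containment properties $\kappa, d+, \text{un}, \varnothing, b+, +c, c+, t, r, \text{sd}$ transfers by routine verification: one direction follows from $R_{0}^{\alpha}\subseteq R^{\alpha}$, the other from the reflexivity $\{(d,d):d\in\prescript{\zv}{}D\}\subseteq R_{0}^{\alpha}$ inherited from $R^{\alpha}$, which ensures $R_{0}^{\forall\forall}$ still contains the diagonal so that closures like $B+R_{0}^{\forall\forall}$ neither grow nor shrink past $B$. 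Transitivity of $R_{0}^{\alpha}$ is standard, since the intersection of a transitive relation with its converse is transitive.

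The substantive remaining work is to preserve $bc$, $bd$, and $cd$ after symmetrization. For $bc$ I would adapt the zig-zag-and-pigeonhole construction from the proof of Sublemma~\ref{makeSymmetricOne}: starting from an arbitrary $b_0\in\prescript{\zv}{}B$, alternately use $cd$ to produce $c_i\in\prescript{\zv}{}C$ with $(b_i,c_i)\in\prescript{\zv}{}R^{\forall}$ and $bd$ to produce $b_{i+1}\in\prescript{\zv}{}B$ with $(b_{i+1},c_i)\in\prescript{\zv}{}R^{\forall}$, obtaining an infinite alternating chain. Since $\prescript{\zv}{}B$ is finite, $b_i=b_j$ for some $i<j$; transitivity of $\prescript{\zv}{}R^{\forall}$ then traps $b_i$ and some $c_k$ (with $i\le k<j$) in a common strongly connected component of $\prescript{\zv}{}R^{\forall}$. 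Within that SCC every ordered pair lies in both $\prescript{\zv}{}R^{\forall}$ and its converse, so the pair $(b_i,c_k)$ witnesses $bc$ for $R_{0}$.

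For $bd$ and $cd$ the same idea must be applied uniformly at every $d\in\prescript{\zv}{}D$, yielding a $b\in\prescript{\zv}{}B$ and a $c\in\prescript{\zv}{}C$ in $d$'s $\prescript{\zv}{}R^{\forall}$-SCC; here the hypothesis $c+$ is essential. The plan is to seed the zig-zag from $d$ itself, attaching a $B$-tail above using $bd$ and a $C$-tail below using $cd$, and then invoke $c+$ (together with $+c$ and $b+$) to show that $\prescript{\zv}{}C$ and $\prescript{\zv}{}B$ are closed under the full $R^{\forall\forall}$-equivalence and therefore absorb the whole SCC produced by pigeonhole. The main obstacle will be exactly this last step: the SCC guaranteed by the zig-zag lives in $R^{\forall}$, while the closure properties operate in the strictly finer $R^{\forall\forall}$, so one must carefully route the trapping cycle through $R^{\forall\forall}$-links (using reflexivity and the inclusion $R^{\kappa}\subseteq R^{\alpha}$) to conclude that the $B$- and $C$-witnesses inhabit $d$'s SCC rather than merely some SCC above or below it. Unlike in Sublemma~\ref{makeSymmetricOne}, we cannot afford to drop $bd$ or $cd$ in the output, so this closure-propagation step is the heart of the proof.
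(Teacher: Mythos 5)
Your symmetrization goes in the opposite direction from the paper's, and unfortunately it does not work. Intersecting $R^{\alpha}$ with its converse can only \emph{lose} edges, so the reachability facts $bc$, $bd$, $cd$ must be re-established from scratch, and the zig-zag you propose for $bc$ has a directional flaw. You alternate $cd$ (giving $(b_i,c_i)\in\prescript{\zv}{}R^{\forall}$, an arrow $b_i\to c_i$) and $bd$ (giving $(b_{i+1},c_i)\in\prescript{\zv}{}R^{\forall}$, an arrow $b_{i+1}\to c_i$). Every edge in this chain points from a $B$-vertex into a $C$-vertex; there is never an edge leaving a $C$-vertex. Consequently the pigeonhole repetition $b_i=b_j$ does \emph{not} produce a directed cycle, and transitivity of $\prescript{\zv}{}R^{\forall}$ gives you nothing: it does not place $b_i$ and any $c_k$ in a common strongly connected component. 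This is exactly the feature that Claim~\ref{makeSymmetricOne} exploits and you lack: there the extra hypothesis $\prescript{\zv}{}C+\prescript{\zv}{}R^{\kappa}=\prescript{\zv}{}D$ supplies an arrow $c\to b$ at each stage, so the chain is unidirectional and repetition does close a directed cycle. Here you only have $c+$, i.e.\ $\prescript{\zv}{}C+\prescript{\zv}{}R^{\kappa}=\prescript{\zv}{}C$, which keeps you trapped inside $C$ rather than returning to $B$.

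A concrete counterexample to your construction: take $A=D=\{1,2\}$, $B=\{1\}$, $C=\{2\}$ (independent of $\zv$), $\prescript{\zv}{}R^{\kappa}=\{(1,1),(2,2)\}$ and $\prescript{\zv}{}R^{\alpha}=\{(1,1),(2,2),(1,2)\}$ for $\alpha\neq\kappa$. One checks that all of $J\cup\{bd,cd,c+\}$ holds, with $\prescript{\zv}{}R^{\forall}=\{(1,1),(2,2),(1,2)\}$. After your symmetrization, $\prescript{\zv}{}R_{0}^{\alpha}=\{(1,1),(2,2)\}$ for every $\alpha$, so $\prescript{\zv}{}R_{0}^{\forall}\cap(\prescript{\zv}{}B\times\prescript{\zv}{}C)=\varnothing$ and $\prescript{\zv}{}B+\prescript{\zv}{}R_{0}^{\forall}=\{1\}\neq D$: both $bc$ and $bd$ are destroyed. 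The paper avoids this entirely by \emph{growing} instead of shrinking: it iterates $\prescript{\zv}{}R_{i+1}^{\alpha}=\prescript{\zv}{}R_{i}^{\alpha}-\prescript{\zv}{}R^{\alpha}+\prescript{\zv}{}R^{\alpha}$ and $\prescript{\zv}{}B_{i+1}=\prescript{\zv}{}B_{i}-\prescript{\zv}{}R^{\kappa}+\prescript{\zv}{}R^{\kappa}$ to a fixed point. Monotonicity makes $bc$, $bd$, $cd$, $r$, $sd$, $\kappa$ trivial; symmetry and transitivity of the limit come from the stabilization identity together with $(r)$ and $(t)$ for $R$; and the hypotheses $c+$, $+c$ are used precisely where your approach needed them least, to show the growing $B$ never enters $C$ and hence $(\varnothing)$ survives.
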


\begin{proof}
Put
$B_0=B$ 
and 
$R_{0} = R$.
Define sequences
$\prescript{\zv}{}B_{i+1} =
 \prescript{\zv}{}B_{i}- 
\prescript{\zv}{}R^{\kappa}+\prescript{\zv}{}R^{\kappa}$ and
$\prescript{\zv}{}R_{i+1}^{\alpha} =
\prescript{\zv}{}R_{i}^{\alpha}- 
\prescript{\zv}{}R^{\alpha}+\prescript{\zv}{}R^{\alpha}$.

Since $\prescript{\zv}{}R^{\kappa}$ is reflexive, 
these sequences of relations are growing.
From the finiteness we conclude that 
these sequences will stabilize at some $N$.

Let us prove that $R_{N}$, $B_{N}$, $C$, and $D$ satisfy 
the required properties.
Properties 
($\kappa$), ($d+$), (un), (bc), ($+c$), (r), (bd), (cd), and ($c+$)
easily follow from the corresponding properties for $R$.
Properties ($b+$) and (t) follow from the fact that sequences stabilized. 
By ($c+$) and ($+c$) for $R$ we derive that 
we can never escape from $\prescript{\zv}{}C$ and therefore, 
by property ($\varnothing$) for $R$, we can never come to 
$\prescript{\zv}{}B$.
Hence we have ($\varnothing$).
Property (s) follows from properties (r) and (t) for $R$ and 
from the fact that the sequences stabilized.
\end{proof}

\begin{sublem}\label{CLAIMPropertiesDerivation}
$\mathrm{III}\vdash \mathrm{II}\cup\{\kappa\}$.

\end{sublem}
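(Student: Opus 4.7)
My plan is to chain the preceding claims of this subsection to progressively enrich the quadruple with the missing properties. Starting from a quadruple witnessing $\mathrm{III}$, I first invoke Claim~\ref{CLAIMAddKappa}, then Claim~\ref{CLAIMFirstTransitive}, and then Claim~\ref{CLAIMAddSDR}, which together upgrade the quadruple to one satisfying $J = \mathrm{III}\cup\{\kappa,t,d+,r,\mathrm{sd}\}$. Every subsequent claim of this subsection both requires $J$ as hypothesis and preserves $J$ in its conclusion, so this initial block is a stable platform for what follows.

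Since $\mathrm{II}\cup\{\kappa\}=J\cup\{bd,cd,c+,s\}$ and Claim~\ref{makeSymmetricTwo} delivers all of $\{bd,cd,c+,s\}$ from $J\cup\{bd,cd,c+\}$, the remaining task is to reach $J\cup\{bd,cd,c+\}$. I would iterate the following routine: if $bd$ fails, apply Claim~\ref{reduceDBright}; otherwise if $cd$ fails, apply Claim~\ref{reduceDCleft}; otherwise $bd$ and $cd$ both hold, and if $c+$ fails I split on which of the three mutually exclusive alternatives ``some $\zv$ has $(\prescript{\zv}{}C+\prescript{\zv}{}R^{\kappa})\cap\prescript{\zv}{}B=\varnothing$'', ``all $\zv$ have that intersection nonempty but some $\zv$ has $\prescript{\zv}{}C+\prescript{\zv}{}R^{\kappa}\neq\prescript{\zv}{}D$'', or ``every $\zv$ has $\prescript{\zv}{}C+\prescript{\zv}{}R^{\kappa}=\prescript{\zv}{}D$'' currently holds, invoking Claim~\ref{increaseC}, Claim~\ref{reduceDCright}, or Claim~\ref{makeSymmetricOne} respectively. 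Once $bd$, $cd$ and $c+$ all hold I finish with Claim~\ref{makeSymmetricTwo}.

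For termination I use the lexicographic measure $\mu=\bigl(\sum_{\zv}|\prescript{\zv}{}D|,\,-\sum_{\zv}|\prescript{\zv}{}C|\bigr)$. Claims~\ref{reduceDBright}, \ref{reduceDCleft}, and \ref{reduceDCright} strictly decrease the first coordinate of $\mu$, while Claim~\ref{increaseC} keeps the first coordinate and strictly decreases the second. The only move that does not reduce $\mu$ is Claim~\ref{makeSymmetricOne}, but it produces $c+$, and a short verification shows that once $c+$ holds it is preserved by Claims~\ref{reduceDBright}, \ref{reduceDCleft}, and \ref{reduceDCright}: each of them replaces $C$ by $C\cap D_0$ and $R^{\kappa}$ by $R^{\kappa}\cap(D_0\times D_0)$ for some $D_0\supseteq C$ on which $R^{\kappa}$ remains reflexive, and then both inclusions in $(C\cap D_0)+R_0^{\kappa}=C\cap D_0$ are routine using reflexivity and the standing identity $C+R^{\kappa}=C$. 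Hence Claim~\ref{makeSymmetricOne} fires at most once before $c+$ becomes permanent, and from that point every further iteration either strictly decreases $\mu$ or invokes Claim~\ref{makeSymmetricTwo} and halts.

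The main obstacle I expect is precisely this bookkeeping of which of the four extra properties survive each construction — nothing algebraically new is needed, but the preservation of $c+$ under the three restriction-style claims is not stated explicitly in their conclusions and must be checked directly from the constructions. Once this is in hand, the iterative procedure q-defines from the original mighty tuple~III a quadruple in $\mathrm{II}\cup\{\kappa\}$, which is what the claim asserts.
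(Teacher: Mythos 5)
Your proof follows exactly the paper's chain of claims (AddKappa, FirstTransitive, AddSDR to reach $J$, then the iteration of the four reduction claims, finishing with makeSymmetricOne/makeSymmetricTwo) and is correct; the paper argues termination by noting that each application of makeSymmetricOne, if it destroys $bd$ or $cd$, is followed by a strict drop in $\sum_{\zv}|\prescript{\zv}{}D|$ via Claims \ref{reduceDBright}/\ref{reduceDCleft}, whereas you instead observe that $c+$ is preserved by the three restriction claims so that makeSymmetricOne fires at most once — a cleaner and more explicit bookkeeping. One small inaccuracy: for Claim \ref{reduceDBright} the set $D_0 = B + R^{\forall}$ need not contain $C$, so $C_0 = C\cap D_0$ can be a proper subset of $C$; your assertion that $D_0\supseteq C$ fails there, but the preservation of $c+$ still goes through verbatim from the two inclusions you describe (reflexivity of $R_0^{\kappa}$ on $D_0$ for $\supseteq$, and $C+R^{\kappa}=C$ intersected with $D_0$ for $\subseteq$), so the conclusion is unaffected.
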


\begin{proof}
By Claim \ref{CLAIMAddKappa} we get property 
($\kappa$).
By Claim \ref{CLAIMFirstTransitive} we additionally get 
property (t).
By Claim \ref{CLAIMAddSDR} we additionally get 
($d+$), (r), and (sd). Thus, we get all the properties from $J$.

Iteratively applying Claims \ref{reduceDBright}, \ref{reduceDCleft}, 
\ref{reduceDCright}, and
\ref{increaseC} 
whenever possible
we either achieve additional properties ($bd$), ($cd$), and ($c+$),
or 
we get additional properties ($bd$), ($cd$) and 
$\forall \zv\prescript{\zv}{}C+\prescript{\zv}{}R^{\kappa} = 
\prescript{\zv}{}D$.
Notice that the process cannot last forever because at every step 
we either reduce 
$\sum_{\zv\in A^{|A|}}|\prescript{\zv}{}D|$ or increase 
$\sum_{\zv\in A^{|A|}}|\prescript{\zv}{}C|$.
If we get additional properties ($bd$), ($cd$), and ($c+$) 
then the statement follows from 
Claim \ref{makeSymmetricTwo}.
If we get additional properties ($bd$), ($cd$) and 
$\forall \zv \prescript{\zv}{}C+\prescript{\zv}{}R^{\kappa} = 
\prescript{\zv}{}D$ then 
we apply Claim \ref{makeSymmetricOne}.
If the obtained quadruple satisfies 
($bd$) and ($cd$) then we satisfied all the required properties.
Otherwise we apply 
Claims \ref{reduceDBright} and \ref{reduceDCleft}, 
and reduce the sum 
$\sum_{\zv\in A^{|A|}}|\prescript{\zv}{}D|$. Then we again 
apply Claims \ref{reduceDBright}, \ref{reduceDCleft}, 
\ref{reduceDCright}, and
\ref{increaseC} whenever possible and so on.
\end{proof}

Note that properties (r),(t),(s), and (sd) imply that 
$\prescript{\zv}{}R^{\alpha}$ is an equivalence relation
on $\prescript{\zv}{}D$,
and properties ($b+$) and $(+c)$ imply that 
$\prescript{\zv}{}B$ and $\prescript{\zv}{}C$
are unions of some equivalence classes of 
$\prescript{\zv}{}R^{\kappa}$.

\begin{lem}\label{LEMMakeAMightyTupleFromProperties}
Suppose a
quadruple $(R,D,B,C)$ satisfies all the properties 
from $\mathrm{II}\cup\{\kappa\}$. 
Then there exists a mighty tuple I q-definable from 
$R,D,B,C$.
\end{lem}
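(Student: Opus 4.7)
The plan is to construct a mighty tuple I by introducing a parameter $\delta=(b,c)$ that pins down a specific $R^{\forall}$-component of $D$ linking $B$ to $C$. Set
$$\prescript{\zv}{}{\Delta}(b,c) = \prescript{\zv}{}B(b)\wedge \prescript{\zv}{}C(c)\wedge \prescript{\zv}{}R^{\forall}(b,c),$$
which is q-definable from $\{R,D,B,C\}$ and nonempty for every $\zv$ by property $(bc)$. For each $\delta=(b,c)\in \prescript{\zv}{}\Delta$, I would define
$$\prescript{\zv}{\delta}{D^{*}}(x) = \prescript{\zv}{}R^{\forall}(b,x), \qquad \prescript{\zv}{\delta}{Q}^{\alpha}(y_1,y_2) = \prescript{\zv}{}R^{\alpha}(y_1,y_2)\wedge \prescript{\zv}{\delta}{D^{*}}(y_1)\wedge \prescript{\zv}{\delta}{D^{*}}(y_2),$$
and take the new $B$, $C$ to be the $R^{\kappa}$-classes of $b$ and $c$, namely $\prescript{\zv}{\delta}{B^{*}}(x) = \prescript{\zv}{}R^{\kappa}(b,x)$ and $\prescript{\zv}{\delta}{C^{*}}(x) = \prescript{\zv}{}R^{\kappa}(c,x)$. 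All five relations are q-definable from $\{R,D,B,C\}$.

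Next I would verify the six mighty tuple I axioms. Condition~1 follows from $(bc)$; condition~2 from $b\in \prescript{\zv}{\delta}{B^{*}}$ and $c\in \prescript{\zv}{\delta}{C^{*}}$. For condition~3, properties $(r),(s),(t),(\mathrm{sd})$ make every $\prescript{\zv}{}R^{\alpha}$ an equivalence on $\prescript{\zv}{}D$, so its restriction $\prescript{\zv}{\delta}{Q}^{\alpha}$ to $\prescript{\zv}{\delta}{D^{*}}$ is an equivalence too. Intersecting over constant $\alpha$ shows $\prescript{\zv}{}R^{\forall}$ is also an equivalence on $\prescript{\zv}{}D$; since $\prescript{\zv}{\delta}{D^{*}}$ is by construction a single $\prescript{\zv}{}R^{\forall}$-class, we obtain $\prescript{\zv}{\delta}{Q}^{\forall} = \prescript{\zv}{\delta}{D^{*}}\times \prescript{\zv}{\delta}{D^{*}}$, which is condition~4. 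For condition~5, since $\prescript{\zv}{}R^{\kappa}=\prescript{\zv}{}R^{\forall\forall}\subseteq \prescript{\zv}{}R^{\forall}$, the $R^{\kappa}$-classes of $b$ and $c$ lie inside $\prescript{\zv}{\delta}{D^{*}}$ and are genuine equivalence classes of $\prescript{\zv}{\delta}{Q}^{\forall\forall}$; properties $(b+)$ and $(+c)$ guarantee they are contained in $\prescript{\zv}{}B$ and $\prescript{\zv}{}C$ respectively.

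For condition~6 I would invoke property $(\varnothing)$: pick $\zv_{0}$ with $\prescript{\zv_{0}}{}B\cap \prescript{\zv_{0}}{}C=\varnothing$. For every $\delta \in \prescript{\zv_{0}}{}\Delta$ the sets $\prescript{\zv_{0}}{\delta}{B^{*}}\subseteq \prescript{\zv_{0}}{}B$ and $\prescript{\zv_{0}}{\delta}{C^{*}}\subseteq \prescript{\zv_{0}}{}C$ are both nonempty and disjoint, hence distinct. I do not expect a substantive obstacle: the hypothesis $\mathrm{II}\cup\{\kappa\}$ already bundles all the algebraic ingredients a mighty tuple I requires (equivalence structure of the $R^{\alpha}$, closure of $B$ and $C$ under $R^{\kappa}$, and an $R^{\forall}$-link from $B$ to $C$), so the only real work is to use $\delta$ to pick out one $R^{\forall}$-component of $D$ and one $R^{\kappa}$-class in each of $B$ and $C$ inside that component, after which every axiom of mighty tuple I drops out of the definitions.
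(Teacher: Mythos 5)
Your construction is essentially identical to the paper's: the paper's $\Delta_{1}(u,v)=\exists x\,(B(u)\wedge C(v)\wedge R^{\forall}(u,x)\wedge R^{\forall}(v,x))$ is equivalent to your $\Delta(b,c)=B(b)\wedge C(c)\wedge R^{\forall}(b,c)$ since $R^{\forall}$ is an equivalence relation on $D$, and the paper's $D_{1},R_{1},B_{1},C_{1}$ carry a redundant guard $B(u)\wedge C(v)$ but agree with your $D^{*},Q,B^{*},C^{*}$ for $\delta\in\Delta$, which is the only range where the mighty tuple~I axioms must hold. The verification of conditions~1--6 matches the paper's argument step for step.
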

\begin{proof}
Define a mighty tuple $(R_{1},D_{1},B_1,C_1,\Delta_{1})$ 
by
 \begin{align*}
     \prescript{\zv}{}\Delta_{1}(u,v) =& \exists x(
\prescript{\zv}{}B(u)\wedge \prescript{\zv}{}C(v)\wedge \prescript{\zv}{}R^{\forall}(u,x) \wedge \prescript{\zv}{}R^{\forall}(v,x))\\
    \prescript{\zv}{uv}D_{1}(x) =& 
\prescript{\zv}{}B(u)\wedge \prescript{\zv}{}C(v)\wedge \prescript{\zv}{}R^{\forall}(u,x) \wedge \prescript{\zv}{}R^{\forall}(v,x)\\
 \prescript{\zv}{uv}R_1^{\alpha} =&
 \prescript{\zv}{}R^{\alpha}
 \cap
 (\prescript{\zv}{uv}D_{1}\times \prescript{\zv}{uv}D_{1})\\
 \prescript{\zv}{uv}B_{1}(x) =&
\prescript{\zv}{}B(u)\wedge \prescript{\zv}{}C(v)\wedge \prescript{\zv}{}R^{\kappa}(u,x) \wedge \prescript{\zv}{}R^{\forall}(v,x)\\
 \prescript{\zv}{uv}C_{1}(x) =&
\prescript{\zv}{}B(u)\wedge \prescript{\zv}{}C(v)\wedge \prescript{\zv}{}R^{\forall}(u,x) \wedge \prescript{\zv}{}R^{\kappa}(v,x)
 \end{align*}

By property ($bc$) for the quadruple 
we derive 
$\prescript{\zv}{}\Delta_{1}\neq\varnothing$,
$\prescript{\zv}{}D_{1}\neq\varnothing$,
$\prescript{\zv}{}B_{1}\neq\varnothing$, 
and $\prescript{\zv}{}C_{1}\neq\varnothing$ 
for every $\zv\in A^{|A|}$.
Thus, we already satisfied first two properties of 
a mighty tuple.
Property 3 from (r), (t), (s).
Property 4 follows from 
the definition and the fact that 
$\prescript{\zv}{}R^{\forall}$ is 
an equivalence relation.
Property 5 follows from the definition,
and property 6 follows from $(\varnothing$) for $R$.
\end{proof}

We will prove the following 
two lemmas from Section \ref{SUBSECTIONMightyTuplesDefinitions}
simultaneously.



\begin{LEMMightyTupleTwoThreeFourEquivalenceLEM}
Suppose $\Sigma$ is a set of relations on $A$.
Then the following conditions are equivalent:
\begin{enumerate}
    \item $\Sigma$ q-defines a mighty tuple II;
    \item $\Sigma$ q-defines a mighty tuple III;
    \item $\Sigma$ q-defines a mighty tuple IV.
\end{enumerate} 
\end{LEMMightyTupleTwoThreeFourEquivalenceLEM}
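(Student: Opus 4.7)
The plan is to establish the four implications $1 \Rightarrow 2$, $1 \Rightarrow 3$, $2 \Rightarrow 1$, and $3 \Rightarrow 2$, which together close up the equivalence.

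For $1 \Rightarrow 2$, I would start from a mighty tuple II $(Q,D,B,C)$ and verify that $(Q,B,C)$ already satisfies the axioms of a mighty tuple III. Conditions 1, 2, and 5 in the definition of a mighty tuple III are immediate from conditions 1, 3, and 5 in the definition of a mighty tuple II (note that $B,C\subseteq D$ is automatic, since $Q^{\forall\forall}$ is an equivalence on $D$ and $B+Q^{\forall\forall}=B$). Condition 3 follows from the II axiom $C+Q^{\forall\forall}=C$ together with symmetry of $Q^{\forall\forall}$. For condition 4, I would pick any $b\in B\subseteq D$; by $C+Q^{\forall}=D$ there is $c\in C$ with $(c,b)\in Q^{\forall}$, and symmetry of $Q^{\forall}$ then gives $(b,c)\in Q^{\forall}\cap(B\times C)$. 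For $1 \Rightarrow 3$, the same quadruple $(Q,D,B,C)$ already satisfies the mighty tuple IV axioms: the only axiom beyond II that needs checking is $D+Q^{\forall\forall}=D$, which is immediate because $Q^{\forall\forall}$ is a reflexive equivalence on $D$.

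The implication $2 \Rightarrow 1$ is exactly Claim~\ref{CLAIMPropertiesDerivation}, which produces a quadruple satisfying all properties in $\mathrm{II}\cup\{\kappa\}$ — in particular a mighty tuple II — out of any mighty tuple III q-defined over $\Sigma$. For $3 \Rightarrow 2$, I plan to convert the mighty tuple IV into a mighty tuple III and then invoke $2 \Rightarrow 1$. Concretely, I would apply Claim~\ref{CLAIMAddKappa} to adjoin property $(\kappa)$, then Claim~\ref{CLAIMSecondTransitive} to adjoin $(t)$, and then Claim~\ref{CLAIMSecondPlusC} to adjoin $(+c)$. As noted in the paper directly after Claim~\ref{CLAIMSecondPlusC}, one has $\mathrm{IV}\cup\{\kappa,t,+c\}=\mathrm{III}\cup\{\kappa,t,d+,bd\}$, so the resulting quadruple satisfies in particular all the axioms of a mighty tuple III, and this mighty tuple III is q-definable from $\Sigma$; applying $2 \Rightarrow 1$ finishes the argument.

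The main obstacle is not a single hard step but an organizational one: assembling the rewriting claims of Section~\ref{SUBSECTIONMightyTuplesTwoThreeFour} in the correct order and verifying at each step that q-definability is preserved and the parameterized conditions (quantified over $\zv$) carry over. Since all of the auxiliary claims have already been proved, the proof itself reduces to invoking them in the order indicated above.
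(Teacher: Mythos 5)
Your proposal is correct and follows essentially the same route as the paper: the cycle is closed via the same auxiliary claims (Claims~\ref{CLAIMAddKappa}, \ref{CLAIMSecondTransitive}, \ref{CLAIMSecondPlusC} for IV$\Rightarrow$III and Claim~\ref{CLAIMPropertiesDerivation} for III$\Rightarrow$II), and the direct verifications of II$\Rightarrow$III and II$\Rightarrow$IV are what the paper dismisses as obvious. The only organizational difference is that the paper funnels all three cases through III and observes that the output of Claim~\ref{CLAIMPropertiesDerivation} is simultaneously a mighty tuple II, III, and IV, whereas you close the loop with the trivial implications; either way the same lemmas do the real work, and both arguments implicitly need the small preliminary step of extending a mighty tuple III triple $(Q,B,C)$ to a quadruple by setting $\prescript{\zv}{}{D}=A$ before feeding it to Claim~\ref{CLAIMPropertiesDerivation}.
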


\begin{LEMMightyTupleTwoImpliesLEM}
Suppose $T$ is a mighty tuple of type II, III, or IV.
Then relations of $T$ q-define a mighty tuple I.
\end{LEMMightyTupleTwoImpliesLEM}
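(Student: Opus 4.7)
The plan is to prove both lemmas simultaneously by reducing everything to the canonical form of a quadruple satisfying all the properties in $\mathrm{II} \cup \{\kappa\}$, and then applying Lemma~\ref{LEMMakeAMightyTupleFromProperties} to extract a mighty tuple~I.

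For the equivalence in Lemma~\ref{LEMMightyTupleTwoThreeFourEquivalence}, the implications $\mathrm{II} \Rightarrow \mathrm{III}$ and $\mathrm{II} \Rightarrow \mathrm{IV}$ are immediate, since the axioms of~III (with the convention $D = A$) and of~IV are both contained in those of~II. The implication $\mathrm{III} \Rightarrow \mathrm{II} \cup \{\kappa\}$ is exactly the content of Claim~\ref{CLAIMPropertiesDerivation}, which chains Claims~\ref{CLAIMAddKappa}, \ref{CLAIMFirstTransitive}, and~\ref{CLAIMAddSDR} to reach the normal form~$J$, then iterates Claims~\ref{reduceDBright}, \ref{reduceDCleft}, \ref{reduceDCright}, and~\ref{increaseC} until the symmetrization Claims~\ref{makeSymmetricOne} or~\ref{makeSymmetricTwo} secure the remaining axioms~$(c+)$ and~$(s)$. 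The remaining implication $\mathrm{IV} \Rightarrow \mathrm{III}$ requires only producing the axiom~$(+c)$: I would apply Claim~\ref{CLAIMAddKappa} to install~$(\kappa)$, then Claim~\ref{CLAIMSecondTransitive} to add transitivity~$(t)$, and then Claim~\ref{CLAIMSecondPlusC} to derive~$(+c)$. Throughout these reductions the axioms $(un)$, $(b+)$, $(bc)$, $(\varnothing)$ of~IV are preserved, so forgetting the $D$-parameter yields a mighty tuple~III.

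With the three-way equivalence established, Lemma~\ref{LEMMightyTupleTwoImplies} follows by reducing to the~II case: given any mighty tuple of type~II, III, or~IV, one first passes to a mighty tuple~II, then adds~$(\kappa)$ via Claim~\ref{CLAIMAddKappa} to obtain a quadruple satisfying all of $\mathrm{II} \cup \{\kappa\}$, and finally invokes Lemma~\ref{LEMMakeAMightyTupleFromProperties} to construct the desired mighty tuple~I.

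The main obstacle is not conceptual but bookkeeping: one has to verify that the loop inside the proof of Claim~\ref{CLAIMPropertiesDerivation} actually terminates. The monovariant ensuring termination is that each pass either strictly decreases $\sum_{\zv \in A^{|A|}} |\prescript{\zv}{}D|$ (via Claims~\ref{reduceDBright}, \ref{reduceDCleft}, and~\ref{reduceDCright}) or keeps it fixed while strictly increasing $\sum_{\zv \in A^{|A|}} |\prescript{\zv}{}C|$ (via Claim~\ref{increaseC}); since both sums are bounded, the process terminates in a state where the hypotheses of Claim~\ref{makeSymmetricOne} or Claim~\ref{makeSymmetricTwo} are met, closing the argument.
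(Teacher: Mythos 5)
Your proposal follows the paper's own route essentially step for step: reduce any of the three tuples to a mighty tuple~III, drive it through Claim~\ref{CLAIMPropertiesDerivation} to a quadruple satisfying $\mathrm{II}\cup\{\kappa\}$, and then invoke Lemma~\ref{LEMMakeAMightyTupleFromProperties}; the IV~$\Rightarrow$~III step via Claims~\ref{CLAIMAddKappa}, \ref{CLAIMSecondTransitive}, \ref{CLAIMSecondPlusC} and the termination monovariant are also exactly the paper's. One small wording issue in your final paragraph: you describe the endgame as ``first pass to a mighty tuple~II, then add $(\kappa)$ via Claim~\ref{CLAIMAddKappa}'', but Claim~\ref{CLAIMAddKappa} is stated only for $P\subseteq\mathrm{III}\cup\mathrm{IV}$, and $\mathrm{II}$ is strictly larger, so you cannot formally invoke it on a full~II; in practice there is nothing to add, because Claim~\ref{CLAIMPropertiesDerivation} already delivers $\kappa$ along with the rest of~$\mathrm{II}$, so the extra step is redundant rather than wrong.
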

\begin{proof}

We want to prove that the existence of a mighty tuple II, III, or IV implies 
the existence of I, II, III, and IV.
First, let us show that we can derive a 
mighty tuple III.
A mighty tuple II is also a mighty tuple III, hence for 
II and III it is obvious.
If $(Q,D,B,C)$ is a mighty tuple IV, 
then $(Q,D,B,C)$ satisfies 
conditions from IV.
By Claims \ref{CLAIMAddKappa}, \ref{CLAIMSecondTransitive} and \ref{CLAIMSecondPlusC}
we derive a quadruple
satisfying all the properties of III,
which gives us a mighty tuple III.


Let $(Q,B,C)$ be a mighty tuple III. Put 
$\prescript{\zv}{}{D} = A$ for every $\zv$.
Then $(Q,D,B,C)$ satisfies all
properties of III.
Claim \ref{CLAIMPropertiesDerivation} implies 
the existence of a quadruple satisfying properties
$\mathrm{II}\cup \{\kappa\}$.
Notice that this quadruple is 
simultaneously a mighty tuple II, III, and IV.
Additionally, Lemma \ref{LEMMakeAMightyTupleFromProperties} 
implies that 
this quadruple q-defines a mighty tuple I.
\end{proof}

\subsection{Mighty tuple V}\label{SUBSECTIONMightyTupleV}

First, let us define a modification of a mighty tuple V, which we call
a mighty tuple V'.

A tuple $(Q,D,\Delta)$, where $\Delta$ is a $\zv$-parameterized $m$-ary relation, 
$Q$ is a$(\zv,\delta,\alpha)$-parameterized binary relation,
and $D$ is a $(\zv,\delta)$-parameterized unary relation, 
is called \emph{a mighty tuple V'} if 
\begin{enumerate}
    \item $\prescript{\zv}{}\Delta\neq\varnothing$ for every $\zv\in A^{|A|}$;
    \item $\prescript{\zv}{\delta}Q^{\kappa}\subseteq
    \prescript{\zv}{\delta}Q^{\alpha}$ for
    every  $\zv\in A^{|A|}$, $\delta\in \prescript{\zv}{}\Delta $, and $\alpha\in A^{|A|}$;

    \item  $\{(d,d)\mid d\in \prescript{\zv}{\delta}D\}\subseteq \prescript{\zv}{\delta}Q^{\forall}$ 
    for every $\zv\in A^{|A|}$ and $\delta\in \prescript{\zv}{}\Delta$; \;\quad($\prescript{\zv}{\delta}Q^{\forall}$ is reflexive)
\item $\proj_{1}(\prescript{\zv}{\delta}Q^{\alpha})= 
    \proj_{2}(\prescript{\zv}{\delta}Q^{\alpha})=\prescript{\zv}{\delta}D$ for 
    every  $\zv\in A^{|A|}$, $\delta\in \prescript{\zv}{}\Delta $, and $\alpha\in A^{|A|}$.
    \item $\prescript{\zv}{\delta}Q^{\forall\forall}\cap \{(d,d)\mid d\in A\}=\varnothing$  
    for some $\zv\in A^{|A|}$ and every $\delta\in\prescript{\zv}{}\Delta$. \quad ($\prescript{\zv}{\delta}Q^{\forall\forall}$ has no loops)
\end{enumerate}

Notice that we allow $\Delta$ to be of arity $0$. Then 
condition 1 means that $\Delta=\{\Lambda\}$, where 
$\Lambda$ is an empty tuple/word.
In this case we can omit a parameter $\delta$ in relations.

\begin{lem}\label{LEMMightyVImpliesPrime}
Suppose $(Q,D)$ is a mighty tuple V.
Then $\{Q,D\}$ q-defines a mighty tuple V' $(R,D,\{\Lambda\})$.
\end{lem}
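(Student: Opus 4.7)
The plan is to mimic the construction used in Lemma~\ref{LEMMightyTupleOnePrime}, which converted a mighty tuple I into a mighty tuple I' by conjoining all ``re-indexings'' of the $\alpha$-parameter. Concretely, if $Q$ has parameter arity $m$, I would introduce a new parameter $\alpha=(\alpha_1,\dots,\alpha_{|A|})\in A^{|A|}$ and define
$$\prescript{\zv}{}R^{\alpha_1,\dots,\alpha_{|A|}}(y_1,y_2) \;=\; \prescript{\zv}{}D(y_1)\wedge\prescript{\zv}{}D(y_2)\wedge
\bigwedge_{(i_1,\dots,i_m)\in[|A|]^m}\prescript{\zv}{}Q^{\alpha_{i_1},\dots,\alpha_{i_m}}(y_1,y_2).$$
This is visibly q-definable from $\{Q,D\}$ (it is a finite conjunction of instances of $Q$ together with two unary restrictions). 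The candidate mighty tuple~V' is then $(R,D,\{\Lambda\})$, i.e.\ with trivial $\Delta$.

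It then remains to verify the five conditions in the definition of a mighty tuple~V'. The first condition holds trivially since $\Delta=\{\Lambda\}$ is nonempty. For condition~2, observe that when $\alpha=\kappa=(1,2,\dots,|A|)$ the tuples $(\alpha_{i_1},\dots,\alpha_{i_m})$ range over all of $A^m$, so the inner conjunction collapses to $Q^{\forall\forall}$; for arbitrary $\alpha$ the tuples range only over $\{\alpha_1,\dots,\alpha_{|A|}\}^m\subseteq A^m$, hence $R^{\kappa}\subseteq R^{\alpha}$. For condition~3 note that $R^{\forall}(y_1,y_2)=\prescript{\zv}{}D(y_1)\wedge\prescript{\zv}{}D(y_2)\wedge Q^{\forall}(y_1,y_2)$, so reflexivity of $R^{\forall}$ on $D$ follows from condition~1 of V. For condition~5, observe that $R^{\forall\forall}\subseteq Q^{\forall\forall}$, and therefore condition~3 of V (existence of a $\zv$ making $Q^{\forall\forall}$ irreflexive) passes to $R^{\forall\forall}$ for the same $\zv$.

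The subdirectness condition~4 is the only slightly delicate point. One direction, $R^{\alpha}\subseteq D\times D$, is built in. For the reverse inclusion $\proj_i(R^{\alpha})\supseteq D$, use the inclusion $R^{\kappa}\subseteq R^{\alpha}$ just proved, so it suffices to show $\proj_i(R^{\kappa})=D$. Since $R^{\kappa}$ equals $Q^{\forall\forall}$ intersected with $D\times D$, and condition~2 of~V gives $\proj_i(Q^{\forall\forall})=D$ which already forces $Q^{\forall\forall}\subseteq D\times D$, we obtain $R^{\kappa}=Q^{\forall\forall}$ and hence $\proj_i(R^{\kappa})=D$ as required. I do not expect any substantive obstacle here — the whole argument is a routine bookkeeping exercise parallel to Lemma~\ref{LEMMightyTupleOnePrime}, and the only subtle point is checking that the subdirectness of $Q^{\forall\forall}$ on $D$ upgrades to subdirectness of every $R^{\alpha}$, which is immediate from the $R^{\kappa}\subseteq R^{\alpha}$ inclusion.
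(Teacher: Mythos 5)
Your proposal is correct and uses essentially the same construction as the paper's proof: conjoin all re-indexings of the $\alpha$-parameter and restrict to $D\times D$, yielding $R^{\kappa}=R^{\forall\forall}=Q^{\forall\forall}$. The paper dismisses the verification of the five conditions as ``straightforward to check,'' whereas you spell them out (and are in fact slightly more careful than the paper in noting that $R^{\forall}$ is $Q^{\forall}$ intersected with $D\times D$ rather than $Q^{\forall}$ itself, which is harmless for condition 3).
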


\begin{proof}
Define 
a mighty tuple V' as follows.
Let $\Delta$ be the relation of arity $0$ containing the empty tuple.
The relation $R$ is defined by
$$\prescript{\zv}{}R^{x_1,\dots,x_{|A|}}(y_1,y_2)=
\bigwedge\limits_{i_1,\dots,i_k\in\{1,2,\dots,|A|\}.
}
\prescript{\zv}{}Q^{x_{i_1},\dots,x_{i_k}}(y_1,y_2)
\wedge (y_1\in \prescript{\zv}{}D)
\wedge (y_2\in \prescript{\zv}{}D).$$
Then 
$R^{\kappa} =R^{\forall\forall} = Q^{\forall\forall}$ and 
$R^{\forall} = Q^{\forall}$.
It is straightforward to check that $(R,D,\Delta)$ is a mighty tuple V'.
\end{proof}

\begin{lem}\label{LEMSymmetricMightyVImply}
Suppose $(R,D,\Delta)$ is a mighty tuple V', 
$\prescript{\zv}{\delta}R^{\alpha}$ is symmetric 
    for
    every  $\zv\in A^{|A|}$, $\delta\in \prescript{\zv}{}\Delta $, and $\alpha$.
Then there exists a mighty tuple I  q-definable from 
$R$, $D$, and $\Delta$.
\end{lem}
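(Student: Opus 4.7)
I would attempt to build the mighty tuple~I $(Q_1,D_1,B_1,C_1,\Delta_1)$ by first turning each $R^\alpha$ into an equivalence relation via iterated composition, then reading $B_1$ and $C_1$ off as distinct equivalence classes of $Q_1^{\forall\forall}$ separated by the irreflexivity of $R^\kappa$. Set $N:=|A|!\cdot|A|^2$ and define $Q_1^\alpha(y_1,y_2):=(R^\alpha)^{2N}(y_1,y_2)$. Applying Lemma~\ref{LEMFactorialRepetition} to the symmetric relation $(R^\alpha)^2$ shows this composition is idempotent, hence transitive. Symmetry of $R^\alpha$ is preserved under composition; subdirectness of $R^\alpha$ (property~4 of mighty~V$^\prime$) combined with symmetry yields $(d,d)\in(R^\alpha)^2\subseteq Q_1^\alpha$ for every $d\in D$, so $Q_1^\alpha$ is an equivalence on $D$ for every $\alpha$. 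It follows that $Q_1^\forall$ is an equivalence on $D$ and that $Q_1^{\forall\forall}=(R^\kappa)^{2N}$ (using $R^\kappa\subseteq R^\alpha$ from property~($\kappa$)), which is itself an equivalence on $D$ by the same subdirectness/symmetry argument applied to $R^\kappa=R^{\forall\forall}$.

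Next I would encode the separation witnesses in the parameter set. Take $\Delta_1$ to contain triples $(\delta,u_1,u_2)$ with $\delta\in\Delta$, $u_1,u_2\in D^\delta$, and $R^\kappa(u_1,u_2)$; this is nonempty for every~$\zv$ by subdirectness of $R^\kappa$ on $D$. Let $D_1^{\delta,u_1,u_2}$ be the common $Q_1^\forall$-class of $u_1$ and $u_2$ (they coincide because $R^\kappa\subseteq Q_1^\forall$), and define $Q_1$ at the new parameter $(\delta,u_1,u_2,\alpha)$ as the restriction of $Q_1^\alpha$ to $D_1^{\delta,u_1,u_2}\times D_1^{\delta,u_1,u_2}$. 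Set $B_1^{\delta,u_1,u_2}(x):=Q_1^{\forall\forall}(u_1,x)\wedge x\in D_1^{\delta,u_1,u_2}$ and $C_1^{\delta,u_1,u_2}(x):=Q_1^{\forall\forall}(u_2,x)\wedge x\in D_1^{\delta,u_1,u_2}$, i.e.\ the $Q_1^{\forall\forall}$-classes of $u_1,u_2$ inside $D_1$. Conditions~1--5 of a mighty tuple~I follow routinely: $u_1\in B_1,u_2\in C_1\subseteq D_1$ gives nonemptiness; restriction of $Q_1^\alpha$ to a $Q_1^\forall$-class remains an equivalence; $Q_1^\forall|_{D_1\times D_1}=D_1\times D_1$; and $B_1,C_1$ are $Q_1^{\forall\forall}$-classes by construction.

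The technical heart is condition~6: at the $\zv$ where $R^\kappa$ is irreflexive on~$A$, I must show $u_1$ and $u_2$ lie in distinct $(R^\kappa)^{2N}$-classes for every valid $(\delta,u_1,u_2)\in\Delta_1$. When the connected component of $R^\kappa$ carrying the edge $(u_1,u_2)$ is bipartite this is immediate from the parity of walks, since no even-length walk in $R^\kappa$ can join $u_1$ and $u_2$. The hard case is when an odd cycle is present: the $2N$-th power then collapses the whole component into a single class and the naive choice of $B_1,C_1$ gives $B_1=C_1$. I expect this to be the main obstacle. Resolving it will require refining $\Delta_1$ (for instance, restricting $D_1^{\delta,u_1,u_2}$ to a parameterized bipartite substructure such as the $R^\kappa$-neighborhood of an auxiliary element, or intersecting with a q-defined unary relation that excludes a vertex of every odd cycle through $u_1$), together with a combinatorial argument that, under the mighty~V$^\prime$ hypotheses, such a bipartite refinement of suitable cardinality can always be found. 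Verifying that this refinement remains q-definable over $\{R,D,\Delta\}$ and that the resulting restricted $Q_1^{\forall\forall}$ still splits $u_1$ from $u_2$ will complete the proof.
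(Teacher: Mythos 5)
Your construction correctly handles the bipartite (odd-girth $\infty$) case, and indeed matches the paper's base case in spirit: taking an even-length power of $R^\alpha$ to form equivalence classes, encoding $(u_1,u_2)$ with $R^\kappa(u_1,u_2)$ as extra parameters of $\Delta_1$, and reading $B_1,C_1$ as the $Q_1^{\forall\forall}$-classes of $u_1,u_2$, which are disjoint precisely because no even walk in $R^\kappa$ joins the two ends of an edge in a bipartite component. Up to here the argument is sound.

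However, you explicitly leave the odd-cycle case unresolved, and the ideas you sketch for it do not obviously lead anywhere. Deleting ``a vertex of every odd cycle through $u_1$'' is not a well-defined q-definable operation, and there is no guarantee that a bipartite subgraph of the right kind (closed under the $R^\alpha$ structure, still satisfying subdirectness, still separating) is q-definable from $\{R,D,\Delta\}$. The paper instead runs an induction on the pair $(m,|D|)$, where $m$ is the \emph{minimal odd cycle length} of $R^\kappa$ (over an appropriate $\max_\zv\min_\delta$): one parameterizes by a vertex $y$ on a shortest odd cycle, and restricts $D$ to the elements reachable from $y$ by paths of both lengths $\lfloor m/2\rfloor$ and $\lceil m/2\rceil$ in $R^\kappa$. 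This new unary relation is q-definable, it excludes $y$ itself (so $|D|$ strictly drops while $m$ cannot decrease), and the restricted tuple is again a mighty tuple V$'$. Termination then forces the bipartite base case. That induction is the actual mathematical content of the lemma; without it, your proof has a genuine gap exactly where you flagged it.
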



\begin{proof}
First, we assign a 
pair to every mighty tuple V' and 
evaluations of $\zv$ and $\delta$.
Put $\phi_{R,D,\Delta}(\zv,\delta) =
(m,|\prescript{\zv}{\delta}D|)$, where 
$m$ is the minimal odd positive integer such that 
$\prescript{\zv}{\delta}R^{\kappa}$ has cycles
of length $m$.
By $\phi^{1}_{R,D,\Delta}(\zv,\delta)$
we denote the first element of the pair, that is $m$.
Notice that $m$ can be $\infty$ if 
$\prescript{\zv}{\delta}R^{\kappa}$ does not have odd cycles.
Then we define a linear order on pairs by 
$(m_1,s_1)\le (m_2,s_{2})\Leftrightarrow
(m_1<m_2)\vee (m_1=m_2\wedge s_{1}\ge s_{2})$.
We put
$\phi_{R,D,\Delta} =\max\limits_{\zv\in A^{|A|}} \min\limits_{\delta\in\prescript{\zv}{}\Delta} 
\phi_{R,D,\Delta}(\zv,\delta)$.

We prove the lemma by induction on 
$\phi_{R,D,\Delta}$.
Assume that it does not hold.
Choose a mighty tuple V' 
$(R,D,\Delta)$ such that 
we cannot q-define a mighty tuple I from it and 
the pair $\phi_{R,D,\Delta}$ is maximal.
Thus, to complete the proof it is sufficent to q-define 
a mighty-tuple V' with larger pair or 
q-define a mighty tuple I.
Suppose 
$(m,s) = \phi_{R,D,\Delta}$.
By property 5 of a mighty tuple V' 
we have $m\ge 3$.
We consider two cases.

Case 1 (base of the induction). $m=\infty$, i.e., for some $\zv_{0}\in Z$ 
and every $\delta\in \prescript{\zv_0}{}{\Delta}$ 
the relation 
$\prescript{\zv_{0}}{\delta}R^{\kappa}$
has only cycles of an even length.
Then define
$\prescript{\zv}{\delta}R^{\alpha}_{0} = N\cdot \prescript{\zv}{\delta}R^{\alpha}$, 
where $N = |A|!\cdot |A|^{2}$.
Since $\prescript{\zv}{\delta}R^{\alpha}$ is symmetric, 
the relation 
$\prescript{\zv}{\delta}R^{\alpha}_{0}$ is reflexive and symmetric.
By Lemma \ref{LEMFactorialRepetition}  
$\prescript{\zv}{\delta}R^{\alpha}_{0}$ is transitive,
that is $\prescript{\zv}{\delta}R^{\alpha}_{0}+\prescript{\zv}{\delta}R^{\alpha}_{0} = \prescript{\zv}{\delta}R^{\alpha}_{0}$.
Therefore, $\prescript{\zv}{\delta}R^{\alpha}_{0}$
is an equivalence relation on $\prescript{\zv}{\delta}D$
for every $\zv$, $\delta$, and $\alpha$.
This implies that 
$\prescript{\zv}{\delta}R^{\forall}_{0}$ is also an equivalence relation.
Since 
$\prescript{\zv_{0}}{\delta}R^{\kappa}$ has no cycles of an odd length, 
we have 
$\prescript{\zv_{0}}{\delta}R^{\kappa}_{0}\cap
\prescript{\zv_{0}}{\delta}R^{\kappa}=\varnothing$.
By the reflexivity of 
$\prescript{\zv}{\delta}R^{\forall}$
we have
$\prescript{\zv}{\delta}R^{\kappa}\subseteq 
\prescript{\zv}{\delta}R_{0}^{\forall}$.
Then a mighty tuple I $(R_{1},D_{1},B_1,C_1,\Delta_{1})$ 
can be defined as follows:
 \begin{align*}
  \prescript{\zv}{}\Delta_{1}(\delta,u,v) =& 
\prescript{\zv}{\delta}R^{\kappa}(u,v)\wedge 
\prescript{\zv}{}\Delta(\delta)\\
    \prescript{\zv}{\delta uv}D_{1}(x) =& 
\prescript{\zv}{\delta}R^{\kappa}(u,v)\wedge 
\prescript{\zv}{\delta}R_{0}^{\forall}(u,x)\\
 \prescript{\zv}{\delta uv}R_1^{\alpha} =&
 \prescript{\zv}{\delta}R_0^{\alpha}
 \cap
 (\prescript{\zv}{\delta uv}D_{1}\times \prescript{\zv}{\delta uv}D_{1})\\
 \prescript{\zv}{\delta uv}B_{1}(x) =&
\prescript{\zv}{\delta}R^{\kappa}(u,v)\wedge\prescript{\zv}{\delta}R_0^{\kappa}(u,x)\\
 \prescript{\zv}{\delta uv}C_{1}(x) =&
\prescript{\zv}{\delta}R^{\kappa}(u,v)\wedge\prescript{\zv}{\delta}R_0^{\kappa}(v,x)
 \end{align*}
 Since we can choose any tuple 
 $(a,b) \in \prescript{\zv}{\delta}R^{\kappa}$ as $(u,v)$,
 we  obtain 
 $\prescript{\zv}{}{\Delta_{1}}\neq\varnothing$, 
 $a\in \prescript{\zv}{\delta ab}B_1\subseteq \prescript{\zv}{\delta ab}D_1$,
 and $b\in \prescript{\zv}{\delta ab}C_1\subseteq \prescript{\zv}{\delta ab}D_1$. 
 The relation $\prescript{\zv_{0}}{\delta uv}R_1^{\alpha}$ is an equivalence relation 
 because it is just a restriction of the equivalence relation 
 $\prescript{\zv_{0}}{\delta}R_0^{\alpha}$ to $\prescript{\zv}{\delta u v}D_1$.
 It follows 
 from the definition of $\prescript{\zv}{\delta}D_1$
 that $\prescript{\zv}{\delta}R_{1}^{\forall} = \prescript{\zv}{\delta}D_1\times \prescript{\zv}{\delta}D_1$.
The relations 
$\prescript{\zv}{\delta uv}B_{1}$ and 
$\prescript{\zv}{\delta uv}C_{1}$ are the equivalence classes of 
$\prescript{\zv}{\delta uv}R_{1}^{\kappa}$ 
containing $u$ and $v$, respectively.
Moreover, we 
have
$\prescript{\zv_0}{\delta uv}B_{1}\cap \prescript{\zv_0}{\delta uv}C_{1}=\varnothing$ 
because otherwise we would get a path of an even length from $u$ to $v$
in $\prescript{\zv_{0}}{\delta}R^{\kappa}$, 
which together with the edge $(u,v)$ would give us a cycle of an odd length
from $u$ to $u$ 
and contradicts our assumption about $\zv_{0}$.

Case 2 (inductive step). $m<\infty$.
Then there exists $\zv$ such that 
for any  
$\delta\in \prescript{\zv}{}\Delta$ 
the relation 
$\prescript{\zv}{\delta} R^{\kappa}$ has no cycles of length smaller than $m$
but for some 
$\delta\in \prescript{\zv}{}\Delta$
the relation 
$\prescript{\zv}{\delta} R^{\kappa}$ has cycles of length $m$.
Let $\prescript{\zv}{\delta}R_{0}^{\alpha}= 
\lfloor\frac{m}{2}\rfloor\cdot \prescript{\zv}{\delta}R^{\alpha}$.
Define new relations by
\begin{align*}
    \prescript{\zv}{}\Delta_1(\delta,y) &= 
\exists x\exists x'
\prescript{\zv}{\delta}R_{0}^{\kappa}(y,x)
\wedge
\prescript{\zv}{\delta}R_{0}^{\kappa}(y,x')
\wedge
\prescript{\zv}{\delta}R^{\kappa}(x,x')\wedge \Delta(\delta),\\
    \prescript{\zv}{\delta y} D_{1}(x) &= 
\exists x'
\prescript{\zv}{\delta}R_{0}^{\kappa}(y,x)
\wedge
\prescript{\zv}{\delta}R_{0}^{\kappa}(y,x')
\wedge
\prescript{\zv}{\delta}R^{\kappa}(x,x'),\\
\prescript{\zv}{\delta y}R_{1}^{\alpha}&=
\prescript{\zv}{\delta}R^{\alpha}\cap 
(\prescript{\zv}{\delta y}D_{1}\times \prescript{\zv}{\delta y}D_{1}).
\end{align*}
That is, 
$\prescript{\zv}{}{\Delta_1}(\delta,y)$ holds if
$\Delta(\delta)$ holds and 
$y$ is on some cycle of $\prescript{\zv}{\delta}R^{\kappa}$ of length $m$.
By the definition of maximality of $m$ over $\zv$ the relation 
$\prescript{\zv}{}{\Delta_{1}}$ is not empty for any $\zv$.
Also, 
$\prescript{\zv}{\delta y} D_{1}$ is the set of all elements 
such that there exist paths from $y$ to it of lengths 
$\lfloor\frac{m}{2}\rfloor$ and $\lceil\frac{m}{2}\rceil$.
Hence, $\prescript{\zv}{\delta} D_{1}$ is not empty for any $\zv$ and 
$\delta\in \prescript{\zv}{} \Delta_1$.
Let us show that 
$\phi_{R_{1},D_{1},\Delta_1}>\phi_{R,D,\Delta}$.
Notice that $\prescript{\zv}{\delta y}R_{1}^{\kappa}$ is just a restriction of 
$\prescript{\zv}{\delta}R^{\kappa}$.
Also if 
$\phi^{1}_{R,D,\Delta}(\zv,\delta) = m$
then 
$a\notin \prescript{\zv}{\delta a} D_{1}$
for any $\delta a\in \prescript{\zv}{}\Delta_{1}$.
Hence, $\prescript{\zv}{\delta}D\supsetneq 
\prescript{\zv}{\delta a}D_{1}$ in this case and 
we have 
\begin{align}
\label{EQUATIONgreater}
\forall \zv\; \forall \delta a\in\prescript{\zv}{}\Delta_{1} \;
\left(\phi^{1}_{R,D,\Delta}(\zv,\delta) = m \rightarrow 
\phi_{R,D,\Delta}(\zv,\delta)< 
\phi_{R_1,D_1,\Delta_{1}}(\zv,\delta a)\right)
\end{align}

Let $Z$ be the set of all 
$\zv$ such that 
$\prescript{\zv}{\delta}R^{\kappa}$ has no cycle of length
smaller than 
$m$ for every $\delta\in\prescript{\zv}{}\Delta$, i.e. 
$\min\limits_{\delta \in\prescript{\zv}{}\Delta} \phi^{1}_{R,D,\Delta}(\zv,\delta)= m$.
Let 
$\prescript{\zv}{}\Delta'$ be the set of all 
$\delta\in\prescript{\zv}{}\Delta$ such that 
$\prescript{\zv}{\delta}R^{\kappa}$ has a cycle of length 
$m$.
Notice that 
$\prescript{\zv}{}\Delta'$ is a projection of 
$\prescript{\zv}{}\Delta_{1}$
onto all the coordinates but the last one.
Then 
$\phi^{1}_{R,D,\Delta}(\zv,\delta)= m$ 
for any $\zv\in Z$ and $\delta\in \prescript{\zv}{}{\Delta'}$.
By 
(\ref{EQUATIONgreater})
 we have 
 \begin{align*}
 \phi_{R,D,\Delta} =
  \max\limits_{\zv\in A^{|A|}} \min\limits_{\delta \in\prescript{\zv}{}\Delta} \phi_{R,D,\Delta}(\zv,\delta)=
  \max\limits_{\zv\in Z} \min\limits_{\delta \in\prescript{\zv}{}\Delta} \phi_{R,D,\Delta}(\zv,\delta)
=
 \max\limits_{\zv\in Z} \min\limits_{\delta \in\prescript{\zv}{}\Delta'} \phi_{R,D,\Delta}(\zv,\delta)
 < &\\
 \max\limits_{\zv\in Z} \min\limits_{\delta \in\prescript{\zv}{}\Delta_1} \phi_{R_1,D_1,\Delta_1}(\zv,\delta)\le \max\limits_{\zv\in A^{|A|}} \min\limits_{\delta \in\prescript{\zv}{}\Delta_1} \phi_{R_1,D_1,\Delta_1}(\zv,\delta) =&\phi_{R_1,D_1,\Delta_1}.
 \end{align*}
It remains to check
that $(R_{1},D_{1},\Delta_{1})$ is a mighty tuple V'.
Property 1 was already mentioned. Properties 
2 and 3 follow from the respective properties for $(R,D,\Delta)$.
Property 5 follows from the fact that we only restrict
the relation $R$.
To prove property 4 notice that 
by the definition of $D_{1}$ for every element $x$ there is an element
$x'$ connected to $x$ in $\prescript{\zv}{\delta}R^{\kappa}$
and both $x$ and $x'$ are in $\prescript{\zv}{\delta}D_{1}$.
Hence, by the inductive assumption, $(R_{1},D_{1},\Delta_{1})$ q-defines a mighty tuple I, which completes the proof.
\end{proof}

\begin{LEMMightyTupleFiveImpliesLEM}
Suppose $(R,D)$ is a mighty tuple V.
Then there exists a mighty tuple I q-definable from $\{R,D\}$.
\end{LEMMightyTupleFiveImpliesLEM}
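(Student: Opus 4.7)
The plan is to first invoke Lemma~\ref{LEMMightyVImpliesPrime} to convert the mighty tuple V $(Q,D)$ into a mighty tuple V' $(R, D, \{\Lambda\})$ that is q-definable from $\{Q,D\}$, and then to apply Lemma~\ref{LEMSymmetricMightyVImply} to obtain a mighty tuple I. Since Lemma~\ref{LEMSymmetricMightyVImply} requires the $\zv$-parameterized relations $\prescript{\zv}{}R^{\alpha}$ to be symmetric, the bulk of the work is to upgrade the mighty tuple V' produced by Lemma~\ref{LEMMightyVImpliesPrime} into a symmetric one without losing the other conditions (reflexivity of $R^{\forall}$, subdirectness of $R^{\alpha}$, and the no-loop property of $R^{\forall\forall}$ at some $\zv$).

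The first attempt at symmetrization is the relation $\widetilde{R}^{\alpha}(x,y) = R^{\alpha}(x,y)\wedge R^{\alpha}(y,x)$. This is symmetric and inherits $\widetilde{R}^{\kappa}\subseteq \widetilde{R}^{\alpha}$, but it can fail subdirectness (when $R^{\alpha}$ has no antiparallel edges, as is typical for directed-cycle structures) and can break reflexivity of $\widetilde{R}^{\forall}$. To fix this, I would restrict $D$ to a $(\zv,\delta)$-parameterized subset $D'$ consisting of elements that lie on suitable closed walks in $R^{\alpha}$ at the right parameters, and extend the parameter $\delta$ so that it records the witnessing walks; this is essentially the same recipe used in Case~2 of the proof of Lemma~\ref{LEMSymmetricMightyVImply}, where the domain is restricted to vertices on shortest odd cycles. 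The induction measure $\phi$ used there (odd-cycle length, then size of $D$) adapts to the directed setting by reading ``cycle'' as ``closed directed walk''; and the base case ($\phi=\infty$) corresponds to $R^{\kappa}$ being bipartite at the bad $\zv$, in which case the two bipartition classes supply the relations $B$ and $C$ of the mighty tuple~I after taking $R_0^{\alpha}=N\cdot R^{\alpha}$ with $N = |A|!\cdot |A|^{2}$ (cf.\ Lemma~\ref{LEMFactorialRepetition}).

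The main obstacle is ensuring that after all the restrictions the resulting tuple still satisfies property~5 of a mighty V'---namely, that $R^{\forall\forall}$ has no diagonal at some $\zv$---because the natural symmetrization by conjunction \emph{and} the natural idempotent closure $N\cdot R^{\alpha}$ both tend to introduce loops (the former through $R^{\forall}(d,d)$, the latter through closed walks traversed twice). The technical heart of the argument will therefore be a careful bookkeeping of which coordinates of $\alpha$ witness the absence of loops before and after each construction step, showing that as long as $R^{\kappa}$ is not symmetric and bipartite, the oddness measure $\phi$ strictly increases, so the induction must terminate at a symmetric mighty tuple V' to which Lemma~\ref{LEMSymmetricMightyVImply} applies and yields the desired mighty tuple I.
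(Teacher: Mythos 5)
Your high-level plan (invoke Lemma~\ref{LEMMightyVImpliesPrime} to get a mighty tuple V', then massage it into a symmetric one so that Lemma~\ref{LEMSymmetricMightyVImply} applies) is exactly the paper's, and you correctly identify the two key auxiliary lemmas. But the way you propose to achieve symmetry diverges from the paper and, as written, has a real gap.

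You propose to adapt the $\phi$-induction from the proof of Lemma~\ref{LEMSymmetricMightyVImply} to the directed setting by re-reading ``odd cycle'' as ``odd closed directed walk'' and restricting $D$ to vertices on such walks, with the bipartite case ($\phi = \infty$) as base case. This is not what the paper does, and it is not clear it can be made to work: the inductive step in Lemma~\ref{LEMSymmetricMightyVImply}'s proof uses symmetry crucially (the $\lfloor m/2\rfloor$-power construction lands you ``halfway around'' an undirected odd cycle, producing a nonempty $D_1$ with the right subdirectness), and the base case's bipartition argument is also stated for symmetric relations. Running that machinery without the symmetry hypothesis would amount to re-proving the lemma under weaker assumptions, and you give no evidence the measure $\phi$ still behaves. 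You acknowledge this (``the technical heart will therefore be a careful bookkeeping \dots'') without supplying it.

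The paper instead sidesteps the induction entirely. After forming $R_0^{\alpha} = N\cdot R^{\alpha}$ with $N = |A|!\cdot |A|^2$ (Lemma~\ref{LEMFactorialRepetition}), it sets $D_1(x) = R_0^{\kappa}(x,x)$ and symmetrizes to $R_1^{\alpha}(x,y) = R_0^{\alpha}(x,y)\wedge R_0^{\alpha}(y,x)\wedge D_1(x)\wedge D_1(y)$, an equivalence relation. It then splits on whether there exists $\zv_0$ with $R_1^{\kappa}\cap R^{\kappa}=\varnothing$. If yes, a mighty tuple~I is written down directly: $\Delta_2 = R^{\kappa}$, with $B_2$ and $C_2$ the $R_1^{\kappa}$-classes of the two endpoints $u,v$ of a $R^{\kappa}$-edge and $D_2$ the $R_1^{\forall}$-class of $u$. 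If no, then at every $\zv$ the symmetrized power meets $R^{\kappa}$, which yields a directed cycle of length $N+1$, and by pumping, cycles of all sufficiently large lengths. Taking $k$ minimal so that every $\zv$ has cycles of length $2^k$ and setting $R_4^{\alpha}(x,y) = (2^{k-1}\cdot R^{\alpha})(x,y)\wedge (2^{k-1}\cdot R^{\alpha})(y,x)$ with $D_4$ the corresponding support gives a symmetric mighty tuple V': irreflexivity of $R_4^{\forall\forall}$ at some $\zv$ comes for free from the minimality of $k$ (some $\zv$ has no $2^{k-1}$-cycle). Only at this point is Lemma~\ref{LEMSymmetricMightyVImply} invoked. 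So the paper uses a single power-of-two argument where you propose an open-ended induction on $\phi$; to make your route rigorous you would need to supply the directed analogue of the entire $\phi$-induction, which is exactly what the paper's $2^k$ trick is engineered to avoid.
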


\begin{proof}
By Lemma \ref{LEMMightyVImpliesPrime}
there exists a mighty tuple V'  $(R,D,\{\Lambda\})$ 
q-definable from $Q$ and $D$.
Put 
$\prescript{\zv}{} R_{0}^{\alpha}= N\cdot \prescript{\zv}{}R^{\alpha}$, 
where $N = |A|!\cdot |A|^{2}$.
By Lemma \ref{LEMFactorialRepetition},
$\prescript{\zv}{} R_{0}^{\alpha}$
is transitive for any $\zv$ and $\alpha$.


Put $\prescript{\zv}{} D_{1}(x) = \prescript{\zv}{}R_{0}^{\kappa}(x,x)$. 
Since 
any element from a cycle of $\prescript{\zv}{} R^{\kappa}$ is in $\prescript{\zv}{} D_{1}$ and 
$\prescript{\zv}{} R^{\kappa}$ has cycles
by property 4 of a mighty tuple V', the set 
$\prescript{\zv}{} D_{1}$ is not empty.

Let 
$\prescript{\zv}{} R_{1}^{\alpha}(x,y) = 
\prescript{\zv}{} R_{0}^{\alpha}(x,y)
\wedge \prescript{\zv}{} R_{0}^{\alpha}(y,x)\wedge \prescript{\zv}{}D_{1}(x)\wedge \prescript{\zv}{}D_{1}(y)$.
Notice that 
the relation 
$\prescript{\zv}{} R_{1}^{\alpha}$ is reflexive on 
$\prescript{\zv}{} D_{1}$.
Then 
$\prescript{\zv}{} R_{1}^{\alpha}+\prescript{\zv}{} R_{1}^{\alpha}\supseteq \prescript{\zv}{} R_{1}^{\alpha}$
and by transitivity of 
$\prescript{\zv}{} R_{0}^{\alpha}$ 
we get the transitivity of $\prescript{\zv}{} R_{1}^{\alpha}$.
Thus,
$\prescript{\zv}{} R_{1}^{\alpha}$ is an equivalence relation
for every $\zv$, $\delta$, and $\alpha$. 
Consider two cases.

Case 1. There exists $\zv_0$ such that
we have $\prescript{\zv_0}{} R_{1}^{\kappa}\cap 
\prescript{\zv_0}{} R^{\kappa}=\varnothing$.
Then we define 
a mighty tuple I $(R_{2},D_{2},B_2,C_2,\Delta_{2})$  as follows:
 \begin{align*}
    \prescript{\zv}{}\Delta_2(u,v) =& 
\prescript{\zv}{}R^{\kappa}(u,v)
\\
    \prescript{\zv}{uv}D_{2}(x) =& 
\prescript{\zv}{}R^{\kappa}(u,v)\wedge 
\prescript{\zv}{}R_{1}^{\forall}(x,u)\\
 \prescript{\zv}{uv}R_2^{\alpha} =&
 \prescript{\zv}{}R_1^{\alpha}
 \cap
 (\prescript{\zv}{uv}D_{2}\times \prescript{\zv}{uv}D_{2})\\
 \prescript{\zv}{uv}B_{2}(x) =&
\prescript{\zv}{}R^{\kappa}(u,v)\wedge\prescript{\zv}{}R_1^{\kappa}(u,x)\\
 \prescript{\zv}{uv}C_{2}(x) =&
\prescript{\zv}{}R^{\kappa}(u,v)\wedge\prescript{\zv}{}R_1^{\kappa}(v,x)
 \end{align*}

Let us check that all the properties of a mighty tuple I are satisfied.
Since we can take any $(u,v)$ on a cycle (of length at most $|A|$) in $\prescript{\zv}{}R^{\kappa}$, 
we have 
 $\prescript{\zv}{}{\Delta_{2}}\neq\varnothing$, 
 $u\in \prescript{\zv}{uv}B_2\subseteq \prescript{\zv}{uv}D_2$,
 and $v\in \prescript{\zv}{uv}C_2\subseteq \prescript{\zv}{uv}D_2$. 
Property
$\prescript{\zv_{0}}{uv}B_{2}
\cap \prescript{\zv_{0}}{uv}C_{2}=\varnothing$
follows from the definition of case 1. Other properties are straightforward.

Case 2. 
For every 
$\zv$  we have 
$\prescript{\zv}{} R_{1}^{\kappa}\cap 
\prescript{\zv}{} R^{\kappa}\neq\varnothing$.
This means that we have 
$(b,c) \in \prescript{\zv}{} R^{\kappa}$
such that
there exists a path from $c$ to $b$ of length $N$.
Hence, $b$ is on a cycle of length $N+1$.
Since a minimal cycle going through $b$ is of size at most $|A|$, 
by repeating this cycle we can get a cycle of length $|A|!$.
Combining cycles of lengths $|A|!$ and $|A|!+1$
we can build a cycle of any sufficiently large length. 
Let $k\ge 1$ 
be the minimal 
number such that
for every $\zv$
the graph
$\prescript{\zv}{} R^{\kappa}$ has cycles of length $2^{k}$.
Since $\prescript{\zv}{} R^{\kappa}$ has no loops for some $\zv$ and 
has all sufficiently large cycles, $k$ is well-defined.
Put 
$\prescript{\zv}{} R_{3}^{\alpha}= 2^{k-1}\cdot \prescript{\zv}{}R^{\alpha}$,
$\prescript{\zv}{} D_{4}(x) = \exists y \prescript{\zv}{}R_{3}^{\kappa}(x,y)\wedge \prescript{\zv}{}R_{3}^{\kappa}(y,x)$,
$\prescript{\zv}{} R_{4}^{\alpha}(x,y) = 
\prescript{\zv}{} R_{3}^{\alpha}(x,y)
\wedge \prescript{\zv}{} R_{3}^{\alpha}(y,x)$.
Notice that 
$\prescript{\zv}{} D_{4}$ is the set of all elements appearing in cycles
of length $2^k$ in $\prescript{\zv}{}R^{\kappa}$, 
which is nonempty by our assumptions.
Then it is straightforward to verify that  
$(R_{4},D_{4},\{\Lambda\})$ is a mighty tuple V'.
Since 
the relation $\prescript{\zv}{} R_{4}^{\alpha}$ is symmetric, we can apply Lemma 
\ref{LEMSymmetricMightyVImply} to derive a mighty tuple I.
\end{proof}

\subsection{Classification for constraint languages with all constants}
\label{SUBSECTIONClassificationIdempotentProof}

\begin{LEMTHMIdempotantClassificationLEM}
Suppose $\Gamma\supseteq\{x=a\mid a\in A\}$ is a set of relations on $A$.
Then the following conditions are equivalent:
\begin{enumerate}
    \item $\Gamma$ q-defines a mighty tuple I;
    \item $\Gamma$ q-defines a mighty tuple II;
\item  there exist an equivalence relation $\sigma$ on $D\subseteq A$ and $B,C\subsetneq A$
such that 
$B\cup C = A$ and 
    $\Gamma$
    q-defines the relations 
    $(y_{1},y_{2}\in D)\wedge(\sigma(y_1,y_2)\vee (x\in B))$ and 
    $(y_{1},y_{2}\in D)\wedge(\sigma(y_1,y_2)\vee (x\in C))$.
\end{enumerate}
\end{LEMTHMIdempotantClassificationLEM}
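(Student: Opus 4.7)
The plan is to establish the cycle $3 \Rightarrow 2 \Rightarrow 1 \Rightarrow 3$. The step $2\Rightarrow 1$ is immediate from Lemma~\ref{LEMMightyTupleTwoImplies}, so only two substantive implications remain.

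For $3\Rightarrow 2$, I would construct a mighty tuple II directly from the relations $R_B,R_C$ of condition~3 (where $\sigma$ must be understood as non-trivial, since otherwise the relations degenerate to $(y_1,y_2\in D)$ and convey no real information). The candidate is the binary parameterized relation
\[
Q^{x_1,x_2}(y_1,y_2) \;:=\; \exists z\; R_B(y_1,z,x_1)\wedge R_C(z,y_2,x_2).
\]
A four-case analysis based on whether $x_1\in B$ and $x_2\in C$ shows that $Q^{x_1,x_2}=D\times D$ except when $x_1\in C\setminus B$ and $x_2\in B\setminus C$, in which case transitivity of $\sigma$ collapses the composition to $Q^{x_1,x_2}=\sigma$. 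Since $B,C$ are proper subsets with $B\cup C=A$, both defect sets $C\setminus B$ and $B\setminus C$ are nonempty, so we obtain $Q^{\forall\forall}=\sigma$ while diagonal substitutions force $Q^{\forall}=D\times D$. Picking two distinct $\sigma$-classes $B_Q,C_Q$ in $D$, the tuple $(Q,D,B_Q,C_Q)$ immediately satisfies the five axioms of a mighty tuple II.

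For $1\Rightarrow 3$, the strategy is to exploit the constants in $\Gamma$ to collapse the parameters $\zv,\delta$ of the given mighty tuple I $(Q,D,B,C,\Delta)$ and then extract relations of the required form as slices of $Q$. After substituting constants for $\zv$ witnessing condition~6 and for some $\delta\in\prescript{\zv}{}\Delta$, I obtain a scalar setting with $D_0:=\prescript{\zv}{\delta}D$, $\sigma:=\prescript{\zv}{\delta}Q^{\forall\forall}$ (a proper equivalence with the two distinct classes $\prescript{\zv}{\delta}B,\prescript{\zv}{\delta}C$), and $\prescript{\zv}{\delta}Q^{\forall}=D_0\times D_0$. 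Choosing any pair $(b,c)\notin\sigma$ and some $\alpha^*\in A^m$ with $(b,c)\notin Q^{\alpha^*}$, and walking from $\alpha^*$ to any diagonal tuple through one-coordinate changes, one locates a critical pair $\alpha_1,\alpha_2$ differing in a single coordinate across which $(b,c)$ re-enters $Q^{\alpha}$. Substituting constants for the remaining coordinates yields a ternary slice $S(y_1,y_2,x)=Q^{x,\beta}(y_1,y_2)$ whose value on $x$ changes non-trivially.

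The principal obstacle is that a priori $S^x$ ranges over a whole chain of equivalences interpolating between $\sigma$ and $D_0\times D_0$, whereas condition~3 demands that the $x$-fibres take only the two extreme values. The plan is to iterate the slicing procedure with varied constant fillings and then take suitable intersections to collapse the intermediate equivalences onto $\sigma$; the set $B^*$ is then defined as those $x$ for which the cleaned-up relation equals $D_0\times D_0$. A symmetric construction—either by swapping the roles of $b$ and $c$ or by walking along a different coordinate of $\alpha$—produces the companion set $C^*$. The crucial covering property $B^*\cup C^*=A$ will follow from the fact that every diagonal $\alpha$ forces $Q^{\alpha}=D_0\times D_0$ (condition~4 of mighty~I), ensuring that each $x\in A$ contributes the trivial equivalence to at least one of the two constructions. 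I expect this collapse-and-cover argument to be the most technical part of the proof.
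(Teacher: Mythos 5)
Your plan $3\Rightarrow 2\Rightarrow 1\Rightarrow 3$ matches the paper's overall architecture, and your $3\Rightarrow 2$ step is essentially identical to the paper's: the composed relation $Q^{x_1,x_2}=\exists z\,R_B(y_1,z,x_1)\wedge R_C(z,y_2,x_2)$ with two distinct $\sigma$-classes as $B_Q,C_Q$ is exactly the paper's construction, and your four-case analysis is correct.

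The gap is in $1\Rightarrow 3$, where your sketch misses the two ideas that actually make the argument go through.

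First, you correctly identify that after fixing $\zv,\delta$ the sliced relations $S^x$ may take intermediate values strictly between $\sigma$ and $D_0\times D_0$, but ``take suitable intersections to collapse the intermediate equivalences onto $\sigma$'' does not work: the intersection of a family of equivalences on $D_0$ is again an equivalence, so you stay inside a lattice of equivalences and there is no reason the result would collapse to the two extreme values, let alone keep the structure $\sigma(y_1,y_2)\vee(x\in B)$. What the paper does instead is \emph{shrink the domain}: if some $Q^\alpha$ is strictly between $\sigma$ and $D\times D$, pick a class $D':=\{b\}+Q^\alpha$ that is neither a single $\sigma$-class nor all of $D$, restrict $Q$ to $D'\times D'$, and iterate. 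Each iteration strictly decreases $|D|$, so it terminates, and after termination every $Q^\alpha$ is either $\sigma$ or $D\times D$. This domain-shrinking is the key move your sketch does not have.

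Second, the covering property $B^*\cup C^*=A$ does not follow the way you describe. Your $B^*$ and $C^*$ come from two ad hoc slices with possibly different coordinates $j,j'$ and different constant fillings $\beta$; there is no reason that ``every diagonal $\alpha$ forces $Q^\alpha=D_0\times D_0$'' transfers to a statement about those two particular slices, because the off-slice coordinates are frozen to constants unrelated to $x$. The paper gets the cover by the composition trick from Lemma~\ref{LEMMightyTupleImplies}: build $U_n^{x_1,\dots,x_{|A|^2}}$ as a composition of $n$ copies of $Q$ through injective re-indexings, use pigeonhole to show $U_N^{\alpha}=D\times D$ for all $\alpha$, and take the maximal $n$ where $U_n$ is still sometimes proper; the pair $(L,R)=(U_n,\,Q^{\sigma_{n+1}(\cdot)})$ then satisfies $L^\alpha+R^\alpha=D\times D$ for every $\alpha$, which is exactly the cover, and both pieces are proper by the maximality of $n$ and by evaluating at $\kappa$. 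After that one still needs the arity-reduction step (from $|A|^2$ parameters down to $1$, either by $\forall$-quantifying extra coordinates or by fixing a coordinate to a constant outside $B_0\cup C_0$), which your proposal also omits. Without the domain shrinking and the $U_n$ composition, the ``collapse-and-cover'' argument you expect to carry the proof does not close.
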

\begin{proof}

Let us prove that 1 implies 3.
By Lemma \ref{LEMMightyTupleOnePrime} 
there is a mighty tuple I'
$(Q_0,D_0,B_0,C_0,\Delta)$ q-definable from $\Gamma$.
We derive the required relations in several steps.

\textbf{Get rid of unnecessary parameters.} 
Choose $\zv_0 = (b_1,\dots,b_{|A|})$ 
satisfying condition 6 of a mighty tuple I.
Then choose any $\delta_{0}\in \prescript{\zv_0}{}{\Delta}$.
We get rid of parameters $\zv$ and $\delta$ substituting 
the corresponding values:
$Q_{1} = \prescript{\zv_0}{\delta_{0}}{} Q_{0}$, 
$D_{1} = \prescript{\zv_0}{\delta_{0}}{} D_{0}$.

Notice that we do not care about the sets
$B_{0}$ and $C_{0}$ anymore because they are
not necessary for the case with all constant relations.
Later we only change $Q_{1}$ and $D_{1}$.

\textbf{Every equivalence relation must be trivial.}
Assume that 
for some $\alpha$ 
the binary relation 
$Q_{1}^{\alpha}$ is different from 
$D_{1}\times D_{1}$ and $Q_{1}^{\kappa}$.
Then choose $b\in D_{1}$ such that 
$\{b\}+Q_{1}^{\alpha}$ is not an equivalence class of 
$Q_{1}^{\kappa}$ and not $D_{1}$.
Put $D_{2} = \{b\}+Q_{1}^{\alpha}$.
Notice that $D_{2}$ is a unary relation q-definable from $\Gamma$.
Put $Q_{2}^{\xv}(y_1,y_2) =
Q_{1}^{\xv}(y_1,y_2) \wedge (y_{1}\in D_{2})\wedge (y_2\in D_{2})$ to reduce the domain to $D_{1}$.
We can repeat this while 
some of the equivalence relation $Q_{2}^{\alpha}$ is not 
$D_{2}\times D_{2}$ and not $Q_{2}^{\kappa}$.
Thus, we assume that  
$Q_{2}^{\alpha}$ is either 
$D_{1}\times D_{1}$ or $Q_{2}^{\kappa}$
for any $\alpha$.

\textbf{Find appropriate $B$ and $C$.}
Here we use the idea from the proof of Lemma \ref{LEMMightyTupleImplies}
but for a much easier case.
Let 
$\sigma_1,\dots,\sigma_{N}$ 
be the set of all injective mappings from 
$\{1,2,\dots,|A|\}$ to $\{1,2,\dots,|A|^{2}\}$.
Let 
$$U_{n}^{x_1,\dots,x_{|A|^{2}}} =
Q_{2}^{x_{\sigma_{1}(1)},\dots,x_{\sigma_{1}(|A|)}}
+
Q_{2}^{x_{\sigma_{2}(1)},\dots,x_{\sigma_{2}(|A|)}}
+\dots +
Q_{2}^{x_{\sigma_{n}(1)},\dots,x_{\sigma_{n}(|A|)}}$$
Since at least $|A|$ elements in the set 
$x_1,\dots,x_{|A|^{2}}$ are equal, 
there exists $i\in\{1,2,\dots,N\}$ 
such that 
$x_{\sigma_{i}(1)}=x_{\sigma_{i}(2)}=\dots=x_{\sigma_{i}(|A|)}$.
Since
$Q_{2}^{\forall}=D_2\times D_2$
the relation $U_{N}^{x_1,\dots,x_{|A|^{2}}}$
is equal to $\prescript{\zv}{\delta}D\times \prescript{\zv}{\delta}D$.
Consider maximal $n$ such that 
$U_{n}^{\alpha}\neq D_{2}\times D_{2}$ for some $\alpha$.
Put $L = U_{n}$ and 
$R^{x_1,\dots,x_{|A|^{2}}}
=Q_{2}^{x_{\sigma_{n+1}(1)},\dots,x_{\sigma_{n+1}(|A|)}}$.
We know that 
$L^{\alpha}\neq D_{2}\times D_{2}$ for some $\alpha$, 
$R^{\alpha}\neq D_{2}\times D_{2}$ for some $\alpha$, 
but   
$L^{\alpha}+ R^{\alpha}=D_{2}\times D_{2}$
for every $\alpha$.
Let $B_0\subseteq A^{|A|^2}$ be the set of all $\alpha$ such that
$L^{\alpha} = D_{2}\times D_{2}$,
$C\subseteq A^{|A|^2}$ be the set of all $\alpha$ such that
$R^{\alpha} = D_{2}\times D_{2}$.
Thus, we have 
$B, C\subsetneq A^{m}$, where 
$m=|A|^{2}$, such that 
$B\cup C = A^{m}$, 
$L^{\xv}(y_1,y_2) = (y_1,y_2\in D_{2})\wedge (Q_{2}^{\kappa}(y_1,y_2)\vee (\xv\in B))$,
and
$R^{\xv}(y_1,y_2) = (y_1,y_2\in D_{2})\wedge (Q_{2}^{\kappa}(y_1,y_2)\vee (\xv\in C))$.

\textbf{Reduce the arity of $B$ and $C$}.
Let $B,C,L$ and $R$ be the relations 
of the minimal arity satisfying the above properties, 
that is, 
$B, C\subsetneq A^{m}$ for some $m$, 
$B\cup C = A^{m}$, 
$L^{\xv}(y_1,y_2) = (y_1,y_2\in D_{2})\wedge (Q_{2}^{\kappa}(y_1,y_2)\vee (\xv\in B))$,
and
$R^{\xv}(y_1,y_2) = (y_1,y_2\in D_{2})\wedge (Q_{2}^{\kappa}(y_1,y_2)\vee (\xv\in C))$.
If $m=1$ then $L$ and $R$ are two relations we needed to define.
Thus, we assume that $m>1$.
Put 
\begin{align*}
    L_{0}^{x_1}(y_1,y_2)  &= \forall x_2 \dots \forall 
x_m L^{x_1,\dots,x_m}(y_1,y_2)\\
R_{0}^{x_1}(y_1,y_2)  &= \forall x_2 \dots \forall 
x_m R^{x_1,\dots,x_m}(y_1,y_2)\\
B_{0}(x_1) &= \forall x_2\dots \forall x_m B(x_1,\dots,x_{m})\\
C_{0}(x_1) &= \forall x_2\dots \forall x_m C(x_1,\dots,x_{m})
\end{align*}
Consider two cases:

Case 1. $B_{0}\cup C_{0} = A$. 
Then
$L_{0}$ and $R_{0}$ are ternary relations 
satisfying all the required properties, 
which contradicts our assumption about the minimality of $m$.

Case 2. $B_0\cup C_0 \neq A$. Choose $a\in A\setminus (B_0\cup C_0)$. 
Put
\begin{align*}
    L_{1}^{x_2,\dots,x_m}(y_1,y_2)  &= 
L^{a,x_2\dots,x_m}(y_1,y_2)\\
    R_{1}^{x_2,\dots,x_m}(y_1,y_2)  &= 
R^{a,x_2\dots,x_m}(y_1,y_2)\\
B_{1}(x_2,\dots,x_m) &=  B(a,x_2,\dots,x_{m})\\
C_{1}(x_2,\dots,x_m) &=  C(a,x_2,\dots,x_{m})
\end{align*}
Notice that $B_{1},C_{1}\subsetneq A^{m-1}$,
$B_{1}\cup C_{1}= A^{m-1}$, and 
the relations 
$L_{1}$ and $R_{1}$ 
again satisfy all the required properties but have 
smaller arity, which contradicts our assumptions.

2 implies 1 by Lemma \ref{LEMMightyTupleTwoImplies}.
It remains to prove that 
3 implies 1.
Suppose we have $B,C\subsetneq A$, an equivalence relation $\sigma$ on $D$ and two q-definable relations 
    $L(y_1,y_2,x)=(y_1,y_2\in D)\wedge (\sigma(y_1,y_2)\vee (x\in B))$ and 
    $R(y_1,y_2,x)=(y_1,y_2\in D)\wedge (\sigma(y_1,y_2)\vee (x\in C))$.
    Let us define a mighty tuple II $(Q,D',B',C')$.
Choose two elements $b$ and $c$ from $D$ that are not equivalent modulo $\sigma$. Put 
$\prescript{\zv}{\delta}{D'}(y) = D(y)$, and 
\begin{align*}
\prescript{\zv}{}{Q}^{x_1,x_{2}}(y_1,y_2) &= 
\exists y \; L(y_1,y,x_1)\wedge R(y,y_2,x_2)\\
\prescript{\zv}{}{B'}(y) &= 
\exists y' \forall x \; (y'=b) \wedge L(y,y',x)\\
\prescript{\zv}{}{C'}(y) &= 
\exists y' \forall x \; (y'=c) \wedge L(y,y',x)
\end{align*}
Notice that the parameter $\zv$ is fictitious.
Since $B\cup C = A$, we have
$\prescript{\zv}{}{Q}^{\forall} = D\times D$.
Since $B\neq A$ and $C \neq A$, we have
$\prescript{\zv}{}{Q}^{\forall\forall} = \sigma$.
Thus, $(Q,D',B',C')$
is a mighty tuple II, which completes the proof.
\end{proof}

\bibliographystyle{plain}
\bibliography{refs}

\begin{thebibliography}{10}

\bibitem{arora2009computational}
Sanjeev Arora and Boaz Barak.
\newblock {\em Computational complexity: a modern approach}.
\newblock Cambridge University Press, 2009.

\bibitem{bartopolymorphisms}
Libor Barto, Andrei Krokhin, and Ross Willard.
\newblock Polymorphisms, and how to use them.
\newblock 2017.
\newblock Preprint.

\bibitem{BartoP16}
Libor Barto and Michael Pinsker.
\newblock The algebraic dichotomy conjecture for infinite domain constraint
  satisfaction problems.
\newblock In {\em Proceedings of the 31st Annual {ACM/IEEE} Symposium on Logic
  in Computer Science, {LICS} '16, New York, NY, USA, July 5-8, 2016}, pages
  615--622, 2016.

\bibitem{BBCJK}
Ferdinand B{\"o}rner, Andrei~A. Bulatov, Hubie Chen, Peter Jeavons, and
  Andrei~A. Krokhin.
\newblock The complexity of constraint satisfaction games and {Q}{C}{S}{P}.
\newblock {\em Inf. Comput.}, 207(9):923--944, 2009.

\bibitem{BrakensiekG18}
Joshua Brakensiek and Venkatesan Guruswami.
\newblock Promise constraint satisfaction: Structure theory and a symmetric
  boolean dichotomy.
\newblock In {\em Proceedings of the Twenty-Ninth Annual {ACM-SIAM} Symposium
  on Discrete Algorithms, {SODA} 2018, New Orleans, LA, USA, January 7-10,
  2018}, pages 1782--1801, 2018.

\bibitem{CSPconjecture}
Andrei Bulatov, Peter Jeavons, and Andrei Krokhin.
\newblock Classifying the complexity of constraints using finite algebras.
\newblock {\em SIAM J. Comput.}, 34(3):720--742, March 2005.

\bibitem{BulatovFVConjecture}
Andrei~A. Bulatov.
\newblock {A dichotomy theorem for nonuniform {C}{S}{P}s}.
\newblock In {\em 2017 IEEE 58th Annual Symposium on Foundations of Computer
  Science (FOCS)}, pages 319--330, 2017.

\bibitem{BulatovProofCSP}
Andrei~A. Bulatov.
\newblock A dichotomy theorem for nonuniform {C}{S}{P}s.
\newblock {\em CoRR}, abs/1703.03021, 2017.

\bibitem{HubieSIGACT}
Hubie Chen.
\newblock A rendezvous of logic, complexity, and algebra.
\newblock {\em {SIGACT} News}, 37(4):85--114, 2006.

\bibitem{hubie-sicomp}
Hubie Chen.
\newblock The complexity of quantified constraint satisfaction: Collapsibility,
  sink algebras, and the three-element case.
\newblock {\em SIAM J. Comput.}, 37(5):1674--1701, 2008.

\bibitem{AU-Chen-PGP}
Hubie Chen.
\newblock Quantified constraint satisfaction and the polynomially generated
  powers property.
\newblock {\em Algebra universalis}, 65(3):213--241, 2011.
\newblock An extended abstract appeared in ICALP B 2008.

\bibitem{Meditations}
Hubie Chen.
\newblock Meditations on quantified constraint satisfaction.
\newblock In {\em Logic and Program Semantics - Essays Dedicated to Dexter
  Kozen on the Occasion of His 60th Birthday}, pages 35--49, 2012.

\bibitem{jeavons1998algebraic}
Peter Jeavons.
\newblock On the algebraic structure of combinatorial problems.
\newblock {\em Theoretical Computer Science}, 200(1-2):185--204, 1998.

\bibitem{KolmogorovKR17}
Vladimir Kolmogorov, Andrei~A. Krokhin, and Michal Rol{\'{\i}}nek.
\newblock The complexity of general-valued {C}{S}{P}s.
\newblock {\em {SIAM} J. Comput.}, 46(3):1087--1110, 2017.

\bibitem{QC2017}
Barnaby Martin.
\newblock {Quantified Constraints in Twenty Seventeen}.
\newblock In Andrei Krokhin and Stanislav Zivny, editors, {\em The Constraint
  Satisfaction Problem: Complexity and Approximability}, volume~7 of {\em
  Dagstuhl Follow-Ups}, pages 327--346. Schloss Dagstuhl--Leibniz-Zentrum fuer
  Informatik, Dagstuhl, Germany, 2017.

\bibitem{Schaefer}
Thomas~J. Schaefer.
\newblock The complexity of satisfiability problems.
\newblock In {\em Proceedings of the Tenth Annual ACM Symposium on Theory of
  Computing}, STOC '78, pages 216--226, New York, NY, USA, 1978. ACM.

\bibitem{ZhukFVConjecture}
D.~{Zhuk}.
\newblock A proof of {{C}{S}{P}} dichotomy conjecture.
\newblock In {\em 2017 IEEE 58th Annual Symposium on Foundations of Computer
  Science (FOCS)}, pages 331--342, Oct 2017.

\bibitem{ZhukGap2015}
Dmitriy {Zhuk}.
\newblock {The size of generating sets of powers}.
\newblock {\em Journal of Combinatorial Theory, Series A}, 167:91--103, 2019.

\bibitem{zhuk_CSP_Dichotomy_JACM}
Dmitriy Zhuk.
\newblock A proof of the {C}{S}{P} dichotomy conjecture.
\newblock {\em Journal of the ACM (JACM)}, 67(5):1--78, 2020.

\bibitem{zhukPGPReductionArxiv}
Dmitriy Zhuk.
\newblock The complexity of the quantified {C}{S}{P} having the polynomially
  generated powers property, 2021.

\bibitem{zhuk2021strong}
Dmitriy Zhuk.
\newblock Strong subalgebras and the constraint satisfaction problem.
\newblock {\em Journal of Multiple-Valued Logic \& Soft Computing}, 36, 2021.

\bibitem{QCSPMonstersSTOC}
Dmitriy Zhuk and Barnaby Martin.
\newblock {Q}{C}{S}{P} monsters and the demise of the chen conjecture.
\newblock In {\em Proceedings of the 52nd Annual ACM SIGACT Symposium on Theory
  of Computing}, pages 91--104, 2020.

\bibitem{QCSP_Monsters_JACM}
Dmitriy Zhuk and Barnaby Martin.
\newblock {Q}{C}{S}{P} monsters and the demise of the chen conjecture.
\newblock {\em J. ACM}, 69(5), oct 2022.

\end{thebibliography}

\end{document}